%% file: main.tex
\documentclass[acmsmall,screen,nonacm]{acmart}

\acmJournal{PACMPL}
\acmVolume{11}
\acmNumber{POPL}              
\acmArticle{1}
\acmYear{2027}
\acmMonth{1}
\acmDOI{}                     
     \usepackage{centernot}
     \usepackage[most]{tcolorbox}      
     \usepackage{xcolor}

\makeatletter
\def\@acmplainindent{0pt}
\def\@acmdefinitionindent{0pt}
\def\@proofindent{\noindent}
\makeatother

\usepackage{xspace}
\usepackage{nicefrac}
\usepackage{aliascnt}
\usepackage{cleveref}\usepackage{enumitem}
\usepackage{subcaption}
\usepackage{csquotes}
\AtEndPreamble{%

  \theoremstyle{acmplain}%
  \newaliascnt{lemma}{theorem}%
  \newtheorem{lemma}[lemma]{Lemma}\aliascntresetthe{lemma}%
  \newaliascnt{corollary}{theorem}%
  \newtheorem{corollary}[corollary]{Corollary}\aliascntresetthe{corollary}%
  \newaliascnt{proposition}{theorem}%
  \newtheorem{proposition}[proposition]{Proposition}\aliascntresetthe{proposition}%
  \theoremstyle{acmdefinition}%
  \newaliascnt{definition}{theorem}%
  \newtheorem{definition}[definition]{Definition}\aliascntresetthe{definition}%
  \newaliascnt{example}{theorem}%
  \newtheorem{example}[example]{Example}\aliascntresetthe{example}%
}
\usepackage[]{todonotes} 
\usepackage{bbold}
\usepackage{array}
\usepackage{placeins}
\usepackage{tikz}
\usepackage{mathpartir}
\usepackage{mathtools}
\usepackage[ruled,vlined,linesnumbered]{algorithm2e}
\usepackage{xcolor}
\usepackage{scalerel,stackengine}
\stackMath
\newcommand\reallywidehat[1]{%
\savestack{\tmpbox}{\stretchto{%
  \scaleto{%
    \scalerel*[\widthof{\ensuremath{#1}}]{\kern-.6pt\bigwedge\kern-.6pt}%
    {\rule[-\textheight/2]{1ex}{\textheight}}
  }{\textheight}%
}{0.5ex}}%
\stackon[1pt]{#1}{\tmpbox}%
}
\usetikzlibrary{arrows.meta,backgrounds,fit,positioning,shapes.geometric}

\input{marcos}
\usepackage{subcaption}
\allowdisplaybreaks
\begin{document}

\title{A Fast Quantitative Analyzer for NetKAT}

\author{Thomas Lu}
\orcid{0009-0000-3474-6263}
\affiliation{%
  \institution{Cornell University}
  \city{Ithaca}
  \state{New York}
  \country{USA}
}
\email{cl2625@cornell.edu}
\author{Qiancheng Fu}
\orcid{0000-0002-5234-8565}
\affiliation{%
  \institution{Cornell University}
  \city{Ithaca}
  \state{New York}
  \country{USA}
}
\email{qf59@cornell.edu}
\author{Kevin Batz}
\orcid{0000-0001-8705-2564}
\affiliation{%
  \institution{Cornell University}
  \city{Ithaca}
  \state{New York}
  \country{USA}
}
\email{ksb239@cornell.edu}
\author{Oliver Bøving}
\orcid{0009-0009-4702-2876}
\affiliation{%
  \institution{Technical University of Denmark}
  \city{Copenhagen}
  \country{Denmark}
}
\email{oembo@dtu.dk}
\author{Tiago Ferreira}
\orcid{0000-0002-6942-0228}
\affiliation{%
  \institution{University College London}
  \city{London}
  \country{United Kingdom}
}
\email{t.ferreira@ucl.ac.uk}
\author{Mark Moeller}
\orcid{0009-0002-9512-565X}
\affiliation{%
  \institution{Cornell University}
  \city{Ithaca}
  \state{New York}
  \country{USA}
}
\email{moeller@cs.cornell.edu}
\author{Nate Foster}
\orcid{0000-0002-6557-684X}
\affiliation{%
  \institution{EPFL}
  \city{Lausanne}
  \country{Switzerland}
}
\affiliation{%
  \institution{Jane Street}
  \country{USA}
}
\email{nate.foster@epfl.ch}
\author{Alexandra Silva}
\orcid{0000-0001-5014-9784}
\affiliation{%
  \institution{Cornell University}
  \city{Ithaca}
  \state{New York}
  \country{USA}
}
\email{alexandra.silva@cornell.edu}

\renewcommand{\shortauthors}{Lu et al.}

\begin{abstract}
When designing a network, engineers must navigate trade-offs (e.g., one topology offers more aggregate bandwidth, another lower latency or better resilience) that demand reasoning about {\em quantitative properties}. We present a fast analyzer for quantitative network properties based on weighted \netkat (\wnetkat), a domain-specific language that provides a semantic foundation for quantitative reasoning by modeling network behavior using weights drawn from a semiring. At the core of our development is the design of a symbolic data structure---weighted symbolic packet programs (wSPPs)---that compactly represent the semantics of weighted policies, for which a direct implementation would be intractable. We show how to compute all policy constructs symbolically; unsurprisingly, the crux is Kleene star, for which we design a tailored algorithm. We further develop trace-carrying Pareto semirings, which compute multi-objective frontiers together with the network paths that realize them. We formalize the development in Lean and provide an optimized Rust implementation. Being parametric on a  semiring, our implementation covers both classical and quantitative analyses: we show that it is competitive with KATch, a heavily optimized Boolean-reachability verifier, and orders of magnitude faster than McNetKAT and Storm on probabilistic analyses. A case study comparing Fat-tree and Jellyfish data-center topologies shows the framework supports multi-objective design-time analysis.
\end{abstract}


\maketitle


\section{Introduction}\label{sec:intro}
\input{introduction}

\section{Weighted NetKAT, wSPPs, and Pareto Optimality: A Bird's Eye View}\label{sec:overview}
\input{overview}

\section{Weighted NetKAT}\label{sec:preliminaries}
\input{preliminaries}

\section{Weighted Symbolic Packet Programs (wSPPs)}\label{sec:syntax-semantics}
\input{syntax-semantics}

\section{Pareto Weights and Witnessing Traces}\label{sec:pareto}
\input{pareto}


\section{Implementation and Evaluation}\label{sec:implementation-evaluation}
\input{implementation}

\section{Related Work}\label{sec:related-work}
\input{related_work}

\section{Conclusion and Future Work}\label{sec:conclusion}
\input{conclusion}


\bibliographystyle{ACM-Reference-Format}
\bibliography{references}

\pagebreak
\appendix
\crefalias{section}{appendix}
\crefname{appendix}{Appendix}{Appendices}
\Crefname{appendix}{Appendix}{Appendices}

\section{Some Background on Bounded (Frontier) Semirings}

If $=$ and $\preceq$ (in \Cref{def:frontier-semiring}) are decidable and $\cdot$ computable on $T$, the
operations of $\Fr(T)$ are computable merge-and-discard-dominated
manipulations of finite sets. Antichains are moreover
\emph{canonical}, i.e., distinct antichains generate distinct lower sets, so
$\dcl{(-)} \colon \Fr(T) \to \Par(T)$ is injective and will serve as the
embedding $\emb$ of \Cref{def:embedding}; conditions (i) and (ii) are
routine. Condition (iii) is, however, delicate: the iteration defining $\cstar{F}$
must stabilize equal the infinite sum $\emb(F)^*$. This can, however, fail:

\begin{example}\label{example:pareto-arctic}
  Over the \emph{arctic} semiring $(\N\cup\{\pm\infty\},\max,+,-\infty,0)$
  of worst-case accumulated latency, the frontier $F = \{1\}$ has
  $L^k = \dcl\{k\}$ for $L \triangleq \dcl F$, so
  $L^* = \dcl\{0,1,2,\ldots\}$: the closure of an unbounded chain, which
  no finite set generates. No star operation on $\Fr(T)$ whatsoever can
  decode to $L^*$. In fact, this is the case whenever some element of $T$ has strictly
  increasing powers. \hfill $\triangle$
\end{example}

What rules this out is \emph{boundedness}, i.e., when \emph{iterating can always be
skipped}:

\begin{definition}\label{def:bounded}
  A partially ordered monoid (resp.\ an $\omega$-continuous semiring), is
  \emph{bounded} if its unit is the greatest element of its order.
\end{definition}

\begin{lemma}[\cite{mohri2002semiring, matrix-star-algo}]\label{lemma:bounded-star}
  In a bounded $\omega$-continuous semiring, $\wta^* = \semione$ for all
  $\wta$.
\end{lemma}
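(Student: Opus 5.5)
The plan is to unfold the definition of Kleene star in an $\omega$-continuous semiring and bound the resulting sum from both sides in the natural order. By definition,
\[
  \wta^* \;=\; \sum_{n\in\N} \wta^n \;=\; \sup_{N\in\N} \sum_{n=0}^{N} \wta^n ,
\]
where the supremum is taken in the natural order $\sqsubseteq$ of the $\omega$-continuous semiring. In an $\omega$-continuous semiring this order is a partial order with least element $\semizero$, and addition is monotone. Boundedness (\Cref{def:bounded}) says that $\semione$ is the greatest element of this order. It therefore suffices to prove $\semione \sqsubseteq \wta^*$ and $\wta^* \sqsubseteq \semione$, and then conclude by antisymmetry.

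For the lower bound I would note that the partial sums form an ascending chain. The first partial sum is $\wta^0 = \semione$, so $\semione \sqsubseteq \wta^*$ because the supremum bounds every element of the chain. Alternatively, one can argue directly from the natural order: $\semione \sqsubseteq \semione + \sum_{n\ge1}\wta^n = \wta^*$, since $x \sqsubseteq x+y$ holds by definition of that order.

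For the upper bound I would simply appeal to boundedness: $\semione$ is the top element, so $\wta^* \sqsubseteq \semione$ holds trivially because $\wta^*$ is some element of the carrier. No computation with the partial sums is needed. If one prefers an argument phrased through the chain, a short induction also works. Each $\wta^n \sqsubseteq \semione$ by boundedness. Adding terms shows every partial sum satisfies $\sum_{n\le N}\wta^n \sqsubseteq \semione$, using idempotence-like absorption $\semione + x = \semione$, which itself follows since $\semione \sqsubseteq \semione + x \sqsubseteq \semione$. Hence $\semione$ is an upper bound of the chain, and the supremum lies below it. Antisymmetry of $\sqsubseteq$ then gives $\wta^* = \semione$.

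The only point needing care is to make sure the order referred to in \Cref{def:bounded} for semirings is the natural (canonical) order used to define $\omega$-continuity. In particular, antisymmetry must hold in it, which is part of the standard definition of an $\omega$-continuous semiring. Once this identification is fixed, the argument is immediate, so I do not expect a real obstacle.
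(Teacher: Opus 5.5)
Your proof is correct. The lower bound $\semione \preceq \wta^*$ holds because $\semione = \sum_{n<1}\wta^n$ belongs to the chain whose supremum is $\wta^*$, and the upper bound $\wta^* \preceq \semione$ holds because $\semione$ is the greatest element; antisymmetry then gives equality. The paper states the lemma without proof and only cites the literature, and your sandwich argument (or equivalently your variant via the absorption $\semione + x = \semione$) is the standard one.

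The caveat in your last paragraph does not arise in this paper. \Cref{def:omega-continuous-semiring} fixes an arbitrary partial order $\preceq$, not necessarily the natural one, and \Cref{def:bounded} refers to that same order. Your argument uses only antisymmetry, positivity, monotonicity of $+$, and the fact that $\semione$ is the top element. Monotonicity and positivity give $x = x+\semizero \preceq x+y$, so this inequality need not hold ``by definition'' of the order.
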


Boundedness holds for the semirings of \emph{best-case} reasoning
(bottleneck, Viterbi, tropical, security levels) and is preserved by
products. For bounded $T$ (e.g., our running $\Trop\times\Btl$) the unit
$\semione = \dcl\{\semione_T\} = T$ is the greatest element of $\Par(T)$,
so $\Par(T)$ is bounded as well and the frontier iteration of
\Cref{def:frontier-semiring} stabilizes immediately at
$\cstar{F} = \{\semione_T\}$. 

\section{Complete Algorithms for the wSPP Operations}\label[appendix]{sec:appendix-algorithms}
\input{appendix-algorithms}

\section{A Portfolio of Embeddings}\label[appendix]{sec:appendix-portfolio}\label[appendix]{sec:embedding}\label[appendix]{sec:embedding-portfolio}
\input{embedding}

\section{Synthetic Adversarial Benchmarks}\label[appendix]{sec:appendix-synthetic}
\input{synthetic}

\end{document}

%% file: marcos.tex
\newcommand{\tl}[1]{\todo[noinline,color=orange]{\footnotesize TL\@: #1}}

\newcommand{\katch}{\textsf{KATch}\xspace}

\newcommand{\kat}{\textsf{KAT}\xspace}
\newcommand{\netkat}{\textsf{NetKAT}\xspace}
\newcommand{\probnetkat}{\textsf{ProbNetKAT}\xspace}
\newcommand{\mcnetkat}{\textsf{McNetKAT}\xspace}
\newcommand{\wnetkat}{\textsf{wNetKAT}\xspace}

\newcommand{\wspptool}{\textsf{Neko}\xspace}

\newcommand{\semi}{\ensuremath{\mathcal{S}}}          
\newcommand{\semidom}{\ensuremath{S}}                 
\newcommand{\semizero}{\ensuremath{\mathbb{0}}}
\newcommand{\semione}{\ensuremath{\mathbb{1}}}
\newcommand{\semiord}{\ensuremath{\mathrel{\preceq}}}

\newcommand{\semisup}{\ensuremath{\bigsqcup}}
\newcommand{\N}{\mathbb{N}}
\newcommand{\wta}{\ensuremath{w}}                     
\newcommand{\wtb}{\ensuremath{w'}}

\newcommand{\csr}{\ensuremath{\mathcal{A}}}           
\newcommand{\csrset}{\ensuremath{A}}                  
\newcommand{\emb}{\iota}                              
\newcommand{\cstar}[1]{#1^{\circledast}}              
\newcommand{\Rext}{\mathbb{R}^{\infty}_{\geq 0}}      
\newcommand{\Qext}{\mathbb{Q}^{\infty}_{\geq 0}}      

\newcommand{\packets}{\ensuremath{\mathsf{Pk}}}
\newcommand{\Pk}{\packets}
\newcommand{\fields}{\ensuremath{\mathsf{F}}}
\newcommand{\values}{\ensuremath{\mathsf{Val}}}
\newcommand{\fielda}{\ensuremath{f}}
\newcommand{\pkta}{\ensuremath{\alpha}}
\newcommand{\pktb}{\ensuremath{\beta}}
\newcommand{\pktc}{\ensuremath{\gamma}}
\newcommand{\updatepkt}[3]{\ensuremath{#1 [ #2 \coloneq #3 ]}}
\newcommand{\dom}{\operatorname{dom}}

\newcommand{\pols}{\ensuremath{\mathsf{Pol}}}
\newcommand{\Pol}{\pols}                              
\newcommand{\dup}{\ensuremath{\mathsf{dup}}}
\newcommand{\tests}{\ensuremath{\mathsf{Pred}}}
\newcommand{\TRUE}{\ensuremath{\mathsf{true}}}
\newcommand{\FALSE}{\ensuremath{\mathsf{false}}}
\newcommand{\SKIP}{\ensuremath{\mathsf{skip}}}
\newcommand{\DROP}{\ensuremath{\mathsf{drop}}}
\newcommand{\EQ}[2]{\ensuremath{#1 = #2}}
\newcommand{\NOTEQ}[2]{\ensuremath{#1 \neq #2}}
\newcommand{\AND}[2]{\ensuremath{#1 \wedge #2}}
\newcommand{\OR}[2]{\ensuremath{#1 \vee #2}}
\newcommand{\NOT}[1]{\ensuremath{\neg #1}}
\newcommand{\ASSN}[2]{\ensuremath{#1 \leftarrow #2}}
\newcommand{\SEQN}{\ensuremath{\,;}}
\newcommand{\SEQ}[2]{\ensuremath{#1 \SEQN #2}}
\newcommand{\ADDN}{\ensuremath{\oplus}}
\newcommand{\ADD}[2]{\ensuremath{#1 \ADDN #2}}
\newcommand{\WEIGH}[2]{\ensuremath{#1 \odot #2}}
\newcommand{\ITER}[1]{\ensuremath{#1^{*}}}
\newcommand{\NFOLD}[2][n]{\ensuremath{{#2}^{(#1)}}}

\newcommand{\sem}[1]{\llbracket #1 \rrbracket}
\newcommand{\semdom}{\ensuremath{\mathcal{M}}}        
\newcommand{\norm}[1]{\lVert#1\rVert}                 

\newcommand{\WSPP}{\ensuremath{\textnormal{\textsf{wSPP}}}}
\newcommand{\wspp}{\ensuremath{\mathsf{wspp}}}        
\newcommand{\wsppadd}{\mathbin{\hat{\oplus}}}
\newcommand{\wsppmul}{\mathbin{\hat{;}}}
\newcommand{\wsppscale}{\mathbin{\hat{\odot}}}
\newcommand{\wsppmapadd}{\mathbin{\hat{\oplus}_{\!m}}}
\newcommand{\wsppmapmul}{\mathbin{\hat{;}_{m}}}
\newcommand{\wsppmapscale}{\mathbin{\hat{\odot}_{\!m}}}
\newcommand{\wsppstar}{\hat{*}}
\newcommand{\bigwsppmapadd}{\mathop{\hat{\bigoplus}_{\!m}}}
\newcommand{\rowmul}{\mathbin{\triangleright}}
\newcommand{\gbox}[1]{\begingroup
  \setlength{\fboxsep}{1pt}\colorbox{lightgray}{$\displaystyle #1$}\endgroup}

\newcommand{\Par}{\mathsf{Par}}                    
\newcommand{\Fr}{\mathsf{Fr}}                      
\newcommand{\dcl}[1]{{\downarrow}#1}               
\newcommand{\fmax}{\operatorname{max}}             
\newcommand{\trleq}{\preceq_{\mathsf{tr}}}         
\newcommand{\Trop}{\mathbb{T}}                     
\newcommand{\Btl}{\mathbb{B}}                      
\newcommand{\lett}[1]{\mathsf{#1}}                 

\newcommand{\ruleboxstyle}[2]{\begingroup
  \setlength{\fboxsep}{0.7pt}\setlength{\fboxrule}{0.4pt}%
  \fbox{$#1 #2$}\endgroup}
\newcommand{\rulebox}[1]{%
  \mathchoice
    {\ruleboxstyle{\displaystyle}{#1}}%
    {\ruleboxstyle{\textstyle}{#1}}%
    {\ruleboxstyle{\scriptstyle}{#1}}%
    {\ruleboxstyle{\scriptscriptstyle}{#1}}}

\newcommand{\strhat}[1]{\reallywidehat{#1}}

\newcommand{\dflt}[1]{\lfloor #1 \rfloor}
\newcommand{\fprod}{\textstyle\overset{\raisebox{-0.3ex}[0pt][0pt]{$\scriptscriptstyle\rightarrow$}}{\prod}}
\newcommand{\bprod}{\textstyle\overset{\raisebox{-0.3ex}[0pt][0pt]{$\scriptscriptstyle\leftarrow$}}{\prod}}
\newcommand{\listfprod}[1]{\textstyle\fprod #1}
\newcommand{\listbprod}[1]{\textstyle\bprod #1}

\newcommand{\ruleprod}{\mathop{\vcenter{\hbox{%
  \scalebox{0.76}{$\displaystyle\prod$}}}}}
\newcommand{\guardprod}[2]{%
  \bigl(\strhat{\mathop{\ruleprod}\limits_{\mathclap{\scriptscriptstyle #1}}\,#2}\bigr)}
\newcommand{\fieldsof}[1]{\ensuremath{\mathsf{fields}(#1)}}

\newcommand{\AlgPoint}[2]{%
  {\normalfont\footnotesize\bfseries\textcolor{#1}{\textsf{(#2)}}}}
\newcommand{\AlgPointStart}[2]{%
  \leavevmode\rule{0pt}{3.25ex}\makebox[0pt][l]{%
    \hspace*{-1.05em}\raisebox{1.9ex}[0pt][0pt]{\AlgPoint{#1}{#2}}}}
\newcommand{\AlgPointEnd}[2]{%
  \leavevmode\rule[-2.25ex]{0pt}{3.9ex}\makebox[0pt][l]{%
    \hspace*{-1.05em}\raisebox{-1.85ex}[0pt][0pt]{\AlgPoint{#1}{#2}}}}
\newcommand{\AlgPointEndLow}[2]{%
  \leavevmode\rule[-2.75ex]{0pt}{4.45ex}\makebox[0pt][l]{%
    \hspace*{-1.05em}\raisebox{-2.35ex}[0pt][0pt]{\AlgPoint{#1}{#2}}}}
\SetKwProg{Fn}{function}{}{}
\SetKwFunction{Star}{Star}
\SetKwFunction{FacDfltId}{FactorDefault}
\SetKwFunction{FacTestAsgn}{FactorTest}
\SetKwFunction{FacDfltAsgn}{FactorWrite}
\SetKw{KwAnd}{and}
\SetKw{KwwithKw}{with}
\SetKw{KwReturn}{return}

\definecolor{wsppfield}{HTML}{5B4BAA}
\definecolor{wsppbranch}{HTML}{1F77B4}
\definecolor{wsppmissing}{HTML}{C46A10}
\definecolor{wsppdefault}{HTML}{2E7D32}
\definecolor{lightgray}{RGB}{245,245,245}
\definecolor{headergray}{RGB}{220,220,220}

\usepackage{listings}
\usepackage{wrapfig}

\definecolor{leanKeyword}{RGB}{90,75,170}   
\definecolor{leanSort}{RGB}{31,119,180}     
\definecolor{leanComment}{RGB}{110,110,120}
\definecolor{leanString}{RGB}{160,60,60}

\lstdefinelanguage{lean}{
  morekeywords={inductive,structure,class,instance,def,abbrev,theorem,lemma,
    example,mutual,partial,noncomputable,where,with,match,fun,let,in,do,by,if,
    then,else,namespace,end,open,variable,variables,deriving,return,from,have,
    show,calc,termination_by,decreasing_by,extends,attribute,section,import},
  morekeywords=[2]{Type,Prop,Sort,Nat,Bool,Field,Value},
  sensitive=true,
  morecomment=[l]{--},
  morecomment=[s]{/-}{-/},
  morestring=[b]",
  literate=
    {→}{{$\rightarrow$}}1 {↦}{{$\mapsto$}}1 {⇒}{{$\Rightarrow$}}1
    {∀}{{$\forall$}}1 {∃}{{$\exists$}}1 {λ}{{$\lambda$}}1
    {∈}{{$\in$}}1 {∉}{{$\notin$}}1 {∅}{{$\emptyset$}}1
    {∧}{{$\wedge$}}1 {∨}{{$\vee$}}1 {¬}{{$\neg$}}1
    {≤}{{$\leq$}}1 {≥}{{$\geq$}}1 {≠}{{$\neq$}}1 {≡}{{$\equiv$}}1 {≈}{{$\approx$}}1
    {⊕}{{$\oplus$}}1 {⊙}{{$\odot$}}1 {⊗}{{$\otimes$}}1 {×}{{$\times$}}1
    {∪}{{$\cup$}}1 {∩}{{$\cap$}}1 {⊆}{{$\subseteq$}}1 {⊂}{{$\subset$}}1
    {∘}{{$\circ$}}1 {·}{{$\cdot$}}1 {∣}{{$\mid$}}1
    {⟨}{{$\langle$}}1 {⟩}{{$\rangle$}}1 {‖}{{$\Vert$}}1
    {⊤}{{$\top$}}1 {⊥}{{$\bot$}}1 {⊔}{{$\sqcup$}}1 {⊓}{{$\sqcap$}}1
    {⨆}{{$\bigsqcup$}}1 {∑}{{$\sum$}}1
    {ℕ}{{$\mathbb{N}$}}1 {ℤ}{{$\mathbb{Z}$}}1 {ℚ}{{$\mathbb{Q}$}}1 {ℝ}{{$\mathbb{R}$}}1
    {α}{{$\alpha$}}1 {β}{{$\beta$}}1 {γ}{{$\gamma$}}1
    {σ}{{$\sigma$}}1 {ρ}{{$\rho$}}1 {φ}{{$\varphi$}}1
}

\lstdefinestyle{leanstyle}{
  language=lean,
  basicstyle=\ttfamily\small,
  keywordstyle=\color{leanKeyword}\bfseries,
  keywordstyle=[2]\color{leanSort},
  commentstyle=\color{leanComment}\itshape,
  stringstyle=\color{leanString},
  showstringspaces=false,
  columns=fullflexible,
  keepspaces=true,
  inputencoding=utf8,
  extendedchars=true,
  upquote=true,
  breaklines=true,
  aboveskip=0.6em, belowskip=0.4em,
}

\DeclareMathOperator{\bighoplus}{\widehat{\bigoplus}}  

\newcommand{\keys}{\operatorname{keys}}                
\newcommand{\bighoplusmap}{\mathop{\widehat{\bigoplus}_{\!m}}}  

%% file: introduction.tex

Network verification is a success story for the programming languages community. The idea is to view the {\em network as a program} in a domain-specific language, which can be analyzed to establish properties of interest. Tools built on this idea are now routinely used in industry \cite{tiros2019cav,rcdc2019,batfish,batfish2023}, checking functional properties (e.g., reachability or isolation) to catch configuration errors early.

In many cases, however, the properties of interest to network operators are inherently quantitative. Operators care not only whether a packet can reach its destination, but how {\em quickly} it will arrive, how {\em much bandwidth} will be available, and the {\em probability} that it will be delivered in the presence of failures. Hence, a natural next step for network verification is to develop quantitative frameworks that can reason about these properties and others.

Quantitative reasoning is especially valuable early in the design process, when operators must compare alternatives and make tradeoffs (e.g., between topologies, routing policies, failure-recovery schemes, etc.). For instance, one design might provide more aggregate bandwidth but require packets to take longer paths, while another might minimize latency but concentrate traffic at bottlenecks and sacrifice resilience. So, the answer to ``which design is better?'' is rarely simple. Instead, network engineers must navigate a {\em frontier}, considering different tradeoffs in each design. 

These problems are not hypothetical: Amazon recently announced that they are using quasi-
\begin{wrapfigure}{r}{0.3\textwidth}  
\vspace{-10pt}
\includegraphics[width=.9\linewidth]{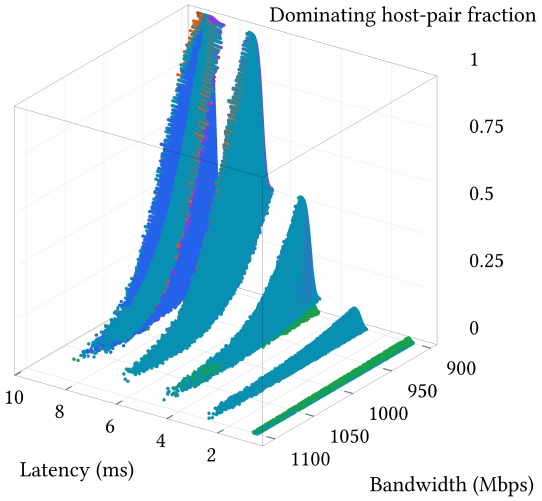}
\vspace{-8pt}
\end{wrapfigure}
random graphs to design new data centers~\cite{bernardi2026rng}, reviving academic ideas based on so-called Jellyfish topologies~\cite{singla2012jellyfish}. The increased path diversity in quasi-random graphs buys throughput and graceful degradation under failures, at the price of longer paths. An engineer comparing a quasi-random graph against a fabric based on a traditional fat-tree~\cite{singh2015clos} needs to reason across multiple objectives to determine which design suits their needs. Having a formalism to model and analyze the trade-offs of these  choices would provide assurances before large scale deployment. For instance, the graph on the right above (produced by our analyzer) provides validation that Jellyfish networks indeed enjoy improved per-server bandwidth and lower path latencies over comparable Fat-tree designs.  Additionally, we will show how one can also uncover other properties besides isolated latency and bandwidth by exploiting the fact that our framework is parametric in the choice of semiring.

Our analyzer is built using as core modeling language Weighted \netkat (\wnetkat)~\cite{wNetKAT}, which provides a semantic foundation for the style of quantitative reasoning we describe above. \wnetkat extends \netkat~\cite{netkat} with weights drawn from a semiring: every policy denotes a \emph{weighted} packet relation, and the choice of semiring determines the quantity being tracked---e.g., the tropical semiring for latency, the bottleneck semiring for bandwidth, the non-negative reals for probability, and so on. However, a direct implementation of \wnetkat's semantics would require representing a matrix with $\packets \times \packets$ entries over the semiring, which is impractical.

Our first core contribution is the development of \emph{weighted symbolic packet programs} (\WSPP{}s), a symbolic data structure that compactly represents the weighted relation denoted by a \wnetkat policy. This generalizes the data structure that powers \netkat's fastest verifier, \katch~\cite{KATch}, from Boolean to semiring weights. Policy combinators are computed analytically, using operations on the data structure itself. The most challenging aspect of our technical development is the treatment of the Kleene star. Whereas in \katch the star is computed as a straightforward fixpoint iteration over a finite lattice, in the weighted setting the corresponding chain of approximations may climb forever. Computing the star exactly requires factoring the iteration symbolically in a multi-pass algorithm that uses applications of algebraic identities in a strict order to ensure termination.  

Our second core contribution is the development of an extension to Pareto semirings~\cite{samborskii1992fourier} to keep track of both packet traces and Pareto-optimal weights. Intuitively, the weights track non-dominated tradeoffs made across multiple objectives while the trace witnesses the network path that achieves that weight. Notably, this construction recovers the per-path information that \netkat's $\dup$ primitive provides through its history-based semantics. Hence, working in the $\dup$-free fragment, with its substantially lighter semantics, does not sacrifice expressiveness.

Our third core contribution is a mechanization and implementation of our approach, in Lean and Rust. The Lean mechanization runs to ~53k lines of code and formalizes the \WSPP{} data structure, the star algorithm, the Pareto and trace-carrying constructions, and the algorithmic optimizations. The Rust implementation runs to ~9k lines of code and adds the sharing and caching optimizations that further improve performance. Our evaluation measures the performance of both implementations and compares with other state-of-the-art solvers, including  McNetKAT~\cite{smolka_scalable_2019} and Storm~\cite{hensel_probabilistic_2020}.

In summary, this paper makes the following contributions:
\begin{itemize}[leftmargin=*]
  \item We develop \WSPP{}s, a symbolic, representation for $\dup$-free \wnetkat policies, with lifted analytic constructions for all \wnetkat primitives (\Cref{sec:syntax-semantics}), including Kleene Star.
  \item We develop trace-carrying Pareto semirings, which compute multi-objective frontiers an absorption property that keeps the representation of Kleene star finite (\Cref{sec:pareto}).
  \item We present an implementation of \WSPP{}s and extensively evaluate its performance (\Cref{sec:implementation-evaluation}). Because \WSPP{}s are generic in the semiring, a single engine covers both and is competitive with \katch{}2~\cite{KATch2}, and orders of magnitude faster than McNetKAT and Storm.
  \item We present a case study on Fat-tree and Jellyfish data-center topologies in which we compare multiple objectives in a design-time analysis~\Cref{sec:implementation-evaluation}. We show that Jellyfish networks enjoy improved per-server bandwidth and lower path latencies over Fat-tree. More interestingly, our trace-carrying pareto semirings allow us to study tradeoffs between these two quantities, while highlighting how different routing methods affect the distribution of Pareto-optimal paths.
\end{itemize}

\noindent Our entire Lean development is available online in anonymized form,\footnote{\url{https://anonymous.4open.science/r/wnetkat-50F0}} and a checkpoint has been submitted as supplementary material. Some proofs were implemented with the assistance of Codex and Claude Code, and then manually scrutinized. 

Next, \Cref{sec:overview} provides a high-level overview to our approach, which we then detail in the core technical sections that follow.

%% file: overview.tex
Weighted NetKAT~\cite{wNetKAT} (\wnetkat) is a domain-specific programming language for reasoning about the quantitative behavior of networks and their forwarding policies. The syntax is concise\footnote{The reader familiar with \netkat will note the omission of \textsf{dup}. \Cref{sec:pareto-traces} shows how we do not loose reasoning power.}:
\[
\begin{array}{r@{~}c@{~}l}
  t,t' & ::= & \FALSE \mid \TRUE \mid \EQ{\fielda}{v} \mid \NOT{t} \mid \OR{t}{t'} \mid \AND{t}{t'} \\
  p,q & ::= & t \mid \ASSN{\fielda}{v} \mid \SEQ{p}{q} \mid \WEIGH{w}{p} \mid \ADD{p}{q} \mid \ITER{p}
\end{array}
\]
Here $t,t'$ are \emph{tests}, and $p,q$ are \emph{policies}. \emph{Atomic} policies are tests $t$ or assignments of \emph{values} $v$ to a \emph{field} $f$ (taken from some finite set). Policies can be composed sequentially ($\SEQ{p}{q}$), \emph{weighted} ($\WEIGH{w}{p}$, see below), or through a \emph{choice} operator ($\ADD{p}{q}$). $\ITER{p}$ is the \emph{Kleene star}, denoting, intuitively, an unbounded choice between doing nothing or executing $p$ an arbitrary number of times.
%
%

The expressivity of \wnetkat stems from its parametricity in an $\omega$-continuous \emph{semiring} $\semi = (\semidom, +, \cdot\,, \semizero, \semione)$ (cf.\ \Cref{sec:prelim-wnetkat}): Weights $w$ are taken from that semiring $\semi$, and $\semi$ determines the quantitative behavior that can be modeled. Semantically, every policy denotes a \emph{weighted packet relation}
$
  \sem{p} \colon \packets \to \packets \to \semidom,
$
where $\packets$ is the set of \emph{packets}---i.e., finite maps from fields to values. $\sem{p}(\pkta)(\pktb)$ is the \emph{weight} of executions starting in $\pkta$ and ending in $\pktb$, where:
\[
  \sem{\WEIGH{w}{p}}(\pkta)(\pktb) = w \cdot \sem{p}(\pkta)(\pktb)
  \qquad\text{and}\qquad
  \sem{\ADD{p}{q}}(\pkta)(\pktb) = \sem{p}(\pkta)(\pktb) + \sem{q}(\pkta)(\pktb)~
\]
That is, a weighting $\WEIGH{w}{p}$ means \enquote{weigh the execution of $p$ by $w$} and $\ADD{p}{q}$ means \enquote{add the result of the weights obtained from running $p$ and $q$, respectively}. For the Kleene star, we have
\[
    \sem{\ITER{p}}(\pkta)(\pktb) = \semione + \sem{p}(\pkta)(\pktb) + \sem{\ITER{\SEQ{p}{p}}}(\pkta)(\pktb) + \ldots ~,
\]
which reflects the unbounded choice between doing nothing ($\semione$) or executing $p$ arbitrarily often.
\begin{example}
    \label{ex:overview:wnetkat}
    Let $\semi = (\N\cup\{\pm\infty\},\max,+,-\infty,0)$ be the \emph{arctic semiring} and consider the policy
    \[
        p \quad{}={}\quad
        \ADD{
        \underbrace{
        \WEIGH{2}{(\EQ{\fielda}{1})}
        }_{\text{incur cost of $2$ if $\fielda=1$}}
        \qquad
        }{
        \qquad
          \underbrace{
            \ADD
          {\WEIGH{4}{(\ASSN{\fielda}{2})}~~~}
          {~~~\WEIGH{10}{(\ASSN{\fielda}{2})}}
          }_{
            \text{choose between incurring a cost of $4$ or $10$ when setting $\fielda$ to $2$}
          }
        }~.
    \]
This semiring models costs (think, e.g., latency), i.e., sequencing \emph{accumulates costs} and choice is resolved in a \emph{maximizing manner}. For instance, $\sem{p}(\{f \mapsto 1\})(\{f \mapsto 1\}) = 2$ due to the leftmost branch (the test $\EQ{f}{1}$ succeeds), $\sem{p}(\{f \mapsto 1\})(\{f \mapsto 3\}) = -\infty$ since there is no way to set $f$ to $3$, and the test on the left fails too which yields the input packet to be \enquote{dropped}. Lastly, $\sem{p}(\{f \mapsto 1\})(\{f \mapsto 2\}) = 10$ since the choice on the right is resolved in a cost \emph{maximizing} manner. If, on the other hand, we were to instantiate $\semi$ with the \emph{tropical semiring} $(\N\cup\{\infty\},\min,+,\infty,0)$, then $\sem{p}(\{f \mapsto 1\})(\{f \mapsto 2\}) = 4$ because choice would be resolved in a \emph{minimizing} manner. \hfill $\triangle$
\end{example}
\subsection{Weighted Symbolic Packet Programs: Efficient Symbolic Inference}
\input{wspps_overview}

\definecolor{ovevtcol}{RGB}{122, 58, 150}
\newcommand{\evt}[1]{\textcolor{ovevtcol}{\ensuremath{\mathsf{#1}}}}

\subsection{Weighted NetKAT for Trace Carrying, Pareto-Optimal Trade-Offs}
\label{sec:overview-pareto}

\input{fig-overview-pareto}

For a given network, the choice of semiring depends on the quantities that encode the relevant metrics and objectives. Over the
\emph{tropical semiring} $\Trop = (\N\cup\{\infty\},\min,+,\infty,0)$  from \Cref{ex:overview:wnetkat}, weights are latencies: choice keeps the faster
branch, sequencing adds delays. Over the \emph{bottleneck semiring} $\Btl = (\N\cup\{\pm\infty\},\max,\min,-\infty,\infty)$,
weights are bandwidths: choice keeps the wider branch and sequencing throttles
to the thinnest link. However, operators rarely optimize for just one objectives---they often need to reason about \emph{trade-offs} across multiple objectives.

Our first key insight is to leverage \emph{frontier semirings} \cite{geilen2007algebra, gondran2008graphs}: A general construction that lets us combine semirings, such as tropical and bottleneck, to reason about trade-offs. The weights are finite sets, called \emph{frontiers}, of vectors of the form $(\ell, w)$ (latency $\ell$, bandwidth $w$). All vectors in a frontier are \emph{pareto-optimal} in the sense that, for any given vector in the frontier, no other vector in the frontier is \emph{strictly better} in \emph{both} components (e.g. no lower latency \emph{and} wider bandwidth).

Concretely, consider the policy $p$ of \Cref{fig:overview-pareto}(a);
ignore the \textcolor{ovevtcol}{colored} event letters inside its weights
for the moment---they become relevant in \Cref{sec:overview-traces}, and
without them the leading factor of $p$ is simply $\SKIP$ (a no-op). A
backbone path crosses four aging spans: span $i$, entered at switch
$i{-}1$, still runs at $10 \cdot i$\,Gb/s, or traffic takes a leased
detour around it, over the modern $80$\,Gb/s express fabric, at the price
of $i$ extra milliseconds. The detours, however, are reserved for the
\emph{gold} service class: a second field
$\mathsf{cls} \in \{\mathsf{gold}, \mathsf{be}\}$ carries a packet's
class, the detour branches test it, and best-effort traffic may \emph{buy}
gold status at the ingress---an assignment $\ASSN{\mathsf{cls}}{\mathsf{gold}}$
whose weight charges one millisecond of authorization latency and confines
the flow to a $25$\,Gb/s premium tunnel. Which latency--bandwidth
trade-offs can traffic crossing the chain realize?

The semantics provides answers \emph{per packet pair}, with different pairs receiving
different frontiers. A gold route leases some set $S$ of detours: it
accumulates latency $\sum_{i \in S} i$ and is throttled by
$\min_{i \notin S} 10 \cdot i$---the thinnest span or to the
fabric's $80$\,Gb/s once every span is bypassed. Of these
sixteen routes, only five are pareto-optimal. Leasing a detour when a thinner span exists
essentially buys latency for nothing, while upgraded best-effort traffic receives the same
treatment below the tunnel cap:
\[
\begin{array}{@{}l@{\;\,}l@{}}
  \sem{p}(\{\mathsf{sw} \mapsto 0, \mathsf{cls} \mapsto \mathsf{gold}\})
         (\{\mathsf{sw} \mapsto 4, \mathsf{cls} \mapsto \mathsf{gold}\})
  &=\;
  \{(0,10),\, (1,20),\, (3,30),\, (6,40),\, (10,80)\}\,,\\[0.3em]
  \sem{p}(\{\mathsf{sw} \mapsto 0, \mathsf{cls} \mapsto \mathsf{be}\})
         (\{\mathsf{sw} \mapsto 4, \mathsf{cls} \mapsto \mathsf{gold}\})
  &=\;
  \{(1,10),\, (2,20),\, (4,25)\}\,.
\end{array}
\]
\Cref{fig:overview-pareto}(b)
plots both frontiers.

\subsection{Trace-Carrying Weights: Every Trade-Off with Its Witness}
\label{sec:overview-traces}

A frontier tells the operator \emph{what} is achievable, not \emph{how}:
which detours realize $(6, 40)$? We develop \emph{Trace-Carrying Frontier Semirings}, a novel structure that answers such questions \emph{inside the
weights}. Formally, a weight pairs each cost vector with a \emph{log}, and the
weighting operator doubles as a logging primitive, and, since our wSPP compiler is
semiring-agnostic, this enables the \emph{efficient, trace-carrying trade-off analysis}.

Now consider the colored event letters of \Cref{fig:overview-pareto}(a). They log the billable events: admitting a packet logs
$\evt{adm}$, buying gold logs $\evt{u}$, and leasing detour $i$ logs
$\evt{d_i}$; crossing a legacy span logs nothing. Read with its letters, the same policy
answers the same two queries with the same two staircases---but every
trade-off now ships with its witness, annotated in
\Cref{fig:overview-pareto}(b): the gold entry
$((6,40),\, \evt{adm}\,\evt{d_1 d_2 d_3})$
reads \emph{``lease the detours around spans $1$ to $3$,''} and the
best-effort entry $((4,25),\, \evt{adm}\,\evt{u}\,\evt{d_1 d_2})$ reads
\emph{``buy gold, then lease detours $1$ and $2$.''}

Trace-carrying pareto semirings have to be constructed with care: In the presence of unbounded iteration, frontiers might become infinite and we lose computably tractable inference. For this reason, we identify (i) algebraic properties of the underlying semirings and (ii) impose suitable order-theoretic notions so that frontiers are \emph{guaranteed to remain finite}, even in the presence of unbounded iteration. Intuitively, this is possible for all semirings where longer traces can only make things worse, so that ever-increasing traces can be omitted when tracking ways to reach pareto-optimality. Crucially, when there are \emph{multiple} (in an appropriate sense) \enquote{incomparable} traces giving raise to the \emph{same} pareto optimal vector, our construction \emph{tracks} all of them. We will demonstrate in \Cref{sec:case-study} how this fine-grained perspective can yield highly informative analysis results on the fault-tolerant achievability of pareto-optimal points.

\paragraph*{Fat Tree vs. Jellyfish Topologies}

As an example of how \wnetkat might be useful as a design aid, we designed a case study investigating the tradeoffs of two popular datacenter topology schemas.
Data center topology design has been relatively stable for a number of years, but is recently open for discussion again: Amazon has recently reported their preference for randomized topologies (based on ``Jellyfish'' networks~\cite{singla2012jellyfish}) over traditional Fat-trees~\cite{bernardi2026rng}). There is also a great deal of attention being paid towards optimizing topologies for machine learning loads~\cite{qian2024llm}.

To give an idea of the type of analysis we might want to do, \Cref{fig:topo-compare} shows an example of a Fat-tree and a Jellyfish network.
In order to keep the focus on the topology, each has the same host count, switch count, and per-switch port count.
By assigning appropriate bandwidth and latency costs to individual links, these models allow us to reason about the tradeoff space for the various host-to-host paths available under each topology and routing scheme.
In the figure, we show latencies and bandwidths for the two paths, and the example illustrates that even between the same two hosts, the situation might not be straightforward to compare. \wnetkat allows us to efficiently analyze the different quantitative tradeoff behavior of these networks, scaling our full case study in \Cref{sec:case-study} to larger networks with 250 hosts.

\begin{figure*}
    \centering
    \resizebox{.9\textwidth}{!}{\input{topology_comparison.tex}}
    \caption{Two paths between the same two hosts in two different topologies: A favors bandwidth, B favors latency; neither dominates the other.}
    \label{fig:topo-compare}
\end{figure*}

%% file: wspps_overview.tex
\newcommand{\srv}{\ensuremath{\mathsf{srv}}}
\newcommand{\rdy}{\ensuremath{\mathsf{rdy}}}
\newcommand{\bkf}{\ensuremath{\mathsf{bkf}}}
\newcommand{\pwait}{\ensuremath{\mathsf{wait}}}
\tikzset{
  ovstate/.style={draw, rounded corners=2.5pt, inner xsep=3pt,
    inner ysep=2.5pt, font=\scriptsize},
  ovlab/.style={font=\scriptsize, inner sep=1.5pt, fill=white},
  ovplate/.style={rounded corners=4pt, draw=black!30, fill=black!4},
  ovfield/.style={draw=wsppfield!80!black, circle, fill=wsppfield!15,
    minimum size=13pt, inner sep=0pt, font=\scriptsize},
  ovgate/.style={draw, diamond, minimum size=7pt, inner sep=0pt,
    fill=black!8},
  ovel/.style={font=\tiny, inner sep=1pt, fill=white},
  ovbbox/.style={draw=wsppbranch, fill=wsppbranch!5, fill opacity=0.55,
    rounded corners=2pt, inner sep=3pt},
  ovmbox/.style={draw=wsppmissing, fill=wsppmissing!5, fill opacity=0.55,
    rounded corners=2pt, inner sep=3pt},
  ovdbox/.style={draw=wsppdefault, fill=wsppdefault!5, fill opacity=0.55,
    rounded corners=2pt, inner sep=3pt},
}
\newsavebox{\ovpolbox}
\newcommand{\ovfitpol}[1]{%
  \sbox{\ovpolbox}{#1}%
  \ifdim\wd\ovpolbox>\linewidth
    \resizebox{\linewidth}{!}{\usebox{\ovpolbox}}%
  \else
    \usebox{\ovpolbox}%
  \fi}
\newcommand{\ovvp}{\vphantom{\displaystyle\sum_{j \geq 0} \bigl(\tfrac{1}{2}\bigr)^{j}}}

Weighted NetKAT (\wnetkat) is an expressive semiring-agnostic language, capable of modeling quantitative aspects of networks, ranging from single-objective reasoning to multi-objective trade-offs between, e.g., latencies and bottlenecks. Trace-carrying semiring constructions will additionally provide network operators with information on how pareto-optimal points are achieved.

The power of this expressiveness is only unleashed, however, when \emph{automatic inference} of these quantities, pareto frontiers, and witnesses is computationally tractable.  The core contribution of our paper is the design of   %
  \emph{a symbolic, semiring-agnostic data structure for weighted NetKAT,}  which we call weighted symbolic packet programs (wSPPs),   %
  \emph{alongside an efficient compiler from weighted NetKAT policies to wSPPs}.
wSPPs are a decision-tree-like data structure that enables a compact representation of the astronomical packet space of a \wnetkat policy. Using explicit enumeration to represent this packet space via, for instance, matrix representations of the weighted packet relation, would be computationally intractable. Further, our efficient compiler from \wnetkat policies to wSPPs enables an end-to-end framework for the semiring-agnostic automatic inference on \wnetkat policies. As a first test of scalability, we show that wSPPs, even though \emph{not} tailored to probabilistic networks, significantly outperform McNetKAT, a tool \emph{tailored} for inference on probabilistic networks, as well as state-of-the-art probabilistic model checkers.

While we took inspiration from NetKAT's efficient \katch~\cite{KATch} engine, the semiring setting poses several challenges that we highlight in the next sections, after introducing wSPPs by example.

\subsection{wSPPs by Example} In this section, we illustrate on the \emph{probability semiring} $([0,\infty], +, \cdot\,, 0, 1)$, suitable for modeling, e.g., faulty links or probabilistic routing strategies. Given a policy $p$ and a pair of input/output packets $\pkta,\pktb$, $\sem{p}(\pkta)(\pktb)$ is the probability of reaching packet $\pktb$ when executing $p$ on $\pkta$, and performing inference means determining these probabilities for \emph{all} such packet-pairs.

Concretely, consider one time slot of a retry protocol running at switch
$\EQ{\mathsf{sw}}{1}$, shown as the policy $p$ in
\Cref{fig:overview-policy}: a field $\mathsf{sw}$ holds a packet's
location and a field $\mathsf{st} \in \{\rdy, \pwait\}$ its transmission
state, and in each slot a ready packet is either transmitted (handed to
switch $2$), put on hold, or dropped, while a waiting packet keeps
waiting, gets ready again, or is likewise dropped. The guards on
$\mathsf{st}$ are disjoint, so the outer $\ADDN$ is a case split, and
within each case the branch weights form a convex combination, i.e., a
probabilistic choice. The central question is: Given input values/output values for $\mathsf{sw},\mathsf{st}$, in the form of input/output packets $\pkta,\pktb$, what is the probability $\sem{p}(\pkta)(\pktb)$ of ending up in $\pktb$ when starting in $\pkta$? We answer this question by \emph{compiling $p$ to a wSPP}.

\Cref{fig:overview-policy} also shows the wSPP $\hat{p}$ that our compiler produces for $p$. Reading a wSPP is a walk from the root towards a leaf, one field at a time. A circle labeled with a field $\fielda$ branches first on the value the \emph{input} packet $\pkta$ carries in field $\fielda$ (the dashed arrow being a \enquote{default edge} standing for all values \emph{not listed explicitly}) and, at the diamond
thus reached, then on the value the \emph{output} packet carries $\pktb$ (this time the \enquote{default edge} says that the input and the output value of the respective field agree). The weight of a
concrete input-output pair of packets is read off in one such walk: if it
ends in a leaf $\rulebox{w}$, the weight is $w$. In \Cref{fig:overview-policy}(b), for
instance, input $\{\mathsf{sw} \mapsto 1,\, \mathsf{st} \mapsto \rdy\}$ and
output $\{\mathsf{sw} \mapsto 2,\, \mathsf{st} \mapsto \rdy\}$ descend
along $1, 2, \rdy, \rdy$ to the leaf $\rulebox{\nicefrac{1}{4}}$, whereas every input at a switch other than $1$ falls into the dashed branch and meets the leaf $\rulebox{0}$ since $p$ drops such packets.
\input{fig-overview-policy}

\subsection{The Challenge: Dealing with Unbounded Weighted Iteration}
\label{sec:wspp:challenge}
The main challenge for wSPPs is dealing with \emph{unbouded iteration}, i.e., weighted NetKAT's Kleene star. To illustrate,
consider the policy $q$ of \Cref{fig:overview-iterated}(a), which runs $p$ from \Cref{fig:overview-policy}(a) for \emph{any
number of times}, then discards everything still at switch $1$. With
which probability does $q$ deliver a packet---that is, produce the output
$\pktb \triangleq \{\mathsf{sw} \mapsto 2,\, \mathsf{st} \mapsto \rdy\}$?
For inputs at switch $1$, the answer is an \emph{infinite and nested geometric sum over all delivery
runs}, parametric in the initial transmission state $\mathsf{st}$. A ready packet is
delivered by $k \in \N$ failed rounds---wait, keep waiting for some $j$
slots, get ready again---followed by one successful transmission,
\[
  \sem{q}(\{\mathsf{sw} \mapsto 1,\, \mathsf{st} \mapsto \rdy\})(\pktb)
  \;=\;
  \sum_{k \geq 0}\,\Bigl(\;
    \underbrace{\tfrac{1}{2}\ovvp}_{\textnormal{\footnotesize wait}} \cdot
    \underbrace{\displaystyle\sum_{j \geq 0} \bigl(\tfrac{1}{2}\bigr)^{j}}_{\textnormal{\footnotesize keep waiting}} \cdot\,
    \underbrace{\tfrac{1}{4}\ovvp}_{\textnormal{\footnotesize get ready}}
  \;\Bigr)^{\!k} \cdot
  \underbrace{\tfrac{1}{4}\ovvp}_{\textnormal{\footnotesize transmit}}
  \;=\;
  \sum_{k \geq 0} \bigl(\tfrac{1}{4}\bigr)^{k} \cdot \tfrac{1}{4}
  \;=\;
  \tfrac{1}{3}\,,
\]
while a waiting packet must first re-enter the ready queue and is then
delivered as above:
\[
  \sem{q}(\{\mathsf{sw} \mapsto 1,\, \mathsf{st} \mapsto \pwait\})(\pktb)
  \;=\;
  {\displaystyle\sum_{j \geq 0} \bigl(\tfrac{1}{2}\bigr)^{j}}\cdot\,
  {\tfrac{1}{4}\ovvp} \cdot\,
  {\tfrac{1}{3}\ovvp}
  \;=\;
  \tfrac{1}{6}\,.
\]

%
Dealing with this phenomenon, in a manner that keeps wSPPs concise and their compilation efficient, is non-trivial. First, observe that \katch's ~\cite{KATch} way of dealing with iteration is not an option: In \katch, the Kleene star is simply unfolded \emph{finitely often} until a fixed point is reached. In the purely Boolean setting, this terminates since the underlying lattice is \emph{finite}. For us, the infinite sums depicted above already show that finite unfoldings of the star only ever correspond to \emph{partial sums}, and these partial sums converge \emph{only in the limit} of sought-after probabilities.

Therefore, we have to take a different, significantly more complex approach (\Cref{sec:wspp-star}). In a nutshell, we exploit the \emph{equational properties} of \wnetkat to obtain a \emph{symbolic state elimination algorithm}, \enquote{denesting} the Kleene star in a modular way, similarly to how one computes regular expressions from finite automata. For efficiency, we have to carefully preserve \emph{canonicity properties} of wSPPs. These techniques pay off: our experiments show that our compiler significantly outperforms techniques tailored to probabilistic inference, despite being parametric in the choice of semiring. 

\input{fig-overview-iterated}


%% file: fig-overview-policy.tex
\begin{figure}[!tb]
\centering
\begingroup
\footnotesize
\begin{subfigure}[b]{0.52\textwidth}
\centering
\ovfitpol{$
\begin{array}{@{}l@{}}
  p \;\triangleq\; \EQ{\mathsf{sw}}{1} \SEQN \bigl(\\[0.5em]
  \;\;\;\;
    \EQ{\mathsf{st}}{\rdy} \SEQN \bigl(\,
      \overbrace{\WEIGH{\nicefrac{1}{4}}{(\ASSN{\mathsf{sw}}{2})}}^{\textnormal{\scriptsize\color{black!55} transmit}}
      \;\ADDN\;
      \overbrace{\WEIGH{\nicefrac{1}{2}}{(\ASSN{\mathsf{st}}{\pwait})}}^{\textnormal{\scriptsize\color{black!55} wait}}
      \;\ADDN\;
      \overbrace{\WEIGH{\nicefrac{1}{4}}{\DROP}}^{\textnormal{\scriptsize\color{black!55} drop}}
    \,\bigr)
    \;\ADDN\\[0.6em]
  \;\;\;\;
    \EQ{\mathsf{st}}{\pwait} \SEQN \bigl(\,
      \underbrace{\WEIGH{\nicefrac{1}{2}}{\SKIP}}_{\textnormal{\scriptsize\color{black!55} keep waiting}}
      \;\ADDN\;
      \underbrace{\WEIGH{\nicefrac{1}{4}}{(\ASSN{\mathsf{st}}{\rdy})}}_{\textnormal{\scriptsize\color{black!55} get ready}}
      \;\ADDN\;
      \underbrace{\WEIGH{\nicefrac{1}{4}}{\DROP}}_{\textnormal{\scriptsize\color{black!55} drop}}
    \,\bigr)\bigr)
\end{array}
$}
\vspace{1.1em}
\end{subfigure}%
\hfill
\begin{subfigure}[b]{0.45\textwidth}
\centering
\ovfitpol{%
\begin{tikzpicture}[>=Latex, line join=round,scale=.8]
  \node[ovfield] (sw) at (2.7,4.2) {$\mathsf{sw}$};
  \node[ovgate] (gb) at (1.55,3.4) {};
  \node[ovgate] (gd) at (4.85,3.4) {};
  \node[font=\scriptsize, inner sep=1pt] (lz0) at (4.85,2.6)
    {$\rulebox{0}$};
  \node[ovfield] (stm) at (0.35,2.45) {$\mathsf{st}$};
  \node[ovfield] (std) at (2.75,2.45) {$\mathsf{st}$};
  \node[ovgate] (g1) at (0.0,1.6) {};
  \node[ovgate] (gm) at (0.95,1.6) {};
  \node[font=\scriptsize, inner sep=1pt] (l1) at (0.0,0.85)
    {$\rulebox{\nicefrac{1}{4}}$};
  \node[ovgate] (g2) at (1.95,1.6) {};
  \node[ovgate] (g3) at (3.15,1.6) {};
  \node[ovgate] (gn) at (4.15,1.6) {};
  \node[font=\scriptsize, inner sep=1pt] (l2) at (1.95,0.85)
    {$\rulebox{\nicefrac{1}{2}}$};
  \node[font=\scriptsize, inner sep=1pt] (l3) at (2.8,0.85)
    {$\rulebox{\nicefrac{1}{2}}$};
  \node[font=\scriptsize, inner sep=1pt] (l4) at (3.55,0.85)
    {$\rulebox{\nicefrac{1}{4}}$};
  \node[font=\scriptsize, inner sep=1pt] (lz1) at (0.95,0.85)
    {$\rulebox{0}$};
  \node[font=\scriptsize, inner sep=1pt] (lz2) at (4.15,0.85)
    {$\rulebox{0}$};
  \draw[->] (sw) -- node[ovel,left=2pt] {$1$} (gb);
  \draw[->,dashed] (sw) -- node[ovel,right=2.5pt,font=\scriptsize] {$*$} (gd);
  \draw[->,dashed] (gd) -- (lz0);
  \draw[->,dashed] (gm) -- (lz1);
  \draw[->,dashed] (gn) -- (lz2);
  \draw[->] (gb) -- node[ovel,left=2pt] {$2$} (stm);
  \draw[->] (gb) -- node[ovel,right=2pt] {$1$} (std);
  \draw[->] (stm) -- node[ovel,left=1.5pt] (e1) {$\rdy$} (g1);
  \draw[->,dashed] (stm) -- node[ovel,pos=0.55,right=2pt,font=\scriptsize]
    (em) {$*$} (gm);
  \draw[->] (g1) -- node[ovel,left=1.5pt] (e2) {$\rdy$} (l1);
  \draw[->] (std) -- node[ovel,left=2.5pt] (e3) {$\rdy$} (g2);
  \draw[->] (std) -- node[ovel,left=2pt] (e4) {$\pwait$} (g3);
  \draw[->,dashed] (std) -- node[ovel,pos=0.55,above=2pt,font=\scriptsize]
    (en) {$*$} (gn);
  \draw[->] (g2) -- node[ovel,left=1.5pt] (e5) {$\pwait$} (l2);
  \draw[->] (g3) -- node[ovel,left=2pt] (e6) {$\pwait$} (l3);
  \draw[->] (g3) -- node[ovel,right=1.5pt] (e7) {$\rdy$} (l4);
  \begin{scope}[on background layer]
    \node[ovmbox, fit=(stm)(g1)(gm)(l1)(lz1)(e1)(e2)(em)] {};
    \node[ovdbox, fit=(std)(g2)(g3)(gn)(l2)(l3)(l4)(lz2)(e3)(e4)(e5)(e6)(e7)(en)] {};
    \node[ovbbox, fit=(gd)(lz0)] {};
  \end{scope}
\end{tikzpicture}}
\end{subfigure}
\endgroup
\caption{\footnotesize A probabilistic policy $p$ (left) and its compiled wSPP $\hat{p}$ (right).}
\label{fig:overview-policy}
\end{figure}

%% file: fig-overview-iterated.tex
\begin{figure}[tb]
\centering
\begingroup
\footnotesize
\begin{subfigure}[b]{0.55\textwidth}
\centering
\ovfitpol{$
\begin{array}{@{}l@{}}
  q \;\triangleq\; \Bigl(\, \EQ{\mathsf{sw}}{1} \SEQN \bigl(\\[0.5em]
  \;\;\;\;\;\;
    \EQ{\mathsf{st}}{\rdy} \SEQN \bigl(\,
      \overbrace{\WEIGH{\nicefrac{1}{4}}{(\ASSN{\mathsf{sw}}{2})}}^{\textnormal{\scriptsize\color{black!55} transmit}}
      \;\ADDN\;
      \overbrace{\WEIGH{\nicefrac{1}{2}}{(\ASSN{\mathsf{st}}{\pwait})}}^{\textnormal{\scriptsize\color{black!55} wait}}
      \;\ADDN\;
      \overbrace{\WEIGH{\nicefrac{1}{4}}{\DROP}}^{\textnormal{\scriptsize\color{black!55} drop}}
    \,\bigr)
    \;\ADDN\\[0.6em]
  \;\;\;\;\;\;
    \EQ{\mathsf{st}}{\pwait} \SEQN \bigl(\,
      \underbrace{\WEIGH{\nicefrac{1}{2}}{\SKIP}}_{\textnormal{\scriptsize\color{black!55} keep waiting}}
      \;\ADDN\;
      \underbrace{\WEIGH{\nicefrac{1}{4}}{(\ASSN{\mathsf{st}}{\rdy})}}_{\textnormal{\scriptsize\color{black!55} get ready}}
      \;\ADDN\;
      \underbrace{\WEIGH{\nicefrac{1}{4}}{\DROP}}_{\textnormal{\scriptsize\color{black!55} drop}}
    \,\bigr)\bigr)
  \hphantom{q \;\triangleq\;} \Bigr)^{\!*} \SEQN\; \NOTEQ{\mathsf{sw}}{1}
\end{array}
$}
\vspace{0.9em}
\end{subfigure}%
\hfill
\begin{subfigure}[b]{0.42\textwidth}
\centering
\ovfitpol{%
\begin{tikzpicture}[>=Latex, line join=round,scale=.7]
  \node[ovfield] (sw) at (1.7,4.2) {$\mathsf{sw}$};
  \node[ovgate] (gb) at (0.7,3.4) {};
  \node[ovgate] (gd) at (3.2,3.4) {};
  \node[ovfield] (tau) at (0.7,2.45) {$\mathsf{st}$};
  \node[font=\scriptsize, inner sep=1pt] (lone) at (3.2,2.45)
    {$\rulebox{1}$};
  \node[ovgate] (g1) at (0.1,1.6) {};
  \node[ovgate] (g2) at (1.35,1.6) {};
  \node[ovgate] (gm) at (2.3,1.6) {};
  \node[font=\scriptsize, inner sep=1pt] (l13) at (0.1,0.85)
    {$\rulebox{\nicefrac{1}{3}}$};
  \node[font=\scriptsize, inner sep=1pt] (l16) at (1.35,0.85)
    {$\rulebox{\nicefrac{1}{6}}$};
  \node[font=\scriptsize, inner sep=1pt] (lz) at (2.3,0.85)
    {$\rulebox{0}$};
  \draw[->] (sw) -- node[ovel,left=2pt] {$1$} (gb);
  \draw[->,dashed] (sw) -- node[ovel,right=2.5pt,font=\scriptsize] {$*$} (gd);
  \draw[->] (gb) -- node[ovel,left=1.5pt] {$2$} (tau);
  \draw[->,dashed] (gd) -- (lone);
  \draw[->] (tau) -- node[ovel,left=2pt] (e1) {$\rdy$} (g1);
  \draw[->] (tau) -- node[ovel,left=2pt] (e2) {$\pwait$} (g2);
  \draw[->,dashed] (tau) -- node[ovel,pos=0.55,above=2pt,font=\scriptsize]
    (e5) {$*$} (gm);
  \draw[->] (g1) -- node[ovel,left=1.5pt] (e3) {$\rdy$} (l13);
  \draw[->] (g2) -- node[ovel,left=1.5pt] (e4) {$\rdy$} (l16);
  \draw[->,dashed] (gm) -- (lz);
  \begin{scope}[on background layer]
    \node[ovmbox, fit=(tau)(g1)(g2)(gm)(l13)(l16)(lz)(e1)(e2)(e3)(e4)(e5)] {};
    \node[ovdbox, fit=(gd)(lone)] {};
  \end{scope}
\end{tikzpicture}}
\end{subfigure}
\endgroup
\caption{The policy $q$ (left) obtained from iterating the policy $p$ from \Cref{fig:overview-policy} and its wSPP $\hat{q}$ (right).}
\label{fig:overview-iterated}
\end{figure}

%% file: fig-overview-pareto.tex
\newsavebox{\ovparboxA}
\newcommand{\ovparfitA}[1]{%
  \sbox{\ovparboxA}{#1}%
  \ifdim\wd\ovparboxA>\linewidth
    \resizebox{\linewidth}{!}{\usebox{\ovparboxA}}%
  \else
    \usebox{\ovparboxA}%
  \fi}
\begin{figure}[tb]
\centering
\begingroup
\footnotesize
\begin{subfigure}[b]{0.56\textwidth}
\centering
\ovparfitA{$
\begin{array}{@{}l@{}}
  \mathit{upg} \;\triangleq\; \SKIP \,\ADDN\, \bigl(\,
    \EQ{\mathsf{cls}}{\mathsf{be}} \SEQN
    \overbrace{\WEIGH{\{((1,\, 25),\, \evt{u})\}}{(\ASSN{\mathsf{cls}}{\mathsf{gold}})}}^{\textnormal{\scriptsize\color{black!55} buy gold: $1$\,ms, $25$\,Gb/s tunnel}}
  \,\bigr)\\[1.1em]
  \mathit{seg}_i \;\triangleq\; \EQ{\mathsf{sw}}{i{-}1} \SEQN \bigl(\,
    \underbrace{\WEIGH{\{(0,\, 10 i)\}}{(\ASSN{\mathsf{sw}}{i})}}_{\textnormal{\scriptsize\color{black!55} cross legacy span $i$}}
    \,\ADDN\,
    \underbrace{\EQ{\mathsf{cls}}{\mathsf{gold}} \SEQN \WEIGH{\{((i,\, 80),\, \evt{d_i})\}}{(\ASSN{\mathsf{sw}}{i})}}_{\textnormal{\scriptsize\color{black!55} leased detour, gold only}}
  \,\bigr)\\[1.6em]
  p \;\triangleq\;
    \underbrace{\WEIGH{\{((0, \infty),\, \evt{adm})\}}{\SKIP}}_{\textnormal{\scriptsize\color{black!55} admission}}
    \SEQN\; \mathit{upg} \SEQN \mathit{seg}_1 \SEQN \mathit{seg}_2
    \SEQN \mathit{seg}_3 \SEQN \mathit{seg}_4
\end{array}
$}
\vspace{.5em}
\end{subfigure}%
\hfill
\begin{subfigure}[b]{0.42\textwidth}
\centering
\begin{tikzpicture}[x=0.30cm, y=0.034cm, line join=round,scale=.9]
  \fill[wsppbranch!8]
    (0,0) -- (0,10) -- (1,10) -- (1,20) -- (3,20) -- (3,30) -- (6,30)
    -- (6,40) -- (10,40) -- (10,80) -- (11.8,80) -- (11.8,0) -- cycle;
  \fill[wsppmissing!12]
    (1,0) -- (1,10) -- (2,10) -- (2,20) -- (4,20) -- (4,25)
    -- (11.8,25) -- (11.8,0) -- cycle;
  \draw[wsppbranch, semithick]
    (0,10) -- (1,10) -- (1,20) -- (3,20) -- (3,30) -- (6,30)
    -- (6,40) -- (10,40) -- (10,80) -- (11.8,80);
  \draw[wsppmissing, semithick]
    (1,10) -- (2,10) -- (2,20) -- (4,20) -- (4,25) -- (11.8,25);
  \draw[->] (0,0) -- (12.6,0);
  \node[font=\tiny, anchor=north east] at (18.6,0) {latency [ms]};
  \draw[->] (0,0) -- (0,92)
    node[above right=0pt, font=\tiny] {bandwidth [Gb/s]};
  \foreach \x in {2,4,6,8,10}
    \draw (\x,0) -- (\x,-3) node[below, font=\tiny] {$\x$};
  \foreach \y in {20,40,60,80}
    \draw (0,\y) -- (-0.35,\y) node[left, font=\tiny] {$\y$};
  \foreach \p in {(2,10),(3,10),(4,10),(5,10),(6,10),(7,10),(9,10),
                  (4,20),(5,20),(8,20),(7,30)}
    \fill[black!28] \p circle (1.1pt);
  \foreach \p in {(0,10),(1,20),(3,30),(6,40),(10,80)}
    \fill[wsppbranch] \p circle (1.6pt);
  \foreach \p in {(1,10),(2,20),(4,25)}
    \fill[wsppmissing] \p circle (1.6pt);
  \node[font=\tiny, anchor=north west, inner sep=1pt]
    at (0.05,8.6) {$\evt{adm}$};
  \node[font=\tiny, anchor=south east, inner sep=1pt]
    at (2.85,22.0) {$\evt{adm}\,\evt{d_1}$};
  \node[font=\tiny, anchor=south east, inner sep=1pt]
    at (5.85,31.8) {$\evt{adm}\,\evt{d_1 d_2}$};
  \node[font=\tiny, anchor=south west, inner sep=1pt]
    at (6.15,41.8) {$\evt{adm}\,\evt{d_1 d_2 d_3}$};
  \node[font=\tiny, anchor=south east, inner sep=1pt]
    at (9.8,81.8) {$\evt{adm}\,\evt{d_1 d_2 d_3 d_4}$};
  \node[font=\tiny, anchor=north west, inner sep=1pt]
    at (1.85,8.6) {$\evt{adm}\,\evt{u}$};
  \node[font=\tiny, anchor=north west, inner sep=1pt]
    at (2.15,18.6) {$\evt{adm}\,\evt{u}\,\evt{d_1}$};
  \node[font=\tiny, anchor=north west, inner sep=1pt]
    at (5.5,16.2) {$\evt{adm}\,\evt{u}\,\evt{d_1 d_2}$};
  \draw[ovevtcol!50, line width=0.3pt] (7.2,16.6) -- (4.32,23.6);
\end{tikzpicture}
\end{subfigure}
\endgroup
\caption{\footnotesize The two-class bypass chain. ~The policy $p$
($i = 1, \ldots, 4$) over the Pareto semiring of latency--bandwidth
frontiers (left); a literal weight $\{(\ell, w)\}$ charges $\ell$ ms and
caps the bandwidth at $w$\,Gb/s. The \textcolor{ovevtcol}{colored} event
letters log the billable events. The two frontiers of $p$ (right), shaded regions are the down-closed achievable trade-offs, the
staircases their finite frontier representations, and each frontier point
carries its \textcolor{ovevtcol}{witness}.}
\Description{Two panels: the two-class bypass-chain policy whose weights
carry colored event letters for admission, the gold purchase, and the
leased detours; and a plot with latency on the horizontal and bandwidth on
the vertical axis showing two staircases, a blue five-point gold frontier
rising to eighty gigabits and an orange three-point frontier for upgraded
best-effort traffic capped at twenty-five gigabits, a faint cloud of pruned
gold routes, shaded achievable regions, and witness words annotating every
frontier point.}
\label{fig:overview-pareto}
\end{figure}

%% file: topology_comparison.tex
\usetikzlibrary{shapes.geometric,shadows}
\definecolor{pod0}{HTML}{3E64A8}
\definecolor{pod1}{HTML}{DB7E2E}
\definecolor{pod2}{HTML}{3F9A5C}
\definecolor{pod3}{HTML}{C33A4B}
\definecolor{core}{HTML}{3A3A3A}
\definecolor{tor}{HTML}{1E7A7A}
\definecolor{hostfill}{HTML}{F4F4F6}
\definecolor{pathA}{HTML}{C9A227}
\definecolor{pathB}{HTML}{C0369D}
\begin{tikzpicture}[
  every node/.style={font=\sffamily},
  switch/.style={circle, draw=white, line width=0.7pt, fill=#1, minimum size=5.8pt, inner sep=0pt,
    drop shadow={opacity=0.30, shadow xshift=0.35pt, shadow yshift=-0.35pt}},
  hostnode/.style={rectangle, draw=black!55, fill=hostfill, minimum size=4pt, inner sep=0pt},
  hostendpoint/.style={rectangle, draw=black, fill=hostfill, thick, minimum size=6.5pt, inner sep=0pt,
    drop shadow={opacity=0.30, shadow xshift=0.35pt, shadow yshift=-0.35pt}},
  linkline/.style={black!22, line width=0.35pt},
  chordline/.style={tor!45!white, line width=0.3pt, opacity=0.4},
  pathhalo/.style={white, line width=3.2pt, line cap=round, opacity=0.9},
  pathline/.style={#1, line width=1.6pt, line cap=round},
  pathlabel/.style={fill=#1!10!white, draw=#1!65!black, line width=0.4pt, text opacity=1,
    rounded corners=1.5pt, inner sep=1.6pt,
    drop shadow={opacity=0.22, shadow xshift=0.3pt, shadow yshift=-0.3pt}},
]
\begin{scope}[xshift=0.000cm]
  \node[anchor=south] at (3.800, 4.900) {\bfseries\footnotesize Fat-tree ($k=4$): 20 switches, 16 hosts};
  \node[anchor=east, gray, font=\scriptsize] at (-0.32, 4.600) {core};
  \node[anchor=east, gray, font=\scriptsize] at (-0.32, 3.067) {agg};
  \node[anchor=east, gray, font=\scriptsize] at (-0.32, 1.533) {edge};
  \node[anchor=east, gray, font=\scriptsize] at (-0.32, 0.000) {hosts};
  \draw[linkline] (0.475, 1.533) -- (0.475, 3.067);
  \draw[linkline] (0.475, 1.533) -- (1.425, 3.067);
  \draw[linkline] (1.425, 1.533) -- (0.475, 3.067);
  \draw[linkline] (1.425, 1.533) -- (1.425, 3.067);
  \draw[linkline] (0.475, 3.067) -- (0.950, 4.600);
  \draw[linkline] (0.475, 3.067) -- (2.850, 4.600);
  \draw[linkline] (1.425, 3.067) -- (4.750, 4.600);
  \draw[linkline] (1.425, 3.067) -- (6.650, 4.600);
  \draw[linkline] (2.375, 1.533) -- (2.375, 3.067);
  \draw[linkline] (2.375, 1.533) -- (3.325, 3.067);
  \draw[linkline] (3.325, 1.533) -- (2.375, 3.067);
  \draw[linkline] (3.325, 1.533) -- (3.325, 3.067);
  \draw[linkline] (2.375, 3.067) -- (0.950, 4.600);
  \draw[linkline] (2.375, 3.067) -- (2.850, 4.600);
  \draw[linkline] (3.325, 3.067) -- (4.750, 4.600);
  \draw[linkline] (3.325, 3.067) -- (6.650, 4.600);
  \draw[linkline] (4.275, 1.533) -- (4.275, 3.067);
  \draw[linkline] (4.275, 1.533) -- (5.225, 3.067);
  \draw[linkline] (5.225, 1.533) -- (4.275, 3.067);
  \draw[linkline] (5.225, 1.533) -- (5.225, 3.067);
  \draw[linkline] (4.275, 3.067) -- (0.950, 4.600);
  \draw[linkline] (4.275, 3.067) -- (2.850, 4.600);
  \draw[linkline] (5.225, 3.067) -- (4.750, 4.600);
  \draw[linkline] (5.225, 3.067) -- (6.650, 4.600);
  \draw[linkline] (6.175, 1.533) -- (6.175, 3.067);
  \draw[linkline] (6.175, 1.533) -- (7.125, 3.067);
  \draw[linkline] (7.125, 1.533) -- (6.175, 3.067);
  \draw[linkline] (7.125, 1.533) -- (7.125, 3.067);
  \draw[linkline] (6.175, 3.067) -- (0.950, 4.600);
  \draw[linkline] (6.175, 3.067) -- (2.850, 4.600);
  \draw[linkline] (7.125, 3.067) -- (4.750, 4.600);
  \draw[linkline] (7.125, 3.067) -- (6.650, 4.600);
  \draw[linkline] (0.265, 0.000) -- (0.475, 1.533);
  \draw[linkline] (0.685, 0.000) -- (0.475, 1.533);
  \draw[linkline] (1.215, 0.000) -- (1.425, 1.533);
  \draw[linkline] (1.635, 0.000) -- (1.425, 1.533);
  \draw[linkline] (2.165, 0.000) -- (2.375, 1.533);
  \draw[linkline] (2.585, 0.000) -- (2.375, 1.533);
  \draw[linkline] (3.115, 0.000) -- (3.325, 1.533);
  \draw[linkline] (3.535, 0.000) -- (3.325, 1.533);
  \draw[linkline] (4.065, 0.000) -- (4.275, 1.533);
  \draw[linkline] (4.485, 0.000) -- (4.275, 1.533);
  \draw[linkline] (5.015, 0.000) -- (5.225, 1.533);
  \draw[linkline] (5.435, 0.000) -- (5.225, 1.533);
  \draw[linkline] (5.965, 0.000) -- (6.175, 1.533);
  \draw[linkline] (6.385, 0.000) -- (6.175, 1.533);
  \draw[linkline] (6.915, 0.000) -- (7.125, 1.533);
  \draw[linkline] (7.335, 0.000) -- (7.125, 1.533);
  \draw[pathhalo] (2.165, 0.000) -- (2.375, 1.533);
  \draw[pathhalo] (2.375, 1.533) -- (2.375, 3.067);
  \draw[pathhalo] (2.375, 3.067) -- (0.950, 4.600);
  \draw[pathhalo] (0.950, 4.600) -- (6.175, 3.067);
  \draw[pathhalo] (6.175, 3.067) -- (6.175, 1.533);
  \draw[pathhalo] (6.175, 1.533) -- (5.965, 0.000);
  \draw[pathhalo] (2.165, 0.000) -- (2.375, 1.533);
  \draw[pathhalo] (2.375, 1.533) -- (3.325, 3.067);
  \draw[pathhalo] (3.325, 3.067) -- (6.650, 4.600);
  \draw[pathhalo] (6.650, 4.600) -- (7.125, 3.067);
  \draw[pathhalo] (7.125, 3.067) -- (6.175, 1.533);
  \draw[pathhalo] (6.175, 1.533) -- (5.965, 0.000);
  \draw[pathline=pathA] (2.165, 0.000) -- (2.375, 1.533);
  \draw[pathline=pathA] (2.375, 1.533) -- (2.375, 3.067);
  \draw[pathline=pathA] (2.375, 3.067) -- (0.950, 4.600);
  \draw[pathline=pathA] (0.950, 4.600) -- (6.175, 3.067);
  \draw[pathline=pathA] (6.175, 3.067) -- (6.175, 1.533);
  \draw[pathline=pathA] (6.175, 1.533) -- (5.965, 0.000);
  \draw[pathline=pathB] (2.165, 0.000) -- (2.375, 1.533);
  \draw[pathline=pathB] (2.375, 1.533) -- (3.325, 3.067);
  \draw[pathline=pathB] (3.325, 3.067) -- (6.650, 4.600);
  \draw[pathline=pathB] (6.650, 4.600) -- (7.125, 3.067);
  \draw[pathline=pathB] (7.125, 3.067) -- (6.175, 1.533);
  \draw[pathline=pathB] (6.175, 1.533) -- (5.965, 0.000);
  \node[switch=pod0] at (0.475, 1.533) {};
  \node[switch=pod0] at (1.425, 1.533) {};
  \node[switch=pod0] at (0.475, 3.067) {};
  \node[switch=pod0] at (1.425, 3.067) {};
  \node[switch=pod1] at (2.375, 1.533) {};
  \node[switch=pod1] at (3.325, 1.533) {};
  \node[switch=pod1] at (2.375, 3.067) {};
  \node[switch=pod1] at (3.325, 3.067) {};
  \node[switch=pod2] at (4.275, 1.533) {};
  \node[switch=pod2] at (5.225, 1.533) {};
  \node[switch=pod2] at (4.275, 3.067) {};
  \node[switch=pod2] at (5.225, 3.067) {};
  \node[switch=pod3] at (6.175, 1.533) {};
  \node[switch=pod3] at (7.125, 1.533) {};
  \node[switch=pod3] at (6.175, 3.067) {};
  \node[switch=pod3] at (7.125, 3.067) {};
  \node[switch=core] at (0.950, 4.600) {};
  \node[switch=core] at (2.850, 4.600) {};
  \node[switch=core] at (4.750, 4.600) {};
  \node[switch=core] at (6.650, 4.600) {};
  \node[hostnode] at (0.265, 0.000) {};
  \node[hostnode] at (0.685, 0.000) {};
  \node[hostnode] at (1.215, 0.000) {};
  \node[hostnode] at (1.635, 0.000) {};
  \node[hostnode] at (2.165, 0.000) {};
  \node[hostnode] at (2.585, 0.000) {};
  \node[hostnode] at (3.115, 0.000) {};
  \node[hostnode] at (3.535, 0.000) {};
  \node[hostnode] at (4.065, 0.000) {};
  \node[hostnode] at (4.485, 0.000) {};
  \node[hostnode] at (5.015, 0.000) {};
  \node[hostnode] at (5.435, 0.000) {};
  \node[hostnode] at (5.965, 0.000) {};
  \node[hostnode] at (6.385, 0.000) {};
  \node[hostnode] at (6.915, 0.000) {};
  \node[hostnode] at (7.335, 0.000) {};
  \node[hostendpoint] at (2.165, 0.000) {};
  \node[hostendpoint] at (5.965, 0.000) {};
  \node[pathlabel=pathA] at (2.713, 4.656) {\scriptsize \textcolor{pathA}{\textbf{A}} 947\,Mbps $\cdot$ 4.58\,ms};
  \node[pathlabel=pathB] at (5.947, 3.670) {\scriptsize \textcolor{pathB}{\textbf{B}} 905\,Mbps $\cdot$ 4.48\,ms};
\end{scope}
\begin{scope}[xshift=13.776cm]
  \node[anchor=south] at (0.000, 4.900) {\bfseries\footnotesize Jellyfish (random regular, $k=4$): 20 switches, 16 hosts, uplinks=3};
  \draw[chordline] (0.000, 4.042) -- (3.893, 1.762);
  \draw[chordline] (0.000, 4.042) -- (1.265, 3.957);
  \draw[chordline] (0.000, 4.042) -- (3.312, 1.276);
  \draw[chordline] (1.265, 3.957) -- (-3.893, 2.838);
  \draw[chordline] (1.265, 3.957) -- (-2.406, 3.709);
  \draw[chordline] (2.406, 3.709) -- (-1.265, 3.957);
  \draw[chordline] (2.406, 3.709) -- (2.406, 0.891);
  \draw[chordline] (2.406, 3.709) -- (3.893, 1.762);
  \draw[chordline] (3.312, 3.324) -- (-1.265, 3.957);
  \draw[chordline] (3.312, 3.324) -- (-3.312, 3.324);
  \draw[chordline] (3.312, 3.324) -- (-3.312, 1.276);
  \draw[chordline] (3.893, 2.838) -- (0.000, 0.558);
  \draw[chordline] (3.893, 2.838) -- (-1.265, 0.643);
  \draw[chordline] (3.893, 2.838) -- (-2.406, 0.891);
  \draw[chordline] (4.094, 2.300) -- (-4.094, 2.300);
  \draw[chordline] (4.094, 2.300) -- (0.000, 0.558);
  \draw[chordline] (4.094, 2.300) -- (-1.265, 3.957);
  \draw[chordline] (3.893, 1.762) -- (-3.312, 3.324);
  \draw[chordline] (3.312, 1.276) -- (0.000, 0.558);
  \draw[chordline] (3.312, 1.276) -- (-3.312, 1.276);
  \draw[chordline] (2.406, 0.891) -- (-3.312, 3.324);
  \draw[chordline] (2.406, 0.891) -- (-3.893, 2.838);
  \draw[chordline] (1.265, 0.643) -- (-2.406, 3.709);
  \draw[chordline] (1.265, 0.643) -- (-4.094, 2.300);
  \draw[chordline] (1.265, 0.643) -- (-2.406, 0.891);
  \draw[chordline] (-1.265, 0.643) -- (-2.406, 0.891);
  \draw[chordline] (-1.265, 0.643) -- (-3.312, 1.276);
  \draw[chordline] (-3.893, 1.762) -- (-3.893, 2.838);
  \draw[chordline] (-3.893, 1.762) -- (-2.406, 3.709);
  \draw[chordline] (-3.893, 1.762) -- (-4.094, 2.300);
  \draw[linkline] (0.000, 4.042) -- (0.000, 4.460);
  \draw[linkline] (1.265, 3.957) -- (1.569, 4.354);
  \draw[linkline] (2.406, 3.709) -- (2.984, 4.047);
  \draw[linkline] (3.312, 3.324) -- (4.107, 3.570);
  \draw[linkline] (3.893, 2.838) -- (4.828, 2.967);
  \draw[linkline] (4.094, 2.300) -- (5.076, 2.300);
  \draw[linkline] (3.893, 1.762) -- (4.828, 1.633);
  \draw[linkline] (3.312, 1.276) -- (4.107, 1.030);
  \draw[linkline] (2.406, 0.891) -- (2.984, 0.553);
  \draw[linkline] (1.265, 0.643) -- (1.569, 0.246);
  \draw[linkline] (0.000, 0.558) -- (0.000, 0.140);
  \draw[linkline] (-1.265, 0.643) -- (-1.569, 0.246);
  \draw[linkline] (-2.406, 0.891) -- (-2.984, 0.553);
  \draw[linkline] (-3.312, 1.276) -- (-4.107, 1.030);
  \draw[linkline] (-3.893, 1.762) -- (-4.828, 1.633);
  \draw[linkline] (-4.094, 2.300) -- (-5.076, 2.300);
  \draw[pathhalo] (4.107, 3.570) -- (3.312, 3.324);
  \draw[pathhalo] (3.312, 3.324) -- (-3.312, 3.324);
  \draw[pathhalo] (-3.312, 3.324) -- (3.893, 1.762);
  \draw[pathhalo] (3.893, 1.762) -- (2.406, 3.709);
  \draw[pathhalo] (2.406, 3.709) -- (-1.265, 3.957);
  \draw[pathhalo] (-1.265, 3.957) -- (4.094, 2.300);
  \draw[pathhalo] (4.094, 2.300) -- (5.076, 2.300);
  \draw[pathhalo] (4.107, 3.570) -- (3.312, 3.324);
  \draw[pathhalo] (3.312, 3.324) -- (-1.265, 3.957);
  \draw[pathhalo] (-1.265, 3.957) -- (4.094, 2.300);
  \draw[pathhalo] (4.094, 2.300) -- (5.076, 2.300);
  \draw[pathline=pathA] (4.107, 3.570) -- (3.312, 3.324);
  \draw[pathline=pathA] (3.312, 3.324) -- (-3.312, 3.324);
  \draw[pathline=pathA] (-3.312, 3.324) -- (3.893, 1.762);
  \draw[pathline=pathA] (3.893, 1.762) -- (2.406, 3.709);
  \draw[pathline=pathA] (2.406, 3.709) -- (-1.265, 3.957);
  \draw[pathline=pathA] (-1.265, 3.957) -- (4.094, 2.300);
  \draw[pathline=pathA] (4.094, 2.300) -- (5.076, 2.300);
  \draw[pathline=pathB] (4.107, 3.570) -- (3.312, 3.324);
  \draw[pathline=pathB] (3.312, 3.324) -- (-1.265, 3.957);
  \draw[pathline=pathB] (-1.265, 3.957) -- (4.094, 2.300);
  \draw[pathline=pathB] (4.094, 2.300) -- (5.076, 2.300);
  \node[switch=tor] at (0.000, 4.042) {};
  \node[switch=tor] at (1.265, 3.957) {};
  \node[switch=tor] at (2.406, 3.709) {};
  \node[switch=tor] at (3.312, 3.324) {};
  \node[switch=tor] at (3.893, 2.838) {};
  \node[switch=tor] at (4.094, 2.300) {};
  \node[switch=tor] at (3.893, 1.762) {};
  \node[switch=tor] at (3.312, 1.276) {};
  \node[switch=tor] at (2.406, 0.891) {};
  \node[switch=tor] at (1.265, 0.643) {};
  \node[switch=tor] at (0.000, 0.558) {};
  \node[switch=tor] at (-1.265, 0.643) {};
  \node[switch=tor] at (-2.406, 0.891) {};
  \node[switch=tor] at (-3.312, 1.276) {};
  \node[switch=tor] at (-3.893, 1.762) {};
  \node[switch=tor] at (-4.094, 2.300) {};
  \node[switch=tor] at (-3.893, 2.838) {};
  \node[switch=tor] at (-3.312, 3.324) {};
  \node[switch=tor] at (-2.406, 3.709) {};
  \node[switch=tor] at (-1.265, 3.957) {};
  \node[hostnode] at (0.000, 4.460) {};
  \node[hostnode] at (1.569, 4.354) {};
  \node[hostnode] at (2.984, 4.047) {};
  \node[hostnode] at (4.107, 3.570) {};
  \node[hostnode] at (4.828, 2.967) {};
  \node[hostnode] at (5.076, 2.300) {};
  \node[hostnode] at (4.828, 1.633) {};
  \node[hostnode] at (4.107, 1.030) {};
  \node[hostnode] at (2.984, 0.553) {};
  \node[hostnode] at (1.569, 0.246) {};
  \node[hostnode] at (0.000, 0.140) {};
  \node[hostnode] at (-1.569, 0.246) {};
  \node[hostnode] at (-2.984, 0.553) {};
  \node[hostnode] at (-4.107, 1.030) {};
  \node[hostnode] at (-4.828, 1.633) {};
  \node[hostnode] at (-5.076, 2.300) {};
  \node[hostendpoint] at (4.107, 3.570) {};
  \node[hostendpoint] at (5.076, 2.300) {};
  \node[pathlabel=pathA] at (2.747, 1.447) {\scriptsize \textcolor{pathA}{\textbf{A}} 942\,Mbps $\cdot$ 5.54\,ms};
  \node[pathlabel=pathB] at (-0.558, 4.314) {\scriptsize \textcolor{pathB}{\textbf{B}} 902\,Mbps $\cdot$ 2.56\,ms};
\end{scope}
\end{tikzpicture}

%% file: preliminaries.tex


This section recaps the fragment of \wnetkat used in this paper. Throughout, we fix
a finite set $\fields$ of \emph{fields} and a finite set $\values$ of
\emph{values}. A \emph{packet} $\pkta \in \packets \triangleq \fields \to
\values$ assigns a value to every field; we write $\pkta.\fielda$ for
$\pkta(\fielda)$ and $\updatepkt{\pkta}{\fielda}{v}$ for the packet that
agrees with $\pkta$ except that field $\fielda$ maps to $v$.

\subsection{Syntax and Semantics}\label{sec:prelim-wnetkat}

The syntax and semantics of ($\dup$-free) \wnetkat are given in
\Cref{fig:wnetkat-semantics}. The syntax is parametric on a \emph{semiring} $\semi = (\semidom, +, \cdot\,, \semizero, \semione)$, 
  consisting of a carrier $\semidom$ of {\em weights}, where
  $(\semidom, +, \semizero)$ is a commutative monoid, $(\semidom, \cdot\,, \semione)$ is a monoid, and
   $\cdot$ distributes over $+$ and $\semizero$ is annihilating
  ($\semizero \cdot \wta = \wta \cdot \semizero = \semizero$).  
  \emph{Predicates} $t$ are Boolean combinations
of tests $\EQ{\fielda}{v}$ and receive a two-level semantics. First, the
evident Boolean interpretation $\sem{t}_{\tests}(\pkta) \in \{0,1\}$ on
tests, which is then embedded into weights via Iverson brackets. 
Second, \emph{policies} $p$ denote
\emph{weighted packet relations}
\(
  \sem{p} \colon \packets \to \packets \to \semidom,
\)
where $\sem{p}(\pkta)(\pktb)$ is the total weight with which $p$ transforms
the input packet $\pkta$ into the output packet $\pktb$. 

A predicate, used as a policy, either passes the input packet on unchanged
(with weight $\semione$) or drops it (weight $\semizero$). The modification
$\ASSN{\fielda}{v}$ deterministically updates field $\fielda$.
 The quantitative combinators mirror the semiring structure: the \emph{choice}
$\ADD{p}{q}$ adds weights, the \emph{weighting} $\WEIGH{\wta}{p}$ scales them,
and \emph{sequential composition} $\SEQ{p}{q}$ sums, over all intermediate
packets $\pktc$, the product of the weight of $p$ getting from $\pkta$ to
$\pktc$ and the weight of $q$ getting from $\pktc$ to $\pktb$. 

$\ITER{p}$ is an infinite choice
$\ADD{\SKIP}{\ADD{p}{\ADD{(\SEQ{p}{p})}{\cdots}}}$ between running $p$ any
finite number of times. Its semantics is the countable sum
$\sum_{n \in \N} \sem{\NFOLD{p}}(\pkta)(\pktb)$, which exists if we assume $\semi$ to be
$\omega$-continuous.

\begin{definition}
  \label{def:omega-continuous-semiring}
  A semiring $\semi$ is 
  \emph{$\omega$-continuous} w.r.t.\ a partial order
  ${\semiord} \subseteq \semidom^2$, if
 $(\semidom, \semiord)$ is \emph{$\omega$-complete}:
      every ascending chain $\wta_0 \semiord \wta_1 \semiord \ldots$ has a
      least upper bound $\semisup_{i} \wta_i$; $\semiord$ is \emph{positive}: $\semizero$ is the least element; and
 $+$ and $\cdot$ are monotone and preserve suprema of ascending chains in each
      argument.
\end{definition}

$\omega$-continuity provides \emph{countable sums}:
for any $\{\wta_i\}_{i \in \N}$, the partial sums form an ascending chain (by
positivity and monotonicity), so
\(
  \sum_{i \in \N} \wta_i
  \triangleq
    \semisup_{k \in \N}\; \sum_{i < k} \wta_i
\)
is well-defined, and its value is invariant under permutations of the
summands~\cite{domain_theory}. 
 \Cref{fig:embedding-portfolio} in \Cref{sec:embedding} collects the
$\omega$-continuous semirings of common network quantities. A running
example throughout this paper is the \emph{probability semiring}
$\Rext = ([0,\infty], +, \cdot\,, 0, 1)$ of extended non-negative reals
with $\wta^*$ as the geometric series $\sum_i \wta^i$ and the usual order.
This is the semiring underlying probabilistic analyses.

\begin{figure}[t]
\centering
\begin{minipage}[t]{.40\textwidth}
\textbf{Syntax}
\[
\begin{array}{r@{\ \ }r@{~}c@{~}l@{}}
  \textrm{Packets} & \multicolumn{3}{l}{\packets \ni \pkta,\pktb,\pktc}\\[0.2em]
  \textrm{Weights} & \multicolumn{3}{l}{\semidom \ni \wta, \wtb}\\[0.2em]
  \textrm{Predicates} & \tests \ni t, u &
  ::= & \FALSE \mid \TRUE \\
  & & \mid & \EQ{\fielda}{v} \mid \NOT{t} \\
  & & \mid & \OR{t}{t'} \mid \AND{t}{t'} \\[0.2em]
  \textrm{Policies} & \pols \ni p,q &
  ::= & t                       \\
  & & \mid & \ASSN{\fielda}{v}  \\
  & & \mid & \SEQ{p}{q}         \\
  & & \mid & \WEIGH{\wta}{p}    \\
  & & \mid & \ADD{p}{q}         \\
  & & \mid & \ITER{p}           \\
\end{array}
\]
\end{minipage}\hfill\vrule\hfill
\begin{minipage}[t]{.56\textwidth}
\textbf{Semantics}
\[
\def\arraystretch{1.15}
\begin{array}{r@{~~}c@{~~}l}
  \sem{p} &\colon& \packets \to \packets \to \semidom \\
  \sem{t}(\pkta)(\pktb) & = & [\pkta = \pktb \,\wedge\, \sem{t}_{\tests}(\pkta) = 1] \\
  \sem{\ASSN{\fielda}{v}}(\pkta)(\pktb) & = & [\pktb = \updatepkt{\pkta}{\fielda}{v}] \\
  \sem{\SEQ{p}{q}}(\pkta)(\pktb) & = & \displaystyle\sum_{\pktc \in \packets} \sem{p}(\pkta)(\pktc) \cdot \sem{q}(\pktc)(\pktb) \\
  \sem{\WEIGH{\wta}{p}}(\pkta)(\pktb) & = & \wta \cdot \sem{p}(\pkta)(\pktb) \\
  \sem{\ADD{p}{q}}(\pkta)(\pktb) & = & \sem{p}(\pkta)(\pktb) + \sem{q}(\pkta)(\pktb) \\
  \sem{\ITER{p}}(\pkta)(\pktb) & = & \displaystyle\sum_{n \in \N} \sem{\NFOLD{p}}(\pkta)(\pktb) \\
  \multicolumn{3}{l}{\quad\text{where } \NFOLD[0]{p} = \SKIP
    \text{ and } \NFOLD[n+1]{p} = \SEQ{p}{\NFOLD{p}}}
\end{array}
\]
\end{minipage}
\caption{Syntax and semantics of ($\dup$-free) \wnetkat over an
$\omega$-continuous semiring $\semi = (\semidom, +, \cdot\,, \semizero,
\semione)$. Operator precedence, from strongest- to weakest-binding:
$(-)^*$, $\neg$, $\wedge$, $\vee$, $\odot$, $;$, $\oplus$. Here,
$[P]$ denotes the Iverson bracket into $\semi$: $[P] = \semione$ if the
proposition $P$ holds and $[P] = \semizero$ otherwise.}
\Description{Two-panel figure: the grammar of dup-free \wnetkat on the left and
its denotational semantics as weighted packet relations on the right.}
\label{fig:wnetkat-semantics}
\end{figure}


\subsection{Equational Reasoning}\label{sec:prelim-equational}
We write $p \equiv q$ for \emph{semantic equivalence} $\sem{p} = \sem{q}$.
On $\dup$-free policies, $\equiv$ agrees with the history-based semantics of
full \wnetkat~\cite{wNetKAT}, so all results below apply unchanged in that
setting.

The data structure of \Cref{sec:syntax-semantics} is manipulated by
\emph{equational} rewriting of policies, so we collect the laws this paper
uses. The equational theory of \wnetkat is, however,
\emph{not} that of \netkat.

\begin{example}
  \label{example:non-idempotent}
  Choice is not idempotent: distributivity in $\semi$ only yields
  $\ADD{p}{p} \equiv \WEIGH{(\semione + \semione)}{p}$, which differs from
  $p$ in general (e.g.\ over $\Rext$). Hence, the idempotence axiom of
  \kat~\cite{kat} and the set-based reasoning behind \netkat's
  symbolic verifiers~\cite{KATch,smolka_fast_2015} is unsound for \wnetkat.
\end{example}

What survives is the theory of $\omega$-continuous (indeed,
Conway~\cite{iteration_semirings, inductive_semirings}) semirings. Since
the semantic domain $\Pk \rightarrow \Pk \rightarrow S$ of \wnetkat forms such a semiring (\cite{matrix_continuity_result,wNetKAT})
with appropriate semiring operations,
we obtain the two star laws that drive the algorithm of
\Cref{sec:wspp-star}:

\begin{theorem}[star laws]
  \label{theorem:star-eqns}
  For all $p, q \in \pols$:
  \begin{align*}
    \ITER{p}  \equiv \ADD{\SKIP}{\SEQ{p}{\ITER{p}}}
               \equiv \ADD{\SKIP}{\SEQ{\ITER{p}}{p}}
           \;  \textsc{\em (fixpoint)}\; 
    \ITER{(\ADD{p}{q})}  \equiv \SEQ{\ITER{p}}{\ITER{(\SEQ{q}{\ITER{p}})}}
               \equiv\SEQ{\ITER{(\SEQ{\ITER{p}}{q})}}{\ITER{p}}
             \; \textsc{\em (denesting)}
  \end{align*}
\end{theorem}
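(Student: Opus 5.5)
We work entirely in the semantic domain. Write $M_p \triangleq \sem{p}$, viewed as a $\packets\times\packets$ matrix over $\semi$. Then $\sem{\SEQ{p}{q}}$ is the matrix product $M_pM_q$, $\sem{\ADD{p}{q}}$ is entrywise sum, and $\sem{\SKIP}$ is the identity $I$. Hence $\sem{\ITER{p}} = \sum_{n} M_p^n$, since by induction $\sem{\NFOLD{p}} = M_p^n$. Because $\packets$ is finite, these matrices form an $\omega$-continuous semiring under the pointwise order: suprema are taken entrywise, and matrix product preserves them since each entry is a finite sum of products. Both laws therefore reduce to identities about countable sums in an $\omega$-continuous semiring $A$. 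I would prove them for arbitrary $a,b\in A$ with $a^*\triangleq\sum_n a^n$.

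\textbf{Fixpoint.} I would compute $I + a\sum_n a^n$. Left multiplication by $a$ preserves suprema of ascending chains and finite sums, so it distributes over the countable sum, giving $\sum_n a^{n+1}$. Splitting off the $n=0$ summand of $a^*$ (partial sums are cofinal) then yields $I + \sum_n a^{n+1} = a^*$. The right-handed version is symmetric, using continuity of right multiplication.

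\textbf{Denesting.} I would expand both sides as countable sums over words. The left side is $(a+b)^* = \sum_n (a+b)^n$. Expanding each power by distributivity gives the sum, over all words $u\in\{a,b\}^*$, of the product $\pi(u)$. This uses that countable sums are invariant under rearrangement and can be regrouped; regrouping follows from the permutation invariance quoted in the paper together with the fact that a sup of sups over a countable index set is a sup over the product. The right side is $a^*(ba^*)^* = \sum_{k}\, a^* (b a^*)^k$. Expanding each $a^*$ factor with continuity of multiplication, this becomes the sum over tuples $(n_0,\dots,n_k)$ of $a^{n_0}ba^{n_1}\cdots ba^{n_k}$. Every word $u$ factors uniquely as $a^{n_0}(ba^{n_1})\cdots(ba^{n_k})$, where $k$ is the number of $b$'s in $u$. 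This gives a bijection between the two index sets, so the two countable sums coincide. The mirrored form $(a^*b)^*a^*$ follows by the symmetric factorization $(a^{n_0}b)\cdots(a^{n_{k-1}}b)a^{n_k}$.

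\textbf{Main obstacle.} The main obstacle is justifying the infinite manipulations rigorously: generalized associativity and commutativity of countable sums in an $\omega$-continuous semiring, and distributivity of $\cdot$ over countable sums on both sides. I would establish one lemma first. For any countable index set $J$ partitioned into countable blocks $J_i$, we have $\sum_{j\in J} w_j = \sum_i\sum_{j\in J_i} w_j$. The proof shows that each side is an upper bound of the other's finite partial sums, using monotonicity and positivity. Distributivity, $c\cdot\sum_j w_j = \sum_j c\, w_j$, follows directly from sup-preservation of $\cdot$. With these two facts, the word-counting bijection argument above is routine.
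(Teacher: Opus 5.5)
Your proposal is correct and follows essentially the paper's route: the paper also obtains both laws by noting that the semantic domain $\packets \to \packets \to \semidom$, with pointwise sum and matrix-product sequencing, is an $\omega$-continuous semiring, and then appealing to the star laws valid in any such (Conway) semiring. The only difference is that the paper delegates both steps to citations and the Lean development, while you verify continuity of the matrix semiring directly from finiteness of $\packets$ and prove fixpoint and denesting by hand, via countable-sum regrouping and the unique factorization $a^{n_0}ba^{n_1}\cdots ba^{n_k}$ of words; this is a sound, self-contained rendering of the cited facts.
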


Beyond the star laws we use two families of easily verified facts, collected
in \Cref{fig:preliminaries-dependency-map}: \emph{linearity} laws (E2), which
hold in any semiring, and \emph{packet algebra} laws (E3) about the
interaction of tests and assignments, familiar from
\netkat~\cite{netkat}. We refer to them by their identifiers (E2a), (E3b),
etc.

\input{preliminaries-dependency-map}

%% file: preliminaries-dependency-map.tex

\begin{figure}[!tbp]
\centering
\fontsize{8}{9}\selectfont
\setlength{\tabcolsep}{1pt}
\renewcommand{\arraystretch}{1.10}
\setlength{\fboxsep}{2pt}
\newlength{\depmapwidth}
\setlength{\depmapwidth}{\linewidth}
\definecolor{equationfacts}{HTML}{FFF3DA}
\definecolor{packetfacts}{HTML}{EAF3FF}
\newcommand{\factgroup}[2]{%
    \colorbox{#1}{\parbox{\dimexpr\depmapwidth-2\fboxsep\relax}{\textbf{#2}}}
    \\[-0.15em]}
\newcommand{\factid}[1]{\textnormal{\tiny\textcolor{black!65}{(#1)}}\!}
\begin{tabular}{@{}c}
\toprule
\factgroup{equationfacts}{E2. Linearity (valid in any semiring)}
\begin{minipage}[t]{\linewidth}
\vspace{0pt}
\[
\begin{gathered}
\factid{E2a}\;\SEQ{p}{(\ADD{q_1}{q_2})}\equiv \ADD{\SEQ{p}{q_1}}{\SEQ{p}{q_2}},\qquad
\SEQ{(\ADD{q_1}{q_2})}{p}\equiv \ADD{\SEQ{q_1}{p}}{\SEQ{q_2}{p}},\\
\factid{E2b}\;\WEIGH{\wta}{(\ADD{q_1}{q_2})}\equiv \ADD{(\WEIGH{\wta}{q_1})}{(\WEIGH{\wta}{q_2})},\qquad
\WEIGH{(\wta + \wtb)}{q}\equiv \ADD{(\WEIGH{\wta}{q})}{(\WEIGH{\wtb}{q})},\\
\factid{E2c}\;\SEQ{(\WEIGH{\wta}{q_1})}{q_2}\equiv \WEIGH{\wta}{(\SEQ{q_1}{q_2})},\qquad
\WEIGH{\wta}{(\WEIGH{\wtb}{q})}\equiv \WEIGH{(\wta \cdot \wtb)}{q},\\
\factid{E2d}\;\SEQ{\SKIP}{p}\equiv \SEQ{p}{\SKIP}\equiv \WEIGH{\semione}{p}\equiv p,\qquad
\ADD{p}{\DROP} \equiv p,\qquad
\factid{E2e}\;\SEQ{\DROP}{p}\equiv \SEQ{p}{\DROP}\equiv\DROP\equiv \WEIGH{\semizero}{p}
\end{gathered}
\]
\end{minipage}\\[0.65em]
\midrule
\factgroup{packetfacts}{E3. Packet algebra (single-field laws, cf.~\cite{netkat})}
\begin{minipage}[t]{\linewidth}
\vspace{0pt}
\[
\begin{gathered}
\factid{E3a}\;\SEQ{\ASSN{\fielda}{v_1}}{\ASSN{\fielda}{v_2}}\equiv \ASSN{\fielda}{v_2},\qquad
\SEQ{\ASSN{\fielda}{v}}{\EQ{\fielda}{v}}\equiv \ASSN{\fielda}{v},\qquad
\factid{E3b}\;\SEQ{\ASSN{\fielda}{v_1}}{\EQ{\fielda}{v_2}}\equiv \DROP
  \;\text{ if } v_1\ne v_2,\\
\factid{E3c}\;\SEQ{\EQ{\fielda}{v_1}}{\EQ{\fielda}{v_2}}\equiv\DROP
  \text{ if } v_1 \neq v_2,\quad
  \SEQ{\NOTEQ{\fielda}{v}}{\EQ{\fielda}{v}}\equiv\DROP,\quad
  \SEQ{\NOTEQ{\fielda}{v_2}}{\EQ{\fielda}{v_1}}\equiv \EQ{\fielda}{v_1}
  \;\text{ if } v_1\ne v_2,\\
\factid{E3d}\;\SEQ{e}{p}\equiv \SEQ{p}{e}
\quad\text{if } e\in\{\EQ{\fielda}{v},\,\NOTEQ{\fielda}{v},\,\ASSN{\fielda}{v}\}
\text{ and $p$ does not mention $\fielda$}
\end{gathered}
\]
\end{minipage}\\
\bottomrule
\end{tabular}
\caption{\footnotesize Linearity laws inherited from
the semiring structure and packet algebra laws governing tests
and assignments.}
\label{fig:preliminaries-dependency-map}
\end{figure}

%% file: syntax-semantics.tex
\newcommand{\Mid}{\ensuremath{\mathsf{mid}}}

This section develops the central data structure of this paper:
\emph{weighted symbolic packet programs} (wSPPs), a symbolic representation
of the weighted packet relations $\sem{p}$ denoted by \wnetkat policies.
In \Cref{sec:embedding-defs}, we carefully define the computable structures our algorithm operates on, followed by an extensive presentation of our compiler in \Cref{sec:wspp}.

\subsection{Star Semirings and Embeddings}\label{sec:embedding-defs}

The denotational semantics needs $\semi$ to be $\omega$-continuous so that
iteration can be defined with infinite sums $\sem{\ITER{p}} = \sum_{n\in\N}\sem{p}^n$
(\Cref{sec:preliminaries}). Any \emph{executable} data structure, however,
requires computing with weights: representing them finitely, adding,
multiplying, and comparing them. These two demands conflict, as illustrated
by the following example.

\begin{example}
  \label{example:embedding-tension}
  The probability semiring $\Rext$ is $\omega$-continuous, but
  uncountable and thus contains weights with no finite representation. The relevant
  computable part, the extended non-negative rationals $\Qext$, is a semiring
  of finite objects with decidable equality and order, but it is not
  $\omega$-continuous: the chain $3,\, 3.1,\, \ldots$ of
  approximations of $\pi$ has no least upper bound in $\Qext$.
  \hfill $\triangle$
\end{example}

Looking ahead, the operations of
\Cref{sec:wspp-easy-ops,sec:wspp-star} evaluate finitely many sums
and products of leaf weights,
test weights for equality with $\semizero$, and, crucially, take only finitely many \emph{semiring-level star operations}
$\wtb^* = \semione + \wtb + \wtb\cdot\wtb + \wtb\cdot\wtb\cdot\wtb + \dots$. This
motivates the following interface:

\begin{definition}
  \label{def:star-semiring}
  A \emph{star semiring}
  $\csr = (\csrset, +, \cdot\,, \semizero, \semione, \cstar{(-)}, \semiord)$
  is a semiring with a unary operation
  $\cstar{(-)} \colon \csrset \to \csrset$ and a partial order
  ${\semiord} \subseteq \csrset^2$, both arbitrary. We call $\csr$
  \emph{computable} if its elements admit finite representations under which
  $=$ and $\semiord$ are decidable and $+$, $\cdot$, and $\cstar{(-)}$ are
  computable.
\end{definition}

Computable star semirings are then appropriately embedded into $\omega$-continuous ones:


\begin{definition}[embedding]
  \label{def:embedding}
  Let $\csr$ be a star semiring and $\semi$ an $\omega$-continuous semiring.
  An \emph{embedding} $\emb \colon \csr \hookrightarrow \semi$ is a map
  $\emb \colon \csrset \to \semidom$ such that  $\emb(\semizero)=\semizero$, $\emb(\semione)=\semione$, and, for all $a, b \in \csrset$:
  \begin{align*}
\textnormal{(i) }    \emb(a+b)=\emb(a)+\emb(b),\quad
    \emb(a\cdot b)&=\emb(a)\cdot\emb(b); \quad \textnormal{(ii) }
    a \semiord b ~\text{iff}~ \emb(a) \semiord \emb(b); \quad \textnormal{(iii) }
    \emb(\cstar{a}) = \emb(a)^{*}. 
  \end{align*}
\end{definition}

Conditions (i) and (ii) make $\emb$ an \emph{injective} semiring homomorphism.
Crucially, in (iii) the left-hand side is
an application of the $\cstar{(-)}$ function we are \emph{free to program} and the right-hand side a
countable semiring sum. For the probability semiring $\cstar{(-)}$ is the
geometric series: $\Qext \hookrightarrow \Rext$ with
\[
  \cstar{a} \;\triangleq\;
  \begin{cases}
    (1-a)^{-1} & \text{if } a < 1,\\
    \infty & \text{otherwise}
  \end{cases}
  \;=\;
  \sum_{i=0}^{\infty} a^i~.
\]
\Cref{sec:appendix-portfolio}
collects such embeddings for the standard semirings of network quantities,
and \Cref{sec:pareto} constructs them for pareto semirings and trace-carrying
pareto semirings.

\smallskip\noindent\textbf{Convention.}
For the remainder of the paper we fix an $\omega$-continuous semiring
$\semi$ and a computable star semiring $\csr$ with an embedding
$\emb \colon \csr \hookrightarrow \semi$. Since $\emb$ is injective
and preserves all structure, we \emph{identify} $\csr$ with its image such that
\[
  \csrset \subseteq \semidom
  \quad\text{is a subsemiring with decidable } =, \semiord
  \text{ and computable } +,\, \cdot\,,\,
  \cstar{(-)} = (-)^{*}|_{\csrset}.
\]
By (iii), taking the semantic star never leaves $\csrset$, and $\cstar{(-)}$
is its computable implementation. Two consequences deserve emphasis. First,
although \Cref{def:star-semiring} postulates no laws for $\cstar{(-)}$, an
embedded star semiring inherits every law of the semantic star. For
instance, for all $a, b \in \csrset$,
\[
  \semione + a \cdot \cstar{a} \;=\; \cstar{a}
  \qquad\text{and}\qquad
  \cstar{(a+b)} \;=\; \cstar{a} \cdot \cstar{(b \cdot \cstar{a})},
\]
the analogues of \textsc{fixpoint} and \textsc{denesting}
(\Cref{theorem:star-eqns}). Second, all weights appearing in policies and
data structures below are drawn from $\csrset$. $\pols_{\csr}$ denotes the policies with weights from $\csrset$.
Since we have embedded $\csrset$ in $\semi$, we use $w$ to
denote elements from both sets in the following sections.

\subsection{Weighted Symbolic Packet Programs}\label{sec:wspp}

Let us start with the central definition:
\begin{definition}
  \label{def:wspp-syntax}
  Write $X \rightharpoonup Y$ for the set of \emph{finite} partial maps from
  $X$ to $Y$, and fix an arbitrary total order on $\fields$. The set
  $\WSPP_{\csr}$ of \emph{weighted symbolic packet programs} over $\csr$ is
  defined inductively:
  \begin{enumerate}
    \item every weight $\wta \in \csrset$ yields a \emph{leaf}
      $\rulebox{\wta} \in \WSPP_{\csr}$;
    \item $\wspp(\fielda, b, m, d) \in
      \WSPP_{\csr}$ is a \emph{node}, where $\fielda$ is a field,
      $b\colon \values \rightharpoonup (\values
      \rightharpoonup \WSPP_{\csr})$ are the \emph{test-assignments},
      $m \colon \values \rightharpoonup
      \WSPP_{\csr}$ are the \emph{default assignments}, and
      $d \in \WSPP_{\csr}$ is the \emph{default identity}, provided every field occurring in $b$, 
       $m$, or
       $d$ is
       greater \mbox{than $\fielda$ \emph{(well-formedness)}.}
  \end{enumerate}
  We abbreviate $\rulebox{\semione}$ and $\rulebox{\semizero}$ by
  $\hat{\mathbb{1}}$ and $\hat{\mathbb{0}}$, respectively, and tacitly
  assume well-formedness from here on.
\end{definition}

Operationally, a wSPP transforms an input packet $\pkta$
into weighted output packets. A leaf $\rulebox{\wta}$ outputs $\pkta$ unchanged
with weight $\wta$. A node $\rho = \wspp(\fielda,b,m,d)$ inspects
$u = \pkta.\fielda$ and branches:
\begin{enumerate}
  \item if $u \in \dom(b)$, then $\rho$ rewrites $\fielda$ to some
    $v \in \dom(b(u))$ and continues with $b(u)(v)$,
    %
  \item otherwise $\rho$ rewrites $\fielda$ to some $z \in \dom(m)$ and
    continues with $m(z)$, 
  \item and if moreover $u \notin \dom(m)$, then $\rho$ may also leave
    $\fielda$ untouched and continue with $d$. 
\end{enumerate}
Well-formedness guarantees that along every root-to-leaf path fields
strictly increase, so each field is read and written at most once, an invariant that all algorithms below preserve
and rely on.

We package the behavior of all three cases above into a single function referred to as the \emph{branch map} of a node
$\rho = \wspp(\fielda, b, m, d)$, defined as
\[
  \rho(-) \;\colon\; \values \to (\values \rightharpoonup \WSPP_{\csr}),
  \quad
  \rho(u) \;\triangleq\;
  \begin{cases}
    b(u) & \text{if } u \in \dom(b),\\
    m \cup \{u \mapsto d\} & \text{if } u \notin \dom(b)\cup\dom(m),\,
      d \neq \hat{\mathbb{0}},\\
    m & \text{otherwise.}
  \end{cases}
\]
For every input value $u$, the branch $\rho(u)$ is a finite partial map
from output values to continuations: an entry $(v \mapsto \sigma) \in
\rho(u)$ reads \emph{``on input value $u$, the node may set $\fielda$ to
$v$ and continue with $\sigma$.''} The finitely many
input values in $\dom(b)$ have their branches listed explicitly, and
\emph{all other} input values $u$ share the same branch, described by
$m$ (the outputs that do not depend on $u$) together with $d$.

Formally, we give wSPPs meaning by "reading them back" into policies.

\begin{definition}
    \label{def:wspp-sem}
    The \emph{read-back policy} $\norm{\rho} \in \pols_{\csr}$ of a wSPP
    $\rho \in \WSPP_{\csr}$ is defined inductively by
    $\norm{\rulebox{\wta}} \triangleq \WEIGH{\wta}{\SKIP}$ and, for
    $\rho = \wspp(\fielda,b,m,d)$,
    \(
    \norm{\rho}
      \;\triangleq\;
      {B_\rho}
      \ADDN
      {M_\rho}
      \ADDN
      {D_\rho},
      ~\text{where}
    \)
    \[
    {B_\rho}
      \triangleq
      \bigoplus_{u \in \dom(b)}
        (\EQ{\fielda}{u}) \SEQN
        \Bigl(
          \bigoplus_{\mathclap{(v \mapsto \sigma) \in b(u)}}
            \;(\ASSN{\fielda}{v}) \SEQN
            \norm{\sigma}
        \Bigr),
    \qquad
    {M_\rho}
      \triangleq
      \Bigl(\prod_{\mathclap{u \in \dom(b)}} \NOTEQ{\fielda}{u}\Bigr) \SEQN
      \Bigl(
        \bigoplus_{\mathclap{(z \mapsto \sigma) \in m}}
          \;(\ASSN{\fielda}{z}) \SEQN
          \norm{\sigma}
      \Bigr),
    \]
    \[
    {D_\rho}
      \triangleq
      \Bigl(\prod_{\mathclap{u \in \dom(b)}} \NOTEQ{\fielda}{u}\Bigr) \SEQN
      \Bigl(\prod_{\mathclap{z \in \dom(m)}} \NOTEQ{\fielda}{z}\Bigr) \SEQN
      \norm{d}~.
    \]
    Here $\prod_i p_i$ denotes $\SEQ{\SEQ{p_1}{\cdots}}{p_n}$; empty
    products are $\SKIP$ and empty sums are $\DROP$. We write
    $\sem{\rho} \triangleq \sem{\norm{\rho}}$ and
    $\rho \equiv \rho'$ for $\norm{\rho} \equiv \norm{\rho'}$.
\end{definition}

The tests make the informal reading precise: the default assignments apply
only on input values not tested by $b$, and the default identity only on
values mentioned nowhere. \Cref{fig:wspp-example} shows a wSPP in all three
guises: as a term, as a tree, and as a policy. The following two crucial facts about the
read-back are proved by
unfolding \Cref{def:wspp-sem} and the packet algebra (E3):

\begin{lemma}
  \label{lemma:wspp-case-split}
  Let $\rho = \wspp(f, b, m, d), \rho' = \wspp(f, b', m', d')$ be wSPPs. Then:
  \begin{enumerate}
    \item For every $u \in \values$:
      $\SEQ{(\EQ{\fielda}{u})}{\norm{\rho}}
       \;\equiv\;
       \SEQ{(\EQ{\fielda}{u})}{\bigoplus_{(v \mapsto \sigma)\in\rho(v)}
         (\ASSN{\fielda}{v}) \SEQN \norm{\sigma}}$.
    \item If for every $u \in \values$:
      $\SEQ{(\EQ{\fielda}{u})}{\norm{\rho}} \equiv
       \SEQ{(\EQ{\fielda}{u})}{\norm{\rho'}}$,
      then $\rho \equiv \rho'$.
  \end{enumerate}
\end{lemma}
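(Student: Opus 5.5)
We read the sum in item (1) as ranging over $(v \mapsto \sigma) \in \rho(u)$; the subscript $\rho(v)$ in the statement appears to be a typo. We prove (1) by unfolding \Cref{def:wspp-sem} and pushing the test $\EQ{\fielda}{u}$ through the three summands $B_\rho$, $M_\rho$, $D_\rho$ using (E2a). Each summand begins with guards on $\fielda$. It therefore suffices to simplify $\SEQ{(\EQ{\fielda}{u})}{g}$ for each guard $g$ using the packet-algebra laws (E3c), together with idempotence of tests: $\SEQ{(\EQ{\fielda}{u})}{(\EQ{\fielda}{u})} \equiv \EQ{\fielda}{u}$, which is immediate from the semantics in \Cref{fig:wnetkat-semantics}. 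The argument splits into the three cases of the branch map $\rho(u)$.

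\emph{Case $u \in \dom(b)$.} In $B_\rho$, the summand guarded by $\EQ{\fielda}{u}$ survives by idempotence. Every other summand, guarded by $\EQ{\fielda}{u'}$ with $u' \neq u$, collapses to $\DROP$ by (E3c) and then vanishes by (E2e) and (E2d). Both $M_\rho$ and $D_\rho$ contain the factor $\NOTEQ{\fielda}{u}$ in their guard product. We commute it to the front, which is legitimate since tests on the same field commute semantically. Then $\SEQ{(\EQ{\fielda}{u})}{(\NOTEQ{\fielda}{u})} \equiv \DROP$ kills them. What remains is $\SEQ{(\EQ{\fielda}{u})}{\bigoplus_{(v\mapsto\sigma)\in b(u)}\SEQ{(\ASSN{\fielda}{v})}{\norm{\sigma}}}$, as required.

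\emph{Case $u \notin \dom(b)$.} All of $B_\rho$ drops by (E3c). In $M_\rho$, each factor $\NOTEQ{\fielda}{u'}$ with $u' \neq u$ is absorbed by the third law of (E3c), leaving $\SEQ{(\EQ{\fielda}{u})}{\bigoplus_{(z \mapsto \sigma) \in m}\cdots}$. For $D_\rho$ there are two subcases.
\begin{itemize}
\item If $u \in \dom(m)$, the guard of $D_\rho$ contains $\NOTEQ{\fielda}{u}$, so $D_\rho$ drops. The result is the $m$-sum, which matches $\rho(u)=m$.
\item If $u \notin \dom(m)$, all guards of $D_\rho$ are absorbed and we obtain $\SEQ{(\EQ{\fielda}{u})}{\norm{d}}$. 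This must be matched with the extra entry $u \mapsto d$ of $\rho(u)$. Using $\SEQ{(\EQ{\fielda}{u})}{\ASSN{\fielda}{u}} \equiv \EQ{\fielda}{u}$ (semantically immediate), that entry contributes $\SEQ{(\EQ{\fielda}{u})}{\SEQ{(\ASSN{\fielda}{u})}{\norm{d}}} \equiv \SEQ{(\EQ{\fielda}{u})}{\norm{d}}$, so the two sides agree.
\end{itemize}
In the subcase $d = \hat{\mathbb{0}}$, the branch map omits the entry $u \mapsto d$. This is consistent because $\norm{\hat{\mathbb{0}}} = \WEIGH{\semizero}{\SKIP} \equiv \DROP$ by (E2e), so the $D_\rho$ contribution vanishes as well.

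For (2), we argue semantically rather than equationally. Fix packets $\pkta, \pktb$ and set $u \triangleq \pkta.\fielda$. By the semantics of tests and sequencing, the sum over intermediate packets $\pktc$ in $\sem{\SEQ{(\EQ{\fielda}{u})}{p}}(\pkta)(\pktb)$ has only the term $\pktc = \pkta$, so it equals $\semione\cdot\sem{p}(\pkta)(\pktb) = \sem{p}(\pkta)(\pktb)$ for any policy $p$. Applying this identity to both $\norm{\rho}$ and $\norm{\rho'}$ and using the hypothesis at this particular $u$ gives $\sem{\rho}(\pkta)(\pktb) = \sem{\rho'}(\pkta)(\pktb)$. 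Since $\pkta, \pktb$ were arbitrary, $\rho \equiv \rho'$.

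The main obstacle is the bookkeeping in (1): reordering the products of guards and distributing $\EQ{\fielda}{u}$ across the nested sums. The reordering is justified because tests on the same field commute, and the distribution uses (E2a) together with associativity of sequencing. The only genuinely subtle point is reconciling the default-identity entry $u \mapsto d$ of $\rho(u)$, including its omission when $d=\hat{\mathbb{0}}$, with the summand $D_\rho$. This is handled by the law $\SEQ{(\EQ{\fielda}{u})}{\ASSN{\fielda}{u}} \equiv \EQ{\fielda}{u}$ and by (E2e).
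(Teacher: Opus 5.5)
Your proposal is correct. For item (1) you follow the paper's proof, which appears in its appendix as the lemma on soundness of the branch map. Like the paper, you distribute $\EQ{\fielda}{u}$ over $B_\rho\oplus M_\rho\oplus D_\rho$ and simplify each guard product with the single-field test laws, splitting on $u\in\dom(b)$ and $u\in\dom(m)$. You then reconcile the diagonal entry $u\mapsto d$ via $\SEQ{(\EQ{\fielda}{u})}{(\ASSN{\fielda}{u})}\equiv\EQ{\fielda}{u}$. Your explicit handling of $d=\hat{\mathbb{0}}$, where the branch map drops the diagonal entry, covers a subcase that the paper's final combination step passes over silently. You are also right that the subscript $\rho(v)$ should be $\rho(u)$, as in the appendix version.

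For item (2) your route differs slightly. The paper argues equationally: it uses $\bigoplus_{v\in\values}(\EQ{\fielda}{v})\equiv\TRUE$ and distributivity to write $p\equiv\bigoplus_{v}\SEQ{(\EQ{\fielda}{v})}{p}$, then rewrites summand by summand. You instead evaluate pointwise at $u=\pkta.\fielda$, where only the intermediate packet $\pktc=\pkta$ survives in the sum. Both rest on the same fact, and the paper also appeals to the semantics for its partition identity. Your version is shorter and more direct. The paper's version keeps to the equational style used for all the other wSPP operations, and its lemma is stated for arbitrary policies, so it is reused in the soundness proofs for $\wsppadd$ and $\wsppmul$. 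Your pointwise argument generalizes to arbitrary policies just as easily.
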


\begin{figure}[tb]
\raggedright
\begingroup
\small
\setlength{\tabcolsep}{0pt}
\setlength{\fboxsep}{1.5pt}
\newcommand{\wsppmathbox}[2]{\fcolorbox{#1}{#1!7}{$\displaystyle #2$}}
\newcommand{\wspptextbox}[2]{\fcolorbox{#1}{#1!7}{#2}}
\begin{tabular}{@{}p{0.40\textwidth}@{\hspace{0.012\textwidth}}p{0.588\textwidth}@{}}
\begin{minipage}[t]{\linewidth}
\vspace{0pt}
\centering
\textbf{Graphical Representation}
\vspace{0.35em}

\begin{tikzpicture}[
  >=Latex,
  line join=bevel,
  scale=0.8,
  transform shape,
  field/.style={draw=wsppfield!80!black,circle,minimum size=17pt,inner sep=0pt,font=\footnotesize},
  rootfield/.style={field,fill=wsppfield!20},
  subfield/.style={field,fill=wsppfield!10},
  gate/.style={draw,diamond,minimum size=9pt,inner sep=0pt,fill=black!8},
  leaf/.style={inner sep=1pt,font=\footnotesize},
  elabel/.style={font=\tiny,inner sep=1pt,fill=white},
  branchbox/.style={draw=wsppbranch,fill=wsppbranch!5,fill opacity=0.55,rounded corners=2pt,inner sep=3pt},
  missingbox/.style={draw=wsppmissing,fill=wsppmissing!5,fill opacity=0.55,rounded corners=2pt,inner sep=3pt},
  defaultbox/.style={draw=wsppdefault,fill=wsppdefault!5,fill opacity=0.55,rounded corners=2pt,inner sep=3pt}
]
\node[rootfield] (pt) at (0,7.55) {$\mathsf{pt}$};
\node[gate] (pt80) at (-2.15,5.9) {};
\node[gate] (pt22) at (-0.15,5.9) {};
\node[gate] (ptelse) at (1.45,5.9) {};

\node[subfield] (sw) at (-2.7,4.05) {$\mathsf{sw}$};
\node[leaf] (wone) at (-1.65,4.05) {$\rulebox{w_1}$};
\node[leaf] (wtwo) at (-0.15,4.05) {$\rulebox{w_2}$};
\node[leaf] (wthree) at (0.75,4.05) {$\rulebox{w_3}$};
\node[subfield] (swdefault) at (1.8,4.05) {$\mathsf{sw}$};

\node[gate] (swa) at (-3.05,2.55) {};
\node[gate] (swelse) at (-2.35,2.55) {};
\node[leaf] (wfour) at (-3.15,0.95) {$\rulebox{w_4}$};
\node[leaf] (wfive) at (-2.55,0.95) {$\rulebox{w_5}$};
\node[leaf] (idsw) at (-1.93,0.95) {$\rulebox{\semione}$};
\node[gate] (swdefaultfour) at (1.45,2.55) {};
\node[gate] (swdefaultelse) at (2.1,2.55) {};
\node[leaf] (wsix) at (1.4,0.95) {$\rulebox{w_6}$};
\node[leaf] (idroot) at (2.15,0.95) {$\rulebox{\semione}$};
\coordinate (ptmfit) at (0.88,5.18);
\coordinate (ptdfit) at (1.78,5.18);
\coordinate (swmfit) at (-2.55,1.88);
\coordinate (swdfit) at (-1.91,1.88);
\coordinate (swdfitwide) at (-2.16,1.75);
\coordinate (rootdfit) at (1.8,3.22);
\coordinate (rootbranchfit) at (1.28,1.78);
\coordinate (rootdefaultfit) at (2.28,1.76);

\draw[->] (pt) -- node[elabel,left] {$80$} (pt80);
\draw[->] (pt) -- node[elabel] {$22$} (pt22);
\draw[->,dashed] (pt) -- node[elabel,right] {$*$} (ptelse);
\draw[->] (pt80) -- node[elabel,left] {$443$} (sw);
\draw[->] (pt80) -- node[elabel,right] {$8080$} (wone);
\draw[->] (pt22) -- node[elabel,right] {$2222$} (wtwo);
\draw[->] (ptelse) -- node[elabel,left] {$53$} (wthree);
\draw[->,dashed] (ptelse) -- node[elabel,right] {$*$} (swdefault);

\draw[->] (sw) -- node[elabel,left] {$s_1$} (swa);
\draw[->,dashed] (sw) -- node[elabel,right] {$*$} (swelse);
\draw[->] (swa) -- node[elabel,left] {$s_2$} (wfour);
\draw[->] (swelse) -- node[elabel,left] {$s_3$} (wfive);
\draw[->,dashed] (swelse) -- node[elabel,right] {$*$} (idsw);
\draw[->] (swdefault) -- node[elabel,left] {$s_4$} (swdefaultfour);
\draw[->,dashed] (swdefault) -- node[elabel,right] {$*$} (swdefaultelse);
\draw[->] (swdefaultfour) -- node[elabel,left] {$s_5$} (wsix);
\draw[->,dashed] (swdefaultelse) -- node[elabel,right] {$*$} (idroot);

\begin{scope}[on background layer]
  \node[branchbox,fit=(pt80)(pt22)(sw)(wone)(wtwo)(wfour)(swa)] {};
  \node[missingbox,fit=(ptmfit)(wthree)] {};
  \node[defaultbox,fit=(ptdfit)(swdefault)(rootdfit)(swdefaultfour)(swdefaultelse)(wsix)(idroot)] {};
  \node[branchbox,fit=(swdefaultfour)(rootbranchfit)(wsix)] {};
  \node[defaultbox,fit=(swdefaultelse)(rootdefaultfit)(idroot)] {};
  \node[missingbox,fit=(swmfit)(wfive)] {};
  \node[defaultbox,fit=(swdfit)(swdfitwide)(idsw)] {};
\end{scope}
\end{tikzpicture}
\end{minipage}
&
\begin{minipage}[t]{\linewidth}
\vspace{0pt}
\fontsize{7}{8}\selectfont
\textbf{Syntax}
\[
\begin{aligned}
\rho
  =\wspp(&\mathsf{pt},
    \wspptextbox{wsppbranch}{\(\{80\mapsto\{443\mapsto\rho_s,\;8080\mapsto\rulebox{w_1}\},
      \;22\mapsto\{2222\mapsto\rulebox{w_2}\}\}\)},\\[-0.15em]
    &\wspptextbox{wsppmissing}{\(\{53\mapsto\rulebox{w_3}\}\)},
    \wspptextbox{wsppdefault}{\(\rho_d\)})
\\[0.25em]
\rho_s
  =\wspp(&\mathsf{sw},
    \wspptextbox{wsppbranch}{\(\{s_1\mapsto\{s_2\mapsto\rulebox{w_4}\}\}\)},
    \wspptextbox{wsppmissing}{\(\{s_3\mapsto\rulebox{w_5}\}\)},
    \wspptextbox{wsppdefault}{\(\rulebox{\semione}\)}),
\\[0.25em]
\rho_d
  =\wspp(&\mathsf{sw},
    \wspptextbox{wsppbranch}{\(\{s_4\mapsto\{s_5\mapsto\rulebox{w_6}\}\}\)},
    \wspptextbox{wsppmissing}{\(\varnothing\)},
    \wspptextbox{wsppdefault}{\(\rulebox{\semione}\)}).
\end{aligned}
\]

\vspace{0.3em}
\textbf{Read-back}
\[
\begin{aligned}
\norm{\rho}
  ={}&
  \wsppmathbox{wsppbranch}{
    \mathsf{pt}=80;
      \big(\mathsf{pt}\leftarrow443;\norm{\rho_s}
        \oplus \mathsf{pt}\leftarrow8080;(\WEIGH{w_1}{\SKIP})\big)}
\\[-0.1em]&\oplus
  \wsppmathbox{wsppbranch}{
    \mathsf{pt}=22;\mathsf{pt}\leftarrow2222;(\WEIGH{w_2}{\SKIP})}
\\[-0.1em]&\oplus
  \wsppmathbox{wsppmissing}{
    \mathsf{pt}\neq80;\mathsf{pt}\neq22;
    \mathsf{pt}\leftarrow53;(\WEIGH{w_3}{\SKIP})}
\\[-0.1em]&\oplus
  \wsppmathbox{wsppdefault}{
    \mathsf{pt}\neq80;\mathsf{pt}\neq22;\mathsf{pt}\neq53;
    \norm{\rho_d}}.
\end{aligned}
\]
\vspace{-0.9em}
\begin{center}
\resizebox{\linewidth}{!}{%
\begin{tabular}{@{}c@{\hspace{0.9em}}c@{}}
$\begin{aligned}[t]
\norm{\rho_s}
  ={}&
  \wsppmathbox{wsppbranch}{
    \mathsf{sw}=s_1;\mathsf{sw}\leftarrow s_2;(\WEIGH{w_4}{\SKIP})}
\\[-0.1em]&\oplus
  \wsppmathbox{wsppmissing}{
    \mathsf{sw}\neq s_1;\mathsf{sw}\leftarrow s_3;(\WEIGH{w_5}{\SKIP})}
\\[-0.1em]&\oplus
\wsppmathbox{wsppdefault}{
    \mathsf{sw}\neq s_1;\mathsf{sw}\neq s_3;(\WEIGH{\semione}{\SKIP})}.
\end{aligned}$&
$\begin{aligned}[t]
\norm{\rho_d}
  ={}&
  \wsppmathbox{wsppbranch}{
    \mathsf{sw}=s_4;\mathsf{sw}\leftarrow s_5;(\WEIGH{w_6}{\SKIP})}
\\[-0.1em]&\oplus
  \wsppmathbox{wsppdefault}{
    \mathsf{sw}\neq s_4;(\WEIGH{\semione}{\SKIP})}.
\end{aligned}$
\end{tabular}%
}
\end{center}
\end{minipage}
\end{tabular}
\endgroup
\caption{\footnotesize The three guises of a wSPP $\rho$: syntax (upper
right), read-back policy (lower right), and the graphical
representation as a tree (left). Vertices are labeled by the field the node
branches on.}
\label{fig:wspp-example}
\end{figure}


\subsection{Compiling Policies: The Easy Operations}\label{sec:wspp-easy-ops}%
\label{fig:wspp-easy-ops}

Our goal is a compiler $\hat{(-)} \colon \pols_{\csr} \to \WSPP_{\csr}$
such that $\norm{\hat{p}} \equiv p$. The
compiler is structural: atomic policies are translated directly, and each
policy combinator is matched by an operation on wSPPs,
\[
  ({\wsppadd}), ({\wsppmul}) \colon \WSPP_{\csr}^2 \to \WSPP_{\csr},
  \qquad
  ({\wsppscale}) \colon \csrset \times \WSPP_{\csr} \to \WSPP_{\csr},
  \qquad
  ({-})^{\wsppstar} \colon \WSPP_{\csr} \to \WSPP_{\csr},
\]
with precedence mirroring the policy operators ($\wsppscale$ and
$\wsppmul$ bind stronger than $\wsppadd$).
This subsection treats the atoms and the first three operations, which
generalize their SPP counterparts in \katch~\cite{KATch}. Hence, we keep the
discussion brief and refer to \Cref{sec:appendix-algorithms} for the
complete algorithms. The star, which has no SPP counterpart, is treated in \Cref{sec:wspp-star}.

\smallskip\noindent\textbf{Atoms.}
A test or assignment $e$ on a single field becomes a one-node wSPP $\strhat{e}$,
defined such that
{\small
\[
\begin{gathered}
  \strhat{(\EQ{\fielda}{v})} \triangleq
    \wspp(\fielda, \{v \mapsto \{v \mapsto \hat{\mathbb{1}}\}\},
      \varnothing, \hat{\mathbb{0}}),
  \qquad
  \strhat{(\NOTEQ{\fielda}{v})} \triangleq
    \wspp(\fielda, \{v \mapsto \varnothing\},
      \varnothing, \hat{\mathbb{1}}),\\
  \strhat{(\ASSN{\fielda}{v})} \triangleq
    \wspp(\fielda, \varnothing,
      \{v \mapsto \hat{\mathbb{1}}\}, \hat{\mathbb{0}}),
  \qquad
  \strhat{\TRUE} \triangleq \hat{\mathbb{1}},
  \qquad
  \strhat{\FALSE} \triangleq \hat{\mathbb{0}}.
\end{gathered}
\]
}
For instance,
$\strhat{(\NOTEQ{\fielda}{v})}$ tests value $v$ and offers it no
continuation (the empty row), while all other values pass unchanged into the
default identity $\hat{\mathbb{1}}$. Compound tests compiles as follows:
conjunction compiles to sequential composition, negation is pushed to the literals by
De~Morgan, and disjunction $\OR{t}{t'}$ compiles to the \emph{disjoint}
sum $\ADD{t}{(\SEQ{\NOT{t}}{t'})}$ to respect non-idempotence
(\Cref{example:non-idempotent}). 

\smallskip\noindent\textbf{Choice, weighting, sequencing.} To define weighting $\wsppscale$,
The weight is simply pushed to the leaves:
$\wta \wsppscale \rulebox{\wtb} \triangleq \rulebox{\wta \cdot \wtb}$, and
$\wta \wsppscale \wspp(\fielda,b,m,d)$ applies $\wta \wsppscale (-)$ to every
sub-wSPP.
The remaining two operations are best understood through the branch maps. Consider first two nodes
$\rho, \sigma$ on the \emph{same} field $\fielda$. Then
$\rho \wsppadd \sigma$ and $\rho \wsppmul \sigma$ are the
nodes on $\fielda$ whose branch maps satisfy, for every $v \in \values$,
\[
  (\rho \wsppadd \sigma)(u) \;=\; \rho(u) \wsppmapadd \sigma(u)
  \qquad\text{and}\qquad
  (\rho \wsppmul \sigma)(u)
    \;=\; \bigwsppmapadd_{\mathclap{(v \mapsto \tau)\,\in\,\rho(u)}}\;
      \tau \wsppmapmul \sigma(v),
\]
where $({\wsppmapadd})$ merges two branches by combining entries for the
same output value with $\wsppadd$ (missing entries count as
$\hat{\mathbb{0}}$), and
$\tau \wsppmapmul row
    \;\triangleq\;
    \{\, v \mapsto \tau \wsppmul \eta  \mid (v \mapsto \eta) \in row \}$ pre-composes a branch with a continuation. Namely, choice
merges branches, and sequencing routes every output value $v$ of
$\rho$ into the branch of $\sigma$ reading $v$, which could be seen as the
symbolic rendition of the matrix product with $({\wsppmapadd})$ summing over intermediate packets.
By defining $\rho \wsppadd \sigma$ and $\rho \wsppmul \sigma$ so that the above branch
map rows $(\rho \wsppadd \sigma)(u), (\rho \wsppmul \sigma)(u)$ with
identical behavior are represented once through default assignments and identity, the
operations preserve the compactness of the wSPP representation.
If the root fields differ, say
$\fielda_\rho < \fielda_\sigma$, then $\sigma$ is first \emph{lifted} to a
virtual node on $\fielda_\rho$ with empty $b$ and $m$ and default identity
$\sigma$, before choice and sequencing are carried out.
Fot these operations, we get:

\begin{theorem}
\label{theorem:wspp-add-mul-soundness}
For all wSPPs $\rho, \sigma \in \WSPP_{\csr}$, weights
$\wta \in \csrset$, and atoms $e$,
\[
\norm{\rho \wsppadd \sigma} \equiv \ADD{\norm{\rho}}{\norm{\sigma}},
\qquad
\norm{\rho \wsppmul \sigma} \equiv \SEQ{\norm{\rho}}{\norm{\sigma}},
\qquad
\norm{\wta \wsppscale \rho} \equiv \WEIGH{\wta}{\norm{\rho}},
\qquad
\norm{\strhat{e}} \equiv e.
\]
\end{theorem}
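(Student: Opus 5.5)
The plan is to prove all four equivalences by induction on the structure of the wSPPs involved. Atoms are handled directly, weighting by structural induction, and choice and sequencing by simultaneous induction on the pair $(\rho,\sigma)$, ordered by the sum of their sizes. Throughout, the tool for establishing $\equiv$ between node-shaped policies is \Cref{lemma:wspp-case-split}. By (2), it suffices to show that, for every input value $u$ of the root field $\fielda$, the two sides agree after prefixing with $\EQ{\fielda}{u}$. By (1), each side under that prefix reduces to the finite sum $\bigoplus_{(v\mapsto\tau)\in\rho(u)}(\ASSN{\fielda}{v})\SEQN\norm{\tau}$ over the branch map. Every claim thus becomes a per-row identity about branch maps, which is exactly how $\wsppadd$ and $\wsppmul$ are specified.

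\emph{Atoms.} Unfold \Cref{def:wspp-sem} and simplify with (E2d,e) and (E3). For instance, $\norm{\strhat{(\EQ{\fielda}{v})}} = \EQ{\fielda}{v}\SEQN\ASSN{\fielda}{v}\SEQN(\WEIGH{\semione}{\SKIP}) \oplus \DROP\oplus \DROP$ (the $D$-part has leaf $\hat{\mathbb{0}}$). This is $\EQ{\fielda}{v}$, because a test followed by an assignment of the same value collapses; semantically this is immediate. $\NOTEQ{}{}$ and assignment are analogous, and $\TRUE$, $\FALSE$ give $\WEIGH{\semione}{\SKIP}\equiv\SKIP$ and $\WEIGH{\semizero}{\SKIP}\equiv\DROP$.

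\emph{Weighting.} Induct on $\rho$. The leaf case is (E2c). In the node case, push $\WEIGH{\wta}{-}$ through each of $B_\rho,M_\rho,D_\rho$ using (E2b). Then move it past the leading tests and assignments via the semantic commutation $\SEQ{e}{(\WEIGH{\wta}{q})}\equiv\WEIGH{\wta}{(\SEQ{e}{q})}$, which holds since weights are scalars applied pointwise. Finish with the induction hypothesis.

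\emph{Choice and sequencing.} Leaf/leaf cases follow from (E2b,c,d). When root fields differ, first check that lifting preserves meaning: the virtual node $\wspp(\fielda,\varnothing,\varnothing,\sigma)$ reads back to $\norm{\sigma}$ (empty products are $\SKIP$). This reduces to the same-field case. For the same field, fix $u$ and prefix with $\EQ{\fielda}{u}$. For $\wsppadd$, \Cref{lemma:wspp-case-split}(1) applied to both summands, together with (E2a) and the induction hypothesis on each pair of continuations merged by $\wsppmapadd$, gives the required row identity. For $\wsppmul$, the left side is $\bigoplus_{(v\mapsto\tau)\in\rho(u)}\bigoplus_{(w\mapsto\eta)\in\sigma(v)}(\ASSN{\fielda}{w})\SEQN\norm{\tau\wsppmul\eta}$. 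For the right side, expand $\norm\rho$ under the prefix, distribute with (E2a), and for each summand rewrite $\ASSN{\fielda}{v}\SEQN\norm{\tau}\SEQN\norm{\sigma}$. By well-formedness $\tau$ does not mention $\fielda$, so (E3d) moves $\norm\tau$ past the assignment. Then use $\ASSN{\fielda}{v}\equiv\ASSN{\fielda}{v}\SEQN\EQ{\fielda}{v}$ (E3a) and expand $\EQ{\fielda}{v}\SEQN\norm\sigma$ by the lemma. Collapse $\ASSN{\fielda}{v}\SEQN\ASSN{\fielda}{w}$ to $\ASSN{\fielda}{w}$ (E3a) and commute $\norm\tau$ back across. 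This yields $(\ASSN{\fielda}{w})\SEQN\norm{\tau}\SEQN\norm{\eta}$, and the induction hypothesis on the smaller pair $(\tau,\eta)$ closes the case.

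\emph{Main obstacle.} The hard step is not these calculations but justifying that the constructed node \emph{realises} the specified branch map. Its $b,m,d$ components are produced by compaction, which shares identical rows through default assignments and the default identity. One must show that its branch map (as defined via the three-case $\rho(u)$) coincides, up to $\equiv$ of continuations, with $\rho(u)\wsppmapadd\sigma(u)$, respectively the composite row, for \emph{every} $u$, including the infinitely many default inputs. I would isolate this as a lemma about the normalising node constructor, relying on the full algorithms in \Cref{sec:appendix-algorithms}. It needs a case split on $u\in\dom(b)$, $u\in\dom(m)\setminus\dom(b)$, and $u$ fresh. The delicate subcase is the fresh one, where the entry $u\mapsto d$ depends on $u$ and must be shown to combine uniformly with entries from the other operand. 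It also matters that branch maps may be compared up to $\equiv$ of continuations, since the $d\neq\hat{\mathbb{0}}$ side condition makes them syntactic.
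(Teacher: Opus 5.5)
Your proposal follows the paper's proof essentially step for step: atoms by unfolding, weighting by induction through $B_\rho, M_\rho, D_\rho$, and choice and sequencing via the branch-map lookup and input-slicing facts of \Cref{lemma:wspp-case-split}. The $\wsppmul$ case uses the same calculation of commuting $\norm{\tau}$ past the assignment, the same explicit treatment of fresh input values, and the same separate soundness lemma for the normalising constructor $\mathsf{mk}$.

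Two small adjustments. First, prove $\wsppadd$ for all wSPPs before $\wsppmul$, as the paper does, rather than by simultaneous induction. Splitting the merged composite row in the $\wsppmul$ case needs $\wsppadd$-soundness on the terms $\tau\wsppmul\eta$, which are not smaller than $(\rho,\sigma)$. Second, the weight commutes past tests and assignments because atoms are $\{\semizero,\semione\}$-valued, not because weights act as commuting scalars; the trace-carrying semirings are noncommutative.
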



\input{syntax-semantics-default-first-star}

\subsection{Computing the Semantics of a Policy}\label{sec:wspp-soundness}

Assembling the operations gives the compiler by structural recursion. By induction, we thus get:

\begin{theorem}
  \label{theorem:wspp-main}
  For every policy $p \in \pols_{\csr}$, the wSPP $\hat{p}$ is computable,
  and $\norm{\hat{p}} \equiv p$.
\end{theorem}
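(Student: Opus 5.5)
The plan is a structural induction on $p \in \pols_{\csr}$, where the compiler is defined clause by clause. For each policy constructor I will show two things: that $\hat{p}$ is computable, and that $\norm{\hat{p}} \equiv p$.

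\emph{Predicates.} For atomic tests, assignments, $\TRUE$ and $\FALSE$, correctness is the last clause of \Cref{theorem:wspp-add-mul-soundness}. For $\strhat{\TRUE} = \hat{\mathbb{1}}$ and $\strhat{\FALSE} = \hat{\mathbb{0}}$ I also use (E2d) and (E2e), which give $\WEIGH{\semione}{\SKIP} \equiv \SKIP$ and $\WEIGH{\semizero}{\SKIP} \equiv \DROP$. Compound predicates need their own inner induction on the predicate, after negations have been pushed to the literals by De Morgan.
\begin{itemize}
\item Conjunction compiles to $\wsppmul$. It is correct because $\sem{\AND{t}{t'}} = \sem{\SEQ{t}{t'}}$, by the Iverson-bracket semantics.
\item Disjunction compiles to $\ADD{t}{(\SEQ{\NOT{t}}{t'})}$. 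It is correct because at each packet exactly one summand can fire, so the sum is the $\{\semizero,\semione\}$-valued bracket of $\OR{t}{t'}$ and not $\semione+\semione$.
\end{itemize}
These facts are pointwise checks on the semantics in \Cref{fig:wnetkat-semantics}.

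\emph{Choice, weighting, sequencing.} For $\ADD{p}{q}$, $\WEIGH{\wta}{p}$ and $\SEQ{p}{q}$, I apply the induction hypotheses to the subpolicies and then the corresponding clauses of \Cref{theorem:wspp-add-mul-soundness}. The step needing care is that $\equiv$ is a congruence for every policy constructor. This holds because $\equiv$ is equality of denotations and the semantics is compositional. For example,
\[
\norm{\hat p \wsppmul \hat q} \equiv \SEQ{\norm{\hat p}}{\norm{\hat q}} \equiv \SEQ{p}{q}.
\]
Computability holds because these operations recurse over finite trees with finite partial maps and use only $+$ and $\cdot$ in $\csrset$, which are computable by the Convention.

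\emph{Kleene star.} The star case is the main obstacle. Here I invoke the soundness and termination theorem for $(-)^{\wsppstar}$ from \Cref{sec:wspp-star}, which states $\norm{\rho^{\wsppstar}} \equiv \ITER{\norm{\rho}}$ for every wSPP $\rho$. I then conclude
\[
\norm{\hat p^{\wsppstar}} \equiv \ITER{\norm{\hat p}} \equiv \ITER{p},
\]
where the last step is congruence of $\equiv$ under $(-)^*$; it holds because $\sem{\ITER{p}}$ is a function of $\sem{p}$ alone. For computability I rely on the termination argument for the multi-pass factoring algorithm together with the facts that $\cstar{(-)}$ is computable and that $\cstar{(-)} = (-)^*|_{\csrset}$ (embedding condition (iii)). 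Condition (iii) also guarantees that every leaf weight the algorithm produces stays in $\csrset$.

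If that section states soundness only for well-formed inputs, I will add a side lemma that every compiled wSPP is well-formed. It follows by the same induction, since the atoms are well-formed and every operation preserves field ordering. This lemma is invoked both in the sequencing/choice cases and in the star case.
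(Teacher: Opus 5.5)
Your proposal is correct and matches the paper's argument. The paper's proof is exactly a structural induction on $p$ that discharges each constructor with \Cref{theorem:wspp-add-mul-soundness} and \Cref{theorem:df-wspp-star-soundness}, together with congruence of $\equiv$; you even supply the compound-predicate cases that the paper leaves implicit. One small precision: computability of $\wsppadd$, $\wsppmul$ and the star also relies on decidable equality in $\csrset$ (the smart constructor $\mathsf{mk}$ compares sub-wSPPs, and the star algorithm tests $d \neq \hat{\mathbb{0}}$), not only on computable $+$ and $\cdot$, but this is covered by the assumption that $\csr$ is a computable star semiring.
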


The wSPP $\hat{p}$ is the promised finite handle on $\sem{p}$: a symbolic
matrix whose size is governed by $p$, not by $|\packets|$. Individual
weights are read off in one pass over the fields:

\begin{corollary}[reading off weights]
  \label{corollary:wspp-weights}
  Given packets $\pkta, \pktb$, the weight
  $\sem{p}(\pkta)(\pktb)$ is obtained from $\hat{p}$ by
  descending from the root: at a node $\nu$ on field $\fielda$, follow the
  branch $\nu(\pkta.\fielda)(\pktb.\fielda)$, returning $\semizero$
  if it is absent; at a leaf $\rulebox{a}$, return $a$ if $\pkta$ and
  $\pktb$ agree on all fields not visited along the way, and $\semizero$
  otherwise.
\end{corollary}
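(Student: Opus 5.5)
We prove \Cref{corollary:wspp-weights} by induction on the structure of a well-formed wSPP $\rho$, showing a slightly more general claim. Let $V$ be a set of fields, all smaller than every field occurring in $\rho$, on which $\pkta$ and $\pktb$ have already been "consumed". For any packets $\pkta,\pktb$,
\[
\sem{\rho}(\pkta)(\pktb)
\]
equals the value returned by the descent procedure started at $\rho$. In this general version the final check at a leaf compares $\pkta$ and $\pktb$ only on the fields not visited along the walk. Instantiating $\rho \triangleq \hat{p}$ then gives the corollary, because $\sem{p} = \sem{\norm{\hat{p}}} = \sem{\hat{p}}$ by \Cref{theorem:wspp-main}. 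Computability is immediate, since the descent visits at most $|\fields|$ nodes and each step looks up an entry in a finite partial map.

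\textbf{Leaf case.} Here $\norm{\rulebox{a}} = \WEIGH{a}{\SKIP}$, so $\sem{\rulebox{a}}(\pkta)(\pktb) = a\cdot[\pkta=\pktb]$. This is exactly what the procedure returns once we know that $\pkta$ and $\pktb$ already agree on every visited field. That agreement must therefore be carried as an invariant. I would phrase the induction for the residual packets: after following a branch at field $\fielda$, we continue with $\pkta' = \updatepkt{\pkta}{\fielda}{\pktb.\fielda}$ and the same $\pktb$. The leaf check $\pkta'=\pktb$ is then equivalent to agreement on the unvisited fields.

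\textbf{Node case.} Let $\nu = \wspp(\fielda,b,m,d)$ and $u = \pkta.\fielda$. Since the test passes, $\sem{\nu}(\pkta)(\pktb) = \sem{\SEQ{(\EQ{\fielda}{u})}{\norm{\nu}}}(\pkta)(\pktb)$. By \Cref{lemma:wspp-case-split}(1) this equals
\[
\textstyle\sum_{(v\mapsto\sigma)\in\nu(u)} \sem{\SEQ{(\ASSN{\fielda}{v})}{\norm{\sigma}}}(\pkta)(\pktb).
\]
Unfolding sequential composition, the assignment term is nonzero only at the intermediate packet $\pktc = \updatepkt{\pkta}{\fielda}{v}$, so the summand is $\sem{\sigma}(\updatepkt{\pkta}{\fielda}{v})(\pktb)$. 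By well-formedness $\sigma$ never mentions $\fielda$, so with packet law (E3d) it commutes past $\fielda$ and preserves the $\fielda$-component. Hence $\sem{\sigma}(\pktc)(\pktb) = \semizero$ unless $\pktb.\fielda = v$. The sum therefore collapses to the single summand with $v = \pktb.\fielda$, or to $\semizero$ if $\nu(u)$ has no such entry. This is exactly the procedure's step, and the induction hypothesis applies to $\sigma$ with the updated input packet.

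\textbf{Main obstacle.} The delicate step is the claim that a wSPP not mentioning $\fielda$ preserves $\fielda$, i.e.\ $\sem{\sigma}(\pktc)(\pktb)=\semizero$ whenever $\pktc.\fielda\ne\pktb.\fielda$. I would prove it as a separate small lemma by the same structural induction, using the semantics of tests and assignments on fields other than $\fielda$. The remaining ingredients are distributivity (E2) and the fact that $\semizero$ annihilates.
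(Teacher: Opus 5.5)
Your argument is correct and is essentially the proof the paper intends. The paper gives no separate proof of this corollary; it presents it as a direct consequence of \Cref{theorem:wspp-main} ($\sem{p} = \sem{\hat{p}}$), obtained by unfolding the read-back through the branch map (\Cref{lemma:wspp-case-split}(1)) and well-formedness, and your structural induction, updating the input to $\updatepkt{\pkta}{\fielda}{\pktb.\fielda}$ at each visited node, is exactly the bookkeeping that unfolding requires. Two simplifications: the auxiliary set $V$ in your opening is unnecessary, and your ``main obstacle'' follows in one line from (E3d), since for $v = \pktc.\fielda$ we have $\sem{\sigma}(\pktc)(\pktb) = \sem{\SEQ{(\EQ{\fielda}{v})}{\norm{\sigma}}}(\pktc)(\pktb) = \sem{\SEQ{\norm{\sigma}}{(\EQ{\fielda}{v})}}(\pktc)(\pktb) = \sem{\sigma}(\pktc)(\pktb)\cdot[\pktb.\fielda = v]$.
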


\FloatBarrier

\input{syntax-semantics-f9d-wspp-star}
\FloatBarrier

%% file: syntax-semantics-default-first-star.tex
\subsection{The Kleene Star of a wSPP}\label{sec:wspp-star}\label{sec:wspp-star-default-first}

Kleene star is where the weighted setting departs significantly \netkat's \katch. Iterating a node moves packets \emph{through the values
of a field}, and weights accumulate along cycles that visit several values.
The following will serve as a running example:
\begin{example}[running example]
  \label{example:df-wspp-star-cycles}
  Over $\csr = \Qext$, consider the one-field wSPP
  \[
    \rho \;=\; \wspp\bigl(\fielda,\;
    \textcolor{wsppbranch}{\{1 \mapsto \{1 \mapsto \rulebox{\nicefrac{1}{2}},\;
      4 \mapsto \rulebox{1}\},\;
      2 \mapsto \{1 \mapsto \rulebox{\nicefrac{1}{2}}\}\}},\;
    \textcolor{wsppmissing}{\{1 \mapsto \rulebox{\nicefrac{1}{5}},\;
      3 \mapsto \rulebox{\nicefrac{1}{4}}\}},\;
    \textcolor{wsppdefault}{\rulebox{\nicefrac{1}{3}}}\bigr),
  \]
  drawn on the left of \Cref{fig:df-star-run-example}: on input value $1$, the
  node may keep the value $1$ (weight $\nicefrac{1}{2}$) or write $4$
  (weight $1$); on input value $2$, it may write $1$ (weight
  $\nicefrac{1}{2}$); and on any untested input value it may write $1$
  (weight $\nicefrac{1}{5}$) or $3$ (weight $\nicefrac{1}{4}$), and may also
  keep its value through the default identity when neither default assignment
  names that value. Now, under $\ITER{\norm{\rho}}$, a packet with
  $\fielda = 2$ may reach an output with $\fielda = 1$ in several ways. One
  family first takes the edge $2 \to 1$, then loops at $1$ any number of
  times before stopping. Another takes $2 \to 1$, then $1 \to 4$, then loops
  at $4$ through the default identity any number of times before using the
  default assignment back to $1$. The sum of weights of the paths in the two families
  is given by
  \[
    \tfrac{1}{2}\sum_{k \in \N}\bigl(\tfrac{1}{2}\bigr)^k
    \;+\;
    \tfrac{1}{2}\cdot 1 \cdot
      \sum_{j \in \N}\bigl(\tfrac{1}{3}\bigr)^j\cdot\tfrac{1}{5}
    \;=\;
    \tfrac{1}{2}\cstar{(\tfrac{1}{2})}
    + \tfrac{1}{10}\cstar{(\tfrac{1}{3})}
    \;=\; 1+\tfrac{3}{20}=\tfrac{23}{20}.
  \]
  Thus the $2$-to-$1$ component of $\rho^{\wsppstar}$ must account for at
  least this much weight, while also considering the interleaving of paths 
  from the two families and other complications. \hfill $\triangle$
\end{example}

On the one hand, the only way to
sum infinitely many weights is the star $\cstar{(-)}$ of
$\csr$ and no amount of iterating $({\wsppadd})$ and $({\wsppmul})$ can
replace it (\Cref{sec:wspp-star}). On the other hand,
$\cstar{(-)}$ sums the powers $\semione, w, w^2, \ldots$ of a
\emph{single} weight, whereas the runs of a node consist of loops spanning
\emph{several} values, each contributing its own weight
(\Cref{example:df-wspp-star-cycles}). The star algorithm must therefore
\emph{disentangle} the iteration using the
equational laws of \Cref{sec:preliminaries} such that every remaining 
infinite sum is the sum of powers of one single weight which can then be delegated to $\cstar{(-)}$.

\begin{figure}[tb]
\centering
\begin{tikzpicture}[scale=.7,
  x=0.78cm, y=0.55cm,
  cell/.style={font=\footnotesize, inner sep=1.5pt},
  rowlab/.style={font=\footnotesize\itshape},
  collab/.style={font=\footnotesize\itshape},
]
  \foreach \c/\l in {1/1, 2/2, 3/3, 4/4, 5/\cdots}
    \node[collab] at (\c, 0.7) {$\l$};
  \node[collab,anchor=east] at (0.1, 0.7) {\scriptsize in $\backslash$ out};
  \foreach \r/\l in {1/1, 2/2, 3/3, 4/4, 5/\vdots}
    \node[rowlab] at (0, -\r) {$\l$};
  \draw[black!25] (0.5,0.2) grid[xstep=1, ystep=1] (5.5,-5.5);
  \node[cell,text=wsppbranch] at (1,-1) {$\rulebox{\nicefrac{1}{2}}$};
  \node[cell,text=wsppbranch] at (4,-1) {$\rulebox{1}$};
  \node[cell,text=wsppbranch] at (1,-2) {$\rulebox{\nicefrac{1}{2}}$};
  \node[cell,text=wsppmissing] at (1,-3) {$\rulebox{\nicefrac{1}{5}}$};
  \node[cell,text=wsppmissing] at (3,-3) {$\rulebox{\nicefrac{1}{4}}$};
  \node[cell,text=wsppmissing] at (1,-4) {$\rulebox{\nicefrac{1}{5}}$};
  \node[cell,text=wsppmissing] at (3,-4) {$\rulebox{\nicefrac{1}{4}}$};
  \node[cell,text=wsppmissing] at (1,-5) {$\rulebox{\nicefrac{1}{5}}$};
  \node[cell,text=wsppmissing] at (3,-5) {$\rulebox{\nicefrac{1}{4}}$};
  \node[cell,text=wsppdefault] at (4,-4) {$\rulebox{\nicefrac{1}{3}}$};
  \node[cell,text=wsppdefault] at (5,-5) {$\rulebox{\nicefrac{1}{3}}$};
  \node[anchor=west, font=\footnotesize, text=wsppbranch] at (6.3,-1)
    {rows $1,2$: explicit branches $\textcolor{wsppbranch}{b}$};
  \node[anchor=west, font=\footnotesize, text=wsppmissing] at (6.3,-2.3)
    {columns $1,3$: default assignments $\textcolor{wsppmissing}{m}$,};
  \node[anchor=west, font=\footnotesize, text=wsppmissing] at (6.6,-3.1)
    {\footnotesize shared by every untested row};
  \node[anchor=west, font=\footnotesize, text=wsppdefault] at (6.3,-4.3)
    {diagonal: the default identity $\textcolor{wsppdefault}{d}$, on rows};
  \node[anchor=west, font=\footnotesize, text=wsppdefault] at (6.6,-5.1)
    {\footnotesize mentioned by neither $b$ nor $m$};
\end{tikzpicture}
\caption{\footnotesize A wSPP node is a finitely presented
$\values \times \values$ matrix: the running example $\rho$
(\Cref{example:df-wspp-star-cycles}), with the entry in row $u$ and column $v$
being the continuation taken when $\fielda$ is rewritten from $u$ to $v$.
Empty positions are $\hat{\mathbb{0}}$.}
\Description{A five-by-five matrix schema with one explicit row, one
shared column, and a diagonal band, illustrating how the three components
of a wSPP node populate a value-by-value matrix.}
\label{fig:df-star-matrix}
\end{figure}

\smallskip\noindent\textbf{Key idea: remove one entry at a time.}
Consider Fig. \ref{fig:df-star-matrix}: a wSPP node can be viewed as a
a matrix whose rows are input values, columns are output values. 
Note that row $u$ is just the branch $\rho(u)$ of \Cref{sec:wspp}. Each 
$(v \mapsto \sigma) \in \rho(u)$ corresponds to an \emph{entry at $(u,v)$},
read \emph{``at value $u$, one step of $\rho$ may move to value $v$''.}
Rows in $\dom(b)$ are stored explicitly,
$m$ becomes a block of columns shared by all untested rows, and $d$ populates the
diagonal of the rows that are mentioned neither in $b$ nor in $m$. We can view
the values of $\fielda$ as states of a finite automaton with this matrix as
its transition matrix. The classical way to compute the unbounded iteration for such an automaton
is \textit{state elimination}; our algorithm carries this idea out
symbolically over wSPPs: \emph{remove one
entry at a time, until no entries remain}. Removing an
entry destroys paths and must be compensated:
\begin{enumerate}
  \item \emph{emit a factor}: a wSPP recording
    every contribution of the removed entry on its own by taking it never,
    once, or arbitrarily often (if possible), and
  \item \emph{update the remaining entries}: suppose the removed entry is at
    $(u,v)$, a path that arrives at value $u$ through another 
    entry needs to still be able to reach $v$, so its effect must be "spliced" onto
    every remaining entry \emph{into} $u$. 
\end{enumerate}
Once every entry is gone, no path is left to iterate, and the star is
simply the sequential product of the emitted factors. We note a crucial 
subtlety: A factor could be split off either to the left or right of the 
remaining star (corresponding to the effect of the factor being "spliced" 
before or after the remaining entries), so the two lists $\vec{\psi}$ and $\vec{\phi}$
are needed to store the emitted factors such that 
\[
  \rho^{\wsppstar}
  \;=\;
  \psi_1 \wsppmul \cdots \wsppmul \psi_k
  \wsppmul \phi_l \wsppmul \cdots \wsppmul \phi_1,
  \quad
  \text{where each factor has shape }
  \quad
  \hat{\mathbb{1}} \wsppadd
  \underbrace{\strhat{g} \wsppmul \strhat{a} \wsppmul \eta}_{%
    \text{\footnotesize traverse the removed entry}}. 
\]
Here
$g$ is a product of tests, the assignment $a$ may be
absent, and $\eta$ is the \emph{continuation} of iterating 
the removed entry after the toplevel field of 
$\rho$ has been read from and assigned to using $g$ and $a$. 

Let us carry out one removal by hand for both compensations before
stating the algorithm.

\begin{example}
  \label{example:df-wspp-star-step}
  We remove the default identity of the running example $\rho$ and work out
  both compensations; the general case is the same calculation with heavier
  bookkeeping. Abbreviate
  $t_v \triangleq \EQ{\fielda}{v}$,
  $\bar{t}_v \triangleq \NOTEQ{\fielda}{v}$, and
  $a_v \triangleq \ASSN{\fielda}{v}$, and recall
  $\norm{\rulebox{w}} = \WEIGH{w}{\SKIP}$; unit weights are dropped
  tacitly, $\WEIGH{\semione}{p} \equiv p$ (E2d).
  By \Cref{def:wspp-sem} and (E2a),
  $\norm{\rho} \equiv \ADD{D_\rho}{R}$, where the summand being removed is
  \[
    D_\rho \triangleq
    \bar{t}_1;\bar{t}_2;\bar{t}_3;(\WEIGH{\nicefrac13}{\SKIP}),
  \]
  one step through the default identity, and $R$ collects the remaining
  summands:
  {\small
  \[
    R \;\triangleq\;
    \underbrace{t_1;(a_1;(\WEIGH{\nicefrac12}{\SKIP})
      \oplus a_4;\SKIP)
      \oplus t_2;a_1;(\WEIGH{\nicefrac12}{\SKIP})}_{\textcolor{wsppbranch}{B_\rho}}
    \;\oplus\;
    \underbrace{\bar{t}_1;\bar{t}_2;
      \bigl(a_1;(\WEIGH{\nicefrac{1}{5}}{\SKIP})
      \oplus a_3;(\WEIGH{\nicefrac{1}{4}}{\SKIP})\bigr)}_{\textcolor{wsppmissing}{M_\rho}}.
  \]
  }
  \emph{Part 1: the factor.} The factor must capture
  everything the removed entry contributes on its own by taking it never,
  once, twice, \ldots, i.e., it must represent $\ITER{D_\rho}$. A
  closed form exists because two consecutive $D_\rho$-steps merge into one: the weighted $\SKIP$ mentions no field, so it commutes with
  $\bar{t}_1$, $\bar{t}_2$, and $\bar{t}_3$ (E3d), the tests are
  idempotent, and the two weights merge (E2c, E2d):
  \[
    D_\rho;D_\rho
    \;\equiv\;
    \bar{t}_1;\bar{t}_2;\bar{t}_3;
      \gbox{(\WEIGH{\nicefrac13}{\SKIP});\bar{t}_1;\bar{t}_2;\bar{t}_3};
      (\WEIGH{\nicefrac13}{\SKIP})
    \;\equiv\;
    \bar{t}_1;\bar{t}_2;\bar{t}_3;(\WEIGH{\nicefrac19}{\SKIP}).
  \]
  Extending the argument inductively then gives 
  $D_\rho^{(k)} \equiv
  \bar{t}_1;\bar{t}_2;\bar{t}_3;(\WEIGH{(\nicefrac13)^k}{\SKIP})$
  for $k \geq 1$, which means taking the default identity $k$ times only multiplies
  its weight $k$ times. The infinite sum
  $\ITER{D_\rho} = \bigoplus_k D_\rho^{(k)}$ has thereby been
  disentangled and collapses onto the sum of powers of the single weight
  $\nicefrac{1}{3}$, which is exactly what $\cstar{(-)}$ evaluates,
  $\sum_{k \geq 1}(\nicefrac{1}{3})^k
   = \nicefrac{1}{3} \cdot \cstar{(\nicefrac13)} = \nicefrac12$
  (the countable sum passes through sequencing and weighting by
  $\omega$-continuity, cf.\ E2a, E2b):
  \[
    \ITER{D_\rho}
    \;\equiv\;
    \SKIP \,\oplus\, \bar{t}_1;\bar{t}_2;\bar{t}_3;
      \bigl(\WEIGH{\textstyle\sum_{k\geq1}(\nicefrac13)^k}{\SKIP}\bigr)
    \;\equiv\;
    \SKIP \,\oplus\,
      \bar{t}_1;\bar{t}_2;\bar{t}_3;(\WEIGH{\nicefrac12}{\SKIP})
    \;\equiv\;
    \norm{\psi_1},
  \]
  where $\psi_1 \triangleq \hat{\mathbb{1}} \wsppadd
  \strhat{\bar{t}_1}\wsppmul\strhat{\bar{t}_2}\wsppmul
  \strhat{\bar{t}_3}\wsppmul \eta$ is the
  emitted factor and \emph{closure} $\eta \triangleq \rulebox{\nicefrac12}$
    is the continuation of iterating the default identity. In
  general, the default identity is not a leaf but a sub-wSPP $d$ on the
  remaining fields; the continuation is then
  $\eta = d \wsppmul \Star(d)$, delivered by a \emph{recursive call} on
  strictly fewer fields.

  \emph{Part 2: the update.} What is left of the star once the factor is
  split off? This is precisely the question \textsc{denesting}
  (\Cref{theorem:star-eqns}) answers:
  \[
    \ITER{\norm{\rho}}
    \;=\;
    \ITER{(\ADD{D_\rho}{R})}
    \;\equiv\;
    \SEQ{\ITER{D_\rho}}{\ITER{(\SEQ{R}{\ITER{D_\rho}})}}
    \;\equiv\;
    \SEQ{\norm{\psi_1}}{\ITER{(\SEQ{R}{\norm{\psi_1}})}}.
  \]
  Left to star is $\SEQ{R}{\norm{\psi_1}} \equiv R \oplus
  R;\bar{t}_1;\bar{t}_2;\bar{t}_3;\norm{\eta}$ (E2a, E2d): each remaining
  summand, optionally followed by one last pass through the removed default
  identity. For the algorithm to continue, we'd hope that 
  $\SEQ{R}{\norm{\psi_1}}$ is semantically equivalent to a smaller \textit{residual } wSPP on which 
  a new round of state elimination could begin. This is exactly the case:  
  A summand of $R$ contributes to $\SEQ{R}{\norm{\psi_1}}$ only if it can
  pass the tests $\bar{t}_1$, $\bar{t}_2$, and $\bar{t}_3$ of $\norm{\psi_1}$, i.e., only if it
  \emph{writes to f some value outside the values already named by $b$ and $m$}. In our
  example, only the entry at $(1,4)$ does, since the $a_1$ and $a_3$ assignments
  for the entries $(1, 1), (2, 1)$ and the default assignments all write blocked values, while $a_4$
  writes a value that the guard allows. Hence the second part simply extends
  that entry by the closed identity:
  \[
    t_1;a_4;\SKIP;\bar{t}_1;\bar{t}_2;\bar{t}_3;\norm{\eta}
    \;\equiv\;
    t_1;a_4;(\WEIGH{\nicefrac12}{\SKIP}),
  \]
  which combines with the original $(1,4)$ entry to give
  $t_1;a_4;(\WEIGH{1+\nicefrac12}{\SKIP})$. Therefore
  \[
    \ITER{\norm{\rho}}
    \;\equiv\;
    \norm{\psi_1} \SEQN \ITER{\norm{\rho_1}},
    \qquad
    \rho_1 \;=\; \wspp(\fielda,
    \{1 \mapsto \{1 \mapsto \rulebox{\nicefrac12},\;
      4 \mapsto \rulebox{\nicefrac32}\},\;
      2 \mapsto \{1 \mapsto \rulebox{\nicefrac12}\}\}, m, \hat{\mathbb{0}})
  \]
  as we'd hoped. In general, let $S = \dom(b) \cup \dom(m)$ and let
  $e_S = \prod_{x\in S} \NOTEQ{\fielda}{x}$. Removing the default identity
  emits $\hat{\mathbb{1}} \wsppadd \strhat{e_S} \wsppmul
  d \wsppmul \Star(d)$, sets the residual default identity to
  $\hat{\mathbb{0}}$, and updates exactly the explicit entries whose output
  value is not in $S$:
  \(
    b(u)(v) \;\leftarrow\; b(u)(v) \wsppmul \Star(d)
    \qquad (v \notin S).
  \) 
  Same as above, the default-assignments $m$ is unchanged. The rest
  of the algorithm then continues by editing the residual components
  $b$, $m$, and $d$. \hfill $\triangle$
\end{example}

\begin{figure}[tb]
\centering
\begin{algorithm}[H]
\DontPrintSemicolon\SetAlgoVlined\small
\SetCommentSty{AlgCommentFont}
\Fn{\Star{$\rho$}}{
  \lIf{$\rho = \rulebox{w}$}{%
    \KwReturn $\rulebox{\cstar{w}}$
    \tcp*[f]{the only use of $\cstar{(-)}$}}
  let $\rho = \wspp(\fielda,b,m,d)$;\quad
  $\vec\psi \leftarrow [\,]$;\ $\vec\phi \leftarrow [\,]$\;
  \BlankLine
  \tcc{{\normalfont\textcolor{wsppdefault}{\bfseries Pass I: remove the
    default identity $d$}}\quad --- maintains invariant
    \textnormal{(I)}}
  \If{$d \neq \hat{\mathbb{0}}$}{
    \AlgPointStart{wsppdefault}{I-s}
    \ForEach(\tcp*[f]{entries that can fall through $d$}){$u \in \dom(b)$\,,\, $v \in \dom(b(u)) \setminus (\text{dom}(b) \cup \text{dom}(m))$}{
      $b(u)(v) \leftarrow b(u)(v) \wsppmul \texttt{Star}(d)$\;
    }
    $\vec\psi \leftarrow \Bigl(\hat{\mathbb{1}} \wsppadd
      \guardprod{x\in \text{dom}(b)}{(\fielda\ne u)}
      \wsppmul \guardprod{x\in \text{dom}(m)}{(\fielda\ne z)} \wsppmul
      d \wsppmul  \texttt{Star}(d)\Bigr) :: \vec\psi$\;
    \AlgPointEnd{wsppdefault}{I-e}
    $d \leftarrow \hat{\mathbb{0}}$\;
  }
  \BlankLine
  \tcc{{\normalfont\textcolor{wsppbranch}{\bfseries Pass II: remove the
    entries of $b$}}\quad --- maintains invariant
    \textnormal{(I)}}
  \While{$\exists u \in \dom(b)\text{ with } b(u) \neq \varnothing$}{
    \AlgPointStart{wsppbranch}{II-s}
    $u \leftarrow$ smallest key of $b$ with $b(u) \neq \varnothing$\;
    $v \leftarrow \textbf{if } u \in \dom(b(u)) \textbf{ then } u
      \textbf{ else } \text{smallest key of } b(u)$;\quad
    $\sigma \leftarrow b(u)(v)$
    \tcp*[f]{select the entry $(u,v)$}\;
    $\eta \leftarrow \textbf{if } v = u \textbf{ then }
      \sigma \wsppmul \Star(\sigma) \textbf{ else } \sigma$
    \tcp*[f]{close the self-loop: recursion, strictly fewer fields}\;
    delete the entry $(u,v)$, i.e.\ remove $v$ from $b(u)$\;
    \ForEach(\tcp*[f]{steps into $u$ may still use the deleted entry}){$u' \in \dom(b)$
      \KwwithKw $u \in \dom(b(u'))$}{
      $b(u')(v) \leftarrow \dflt{b(u')(v)} \wsppadd b(u')(u) \wsppmul \eta$\;}
    \lIf{$u \in \dom(m)$}{
      $m(v) \leftarrow \dflt{m(v)} \wsppadd m(u) \wsppmul \eta$}
    \AlgPointEnd{wsppbranch}{II-e}
    $\vec\psi \leftarrow \bigl(\hat{\mathbb{1}} \wsppadd
      \strhat{(\fielda{=}u)} \wsppmul \strhat{(\fielda{\leftarrow}v)}
      \wsppmul \eta\bigr) :: \vec\psi$\;
  }
  \BlankLine
  \tcc{{\normalfont\textcolor{wsppmissing}{\bfseries Pass III: remove the
    default assignments $m$}}\quad --- maintains invariant
    \textnormal{(I)}}
  \While{$m \neq \varnothing$}{
    \AlgPointStart{wsppmissing}{III-s}
    $z \leftarrow$ smallest key of $m$;\quad
    $\sigma \leftarrow m(z)$;\quad delete $z$ from $m$\;
    \uIf(\tcp*[f]{loops at $z$ were already removed in Pass II}){$z \in \dom(b)$}{
      \AlgPointEndLow{wsppmissing}{III-e}
      $\vec\phi \leftarrow \bigl(\hat{\mathbb{1}} \wsppadd
        \guardprod{u\in \dom(b)}{(\fielda\ne u)} \wsppmul
        \strhat{(\fielda{\leftarrow}z)} \wsppmul \sigma\bigr) :: \vec\phi$\;
    }\Else(\tcp*[f]{writing $z$ re-enters $m(z)$: close that loop too}){
      \lForEach{$z' \in \dom(m)$}{
        $m(z') \leftarrow \texttt{Star}(\sigma) \wsppmul m(z')$}
      \AlgPointEndLow{wsppmissing}{III-e}
      $\vec\phi \leftarrow \bigl(\hat{\mathbb{1}} \wsppadd
        \guardprod{u\in \dom(b)}{(\fielda\ne u)} \wsppmul
        \strhat{(\fielda{\leftarrow}z)} \wsppmul \sigma \wsppmul \texttt{Star}(\sigma)\bigr) :: \vec\phi$\;
    }
  }
  \KwReturn $\bigl(\listbprod{\vec\psi}\bigr)\wsppmul\bigl(\listfprod{\vec\phi}\bigr)$
  \tcp*[f]{$\bprod$/$\fprod$: product in reverse/forward list order}\;
}
\end{algorithm}
\caption{The wSPP star algorithm: symbolic elimination of the
root node. $\dflt{g(x)}$ denotes $g(x)$ if $x \in \dom(g)$ and
$\hat{\mathbb{0}}$ otherwise; wide-hatted atoms denote their atomic wSPPs,
a wide hat $\Pi$ applies the $\wsppmul$-product
(\Cref{sec:wspp-easy-ops}).}
\Description{Pseudocode of the Star function with three passes: removing
the default identity, test-assignment entries, and default assignments.}
\label{alg:df-star}\label{alg:star}
\end{figure}

\smallskip\noindent\textbf{The algorithm.}
\Cref{alg:df-star} assembles these ingredients into $\Star(\rho)$. After
applying $\cstar{(-)}$ to leaves, the
algorithm removes the entries of the root node in three passes: first the
single default identity $d$ (Pass~I), then the explicit entries of $b$
(Pass~II), then the columns of $m$ (Pass~III). The emitted factors are
collected into two lists, $\vec\psi$ for the left and $\vec\phi$ for the
right of the final product, as anticipated by the two \textsc{denesting}
variants above.
\emph{Pass~I} removes $d$ exactly as in
\Cref{example:df-wspp-star-step}: it emits the closed default-identity
factor, multiplies by $\Star(d)$ each explicit entry whose output value can
fall through the old default identity, and then sets $d$ to
$\hat{\mathbb{0}}$. \emph{Pass~II} removes the entries of $b$ one at a
time, preferring self-loop entries $(u,u)$, whose factors close the loop via
a recursive call; when other entries flow into the removed entry, their
routes are spliced into the residual wSPP. After removing all entries
from a row $b(u)$ of $b$, the $u$-entry in $b$ is still retained
because it keeps the row explicit and prevents the defaults $m$ and
$d$ from filling it in. \emph{Pass~III} removes the
default assignments column by column: for example, value $z$ that is not an
explicit row leaves the packet able to take column $z$ again indefinitely, 
so the factor closes the $(z, z)$ self-loop with
$\sigma \wsppmul \Star(\sigma)$, and $\Star(\sigma)$ is pre-composed into the remaining
columns, which a run may take after looping at $z$.

\smallskip\noindent\textbf{Why this is correct: the invariant.}
At any point during the run, the loop
state consists of the current residual wSPP $\rho'$ and the factor lists
$\vec\psi, \vec\phi$. The algorithm maintains, after Pass~I and at every
iteration of Passes~II--III,
\(
 \quad \ITER{\norm{\rho}}
  \;\equiv\;
  \norm{\listbprod{\vec\psi}}
  \SEQN \ITER{\norm{\rho'}} \SEQN
  \norm{\listfprod{\vec\phi}}
  \quad\text{(I)}.
\)
It remains only to check that each primitive update reestablishes
\textnormal{(I)} with the newly emitted factor, and that a residual with no
remaining entries contributes the identity to the surrounding product.

\begin{lemma}[preservation and progress]
  \label{lemma:df-star-invariant}
  In any of the three passes of the algorithm, let $\rho'$ be the residual
  $\rho$ at program point \AlgPoint{wsppdefault}{I-s},
  \AlgPoint{wsppbranch}{II-s}, or \AlgPoint{wsppmissing}{III-s},
  and let $\rho''$ be the residual at the corresponding point
  \AlgPoint{wsppdefault}{I-e}, \AlgPoint{wsppbranch}{II-e}, or
  \AlgPoint{wsppmissing}{III-e} of Figure \ref{alg:star}. Then
  each step of \Cref{alg:df-star} preserves
  \textnormal{(I)}: Pass~I and each Pass~II removal satisfy
  $\ITER{\norm{\rho'}} \equiv
  \SEQ{\norm{\psi}}{\ITER{\norm{\rho''}}}$ for the emitted factor $\psi$;
  each Pass~III removal satisfies $\ITER{\norm{\rho'}} \equiv
  \SEQ{\ITER{\norm{\rho''}}}{\norm{\phi}}$ for the emitted factor $\phi$.
  Upon termination, for the final residual $\rho_{\mathsf{fin}}$,
  $b(u)=\varnothing$ for every $u\in\dom(b)$, $m = \varnothing$, and
  $d = \hat{\mathbb{0}}$, hence
  $\norm{\rho_{\mathsf{fin}}} \equiv \DROP$ and
  $\ITER{\norm{\rho_{\mathsf{fin}}}} \equiv \SKIP$.
\end{lemma}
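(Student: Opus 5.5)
The plan is to prove each step by the same two-part argument as \Cref{example:df-wspp-star-step}. First, split $\norm{\rho'}\equiv \ADD{E}{R}$, where $E$ is the summand of the read-back corresponding to the removed entry (or entries) and $R$ is the rest. Second, apply \textsc{denesting} (\Cref{theorem:star-eqns}) in its left variant $\ITER{(\ADD{E}{R})}\equiv\SEQ{\ITER{E}}{\ITER{(\SEQ{R}{\ITER{E}})}}$ for Passes~I and~II, and in its right variant $\SEQ{\ITER{(\SEQ{\ITER{E}}{R})}}{\ITER{E}}$ for Pass~III. Each step then reduces to two obligations: (a) $\ITER{E}\equiv\norm{\psi}$ (resp.\ $\norm{\phi}$) for the emitted factor, and (b) $\SEQ{R}{\ITER{E}}\equiv\norm{\rho''}$ (resp.\ $\SEQ{\ITER{E}}{R}$). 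Both sides may be compared pointwise per input value $u$ of the root field $\fielda$, via \Cref{lemma:wspp-case-split}(2). Throughout, the compiled operations may be replaced by their policy counterparts using \Cref{theorem:wspp-add-mul-soundness}, and the recursive calls $\Star(\sigma)$ on strictly fewer fields may be used via an outer induction on the number of fields: $\norm{\Star(\sigma)}\equiv\ITER{\norm{\sigma}}$, with the leaf case given by embedding condition (iii).

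For obligation (a), use the fact that $E$ has the shape $g;a;\norm{\sigma}$ with $\sigma$ not mentioning $\fielda$. By (E3d) and (E3a--c), $E;E$ collapses. In Pass~I, $g$ is a conjunction of disequalities and there is no assignment, so $E^{(k)}\equiv g;\norm{\sigma}^{(k)}$. In Pass~II with $v=u$, $E^{(k)}\equiv t_u;a_u;\norm{\sigma}^{(k)}$. In Pass~II with $v\neq u$, $E;E\equiv\DROP$ by (E3b). In Pass~III, $g;a_z;\sigma;g;a_z$ keeps $z$ when $z\notin\dom(b)$, and gives $\DROP$ when $z\in\dom(b)$. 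Summing the powers with $\omega$-continuity (E2a,b) and using $\ITER{x}\equiv\ADD{\SKIP}{\SEQ{x}{\ITER{x}}}$ then yields exactly the factors of the algorithm, e.g.\ $\ADD{\SKIP}{g;a;\norm{\sigma};\ITER{\norm{\sigma}}}$.

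For obligation (b), expand $\SEQ{R}{\ITER{E}}\equiv\ADD{R}{\SEQ{R}{E^{+}}}$ and push $E^{+}$ through each summand of $R$ using (E3b/c). A summand ending in $a_x$ survives only if $x$ passes the guard of $E$. In Pass~I these are the entries writing values outside $\dom(b)\cup\dom(m)$, which matches the update $b(u)(v)\leftarrow b(u)(v)\wsppmul\Star(d)$ after simplifying $d;\Star(d)$ absorbed into $\ADD{\SKIP}{\cdot}$. In Pass~II these are the entries writing $u$, i.e.\ the row updates $b(u')(v)$ and $m(v)$. The invariant that $d=\hat{\mathbb{0}}$ after Pass~I ensures that nothing else writes $u$ implicitly. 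For the $m$ update, note that entries of $m$ writing $u$ fire only on rows not in $\dom(b)$, and row $u$ stays in $\dom(b)$, so those are the only extra paths. Pass~III is dual: the factor sits on the left of $R$, so $\ITER{E};R$ prepends $g;a_z;\sigma;\ITER{\sigma}$ to the rows reading $z$. When $z\notin\dom(b)$, the untested rows include $z$, and $m(z')$ becomes $\Star(\sigma)\wsppmul m(z')$. When $z\in\dom(b)$, row $z$ is explicit and, since Pass~II has emptied it, it is now empty, so $\ITER{E};R\equiv\ADD{R}{\DROP}$.

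The main obstacle is (b) for Pass~II. It requires a careful bookkeeping argument that the residual's branch map at every input value equals the branch map of $\SEQ{R}{\ITER{E}}$. This covers rows in $\dom(b)$ versus untested rows sharing $m$, and the subtle case $v\in\dom(b(u'))$ already present, handled by $\dflt{\cdot}\wsppadd$. I would do this per row using \Cref{lemma:wspp-case-split}(1) and (E2a).

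Termination and the final claim are separate. Termination holds because each pass strictly decreases a finite measure: Pass~II only edits rows $u'$ with entries into $u$ and never adds an entry in column $u$, so the number of nonempty-column entries falls. At the end, $b(u)=\varnothing$, $m=\varnothing$ and $d=\hat{\mathbb{0}}$, so every summand of the read-back is an empty sum, $\norm{\rho_{\mathsf{fin}}}\equiv\DROP$, and \textsc{fixpoint} with (E2e) gives $\ITER{\DROP}\equiv\SKIP$.
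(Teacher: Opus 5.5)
Your proposal follows the paper's proof: you split off the removed summand $E$, apply left \textsc{denesting} for Passes~I and~II and right \textsc{denesting} for Pass~III, obtain the emitted factor from the same case analysis on $E;E$ (it either collapses to a single weighted step or is $\DROP$), and identify $R \oplus R;E^{+}$ (resp.\ $R \oplus E^{+};R$) with the residual. The paper does this last bookkeeping componentwise on $B$ and $M$ rather than by input slicing, which is only cosmetic. One caveat on your termination aside, which this lemma does not need because its final claim is conditional on termination: the Pass~II measure you give does not decrease. Removing $(1,2)$ while $(3,1)$ is present creates $(3,2)$, so neither the entry count nor the set of nonempty columns shrinks. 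The argument that works is the paper's row-lexicographic one: the selected row is never refilled, and no empty row ever becomes nonempty.
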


\smallskip\noindent\textbf{Termination.}
Termination is subtle: the algorithm
\emph{rewrites} the maps rather than recursing into subterms, and single
steps may grow them. No naive syntactic size decreases. What \emph{does} decrease is the
lexicographic measure
\[
  \Bigl\langle\;
    |\fieldsof{\rho}|,\;\;
    \text{pass},\;\;
    \mathbf{1}_{d\neq\hat{\mathbb{0}}},\;\;
    |\{u \in \dom(b) \mid b(u)\neq\varnothing\}|,\;\;
    |b(u_{\min})|,\;\;
    |m|
  \;\Bigr\rangle,
\]
where the pass component is ordered by remaining work,
Pass~I $>$ Pass~II $>$ Pass~III, and $u_{\min}$ is the least key of $b$
whose row is currently nonempty (with the fifth component taken to be $0$
if no such row exists).
Recursive calls decrease the first component: by well-formedness,
sub-wSPPs omit the root field and their stars mention no new fields
(\Cref{theorem:df-wspp-star-soundness}), so the closures do not reintroduce
it. Pass~I update decreases the default-identity bit; advancing
from Pass~I to Pass~II decreases the pass component. In Pass~II, the
selected row $b(u_{\min})$ loses one entry; if
that entry was the last one, the number of nonempty rows decreases. Crucially, 
updates may add entries to later rows or to $m$, but
they never refill the selected row. 
Finally, Pass~III shrinks $|m|$ and does not introduce new keys. Since
$\fields$ is finite, the recursion depth is bounded by $|\fields|$, and the
algorithm terminates.

Together with \Cref{lemma:df-star-invariant}, we finally get:
\begin{theorem}
  \label{theorem:df-wspp-star-soundness}
  For every $\rho \in \WSPP_{\csr}$, \Cref{alg:df-star} terminates and
  returns a well-formed wSPP
  $\rho^{\wsppstar} \triangleq \Star(\rho) \in \WSPP_{\csr}$ that mentions
  only fields of $\rho$ and satisfies
  $\norm{\rho^{\wsppstar}} \equiv \ITER{\norm{\rho}}$.
\end{theorem}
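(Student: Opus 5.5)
We prove the theorem by well-founded induction on the lexicographic measure given in the termination paragraph. The induction hypothesis covers every recursive call $\Star(\sigma)$ with $\sigma$ on strictly fewer fields: such a call terminates, returns a well-formed wSPP mentioning only fields of $\sigma$, and satisfies $\norm{\Star(\sigma)} \equiv \ITER{\norm{\sigma}}$.

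\textbf{Base case.} For a leaf $\rulebox{w}$, we have $\norm{\rulebox{\cstar{w}}} = \WEIGH{\cstar{w}}{\SKIP}$. Since $\SKIP^{(n)} \equiv \SKIP$, the star unfolds as
\[
\ITER{(\WEIGH{w}{\SKIP})} \equiv \textstyle\sum_n \WEIGH{w^n}{\SKIP} \equiv \WEIGH{w^*}{\SKIP},
\]
using (E2c), (E2d), and the fact that countable sums commute with weighting by $\omega$-continuity. Condition (iii) of \Cref{def:embedding} gives $w^* = \cstar{w}$, which closes the case.

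\textbf{Well-formedness and fields.} Every wSPP the algorithm builds is assembled from three kinds of pieces: sub-wSPPs of $\rho$ (fields $>\fielda$), results of recursive calls on them (fields $>\fielda$ by the induction hypothesis), and atoms on $\fielda$ itself. These are combined only with $\wsppadd$ and $\wsppmul$, which preserve well-formedness and introduce no new fields, as one checks directly from their definitions in \Cref{sec:appendix-algorithms}. Hence updated entries such as $b(u')(v)\wsppadd b(u')(u)\wsppmul\eta$ remain on fields $>\fielda$, so each residual $\rho''$ is still a well-formed node on $\fielda$. The emitted factors and the final product mention only $\fielda$ and fields of $\rho$.

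\textbf{Termination.} We show that each loop iteration strictly decreases the measure, following the termination paragraph. Recursive calls are covered by the induction hypothesis. In Pass~II we must check that the selected row $u_{\min}$ is never refilled. The updates write only $b(u')(v)$ for rows $u'$ whose row contains $u$. If $u'=u$, then $u\in\dom(b(u))$ forces the chosen entry to be the self-loop $v=u$, which was just deleted, so row $u$ no longer contains $u$ and is not rewritten. Any other $u'$ with a nonempty row is a later row, and $u'<u$ is impossible since $u$ is the least nonempty row. Rows with $b(u')=\varnothing$ contain no $u$ and are not touched. Pass~III deletes one key of $m$ per iteration and adds none.

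\textbf{Correctness.} This follows by chaining \Cref{lemma:df-star-invariant}. Initially (I) holds trivially, with empty lists and $\rho'=\rho$. Each step replaces $\ITER{\norm{\rho'}}$ by $\SEQ{\norm{\psi}}{\ITER{\norm{\rho''}}}$ or by $\SEQ{\ITER{\norm{\rho''}}}{\norm{\phi}}$, so by associativity and congruence of $\equiv$ the invariant (I) is maintained with $\psi$ prepended to $\vec\psi$ or $\phi$ prepended to $\vec\phi$. The list orders match the reverse and forward products used in (I). At termination, \Cref{lemma:df-star-invariant} gives $\ITER{\norm{\rho_{\mathsf{fin}}}}\equiv\SKIP$, so
\[
\ITER{\norm{\rho}}\equiv\norm{\listbprod{\vec\psi}}\SEQN\norm{\listfprod{\vec\phi}}.
\]
By \Cref{theorem:wspp-add-mul-soundness}, the right-hand side is $\equiv$ to the read-back of the returned $\wsppmul$-product.

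\textbf{Main obstacle.} The delicate part is the preservation lemma, which we take as given. Within this proof, the step most likely to hide a gap is the Pass~II termination argument. It relies on the selection rule (self-loop first) to rule out the case where the update rewrites row $u$ itself. If that rule were not enough, the fallback is to strengthen the measure with the number of entries into $u$.
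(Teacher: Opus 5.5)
Your proposal is correct and follows the paper's own argument. Termination goes through the same lexicographic measure, and you usefully make explicit why the self-loop-first selection keeps the selected row from being refilled. Soundness comes from chaining Lemma~\ref{lemma:df-star-invariant}, starting from the trivially true initial invariant and ending at the empty residual, whose star is $\SKIP$.

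One point of bookkeeping: the appendix version of that lemma assumes $\Star$ is sound on all wSPPs over fields greater than $\fielda$, while your hypothesis only covers wSPPs with fewer fields than $\rho$. This is enough, because the lemma only uses the assumption on $d$, the selected $\sigma$, and $m(z)$, and your well-formedness paragraph shows their fields lie in $\fieldsof{\rho}\setminus\{\fielda\}$. Alternatively, you can induct downward on the root field instead.
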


\smallskip\noindent\textbf{A complete run.}
\Cref{fig:df-star-run-example} traces \Cref{alg:df-star} on the running
example, exercising all three passes. The first pass closes the diagonal
$d$, pushes $\Star(d)$ into the explicit entry $(1,4)$ that can fall through
to the unnamed value $4$, and leaves a residual wSPP with
$d = \hat{\mathbb{0}}$. Pass~II then removes the four explicit entries in
the two retained rows of $b$; the row markers remain even after their maps
become empty. Pass~III removes the default assignments, including the
new column $4$ introduced while splicing routes through $(1,4)$. Multiplying
the emitted factors (with $\wsppmul$ from \Cref{sec:wspp-easy-ops}) yields
$\rho^{\wsppstar}$.

\begin{figure}[tb]
\centering
\begingroup
\scriptsize
\setlength{\fboxsep}{0.8pt}
\newsavebox{\runfactorbox}
\newcommand{\runcard}[4]{%
\begin{minipage}[t]{#1}
\centering
\textbf{#2}\\[-0.15em]
#3
#4
\end{minipage}}
\newcommand{\runpic}[3]{%
\begin{tikzpicture}[
  >=Latex, line join=round, baseline=(base),
  x=0.34cm, y=0.50cm,
  field/.style={draw=wsppfield!80!black,circle,fill=wsppfield!16,
    minimum size=11pt,inner sep=0pt,font=\scriptsize},
  gate/.style={draw,diamond,minimum size=6.2pt,inner sep=0pt,fill=black!7},
  leaf/.style={font=\scriptsize,inner sep=0.4pt},
  elabel/.style={font=\tiny,inner sep=0.4pt,fill=white}
]
\coordinate (base) at (0,0.05);
\node[field] (f) at (0,3.6) {$\fielda$};
\node[gate] (b1) at (#1,2.2) {};
\node[gate] (b2) at (#2,2.2) {};
\draw[->,draw=wsppbranch!85!black] (f) -- node[elabel,left] {$1$} (b1);
\draw[->,draw=wsppbranch!85!black] (f) -- node[elabel] {$2$} (b2);
#3
\end{tikzpicture}}
\newcommand{\runfactor}[3]{%
\vspace{0.05em}\hrule height 0.25pt\vspace{0.08em}
\sbox{\runfactorbox}{{\fontsize{6.6pt}{7.2pt}\selectfont
  \textcolor{#1}{\texttt{#2}}}}%
\ifdim\wd\runfactorbox>\linewidth
  \resizebox{\linewidth}{!}{\usebox{\runfactorbox}}%
\else
  \usebox{\runfactorbox}%
\fi\\[-0.18em]
\sbox{\runfactorbox}{$#3$}%
\ifdim\wd\runfactorbox>\linewidth
  \resizebox{\linewidth}{!}{\usebox{\runfactorbox}}%
\else
  \usebox{\runfactorbox}%
\fi}
\newcommand{\dfcardzero}{\runpic{-1.9}{-0.45}{%
\node[gate] (oth) at (1.05,2.2) {};
  \draw[->,dashed] (f) to[bend left=8] node[elabel,right] {$*$} (oth);
\node[leaf] (s1) at (-3.3,0.55) {$\rulebox{\nicefrac12}$};
\node[leaf] (s2) at (-1.5,0.55) {$\rulebox{1}$};
\node[leaf] (s3) at (-0.7,0.55) {$\rulebox{\nicefrac12}$};
\node[leaf] (s4) at (0.75,0.55) {$\rulebox{\nicefrac15}$};
\node[leaf] (s5) at (1.95,0.55) {$\rulebox{\nicefrac14}$};
\node[leaf] (s6) at (3.25,0.55) {$\rulebox{\nicefrac13}$};
  \draw[->,draw=wsppbranch!85!black] (b1) -- node[elabel,left] {$1$} (s1);
  \draw[->,draw=wsppbranch!85!black] (b1) -- node[elabel,right] {$4$} (s2);
  \draw[->,draw=wsppbranch!85!black] (b2) -- node[elabel,left] {$1$} (s3);
  \draw[->,draw=wsppmissing!90!black] (oth) to[bend right=10] node[elabel,left] {$1$} (s4);
  \draw[->,draw=wsppmissing!90!black] (oth) -- node[elabel] {$3$} (s5);
  \draw[->,dashed,draw=wsppdefault!90!black] (oth) to[bend left=14] node[elabel,right] {$*$} (s6);
}}
\newcommand{\dfcardone}{\runpic{-1.9}{-0.45}{%
\node[gate] (oth) at (1.05,2.2) {};
  \draw[->,dashed] (f) to[bend left=8] node[elabel,right] {$*$} (oth);
\node[leaf] (s1) at (-3.0,0.55) {$\rulebox{\nicefrac12}$};
\node[leaf] (s2) at (-1.5,0.55) {$\rulebox{\nicefrac32}$};
\node[leaf] (s3) at (-0.45,0.55) {$\rulebox{\nicefrac12}$};
\node[leaf] (s4) at (0.65,0.55) {$\rulebox{\nicefrac15}$};
\node[leaf] (s5) at (2.1,0.55) {$\rulebox{\nicefrac14}$};
  \draw[->,draw=wsppbranch!85!black] (b1) -- node[elabel,left] {$1$} (s1);
  \draw[->,draw=wsppbranch!85!black] (b1) -- node[elabel,right] {$4$} (s2);
  \draw[->,draw=wsppbranch!85!black] (b2) -- node[elabel,left] {$1$} (s3);
  \draw[->,draw=wsppmissing!90!black] (oth) -- node[elabel,left] {$1$} (s4);
  \draw[->,draw=wsppmissing!90!black] (oth) -- node[elabel,right] {$3$} (s5);
}}
\newcommand{\dfcardtwo}{\runpic{-2.5}{-1.2}{%
\node[gate] (oth) at (1.05,2.2) {};
  \draw[->,dashed] (f) to[bend left=8] node[elabel,right] {$*$} (oth);
\node[leaf] (s2) at (-2.5,0.55) {$\rulebox{\nicefrac32}$};
\node[leaf] (s3) at (-1.2,0.55) {$\rulebox{1}$};
\node[leaf] (s4) at (0.0,0.55) {$\rulebox{\nicefrac25}$};
\node[leaf] (s5) at (1.9,0.55) {$\rulebox{\nicefrac14}$};
  \draw[->,draw=wsppbranch!85!black] (b1) -- node[elabel,right] {$4$} (s2);
  \draw[->,draw=wsppbranch!85!black] (b2) -- node[elabel,left] {$1$} (s3);
  \draw[->,draw=wsppmissing!90!black] (oth) -- node[elabel,left] {$1$} (s4);
  \draw[->,draw=wsppmissing!90!black] (oth) -- node[elabel,right] {$3$} (s5);
}}
\newcommand{\dfcardthree}{\runpic{-3.5}{-1.5}{%
\node[gate] (oth) at (0.5,2.2) {};
  \draw[->,dashed] (f) to[bend left=8] node[elabel,right] {$*$} (oth);
\node[leaf] (s3) at (-3.1,0.55) {$\rulebox{1}$};
\node[leaf] (s7) at (-1.5,0.55) {$\rulebox{\nicefrac32}$};
\node[leaf] (s4) at (-0.2,0.55) {$\rulebox{\nicefrac25}$};
\node[leaf] (s5) at (1.05,0.55) {$\rulebox{\nicefrac14}$};
\node[leaf] (s8) at (2.35,0.55) {$\rulebox{\nicefrac35}$};
  \draw[->,draw=wsppbranch!85!black] (b2) -- node[elabel,left] {$1$} (s3);
  \draw[->,draw=wsppbranch!85!black] (b2) -- node[elabel] {$4$} (s7);
  \draw[->,draw=wsppmissing!90!black] (oth) to[bend right=10] node[elabel,left] {$1$} (s4);
  \draw[->,draw=wsppmissing!90!black] (oth) -- node[elabel] {$3$} (s5);
  \draw[->,draw=wsppmissing!90!black] (oth) to[bend left=12] node[elabel,right] {$4$} (s8);
}}
\newcommand{\dfcardfour}{\runpic{-2.8}{-0.8}{%
\node[gate] (oth) at (0.5,2.2) {};
  \draw[->,dashed] (f) to[bend left=8] node[elabel,right] {$*$} (oth);
\node[leaf] (s7) at (-1.5,0.55) {$\rulebox{\nicefrac32}$};
\node[leaf] (s4) at (-0.2,0.55) {$\rulebox{\nicefrac25}$};
\node[leaf] (s5) at (1.05,0.55) {$\rulebox{\nicefrac14}$};
\node[leaf] (s8) at (2.35,0.55) {$\rulebox{\nicefrac35}$};
  \draw[->,draw=wsppbranch!85!black] (b2) -- node[elabel,left] {$4$} (s7);
  \draw[->,draw=wsppmissing!90!black] (oth) to[bend right=10] node[elabel,left] {$1$} (s4);
  \draw[->,draw=wsppmissing!90!black] (oth) -- node[elabel] {$3$} (s5);
  \draw[->,draw=wsppmissing!90!black] (oth) to[bend left=12] node[elabel,right] {$4$} (s8);
}}
\newcommand{\dfcardfive}{\runpic{-2.8}{-1.6}{%
\node[gate] (oth) at (0.5,2.2) {};
  \draw[->,dashed] (f) to[bend left=8] node[elabel,right] {$*$} (oth);
\node[leaf] (s4) at (-0.9,0.55) {$\rulebox{\nicefrac25}$};
\node[leaf] (s5) at (0.5,0.55) {$\rulebox{\nicefrac14}$};
\node[leaf] (s8) at (2.0,0.55) {$\rulebox{\nicefrac35}$};
  \draw[->,draw=wsppmissing!90!black] (oth) to[bend right=12] node[elabel,left] {$1$} (s4);
  \draw[->,draw=wsppmissing!90!black] (oth) -- node[elabel] {$3$} (s5);
  \draw[->,draw=wsppmissing!90!black] (oth) to[bend left=12] node[elabel,right] {$4$} (s8);
}}
\newcommand{\dfcardsix}{\runpic{-2.8}{-1.6}{%
\node[gate] (oth) at (0.5,2.2) {};
  \draw[->,dashed] (f) to[bend left=8] node[elabel,right] {$*$} (oth);
\node[leaf] (s5) at (-0.3,0.55) {$\rulebox{\nicefrac14}$};
\node[leaf] (s8) at (1.8,0.55) {$\rulebox{\nicefrac35}$};
  \draw[->,draw=wsppmissing!90!black] (oth) -- node[elabel,left] {$3$} (s5);
  \draw[->,draw=wsppmissing!90!black] (oth) -- node[elabel,right] {$4$} (s8);
}}
\newcommand{\dfcardseven}{\runpic{-2.8}{-1.6}{%
\node[gate] (oth) at (0.5,2.2) {};
  \draw[->,dashed] (f) to[bend left=8] node[elabel,right] {$*$} (oth);
\node[leaf] (s8) at (0.5,0.55) {$\rulebox{\nicefrac45}$};
  \draw[->,draw=wsppmissing!90!black] (oth) -- node[elabel,left] {$4$} (s8);
}}
\newcommand{\dfcardeight}{\runpic{-2.8}{-1.6}{}}
\begin{tabular}{@{}c@{\hspace{0.35em}}c@{\hspace{0.35em}}c@{\hspace{0.35em}}c@{}}
\runcard{0.235\textwidth}{$\rho = \rho_0$}{\dfcardzero}
  {\runfactor{wsppdefault}{Pass I: remove $d$}
  {\psi_1=\hat{\mathbb{1}}\wsppadd
    \strhat{\bar{t}_1\bar{t}_2\bar{t}_3}\wsppmul\rulebox{\nicefrac12}}}
&
\runcard{0.235\textwidth}{$\rho_1$}{\dfcardone}
  {\runfactor{wsppbranch}{Pass II: remove $(1,1)$}
  {\psi_2=\hat{\mathbb{1}}\wsppadd
    \strhat{t_1}\wsppmul\strhat{a_1}\wsppmul\rulebox{1}}}
&
\runcard{0.235\textwidth}{$\rho_2$}{\dfcardtwo}
  {\runfactor{wsppbranch}{Pass II: remove $(1,4)$}
  {\psi_3=\hat{\mathbb{1}}\wsppadd
    \strhat{t_1}\wsppmul\strhat{a_4}\wsppmul\rulebox{\nicefrac32}}}
&
\runcard{0.235\textwidth}{$\rho_3$}{\dfcardthree}
  {\runfactor{wsppbranch}{Pass II: remove $(2,1)$}
  {\psi_4=\hat{\mathbb{1}}\wsppadd
    \strhat{t_2}\wsppmul\strhat{a_1}\wsppmul\rulebox{1}}}
\\[0.45em]
\multicolumn{4}{c}{%
\begin{tabular}{@{}c@{\hspace{0.35em}}c@{\hspace{0.35em}}c@{\hspace{0.35em}}c@{\hspace{0.35em}}c@{}}
\runcard{0.185\textwidth}{$\rho_4$}{\dfcardfour}
  {\runfactor{wsppbranch}{Pass II: remove $(2,4)$}
  {\psi_5=\hat{\mathbb{1}}\wsppadd
    \strhat{t_2}\wsppmul\strhat{a_4}\wsppmul\rulebox{\nicefrac32}}}
&
\runcard{0.185\textwidth}{$\rho_5$}{\dfcardfive}
  {\runfactor{wsppmissing}{Pass III: remove $m(1)$}
  {\phi_1=\hat{\mathbb{1}}\wsppadd
    \strhat{\bar{t}_1\bar{t}_2}\wsppmul\strhat{a_1}
    \wsppmul\rulebox{\nicefrac25}}}
&
\runcard{0.185\textwidth}{$\rho_6$}{\dfcardsix}
  {\runfactor{wsppmissing}{Pass III: remove $m(3)$}
  {\phi_2=\hat{\mathbb{1}}\wsppadd
    \strhat{\bar{t}_1\bar{t}_2}\wsppmul\strhat{a_3}
    \wsppmul\rulebox{\nicefrac13}}}
&
\runcard{0.185\textwidth}{$\rho_7$}{\dfcardseven}
  {\runfactor{wsppmissing}{Pass III: remove $m(4)$}
  {\phi_3=\hat{\mathbb{1}}\wsppadd
    \strhat{\bar{t}_1\bar{t}_2}\wsppmul\strhat{a_4}
    \wsppmul\rulebox{4}}}
&
\runcard{0.185\textwidth}{$\rho_8$}{\dfcardeight}
  {\runfactor{black!60}{halt}
  {\rho^{\wsppstar}=\psi_1\wsppmul\cdots\wsppmul\psi_5
    \wsppmul\phi_3\wsppmul\phi_2\wsppmul\phi_1}}
\end{tabular}}
\end{tabular}
\endgroup
\caption{\footnotesize A complete run of Fig. \ref{alg:df-star} on the running
example $\rho$ (Ex. \ref{example:df-wspp-star-cycles}). Each card shows the
working node before the step named below the rule; atoms are as in
Ex. \ref{example:df-wspp-star-step}, and wide hats over products denote the
corresponding $\wsppmul$-products of atomic wSPPs. Since the default identity 
$\rulebox{\nicefrac{1}{3}}$ has been set to $\rulebox{0}$ by Pass~I, 
its branch is not shown in all subsequent wSPPs for brevity.  Stars of leaf weights computed in the
run are $\nicefrac13\cdot\cstar{(\nicefrac13)}=\nicefrac12$,
$\nicefrac12\cdot\cstar{(\nicefrac12)}=1$,
$\nicefrac14\cdot\cstar{(\nicefrac14)}=\nicefrac13$, and
$\nicefrac45\cdot\cstar{(\nicefrac45)}=4$.}
\Description{Nine cards showing the working wSPP after each step of the star
algorithm, with the emitted factor listed beneath each card.}
\label{fig:df-star-run-example}
\end{figure}

%% file: pareto.tex

\newcommand{\paretoa}{\mathfrak{p}}
\newcommand{\paretob}{\mathfrak{q}}
\Cref{sec:pareto-semirings} recaps \emph{Pareto semirings} and their finite
\emph{frontier} representations, and fits it to the embedding discipline of
\Cref{def:embedding}. In \Cref{sec:pareto-traces}, we present our novel
\emph{trace-carrying} Pareto semirings, which pair each Pareto-optimal
trade-off with a witnessing path, so that the semantics reports not just
\emph{which} trade-offs a policy realizes but \emph{along which route(s)}.

\subsection{Pareto Semirings and Their Frontiers}\label{sec:pareto-semirings}

Throughout, fix a partially ordered monoid
$T = (T, \cdot, \semione_T, \preceq)$ with multiplication monotone in each
argument, oriented so that \emph{bigger is better}. In the applications,
$T$ is a finite product of weight carriers with componentwise
multiplication and order; our running example pairs the tropical semiring
$\Trop = (\N\cup\{\infty\},\min,+,\infty,0)$ for latency with the
bottleneck semiring
$\Btl = (\N\cup\{\pm\infty\},\max,\min,-\infty,\infty)$ for bandwidth,
the tropical factor under the reversed numeric order (cheaper is bigger),
so that $\paretoa \preceq \paretob$ is \emph{Pareto
dominance}: $\paretob$ is no slower and no narrower. Writing
$\dcl X \triangleq \{\paretob \mid \exists \paretoa \in X.\; \paretob \preceq \paretoa\}$
for downward closure, sets of vectors become weights as follows:

\begin{definition}[\cite{samborskii1992fourier}]\label{def:pareto-semiring}
  The \emph{Pareto semiring} $\Par(T)$  has as
  elements the lower sets of $T$ (the downward closed subsets, $A = \dcl A$), with
  \[
    \semizero \triangleq \emptyset, \qquad
    \semione \triangleq \dcl\{\semione_T\}, \qquad
    A + B \triangleq A \cup B, \qquad
    A \cdot B \triangleq \dcl\{\,p \cdot q \mid p \in A,\, q \in B\,\},
  \]
  ordered
   by set inclusion, i.e., $A \preceq B$ iff $A \subseteq B$.
\end{definition}

A weight is thus the set of all vectors that some behavior dominates: two
parallel links with latency--bandwidth vectors $(5,2)$ and $(1,1)$ enter
as $\dcl\{(5,2)\}$ and $\dcl\{(1,1)\}$, and the maximal points of their
sum are exactly the two achievable trade-offs. $\Par(T)$ is an $\omega$-continuous semiring \cite{HandbookCh1},
hence a legitimate semantic semiring for \Cref{sec:preliminaries}, but not
one to compute with: $\dcl\{(5,2),(1,1)\}$ is an \emph{infinite} subset of
$\N^2$. What \emph{is} finite about that weight is its \emph{frontier} of
maximal points, which enables leveraging the embedding discipline of \Cref{def:embedding}: Call $F \subseteq T$ an \emph{antichain} if
its elements are pairwise incomparable, and write
$\fmax F \triangleq \{\paretoa \in F \mid \text{no } \paretob \in F \text{ has } \paretoa \prec \paretob\}$
for the maximal elements of a \emph{finite} $F \subseteq T$: a finite
antichain with $\dcl{(\fmax F)} = \dcl F$.

\begin{definition}[\cite{geilen2007algebra, gondran2008graphs}]\label{def:frontier-semiring}
  The \emph{frontier semiring} $\Fr(T)$ consists of the finite
  antichains of $T$, with
  \[
    \semizero \triangleq \emptyset, \quad
    \semione \triangleq \{\semione_T\}, \quad
    F + G \triangleq \fmax(F \cup G), \quad
    F \cdot G \triangleq \fmax\{\,\paretoa\cdot \paretob \mid \paretoa\in F,\, \paretob\in G\,\},
  \]
  ordered by domination
  ($F \preceq G \iff \forall \paretoa \in F.\ \exists \paretob\in G.\ \paretoa \preceq \paretob$),
  and with star $\cstar{F} \triangleq G_N$ for the first $N$ with
  $G_{N+1} = G_N$, where $G_0 \triangleq \semione$ and
  $G_{n+1} \triangleq \semione + F\cdot G_n$ (undefined if the iteration
  never stabilizes).
\end{definition}

The key result on frontier semirings is the following theorem.
\begin{theorem}\label{theorem:pareto-embedding}
  Let $T$ be a computable partially ordered monoid with monotone
  multiplication such that the star of every finitely generated lower set
  is finitely generated such as any product of bounded factors. Then
  $\Fr(T)$ is a computable star semiring and
  $\dcl{(-)} \colon \Fr(T) \hookrightarrow \Par(T)$ is an embedding.
\end{theorem}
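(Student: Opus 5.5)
We verify the three conditions of \Cref{def:embedding} for $\emb \triangleq \dcl{(-)}$, together with the computability clauses of \Cref{def:star-semiring}. The map is well defined and injective because finite antichains are canonical: if $F,G$ are finite antichains with $\dcl F = \dcl G$, then every $\paretoa\in F$ lies below some $\paretob\in G$, which in turn lies below some $\paretoa'\in F$. Since $F$ is an antichain, $\paretoa=\paretoa'$, hence $\paretoa=\paretob$, so $F\subseteq G$ and symmetrically $G\subseteq F$.

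\emph{Condition (i).} We have $\emb(\emptyset)=\emptyset$ and $\emb(\{\semione_T\}) = \dcl\{\semione_T\}$. For addition, the identity $\dcl{\fmax X} = \dcl X$ for finite $X$ gives $\dcl(\fmax(F\cup G)) = \dcl F\cup\dcl G$. For multiplication, we show $\dcl\{\paretoa\paretob \mid \paretoa\in\dcl F,\paretob\in\dcl G\} = \dcl\{\paretoa\paretob \mid \paretoa\in F,\paretob\in G\}$. The inclusion $\supseteq$ is trivial. For $\subseteq$, use monotonicity of multiplication in each argument: if $\paretoa\preceq\paretoa_0$ and $\paretob\preceq\paretob_0$, then $\paretoa\paretob\preceq\paretoa_0\paretob\preceq\paretoa_0\paretob_0$.

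\emph{Condition (ii).} The domination order is exactly $\dcl F\subseteq\dcl G$, which follows by unfolding the definitions. Computability is also immediate: elements are finite sets, $\fmax$ only needs decidable $\preceq$ and $=$, products need computable $\cdot$, and domination is a finite check. The only computability issue left is the star.

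\emph{Condition (iii) is the main obstacle.} Set $L=\dcl F$. In $\Par(T)$, finite sums are unions and suprema of chains are unions, so $L^* = \bigcup_n L^{(n)}$ with $L^{(n)} = \bigcup_{k\le n} L^k$. By (i), the iterates satisfy $\dcl G_n = L^{(n)}$: induction on $n$ using $G_{n+1}=\semione+F\cdot G_n$ and distributivity in $\Par(T)$ gives $\dcl G_{n+1} = \semione\cup L\cdot L^{(n)} = L^{(n+1)}$. The chain $\dcl G_n$ is ascending, and by injectivity $G_{N+1}=G_N$ holds iff $L^{(N+1)}=L^{(N)}$.

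Once $L^{(N+1)}=L^{(N)}$, the recurrence makes all later terms equal, so $L^* = L^{(N)} = \dcl G_N = \emb(\cstar F)$. It remains to show that stabilization actually occurs, and this is where the hypothesis enters. By hypothesis $L^*=\dcl H$ for some finite $H$. Each $h\in H$ lies in some $L^{(n_h)}$. Taking $N=\max_h n_h$ gives $H\subseteq L^{(N)}$, so $L^*\subseteq L^{(N)}\subseteq L^*$. Hence the chain is stationary from $N$ on, and $G_{N+1}=G_N$ is detected at the first such $N$.

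Finally, the parenthetical remark covers the bounded case. If $T$ is a product of bounded factors, then $\semione_T$ is the top of $T$. Every element of $L$ lies below $\semione_T$, and by monotonicity so does every product, so $L^k\subseteq\dcl\{\semione_T\}$ for all $k$. Thus $L^* = \dcl\{\semione_T\}$ is finitely generated, matching \Cref{lemma:bounded-star}.
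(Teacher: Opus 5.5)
Your proof is correct. On injectivity, conditions (i) and (ii), and computability of $+$, $\cdot$ and equality, it matches the paper's treatment: antichains are canonical, and everything else is a finite merge-and-prune manipulation.

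Where you genuinely differ is condition (iii).

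\emph{The paper's route.} It only argues the bounded instance. For bounded $T$ the unit $\dcl\{\semione_T\} = T$ is the top of $\Par(T)$, so $\Par(T)$ is bounded. \Cref{lemma:bounded-star} then gives $L^* = \semione$, and the frontier iteration stops at once with $\cstar{F} = \{\semione_T\}$.

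\emph{Your route.} You prove (iii) under the theorem's general finite-generation hypothesis, in three steps:
\begin{enumerate}
  \item the iterates decode to the partial sums, $\dcl G_n = L^{(n)}$;
  \item once two consecutive iterates coincide, the recurrence freezes;
  \item each of the finitely many generators of $L^*$ appears at some finite stage, which forces stabilization.
\end{enumerate}

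\emph{What each buys.} Your argument is self-contained and does not need \Cref{lemma:bounded-star}. It covers every $T$ satisfying the hypothesis, not just products of bounded factors. It also shows the hypothesis is exactly the right one: conversely, stabilization at $N$ makes $L^* = \dcl G_N$ finitely generated. The paper's route gives an explicit closed form with no search, since termination is immediate. Your bounded-case paragraph could record this too: there $L^* = \dcl\{\semione_T\} = L^{(0)}$, so the iteration halts at $N = 0$.

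\emph{One omission.} Neither you nor the paper says why $\Fr(T)$ satisfies the semiring axioms, which \Cref{def:star-semiring} presupposes. It follows from your (i) together with injectivity: each axiom can be pulled back along $\dcl{(-)}$ from the semiring $\Par(T)$. A sentence saying so would close the argument.
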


Consequently (\Cref{theorem:wspp-main}), wSPPs over $\Fr(T)$ compute the
full Pareto semantics of policies.

\subsection{Trace-Carrying Pareto Semirings}\label{sec:pareto-traces}
Being able to compute frontiers raises the question: \emph{which traces realize a given frontier point?} In this section, we show that \emph{the semiring itself} can carry the traces, and wSPPs compute them.

Fix a finite, nonempty alphabet $\Sigma$ of \emph{events} (link or hop
identifiers). A
\emph{weighted trace} is a pair $(\paretoa, u) \in T \times \Sigma^*$: a cost
vector together with the trace of the run that produced it. Weighted traces
multiply by
$(\paretoa,u)\cdot(\paretob,v) \triangleq (\paretoa \cdot \paretob,\, uv)$, with unit
$(\semione_T, \varepsilon)$. Events enter
through the weights of a policy: annotating a link with the weight
$\{((5,2), \lett{a})\}$ both charges the cost and logs the event, so the
weighting operator $\odot$ of \wnetkat doubles as the logging primitive.
As a running example, consider a network in which two links $\lett{a}$
and $\lett{b}$ lead from switch $s$ to a hub $h$ carrying a costless
monitoring self-loop $\lett{c}$, and two links $\lett{d}$ and $\lett{e}$
exit to $t$---so $\Sigma = \{\lett{a}, \ldots, \lett{e}\}$. As a policy:
\[
\begin{aligned}
  p \;\triangleq\;
  &\bigl(\SEQ{\EQ{\mathit{sw}}{s}}{\WEIGH{w_{\lett{a}}}{\ASSN{\mathit{sw}}{h}}}
    \;\oplus\;
    \SEQ{\EQ{\mathit{sw}}{s}}{\WEIGH{w_{\lett{b}}}{\ASSN{\mathit{sw}}{h}}}\bigr)
  \SEQN
  \ITER{\bigl(\SEQ{\EQ{\mathit{sw}}{h}}{\WEIGH{w_{\lett{c}}}{\ASSN{\mathit{sw}}{h}}}\bigr)}\\
  &\SEQN
  \bigl(\SEQ{\EQ{\mathit{sw}}{h}}{\WEIGH{w_{\lett{d}}}{\ASSN{\mathit{sw}}{t}}}
    \;\oplus\;
    \SEQ{\EQ{\mathit{sw}}{h}}{\WEIGH{w_{\lett{e}}}{\ASSN{\mathit{sw}}{t}}}\bigr)
\end{aligned}
\]
with weights pairing each link's latency--bandwidth vector with its name:
$w_{\lett{a}} = \{((5,2),\lett{a})\}$ and
$w_{\lett{b}} = \{((1,1),\lett{b})\}$ for a slow, wide and a fast, narrow
$s$-to-$h$ link,
$w_{\lett{d}} = \{((2,3),\lett{d})\}$ and
$w_{\lett{e}} = \{((2,5),\lett{e})\}$ for the equal-latency exits, and
$w_{\lett{c}} = \{((0,\infty),\lett{c})\} = \{(\semione_T,\lett{c})\}$ a
pure logging step.

It remains to order weighted traces. The obvious choice is $(\paretoa,u) \preceq (\paretob,v)$ iff $\paretoa \preceq \paretob$ and
$u = v$. This keeps \emph{all} traces at the cost of finiteness, as the unrollings of
the loop yield entries $((5,2),\lett{a}\lett{c}^k)$ with pairwise distinct
traces, and finite frontiers are lost as in
\Cref{example:pareto-arctic}---this time because the
traces never stop to grow. We fix this by making the order on traces
deliberately lossy. Say that $v$ is \emph{a subsequence of} $u$ if $u$ can
be obtained from $v$ by inserting events at arbitrary positions; for
instance, $\lett{ac}$ is a subsequence of $\lett{abc}$. Now consider the following:

\begin{definition}[Trace order]\label{def:trace-order}
  For all $u, v \in \Sigma^*$, define the \emph{trace order} by
  \[
    u \trleq v \quad\Longleftrightarrow\quad
    (u = \varepsilon \wedge v = \varepsilon)
    \;\vee\;
    (v \neq \varepsilon \wedge v \text{ is a subsequence of } u)~,
  \]
  and order weighted words componentwise:
  $(\paretoa,u) \preceq (\paretob,v) ~\text{iff}~ \paretoa \preceq \paretob \wedge u \trleq v$.
\end{definition}

Two facts about this order are crucial. First, descending from $(\paretob,v)$
means worsening the cost vector \emph{and inserting arbitrary extra events
into $v$}, i.e., a longer trace are dominated,
and maximal elements---the ones frontiers keep---carry the
\emph{shortest} trace. Second, the empty word is \emph{isolated}:
$u \trleq \varepsilon$ and $\varepsilon \trleq v$ both force the other
word to be empty, i.e., a run that logged nothing and a run that logged
$\lett{c}$ are \emph{never comparable}. As $\trleq$ is a partial order with
monotone concatenation, $T \times \Sigma^*$ is again
a partially ordered monoid and $\Par(T\times\Sigma^*)$ is an
$\omega$-continuous semiring whose maximal points pair
a Pareto-optimal cost vector with a trace. We call it the
\emph{trace-carrying Pareto semiring}; its computable side is the semiring
$\Fr(T\times\Sigma^*)$ of \emph{trace frontiers}. With this construction, we get:

\begin{theorem}\label{theorem:trace-absorption}
  Let the monoid $T$ be bounded (that is, if its unit is the greatest element of its order \Cref{def:bounded}). Then every
  $L \in \Par(T\times\Sigma^*)$ satisfies $L \cdot L \preceq \semione + L$,
  and consequently $L^* = \semione + L$.
\end{theorem}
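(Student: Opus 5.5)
The plan is to work directly with the set-theoretic description of $\Par(T\times\Sigma^*)$ from \Cref{def:pareto-semiring}: $+$ is union, $\cdot$ is the downward closure of pointwise products, and $\preceq$ is inclusion. The first step is to compute the unit. Since $T$ is bounded (\Cref{def:bounded}), every $\paretoa \in T$ satisfies $\paretoa \preceq \semione_T$. Since $\varepsilon$ is isolated in $\trleq$ (only $\varepsilon \trleq \varepsilon$), we get
\[
\semione = \dcl\{(\semione_T,\varepsilon)\} = T \times \{\varepsilon\}.
\]
Also, $\semione + L = (T\times\{\varepsilon\}) \cup L$ is a union of lower sets, hence itself a lower set. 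To show $L \cdot L \subseteq \semione + L$, it therefore suffices to show that every generator $(\paretoa\cdot\paretob,\, uv)$ with $(\paretoa,u),(\paretob,v) \in L$ lies in $\semione + L$.

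This I would do by a case split on the traces.
\begin{itemize}
\item If $u = v = \varepsilon$, then $uv = \varepsilon$, so the generator lies in $T\times\{\varepsilon\} = \semione$.
\item If $u \neq \varepsilon$, I claim $(\paretoa\cdot\paretob, uv) \preceq (\paretoa,u)$.
\begin{itemize}
\item On costs: monotonicity of multiplication and boundedness ($\paretob \preceq \semione_T$) give $\paretoa\cdot\paretob \preceq \paretoa\cdot\semione_T = \paretoa$.
\item On traces: $u \neq \varepsilon$ and $u$ is a subsequence of $uv$, so $uv \trleq u$ by \Cref{def:trace-order}.
\end{itemize}
Since $L$ is a lower set, the generator is in $L$.
\item If $u = \varepsilon$ but $v \neq \varepsilon$, the argument is symmetric. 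We have $\paretoa\cdot\paretob \preceq \semione_T\cdot\paretob = \paretob$ and $uv = v \trleq v$, so the generator is below $(\paretob,v) \in L$.
\end{itemize}
The only delicate point is this trace comparison: the clause $v \neq \varepsilon$ in \Cref{def:trace-order} is exactly what the case split ensures. The isolation of $\varepsilon$ is harmless, because the all-empty case is absorbed by $\semione$ rather than by $L$.

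For the consequence $L^* = \semione + L$, I would first show $L^n \subseteq \semione + L$ for all $n$ by induction. The cases $n=0$ and $n=1$ are immediate. For the step, distributivity and monotonicity of $\cdot$ give
\[
L^{n+1} = L\cdot L^n \subseteq L\cdot(\semione+L) = L + L\cdot L \subseteq \semione + L.
\]
Since $L^* = \sum_n L^n$ is, in $\Par$, the union of the powers (sums are unions and suprema of chains are unions), this yields $L^* \subseteq \semione + L$. The reverse inclusion holds because $L^0 = \semione$ and $L^1 = L$ are summands of $L^*$. I expect no real obstacle beyond the careful handling of the trace order in the case split above.
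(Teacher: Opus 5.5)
Your proof is correct and follows the paper's argument. It uses the same three-way case split on whether $u$ and $v$ are empty, dominated respectively by the left factor, the right factor, and the unit, and then the same induction $L^n \preceq \semione + L$ to collapse $L^* = \sum_n L^n$ onto $\semione + L$. Your additional steps (computing $\semione = T\times\{\varepsilon\}$, reducing to generators because $\semione + L$ is a lower set, and noting that suprema in $\Par$ are unions) make explicit what the paper leaves implicit.
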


This is \emph{not} an instance of the bounded star
(\cite{mohri2002semiring}, cf \Cref{lemma:bounded-star}): isolating $\varepsilon$
makes $\Par(T\times\Sigma^*)$ \emph{unbounded} even for bounded $T$, as
the unit $(\semione_T,\varepsilon)$ dominates no weighted trace with a
nonempty trace. For bounded semiring, we would have $L^* = \semione$, \emph{erasing every witnessing trace}. The
absorption $L\cdot L \preceq \semione + L$ instead stops one step short of
the unit: the surviving summand $L$ \emph{is} the trace, one traversal
retained. This way, we keep traces while still collapsing the infinite star to a finite frontier.

To see that \Cref{theorem:trace-absorption}, i.e., $L\cdot L \preceq \semione + L$ and $L^* = \semione + L$, holds consider a generator
$(\paretoa,u)\cdot(\paretob,v) = (\paretoa\cdot \paretob,\, uv)$ of $L \cdot L$ with
$(\paretoa,u),(\paretob,v) \in L$, and split on the traces:
\[
  \begin{array}{@{}l@{\qquad}l@{\quad}l@{}}
    u \neq \varepsilon\colon
      & (\paretoa\cdot \paretob,\, uv) \preceq (\paretoa, u) \in L
      & \text{\footnotesize($q \preceq \semione_T$ absorbs the cost; $u$ is a subsequence of $uv$)}\\
    u = \varepsilon,\ v \neq \varepsilon\colon
      & (\paretoa\cdot \paretob,\, v) \preceq (\paretob, v) \in L
      & \text{\footnotesize(symmetrically)}\\
    u = v = \varepsilon\colon
      & (\paretoa\cdot \paretob,\, \varepsilon) \preceq (\semione_T, \varepsilon) \in \semione
      & \text{\footnotesize(both costs below $\semione_T$),}
  \end{array}
\]
where the three cases need three \emph{different} dominators
because $\varepsilon$ is isolated. From $L\cdot L \preceq \semione + L$, induction gives
$L^n \preceq \semione + L$ for all $n$, so the infinite sum
$L^* = \sum_n L^n$ collapses onto $\semione + L$.

$L^*$ thus keeps the empty trace and \emph{one traversal}: a second traversal
worsens the cost vector (boundedness of $T$) while merely inserting events
into the trace (lossiness of $\trleq$), and is therefore dominated by the
first. It is in this sense in which trace-carrying is \emph{tractable} over
a bounded \emph{cost} monoid $T$. On the computable side, the star
iteration of \Cref{def:frontier-semiring} stabilizes after one step at
$\cstar{F} = \semione + F$, without inspecting its argument's traces:

\begin{theorem}\label{theorem:trace-embedding}
  Let $T$ be a computable, bounded, partially ordered monoid with monotone
  multiplication, and let $\Sigma$ be finite and nonempty. Then
  $\Fr(T\times\Sigma^*)$ is a computable star semiring, and
  $\dcl{(-)} \colon \Fr(T\times\Sigma^*) \hookrightarrow
  \Par(T\times\Sigma^*)$ is an embedding.
\end{theorem}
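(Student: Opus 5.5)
The plan is to instantiate \Cref{theorem:pareto-embedding} with the monoid $T' \triangleq T\times\Sigma^*$, and to discharge its hypotheses with \Cref{theorem:trace-absorption}.

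\textbf{Partially ordered monoid.} First we check that $T'$ is a computable partially ordered monoid with monotone multiplication.

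Computability: elements are pairs of a $T$-element and a finite word. Equality is decidable componentwise. The trace order $\trleq$ is decidable, since the subsequence test is a greedy linear scan, and the $\varepsilon$-isolation clause is a syntactic check. Multiplication is the $T$-product paired with concatenation, which is computable.

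Partial order: reflexivity and antisymmetry are immediate, because mutual subsequences have equal length and hence are equal. Transitivity follows from transitivity of "is a subsequence of", together with the observation that the $\varepsilon$ case cannot chain with the nonempty case. Indeed, $u\trleq v$ with $v\neq\varepsilon$ forces $u\neq\varepsilon$, since $v$ must be a subsequence of $u$.

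Monotonicity of concatenation: suppose $u\trleq v$ and we append $w$. If $v\neq\varepsilon$, then $v$ is a subsequence of $u$, so $vw$ is a subsequence of $uw$. If $u=v=\varepsilon$, both sides become $w$. The same argument works on the left, and the cost component is monotone by hypothesis on $T$.

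\textbf{Finitely generated stars.} Next we supply the hypothesis that the star of every finitely generated lower set is finitely generated. Let $F$ be a finite antichain and $L=\dcl F$. By \Cref{theorem:trace-absorption}, which uses boundedness of $T$, we have $L^*=\semione+L=\dcl(\{(\semione_T,\varepsilon)\}\cup F)$, and this set is generated by the finite antichain $\fmax(\{(\semione_T,\varepsilon)\}\cup F)$.

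This also matches the computable star of \Cref{def:frontier-semiring}. We have $G_0=\semione$ and $G_1=\semione+F$. Then $G_2=\semione+F\cdot(\semione+F)=\semione+F+F\cdot F$, whose down-closure is $\dcl G_1$ by the absorption $L\cdot L\preceq\semione+L$. Because antichains are canonical, $\dcl G_2=\dcl G_1$ gives $G_2=G_1$, so the iteration stabilizes with $\cstar{F}=G_1$. Hence $\emb(\cstar F)=\emb(F)^*$, which is condition (iii) of \Cref{def:embedding}.

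\textbf{Main obstacle.} The remaining conditions (i), (ii) and injectivity are the generic frontier facts already covered by \Cref{theorem:pareto-embedding} and the appendix discussion. The one point needing care is that $\dcl$ commutes with $\fmax$ on finite sets. This holds for any partial order on a finite set, since every element lies below a maximal one, so $\trleq$ not being a well-quasi-order is irrelevant here.

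The genuinely delicate step is therefore the partial-order verification for $\trleq$, specifically transitivity and monotonicity across the isolated $\varepsilon$ case. I will handle it by the case split on emptiness of $v$ described above; once it is done, the theorem follows from \Cref{theorem:pareto-embedding}.
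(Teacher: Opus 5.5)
Your proposal is correct and takes essentially the paper's route: you instantiate \Cref{theorem:pareto-embedding} at $T\times\Sigma^*$, and the absorption law $L\cdot L\preceq\semione+L$ of \Cref{theorem:trace-absorption} supplies both the finitely-generated-star hypothesis and the stabilization of the frontier iteration at $G_1=\semione+F$. Your explicit check that $\trleq$ is a decidable partial order with monotone concatenation, via the case split on $v=\varepsilon$, fills in what the paper only asserts in passing.
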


\begin{example}\label{example:trace-running}
  By \Cref{theorem:wspp-main}, wSPPs over trace frontiers thus compute the
  trace-carrying Pareto semantics; let us compile the policy $p$ above to
  a wSPP over
  $\Fr((\Trop\times\Btl)\times\Sigma^*)$. Semantically, a packet reaches
  $t$ along infinitely many runs: take $\lett{a}$ or $\lett{b}$, loop $k$
  times, exit over $\lett{d}$ or $\lett{e}$, so its trace-carrying weight
  is the infinite lower set
  $\dcl\{((7,2), \lett{a}\lett{c}^k x),\,
         ((3,1), \lett{b}\lett{c}^k x) \mid
         k \in \N,\, x \in \{\lett{d},\lett{e}\}\}$,
  whose frontier is nevertheless finite. The computation over trace
  frontiers finds it: the star of the loop weight is
  $\cstar{w_{\lett{c}}} = \semione + w_{\lett{c}} =
  \{(\semione_T,\varepsilon),\,(\semione_T,\lett{c})\}$---where
  $\varepsilon$-isolation keeps \emph{both} the ``never entered the loop''
  entry and the ``looped once'' entry---the subsequent products discard
  every unrolling ($k \geq 1$) as dominated by its loop-free core (e.g.,
  $((5,2),\lett{a}\lett{c}) \preceq ((5,2),\lett{a})$. The
  $s$-to-$t$ branch of the wSPP carries the leaf
  \[
    \{\,((7,2),\, \lett{a}\lett{d}),\;\; ((7,2),\, \lett{a}\lett{e}),\;\;
        ((3,1),\, \lett{b}\lett{d}),\;\; ((3,1),\, \lett{b}\lett{e})\,\}~.
  \]
  Note that each trade-off comes with \emph{two} traces: $(3,1)$ is
  realized by both $\lett{b}\lett{d}$ and $\lett{b}\lett{e}$. The runs
  cost the same and neither trace is a subsequence of the other, so both are
  maximal. \hfill $\triangle$
\end{example}

\smallskip\noindent\textbf{Reading traces.}
A frontier entry $(\paretoa,u)$ asserts that some run achieves the cost
vector $\paretoa$ and that $u$ one of its traces \emph{up to absorbed detours}. Runs
whose traces merely insert extra events into $u$, at no better cost, are
identified with it. It does \emph{not} enumerate all optimal traces (a
second route with the same cost whose log contains $u$ as a subsequence is
absorbed), does not count cycles, not an exhaustive log
of one specific packet. What \emph{is} guaranteed:
distinct frontier entries are incomparable, so runs with
subsequence-incomparable logs are never conflated---distinct optima are
all reported, and a single optimum keeps one witness per route that ties for
it (like $\lett{b}\lett{d}$ and $\lett{b}\lett{e}$ above).

%% file: implementation.tex

\input{optimization.tex}

\input{benchmarks.tex}

\input{casestudy.tex}

%% file: optimization.tex
While our verified Lean implementation of wSPPs is acceptable in terms performance,
the lack of sharing (hashconsing) and caching restricts its scalability when applied to 
large real-world networks. To push the performance of wSPPs further, we implement 
the \wspptool tool in Rust for constructing and querying wSPPs. It is important
to note that the wSPP construction algorithm of \wspptool is the same as our 
verified Lean version and all algorithmic optimizations have been formally verified. 
The only difference between the two versions is that we implement the aforementioned
sharing and caching memory optimizations in \wspptool which are unverified but entirely standard. 
We argue that it is fair for us to compare our memory optimized implementation
with prior works as they all implement equivalent sharing/caching optimizations of 
their own. 
\subsection{Optimizations}
We now give an overview of the verified algorithmic optimizations made 
to wSPPs.

\smallskip\textit{wSPP Canonicalization.} Prior work \cite{KATch} on SPPs describes
a canonicalization procedure such that two canonicalized SPPs
are semantically equivalent if and only if they are syntactically equal.
Although deciding semantic equivalence is not the focus of \wnetkat,
Although deciding semantic equivalence is not the focus of \wnetkat,
canonicalization remains valuable because it makes
hash-based sharing and caching \emph{semantically aware}: 
semantically equivalent policies are compiled into the same canonical wSPP representation,
hence sharing a common hash. Cached results are reused not only when an operation is repeated on
syntactically identical inputs, but whenever it is applied to semantically equivalent ones as well. 
We apply it as a smart constructor whenever a new wSPP node is built, 
giving us \Cref{theorem:wspp-canonical}.

\begin{theorem}\label{theorem:wspp-canonical}
  For \wnetkat policies $p$ and $q$, they are equivalent if and only if
  their canonical wSPP representations $\hat{p}$ and $\hat{q}$ are syntactically equal.
\end{theorem}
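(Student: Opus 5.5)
The plan is to prove the two directions separately; the ``if'' direction is immediate. By \Cref{theorem:wspp-main}, $\norm{\hat p}\equiv p$ and $\norm{\hat q}\equiv q$. Hence syntactic equality $\hat p=\hat q$ gives $p\equiv\norm{\hat p}=\norm{\hat q}\equiv q$.

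The substance is the ``only if'' direction. I would reduce it to a canonicity statement about wSPPs themselves: two \emph{canonical} wSPPs $\rho,\rho'$ with $\rho\equiv\rho'$ are syntactically equal. It then suffices to check two things. First, the smart constructor preserves semantics, so \Cref{theorem:wspp-main} still applies. Second, every wSPP it builds is canonical. I would define canonical form by the following conditions.
\begin{itemize}
\item Every sub-wSPP is canonical.
\item No node has all of its branch-map rows equal to the rows of the identity on its field; such a node is replaced by its default $d$, i.e.\ it is a ``lifted'' node.
\item $\hat{\mathbb{0}}$-continuations are deleted from the rows of $b$ and from $m$.
\item A key $u$ is kept in $\dom(b)$ only if $b(u)$ differs from the shared default row $m\cup\{u\mapsto d\}$ (or $m$ when $d=\hat{\mathbb 0}$).
\end{itemize}
The smart constructor enforces these conditions. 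Each rewrite preserves $\norm{\cdot}$ up to $\equiv$, by (E2d), (E2e) and \Cref{lemma:wspp-case-split}(2), since each rewrite leaves every row $\rho(u)$ unchanged.

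The canonicity lemma goes by induction on the number of fields mentioned, with a case split on the roots.
\begin{itemize}
\item Both are leaves $\rulebox{w},\rulebox{w'}$. Evaluate at $\pkta=\pktb$; \Cref{corollary:wspp-weights} gives $\emb(w)=\emb(w')$, and injectivity of the embedding gives $w=w'$.
\item Root fields differ, say $f<f'$. Lift $\rho'$ to a virtual node on $f$. The canonical $\rho$ must then have a non-identity row on $f$. Comparing via \Cref{lemma:wspp-case-split}(1) contradicts $\rho\equiv\rho'$.
\item Same root field $f$. By \Cref{lemma:wspp-case-split}(1), for every $u$, $\SEQ{(\EQ{f}{u})}{\norm\rho}\equiv\SEQ{(\EQ{f}{u})}{\norm{\rho'}}$. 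Evaluate this on packets with $\pkta.f=u$ and $\pktb.f=v$. Because fields strictly increase along paths, the continuations $\rho(u)(v)$ and $\rho'(u)(v)$ act only on higher fields, so they are semantically equal as wSPPs (an absent entry counts as $\hat{\mathbb 0}$). The induction hypothesis then makes them syntactically equal. So $\rho(u)=\rho'(u)$ as partial maps for every $u$.
\end{itemize}

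The remaining step, which I expect to be the main obstacle, is recovering the triple $(b,m,d)$ from the function $u\mapsto\rho(u)$. One complication is that $\values$ is finite, so ``all untested values'' could be empty or a single value. Cases such as $d$ being absorbed when $\dom(b)\cup\dom(m)=\values$ need extra normalization rules, for example forcing $d=\hat{\mathbb 0}$ or removing rows. I would argue as follows.
\begin{itemize}
\item Assuming the generic case where some value lies outside all explicit keys, the default row is determined. Hence $m$ is the common part of the rows for values outside $\dom(b)\cup\dom(m)$, minus the diagonal entry.
\item $d$ is the diagonal entry of those rows.
\item $\dom(b)$ is exactly the set of $u$ whose row differs from the default row, by the minimality condition.
\end{itemize}
I would handle the edge cases by adding to the canonical form a rule that picks a unique representative when $\values$ is exhausted. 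This bookkeeping is presumably where the Lean mechanization spends its effort.
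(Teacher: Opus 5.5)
Your overall architecture is sound, and it is more explicit than the paper. The paper proves nothing here beyond invoking the KATch-style smart constructor $\mathsf{mk}$ and the Lean development. You use soundness for the ``if'' direction. For ``only if'' you slice on input values with \Cref{lemma:wspp-case-split}, apply the induction hypothesis to the continuations $\rho(u)(v)$, and then reconstruct $(b,m,d)$ from $u\mapsto\rho(u)$.

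The gap is the step you set aside as bookkeeping, and it bites earlier than you say. Both the different-root-field case and the reconstruction of $m$ and $d$ need a value $u\notin\dom(b)\cup\dom(m)\cup\dom(b')\cup\dom(m')$. Your collapse condition is phrased in terms of branch-map rows, but $\mathsf{mk}$ collapses a node only when $b$ and $m$ are syntactically empty (Phase~4). Phase~3 compares each row against the default row, which is just $m$ when $d=\hat{\mathbb{0}}$.

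With $\values$ finite, as the paper fixes it, this fails outright. Take $\values=\{0,1\}$ and $q=(\EQ{f}{0})\oplus(\EQ{f}{1})$. The compiler yields $\hat{q}=\wspp(f,\{0\mapsto\{0\mapsto\hat{\mathbb{1}}\},\,1\mapsto\{1\mapsto\hat{\mathbb{1}}\}\},\varnothing,\hat{\mathbb{0}})$. Neither row equals the default row $m=\varnothing$, so this node survives all four phases of $\mathsf{mk}$. Yet $q\equiv\SKIP$ and $\widehat{\SKIP}=\hat{\mathbb{1}}$. So your claim that the smart constructor enforces your conditions is false in this setting. The ``only if'' direction cannot be established for the canonicalizer as described.

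To repair the argument you must commit to one of two options.
\begin{itemize}
\item Assume that every field has a value outside any finite key set, e.g.\ $\values$ infinite; this departs from the paper's standing assumption. Your sketch then goes through. A single fresh $u$ gives $m=m'$ and $d=d'$, using that canonical $m$ has no $\hat{\mathbb{0}}$ entries and that, by the induction hypothesis, a canonical wSPP equivalent to $\DROP$ is $\hat{\mathbb{0}}$. The default rows then agree, and Phase-3 minimality forces $\dom(b)=\dom(b')$ and $b=b'$. The same fresh value also settles your different-root case.
\item Keep $\values$ finite and give a concrete extra normalization for nodes whose keys exhaust $\values$. You must prove it selects a unique representative of each row function, and build it into $\mathsf{mk}$ so that every operation, including \texttt{Star}, returns such forms. ``A rule that picks a unique representative'' is not yet that, and it changes the canonicalizer the theorem is about.
\end{itemize}

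Separately, deleting an entry $v\mapsto\hat{\mathbb{0}}$ from $m$ does \emph{not} leave the rows unchanged, because it unshadows $d$ at input $v$. Your preservation claim therefore needs Phase~1's compensation: insert $b(v)\leftarrow m$ whenever $v\notin\dom(b)$.
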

%




\smallskip\textit{Untested-field Decomposition. }
This optimization applies when a policy $p$ contains a field $g$ whose input
value is irrelevant: if packets $\alpha$ and $\alpha'$ differ only on $g$, then
$\sem{p}(\alpha)=\sem{p}(\alpha')$. In the wSPP
$\rho\triangleq\hat{p}$, this means that descendants rooted at $g$ only write
to $g$ and never test it, typically having the form
$\wspp(g,\varnothing,m,\rulebox{\semizero})$. In this case, the optimization splits 
$\rho$ into $\rho'$ and $\sigma$ so that
$\norm{\rho}\equiv\norm{\rho'}\oplus\norm{\sigma}$, where $\rho'$ collects the
branches that leave $g$ untouched and $\sigma$ the branches that assign $g$.
Using the identities in E1 and E2 of
\Cref{fig:preliminaries-dependency-map}, we obtain
\[
\begin{aligned}
  \norm{\rho}^{*}
  &\equiv
    \norm{\rho'}^{*}
    \oplus
    (\norm{\rho'}\oplus\norm{\sigma})^{*};
    \norm{\sigma};\norm{\rho'}^{*} 
\end{aligned}
\]
The crucial observation is that, since $\rho'$ neither tests nor assigns $g$, every
write to $g$ made inside the middle star passes through $\rho'$ unchanged and
is overwritten by the trailing $\sigma$. Thus, the identity above stil holds if 
every descendant wSPP 
$\wspp(g,\varnothing,m,\rulebox{\semizero})$ in the $\sigma$ inside the middle star is replaced by
$\bighoplus_{\scriptscriptstyle v\in \text{dom}(m)}m(v)$, thereby erasing the write to $g$ while preserving the
continuations. Thus the original star reduces to the stars of two wSPPs $\rho'$ and 
$\rho' \wsppadd \sigma$ that both 
omit the field $g$. This optimization targets
\textit{lexically scoped} local variables, which we encode as
$\textsf{let }x=v\textsf{ in }p \triangleq
(f_x\leftarrow v);p[f_x/x];(f_x\leftarrow v)$, making $f_x$ an
input-independent field in the resulting wSPP.

\smallskip\textit{Star with a Trailing Policy. } This optimization recognizes \textit{contextual stars}, i.e. \wnetkat expressions of form 
$p^*;q$ and appends the wSPP computed for $q$ to the list of factors of $p^*$ on its 
right. When $q$ is simple, this could greatly reduce the size of intermediate wSPPs 
while computing the sequential composition of factors in $p^*$. Moreover, if $\rho = \hat{p}$ contains an untested field $f$, there is
\[
\norm{\rho}^*; q \equiv  \norm{\rho'}^*;q \oplus (\norm{\rho'} \oplus \norm{\sigma})^*;\norm{\sigma}; \norm{\rho'}^*;q
\]
This means calculating $\rho^{\wsppstar}\wsppmul \hat{q}$ 
reduces to calculating two smaller contextual stars $(\rho')^{\wsppstar}\wsppmul \hat{q}$ and $(\rho' \wsppadd \sigma)^{\wsppstar}\wsppmul \sigma $, which 
further enhances the effect of untested field decomposition.

%% file: benchmarks.tex
\subsection{Scalability Benchmarks}
To demonstrate the practicality and scalability of wSPP construction, we compare
the performance of \wspptool and our verified Lean implementation against
state-of-the-art verification tools such as KATch~\cite{KATch}, KATch2~\cite{KATch2},
McNetKAT~\cite{smolka_scalable_2019} and the Storm~\cite{hensel_probabilistic_2020} probabilistic model checker.
We partition our scalability benchmarks into two sets: 
Boolean reachability and probabilistic reachability. This is because KATch and KATch2 only handle Boolean
reachability and conversely, McNetKAT and Storm are specialized for probabilistic 
workloads. On the other hand, due to \wnetkat being parameterized over semirings,
our wSPP construction algorithms are generic over semirings. This allows our
wSPP tools to work \emph{uniformly} on both benchmark sets.
Our experiments are all conducted using an Intel Xeon Gold 6348 CPU with
20GB memory- and 20min time-limit.

\smallskip
\noindent\textit{\textbf{Correctness and Cross-Validation}}
In our experiments we check that the tools agree on their answers.
For the boolean benchmark, all
four tools (\wspptool, Lean, KATch and KATch2) return
the same verdict on $275$ of the $276$ Topology Zoo networks. On the largest
network (Kdl, $754$ nodes) Lean times out, and the remaining
three engines agree. For the probabilistic benchmark the tools 
also agree up to high numerical precision on all runs that completed within the resource limits.

\smallskip
\noindent\textit{\textbf{Boolean Reachability}}
We compare \wspptool and our Lean implementation
against KATch(2) on the Topology Zoo~\cite{topology_zoo} benchmark.
KATch and KATch2 both implement the Symbolic Packet Program (SPP)~\cite{KATch} decision procedure for NetKAT. 
The primary difference between these two tools is that KATch is implemented in Scala whereas 
KATch2 is implemented in Rust with much more emphasis performance engineering.
The results of our experiments are presented in \Cref{fig:bool-reachability}.
\begin{figure}
    \begin{subfigure}{0.4\textwidth}
        \centering
        \includegraphics[width=\textwidth]{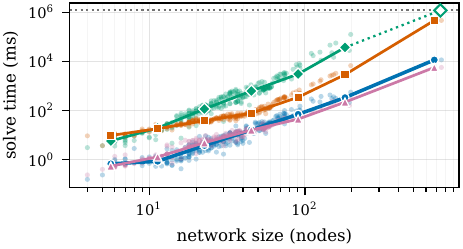}
    \end{subfigure}
    \hfill
    \begin{subfigure}{0.4\textwidth}
        \centering
        \includegraphics[width=\textwidth]{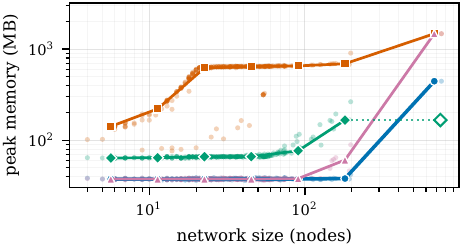}
    \end{subfigure}
          \begin{subfigure}{0.17\textwidth}
        \centering
        \includegraphics[width=\textwidth]{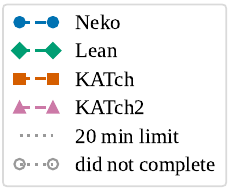}
        \vspace{.8em}
    \end{subfigure}
    \caption{Boolean reachability scalability: running time (left)
    and peak memory usage (right) .}
    \label{fig:bool-reachability}\vspace{-5pt}
\end{figure}
 \wspptool is very competitive even
against the highly optimized KATch2 both in terms of solve time and peak memory usage.
This is in spite of the fact that the star construction~(\Cref{alg:star}) of wSPP is much more 
involved than SPP which is a straightforward fixpoint iteration.
The verified Lean trails in speed because it forgoes the sharing and
caching optimizations employed by the other tools.
Where those tools resolve a repeated computation with a single cache lookup,
Lean must instead recompute the corresponding wSPP subtrees from scratch,
incurring substantial redundant work.

\begin{figure}
    \begin{subfigure}{0.4\textwidth}
        \centering
        \includegraphics[width=\textwidth]{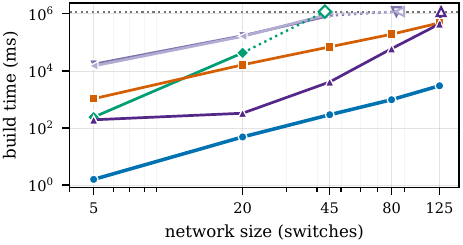}
    \end{subfigure}
    \hfill
    \begin{subfigure}{0.4\textwidth}
        \centering
        \includegraphics[width=\textwidth]{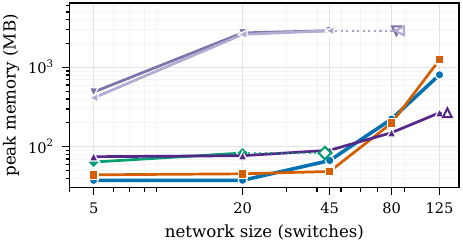}
    \end{subfigure}\hfill
      \begin{subfigure}{0.17\textwidth}
        \centering
        \includegraphics[width=\textwidth]{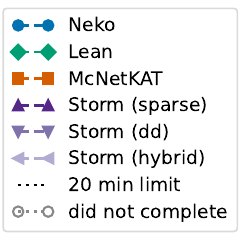}
        \vspace{.8em}
    \end{subfigure}
    \caption{Probabilistic reachability scalability: running time (left)
    and peak memory usage (right).}
   \vspace{-5pt}
    \label{fig:real-reachability}
\end{figure}

\smallskip
\noindent\textit{\textbf{Probabilistic Reachability}}
McNetKAT extends Forwarding Decision Diagrams (FDD)~\cite{smolka_fast_2015} to reason efficiently about the
dup-free fragment of ProbNetKAT~\cite{prob_netkat_2016}. In their original evaluation, Smolka
et al. demonstrated that it performed favorably against Prism~\cite{kwiatkowska_prism_2011}, the de facto
standard probabilistic model checker. Prism has since been largely superseded by 
Storm~\cite{hensel_probabilistic_2020}, a state-of-the-art model checker that offers substantially improved performance. 
We adapt the McNetKAT benchmarking harness to target Storm in place of Prism. 
We evaluate our wSPP tools, instantiated with the
Real semiring, against both baselines on the Bayonet~\cite{bayonet} and F10 AB FatTree case
studies used in the evaluation of McNetKAT. 
\Cref{fig:real-reachability} presents the end-to-end time and memory each engine uses
to answer reachability queries. 
Construction of the core data structure
(wSPP for \wspptool and Lean, FDD for McNetKAT, DTMC for Storm) 
dominates the running time in every case. 

From \Cref{fig:real-reachability} we can see that the efficiency of Storm makes it
compare more favorably to McNetKAT than Prism on small to medium sized networks.
However, on large networks with 125 switches, the state space explosion incurred by
encoding ProbNetKAT expressions in Storm overwhelms it 
(confirmed on Storm's dd, sparse and hybrid engines). 
Compared to prior works, \wspptool is orders-of-magnitudes
faster while using comparable or less memory. 

\smallskip
\noindent\textit{\textbf{Ablation Experiments}}
We conduct ablation experiments to measure each optimization's contribution and the 
impact of witness traces for Pareto semirings.

\smallskip\textit{Optimizations} 
For studying the effects of optimization we use a leave-one-out ablation:
starting from the fully-optimized \wspptool, we disable one optimization at
a time and rebuild the F10 benchmark. 
We find that canonicalization is the dominant optimization. Removing it slows construction by
$4.8\times$ in geometric mean and it accounts for a $2.3\times$ difference in
peak memory. Without it the wSPP retains semantically redundant branches and the
six largest policies exhaust the $20$\,GB budget. The
star with trailing-policy optimization is the next largest contributor at $2.7\times$, followed by
untested-field decomposition at $1.9\times$. Removing the latter drives the same six
instances out of memory. Each ratio is measured only on policies both
configurations complete so the memory failures are excluded and the
canonicalization and untested-field numbers are conservative.

\smallskip\textit{Witness Traces} 
We also evaluate the impact of witness traces on Pareto semirings.
During our case study on Fat Tree vs Jellyfish topologies in \Cref{sec:case-study}, 
we observed that the witness traces only add a small overhead to the construction time 
(approximately 5\% on average). This is due to the fact that the overhead of maintaining
witness traces scales with the \emph{width} of Pareto frontiers and not directly with the
size of the network. We showcase this empirically in \Cref{sec:appendix-synthetic} through a series of
synthetic adversarial benchmarks that vary both the width of the Pareto frontier and the 
size of the network. In practice, the width of Pareto frontiers do not grow to be very large 
(average of 1.7 in our case study), and thus the overhead of witness traces is relatively small.

%% file: casestudy.tex
\subsection{Case Study: Fat Tree vs. Jellyfish Topologies}\label{sec:case-study}

We now present the full case study of topology tradeoff analysis that we previewed in \Cref{sec:overview}.

\subsubsection*{Background}
Fat-tree topologies have one primary drawback~\cite{singh2015clos}: Once all the ports of the central backbone switches have been connected, we cannot incrementally add new servers at the leaves of the tree. Rather, we must completely replace the switches in order to add capacity for new servers.
Expanding ``incrementally'' is therefore a nonstarter for this type of network.

To address this issue, Singla et al. introduced ``Jellyfish'' topologies: switches reserve some ports for hosts and use the remaining ports to form a random regular graph among switches, instead of being arranged into structured aggregation/core layers~\cite{singla2012jellyfish}.
These are the same Jellyfish topologies behind Amazon's shift away from Fat-trees that we noted in \Cref{sec:overview}.

\subsubsection*{Design of our Case Study}
Our case study focuses on the design phase of a data center network, when the topology and basic routing policies are selected.
To investigate the tradeoffs between Jellyfish and Fat-tree topologies, we begin by generating a collection of instances of each network layout.
In order to focus the analysis squarely on the \emph{topology} tradeoffs, we assume that the same hardware is available for each network.
Since the cost of a switch depends on the number of ports it has, we fix a common max degree of 10 for nodes across the experiment (resulting in 250 supported hosts).
Similarly, link latency and bandwidth costs are generated randomly, but a fixed collection of switches, links, and link weights is used for each topology in head-to-head comparisons under the principle of keeping the hardware constant.
For each topology, we generate three routing policies to evaluate: 4-shortest paths (i.e., $k$-shortest paths~\cite{yen1971kshortest} with $k=4$), 8-shortest paths, and Equal-cost Multipath (ECMP)~\cite{rfc2992ecmp}. These are the same three policies evaluated as data center routing schemes for Jellyfish topologies in~\cite{singla2012jellyfish}.
Because the Jellyfish topologies are random, we generated topologies using 50 different randomness seeds and took averages of the results.

\subsubsection*{Results and Insights}

Evaluating the nuance of the design tradeoffs requires several views of data. Our results are centered on three main figures:
\Cref{fig:jelly-scatter} provides a 3D scatterplot of the Pareto-optimal points, where a point's height is the proportion of host pairs whose Pareto frontier can match or outperform that point's latency and bandwidth. In \Cref{fig:jelly-bars} we give a direct comparison of the networks by considering the proportion of host pairs for which one frontier dominates the other, the two are equal or are Pareto incomparable. Finally, powered by a trace-carrying Pareto semiring (cf. \Cref{sec:pareto-traces}), \Cref{fig:jelly-hotspot} gives an overview of each network’s Pareto-optimal path distribution: it plots what percentage of the network’s switches ($y$) are needed to realize growing percentages of Pareto-optimal paths ($x$).

\smallskip\textit{Finding 1: Jellyfish provides better Pareto performance.}
\Cref{fig:jelly-scatter} shows that nearly all plotted latency/bandwidth frontier points, the highest points belong to the Jellyfish network with 8-shortest routing. This suggests that no matter the specific Pareto frontier point considered, a larger proportion of host pairs can achieve or outperform this point in this network than in other networks. Fat-tree with ECMP is the next strongest, in particular at the higher latency values. \Cref{fig:jelly-bars} supports this finding, with Jellyfish dominating more host pairs than Fat-tree networks.


\smallskip\textit{Finding 2: Pareto and scalar objectives favor different Jellyfish routing.}
While our results show a consistent favoring of Jellyfish networks, the preferred routing method within Jellyfish depends on the precise objective being optimized. In particular, the Pareto data of \Cref{fig:jelly-scatter} and \Cref{fig:jelly-bars} favors the use of 8-shortest routing in Jellyfish, consistent with \Citet{singla2012jellyfish}'s evaluation of 8-shortest routing.
However, when considering median and mean values of latency in isolation, i.e., ignoring bandwidth, scalar latency performance seems to be best in ECMP Jellyfish networks, rather than $k$-shortest ones~(\Cref{tab:jelly-summary}). This is because Pareto frontiers in $k$-shortest routing will often include extra routes that provide better bandwidth, but are not necessarily the shortest paths (thus higher latencies). ECMP routing rejects paths that have higher hop-counts, so they cannot increase median/mean latency of the network.

In practice, this means that although our analysis supports the use of Jellyfish over Fat-tree networks for better latency and bandwidth, the routing policies should be selected for the expected workloads. For workloads that need lower average latency, ECMP is likely to perform better. However, for workloads that need a balance of both low latency and high bandwidth simultaneously, 8-shortest routing covers more host pairs at any joint performance target.

\smallskip\textit{Finding 3: ECMP best distributes Pareto-optimal paths across switches.}
Finally, \Cref{fig:jelly-hotspot} shows that there is a stark difference in the switch utilization of Pareto-optimal paths between Jellyfish and Fat-tree networks. In particular, Fat-tree tend to have more concentrated path distribution over the network, especially when using $k$-shortest routing, where severe switch hotspots are created: as many as 80\% of Pareto-optimal paths are realized by fewer than 40\% of the switches. On the other hand, in Jellyfish this difference is not so striking, and they retain overall excellent path distribution. Surprisingly, ECMP routing provides the best avoidance of hotspots in both topologies. The better latency and bandwidth performance of Jellyfish with $k$-shortest routing is usually attributed to the better \emph{path diversity}. However, after removing equal-weight paths that are subsequence-redundant with more direct paths, the superior Pareto performance of $k$-shortest routing on Jellyfish cannot be explained by lower hotspot presence among the retained paths.

\newlength{\jellyfigheight}
\setlength{\jellyfigheight}{5.3in}
\begin{figure}
    \centering

    \begin{minipage}[t][\jellyfigheight][t]{0.56\columnwidth}
        \strut\par
        \centering
        \includegraphics[width=0.88\linewidth]{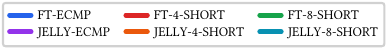}\\[2pt]

        \begin{subfigure}[t]{0.53\linewidth}
            \includegraphics[width=\linewidth]{./plots/3d-compact.pdf}
            \phantomsubcaption\label{fig:jelly-scatter}
        \end{subfigure}\hfill
        \begin{subfigure}[t]{0.41\linewidth}
            \includegraphics[width=\linewidth]{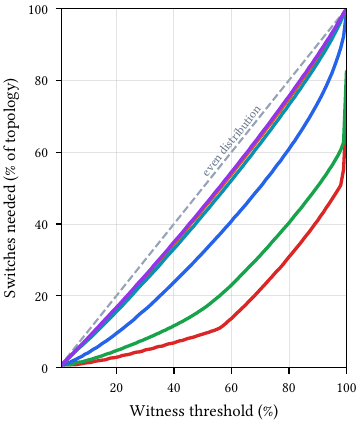}
            \phantomsubcaption\label{fig:jelly-hotspot}
        \end{subfigure}
    \end{minipage}\hfill
    \begin{minipage}[t][\jellyfigheight][t]{0.43\columnwidth}
        \strut\par
        \centering
        \begin{subfigure}[t]{\linewidth}
            \includegraphics[width=0.87\linewidth]{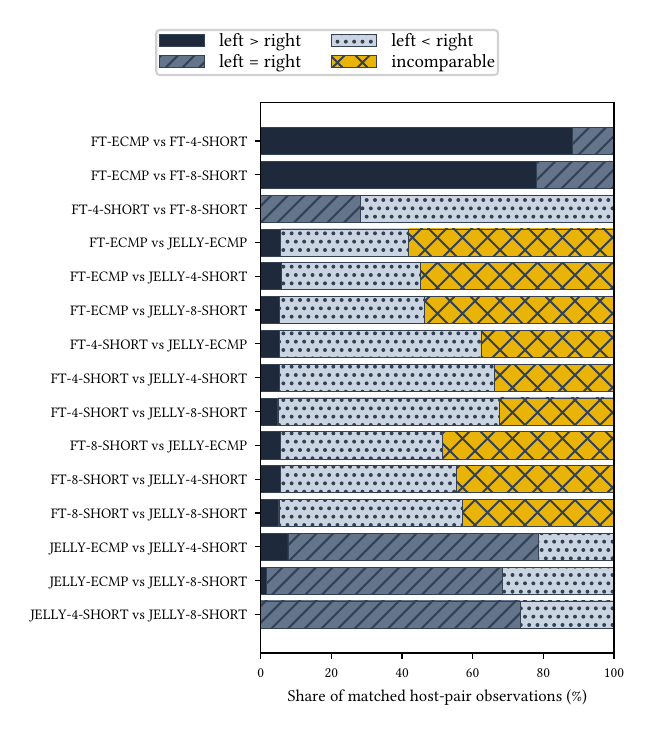}
            \phantomsubcaption\label{fig:jelly-bars}
        \end{subfigure}
    \end{minipage}
    \vspace{-7.2cm}
    \caption{Case study results: Fat tree vs. Jellyfish analysis.
    (a, left) Scatterplot of Pareto-optimal points across the analyzed networks.
    (b, middle) Hotspot concentration, showing how many switches are needed to realize a given share of Pareto-optimal paths.
    (c, right) Pointwise network-to-network comparison of Pareto frontiers.}
    \label{fig:jelly}
\end{figure}

\begin{table}
    \small
    \begin{tabular}{l r r r r r r r}
        \input{plots/summary.tex}
    \end{tabular}
    \caption{Summary of Jellyfish vs. Fat-tree comparison. Latencies are in ms, and bandwidths are in Mbps.}\vspace*{-1em}
    \label{tab:jelly-summary}
\end{table}

%% file: plots/summary.tex
Network & Frontier & Latency & Latency & Latency & Bandwidth & Bandwidth & Bandwidth \\
Topology + Policy & Size & Median & Mean & Mean-SD & Median & Mean & Mean-SD \\
\hline
FT-ECMP & 2.47 & 8.81 & 8.56 & 0.04 & 932.00 & 936.53 & 3.86 \\
FT-4-SHORT & 1.44 & 8.88 & 8.45 & 0.09 & 920.00 & 927.33 & 4.96 \\
FT-8-SHORT & 1.88 & 8.86 & 8.52 & 0.06 & 925.00 & 930.89 & 4.58 \\
JELLY-ECMP & 1.38 & 6.85 & 6.27 & 0.03 & 927.00 & 933.23 & 3.16 \\
JELLY-4-SHORT & 1.61 & 6.89 & 6.48 & 0.03 & 928.00 & 934.37 & 3.21 \\
JELLY-8-SHORT & 1.82 & 6.92 & 6.72 & 0.04 & 932.00 & 936.66 & 3.35\\[-.5em]

%% file: related_work.tex
In what follows, we discuss related works on (i) \netkat and other analysis tools, (ii) decision diagrams for quantitative analysis, and (iii) constructions on trace-carrying quantitative structures.

\smallskip
\noindent\textit{\textbf{\netkat and other analysis tools.}}
\netkat \cite{netkat} is a DSL for modeling packet forwarding and comes with a sound and complete
equational theory. Forwarding decision diagrams
(FDDs) \cite{smolka_fast_2015} and \katch
\cite{KATch} check equivalence symbolically, the latter via SPPs---the data structure which inspired wSPPs. \probnetkat \cite{prob_netkat_2016} extends \netkat by probabilistic modeling aspects. The model checker \mcnetkat has been developed to check \probnetkat's history-free fragment
at scale \cite{smolka_scalable_2019} by converting FDDs to sparse matrices and leveraging linear algebra solvers. \wnetkat \cite{wNetKAT}
subsumes that history-free fragment and, due to it being parametric in an $\omega$-continuous
semiring, handles a large variety of further quantitative aspects, including multi-objective analysis as established in this paper. To the best of our knowledge, ours is the first semiring-agnostic framework fully automatically analyzing multiobjective trade-offs in networks while providing (possibly multiple) shortest traces for Pareto-optimal points. Besides \netkat, other prominent network analysis tools include Header Space Analysis~\cite{header-space,netplumber}, VeriFlow~\cite{veriflow}, Atomic Predicates~\cite{atomic-predicates,ap-keep}, and various tools for analyzing P4 data planes~\cite{p4v,p4testgen,vera}. There are also tools for analyzing control planes such as Batfish~\cite{batfish}, MineSweeper~\cite{minesweeper}, Tiramisu~\cite{tiramisu}, and ARC~\cite{arc}. Finally, network calculus provides mathematical tools for analyzing quantitative properties, such as worst-case congestion bounds~\cite{boudec-thiran}.

\smallskip
\noindent\textit{\textbf{Decision Diagrams for Quantitative Analysis.}}
BDDs \cite{bryant1986graph} canonically represent Boolean functions. For quantitative analysis, they have been generalized to
multi-terminal BDDs \cite{fujita1997multi} and algebraic decision diagrams
\cite{bahar1997algebraic}, admitting numeric terminals. \emph{Semiring extensions} are edge-valued BDDs, which factor additive weights along edges
\cite{lai1992edge}, and semiring-labelled decision diagrams \cite{wilson2005decision,fargier2014knowledge}, compiling
valuations over \emph{commutative} semirings. All of these map
Boolean (or finite-domain) valuations to constants, with fixed points
iterated outside the diagram algebra. wSPPs instead branch on field tests, denote weighted packet \emph{relations}, draw leaves
from any embeddable (possibly \emph{noncommutative}) star semiring, and
compute the Kleene star \emph{directly on the diagrams.}

\smallskip
\noindent\textit{\textbf{Pareto Semirings and Traces.}}
Semirings of Pareto frontiers are classical in multicriteria path algebra
\cite{samborskii1992fourier,gondran2008graphs,geilen2007algebra} as well as soft constraints and multi-metric networking
\cite{bistarelli2008soft,larrosa2010semiring,somasundaram2011multi,vieira2023automating}, computed by multiobjective search
\cite{casas2021improved,casas2023targeted}. Witness-carrying weights are
also established, but each design forfeits an ingredient of
\Cref{sec:pareto-traces}, rendering them unsuitable for our needs: Derivation semirings \cite{goodman1999semiring}
and path-listing semirings \cite{manger2020algebraic} keep witnesses inert
or exact by pruning by score only, or not at all.
Absorptive provenance polynomials
\cite{green2007provenance,deutch2014circuits,dannert2021semiring} are antichains of dominance-pruned proof witnesses but monomials are commutative
multisets (yielding the event order to be lost), and absorption makes the semiring bounded ($\semione + a = \semione$) (which is the
collapse our $\varepsilon$-isolation avoids). The shortest-path provenance
semiring of \cite{ramusat2021provenance} multiplies cost--word pairs like our weighted traces, but its sum selects a \emph{single} pair
under a total order, and $\varepsilon$ being the
least word there forces $\cstar{a} = \semione$. Routing algebras carry
paths inside route weights, also under a shorter-is-better preference
\cite{griffin2005metarouting,daggitt2018asynchronous}. Preferences are
total, and there is no semiring product or star. To our knowledge,
$\Fr(T\times\Sigma^*)$ is the first semiring whose frontiers range over
\emph{cost--trace pairs} with a nontrivial order on the traces themselves, absorbing any run that inserts detour events at no
better cost by the order-construction in such a way that each Pareto-optimal cost vector retains one shortest. Enabling all this directly on a semiring-level enables leveraging wSPPs for obtaining traces without modifying them.

%% file: conclusion.tex
We introduced wSPPs, a symbolic data structure for computing the automatic, quantitative analysis of
\wnetkat policies for any embeddable $\omega$-continuous
semiring---including the Kleene star, directly on the diagrams---with
formally verified algorithms and a high-performance implementation. We also
developed trace-carrying Pareto semirings, whose frontiers report
every optimal trade-off together with at least one shortest witnessing
trace. Our evaluation demonstrates that a single semiring-generic engine is
competitive with the specialized state of the art on Boolean reachability
(KATch2), orders of magnitude faster than probabilistic baselines
(\mcnetkat, Storm), and turns the Fat-tree vs.\ Jellyfish design question
into the automatic analysis of Pareto frontiers.

\smallskip
\noindent\textit{\textbf{Future Work.}}
Our Pareto constructions require a bounded cost monoid applying to, e.g., latency,
bandwidth, and Viterbi-style best-run probabilities. \emph{Fully
probabilistic} multiobjective analysis, where expectations accumulate
over branching and achievable trade-offs form convex Pareto curves
realized by randomized schedulers, do, however, not form such a monoid. We therefore plan to extend wSPP weights to (possibly trace-carrying) convex frontiers arising, to enable mutli-objective probabilistic network analysis à la model
checking of Markov decision processes
\cite{etessami2007multiobjective,forejt2011quantitative}.

%% file: appendix-algorithms.tex

This appendix spells out the algorithms and proof obligations for the wSPP
operations that \Cref{sec:wspp-easy-ops} treats abstractly: the compilation of
predicates, weighting, choice, sequential composition, and the canonicalizing
smart constructor. All node constructions are routed through the smart
constructor $\mathsf{mk}$ of \Cref{sec:appendix-mk}; when a proof reasons
about a displayed raw node, \Cref{theorem:mk-soundness} justifies replacing it
by its canonicalized form. 
\allowdisplaybreaks

\vspace{5px} 

\noindent\textbf{Equational theory of \wnetkat.}
We first record the semantic equivalence relation and the \wnetkat identities
used by the operation soundness proofs. The identities are routine semantic
facts, and their formal proofs are included in the Lean development.

\begin{definition}
    Define the \textit{semantic equivalence} relation $(\equiv): \Pol \times \Pol $ such that $p \equiv q$ if and only if $\sem{p} = \sem{q}$ using
    the usual component-wise equality on the maps $\sem{p}, \sem{q}: \packets \rightarrow \packets \rightarrow \semi$.
\end{definition}
\begin{lemma}
    $(\equiv)$ is an equivalence relation and, for any $p, q\in \Pol$ such that $p \equiv q$, we have
    \begin{itemize}
        \item $w \odot p  \equiv w \odot q$
        \item $p \oplus r \equiv q \oplus r$
        \item $p; r \equiv q; r$ and $r;p\equiv r;q$
        \item $p^* \equiv q^*$.
    \end{itemize}
\end{lemma}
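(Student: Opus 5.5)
Since $\equiv$ is defined as equality of the denotations $\sem{p},\sem{q}\colon \packets\to\packets\to\semidom$, it inherits reflexivity, symmetry and transitivity directly from pointwise equality of functions. For the congruence clauses, I will use that every semantic clause of \Cref{fig:wnetkat-semantics} for a compound policy is a function of the denotations of its immediate subpolicies alone. It then suffices, for each constructor, to name that function on denotations and observe that equal arguments give equal results.

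Fix $p\equiv q$, i.e. $\sem{p}=\sem{q}$. For weighting, $\sem{\WEIGH{w}{p}}(\pkta)(\pktb)=w\cdot\sem{p}(\pkta)(\pktb)=w\cdot\sem{q}(\pkta)(\pktb)=\sem{\WEIGH{w}{q}}(\pkta)(\pktb)$. Choice is handled the same way, summing with $\sem{r}(\pkta)(\pktb)$. For sequencing, $\sem{\SEQ{p}{r}}(\pkta)(\pktb)=\sum_{\pktc}\sem{p}(\pkta)(\pktc)\cdot\sem{r}(\pktc)(\pktb)$, and the sum is over the finite set $\packets$. Rewriting each summand with $\sem{p}=\sem{q}$ yields $\sem{\SEQ{q}{r}}$, and the same argument applies on the right for $r;p$.

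The only step needing slight care is the star. First I will show by induction on $n$ that $\sem{\NFOLD{p}}=\sem{\NFOLD{q}}$. The base case is $\NFOLD[0]{p}=\SKIP=\NFOLD[0]{q}$. For the inductive step, $\NFOLD[n+1]{p}=\SEQ{p}{\NFOLD{p}}$, and I will apply the sequencing congruence twice: $\SEQ{p}{\NFOLD{p}}\equiv\SEQ{q}{\NFOLD{p}}\equiv\SEQ{q}{\NFOLD{q}}$, chained by transitivity. Then $\sem{\ITER{p}}(\pkta)(\pktb)=\sum_{n}\sem{\NFOLD{p}}(\pkta)(\pktb)$ is a countable sum, defined as the supremum of its partial sums by $\omega$-continuity. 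Since the summands agree termwise, the partial sums agree, and hence so do their suprema, giving $\ITER{p}\equiv\ITER{q}$.

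I expect no real obstacle here. The one point to state explicitly is that the countable sum is a function of its sequence of summands, so termwise equality suffices and no continuity argument is required.
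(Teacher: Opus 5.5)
Your proposal is correct and matches the paper's approach: the paper gives no written proof, calling these routine semantic facts proved in its Lean development, and they follow, as you argue, from the compositionality of the denotational semantics in \Cref{fig:wnetkat-semantics}. Your induction on $n$ showing $\sem{\NFOLD{p}}=\sem{\NFOLD{q}}$ is the right way to handle the star, because that clause is stated via the syntactic powers $\NFOLD{p}$ rather than directly as a function of $\sem{p}$.
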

\begin{lemma}
    \label{appendix:semiring-eqns}
    Abbreviate the \wnetkat predicates $\top \triangleq \TRUE$ and $\bot \triangleq \FALSE$. Then,
    for any $p, q, r\in \Pol$, the following identities concerning semiring operations are satisfied:
    \begin{enumerate}
        \item $p \oplus q \equiv q \oplus  p$
        \item $p \oplus (q \oplus r) \equiv (p \oplus q)\oplus r$
        \item $\bot \oplus p \equiv p\oplus \bot \equiv p$
        \item $(p;q);r \equiv p;(q;r)$
        \item $\top ; r
        \equiv r;\top \equiv r$
        \item $\bot; r \equiv r;\bot \equiv \bot$
        \item $(p \oplus q); r\equiv p;r \oplus q;r$,  $r;(p \oplus q)\equiv r;p \oplus r;q$
    \end{enumerate}
    the following identities concerning weighting are satisfied:
    \begin{enumerate}
        \setcounter{enumi}{7}
        \item $w \odot (r_1 \oplus r_2) \equiv w \odot r_1 \oplus w \odot r_2$
        \item $w \odot (r_1;r_2) \equiv (w \odot r_1);r_2$
        \item $(w_1 \cdot w_2) \odot r \equiv w_1 \odot (w_2 \odot r)$
        \item $(w_1 \oplus w_2) \odot r \equiv w_1 \odot r \oplus w_2 \odot r$
    \end{enumerate}
    and the following identities concerning Kleene stars are satisfied:
    \begin{enumerate}
        \setcounter{enumi}{11}
        \item $p^* \equiv \top \oplus p;p^*$
        \item $p^* \equiv \top \oplus p^*;p$
        \item $(p \oplus q)^* \equiv (p^*;q)^*;p^*$
        \item $(p \oplus q)^*  \equiv p^*;(q;p^*)^*$
    \end{enumerate}
\end{lemma}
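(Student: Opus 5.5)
The plan is to prove all fifteen identities pointwise, directly from the denotational semantics of \Cref{fig:wnetkat-semantics}. We fix packets $\pkta,\pktb$ and show that both sides assign the same weight. The semantic domain $\packets \to \packets \to \semidom$ is viewed as matrices over $\semi$: $\oplus$ is entrywise addition, $;$ is matrix product with a countable (here finite, as $\packets$ is finite) sum over intermediate packets, and $\TRUE$, $\FALSE$ denote the identity and zero matrices. Note that $\sem{\TRUE}(\pkta)(\pktb) = [\pkta=\pktb]$ and $\sem{\FALSE} = \semizero$.

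\textbf{Identities (1)--(11).} These are exactly the semiring laws for matrices over a semiring, together with the laws of the left scalar action $w\odot$.
\begin{itemize}
\item (1)--(3) and (8), (11) follow entrywise from commutativity, associativity and the unit of $+$, and from distributivity in $\semi$.
\item (4) is associativity of matrix product. We exchange the order of two finite sums over intermediate packets, then use associativity of $\cdot$ and distributivity.
\item (5) and (6) use that Iverson brackets collapse the sum to a single term, respectively that $\semizero$ annihilates.
\item (7) and (9) use both distributivity laws of $\semi$, pulling a scalar out of the sum over $\pktc$.
\item (10) is associativity of $\cdot$.
\end{itemize}
None of this needs commutativity of $\cdot$. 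We only ever multiply by $w$ on the left, which matches the definition $\sem{w\odot p} = w\cdot\sem{p}$.

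\textbf{Star identities (12)--(15).} Write $P=\sem{p}$, so $\sem{p^*}=\sum_n P^n$ with $P^0 = I$ and $P^{n+1}=P\cdot P^n$; the latter agrees with the semantics of $\NFOLD[n+1]{p}$ by (4).
\begin{itemize}
\item For (12) and (13), we use $\omega$-continuity: matrix product distributes over countable sums, since $+$ and $\cdot$ preserve suprema of ascending chains and the sum over $\pktc$ is finite. Then $P\cdot\sum_n P^n = \sum_{n\ge1}P^n$, and adding $I$ reindexes the sum. The right-handed variant needs $P^n\cdot P = P^{n+1}$, which follows by induction from associativity.
\item For the denesting laws (14) and (15), we expand $(P+Q)^n$ as the sum over all words $x_1\cdots x_n \in \{P,Q\}^n$ of products. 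We then show that $\sum_k (P^*Q)^k P^*$ enumerates exactly the same multiset of words: each word decomposes uniquely as $P^{i_0}QP^{i_1}Q\cdots QP^{i_k}$. Since countable sums in an $\omega$-continuous semiring are invariant under reordering and regrouping (cited in \Cref{sec:preliminaries}), the two sums coincide. Alternatively, we can appeal directly to the fact, cited before \Cref{theorem:star-eqns}, that matrices over an $\omega$-continuous semiring form an $\omega$-continuous (Conway) semiring, in which these laws hold.
\end{itemize}

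\textbf{Main obstacle.} The expected difficulty is the denesting step: justifying the regrouping of a doubly-indexed countable sum into the word-indexed one. This requires that generalized associativity and commutativity of countable sums (the partition property) holds in $\omega$-continuous semirings. The plan is to prove it via suprema of finite partial sums and monotonicity, or simply to invoke the cited Conway-semiring result.
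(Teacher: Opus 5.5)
Your proposal is correct and follows essentially the paper's route. The paper treats (1)--(11) as routine pointwise semantic facts, with formal proofs in the Lean development, and obtains the star laws from the fact that $\packets\to\packets\to\semidom$ is an $\omega$-continuous (Conway) matrix semiring; that is exactly your fallback for (14)--(15). Your direct word-decomposition proof of denesting is a sound, self-contained alternative, provided you actually establish the partition (regrouping) property of countable sums. As you already note, the permutation invariance quoted in \Cref{sec:preliminaries} does not by itself give this.
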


\begin{lemma}
     \label{appendix:packet-alg}
    For any $f_1, f_2\in \fields$ and $v_1, v_2\in \values$, the following properties about sequential compositions of tests are satisfied:
    \begin{enumerate}
    \item $ f_1 = v_1 ; f_1 = v_2 \equiv f_1 = v_2 ; f_1 = v_1 \equiv \begin{cases}f_1 = v_1 & \text{if } v_1 = v_2 \\ \bot & \text{otherwise}\end{cases}$
    \item $ f_1 = v_1 ; f_1 \neq v_2 \equiv f_1 \neq v_1 ; f_1 = v_2 \equiv  \begin{cases}\bot  & \text{if } v_1=v_2 \\ f_1=v_1& \text{otherwise}\end{cases}$.
    \item $ f_1 \neq v_1 ; f_1 \neq v_2 \equiv f_1 \neq v_1 ; f_1 \neq v_2 \equiv  \begin{cases}f_1\neq v_2 & \text{if } v_1=v_2 \\ f_1 \neq v_1 ; f_1 \neq v_2  & \text{otherwise}\end{cases}$
    \item $f_1=v_1; f_2=v_2 \equiv f_2=v_2;f_1=v_1$
    \item $f_1\neq v_1; f_2\neq v_2 \equiv f_2\neq v_2;f_1\neq v_1$
    \item $f_1=v_1;f_2\neq v_2 \equiv f_2\neq v_2; f_1=v_1$
    \end{enumerate}
    the following properties about sequential compositions of assignments and tests are satisfied:
    \begin{enumerate}
        \setcounter{enumi}{6}
	    \item $ f_1 \leftarrow v_1 ; f_1\leftarrow v_2 \equiv f_1 \leftarrow v_2$
	    \item $ f_1\leftarrow v_1; f_1 = v_2 \equiv \begin{cases}f_1\leftarrow v_1 & \text{if } v_1=v_2 \\ \bot & \text{otherwise}\end{cases}$
	    \item $ f_1\leftarrow v_1; f_1\neq v_2 \equiv
	      \begin{cases}
	        \bot & \text{if } v_1=v_2,\\
	        f_1\leftarrow v_1 & \text{otherwise}.
	      \end{cases}$
	    \item $ f_1 =v_1; f_1\leftarrow v_1 \equiv f_1 = v_1$
	    \item $f_1 \leftarrow v_1; f_2\leftarrow v_2 \equiv f_2\leftarrow v_2; f_1\leftarrow v_1$
	    \item $f_1\leftarrow v_1; f_2=v_2 \equiv f_2=v_2;f_1\leftarrow v_1$ if $f_1\neq f_2$
	    \item $f_1\leftarrow v_1; f_2\neq v_2 \equiv f_2\neq v_2;f_1\leftarrow v_1$ if $f_1\neq f_2$
	  \end{enumerate}
	  the following properties about commutativity between policies is satisfied:
	  \begin{enumerate}
	       \setcounter{enumi}{13}
	    \item If $e$ is an atom (meaning $e\in \{f = v, f\neq v, f\leftarrow v, \top, \bot \}$) and $p\in \Pol$, then for any $w\in \semi$, $(w \odot e);p \equiv e;(w\odot p)$.
	    \item If $e$ is an atom that operates on $f_1$ and $q\in \Pol$ but $p$ does not operate
	          on $f_1$, then $e;p \equiv p;e$.
	  \end{enumerate}
\end{lemma}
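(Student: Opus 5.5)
The statement is \Cref{appendix:packet-alg}. I would prove it purely semantically, by unfolding the definition of $\sem{-}$ from \Cref{fig:wnetkat-semantics} pointwise on packet pairs $(\pkta,\pktb)$. All atoms here denote matrices with entries in $\{\semizero,\semione\}$ and at most one nonzero entry per row: tests give $[\pkta=\pktb \wedge \ldots]$ and assignments give $[\pktb = \updatepkt{\pkta}{\fielda}{v}]$. So for atoms $e_1,e_2$, the sum in
\[
\sem{\SEQ{e_1}{e_2}}(\pkta)(\pktb)=\sum_{\pktc}\sem{e_1}(\pkta)(\pktc)\cdot\sem{e_2}(\pktc)(\pktb)
\]
has at most one nonzero summand. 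By annihilation and unit laws, it reduces to an Iverson bracket of a conjunction of propositions about $\pkta,\pktb$. No $\omega$-continuity is needed, since the sums are finite over $\packets$ and collapse to one term.

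For items (1)--(6), each side becomes $[\pkta=\pktb \wedge P(\pkta)]$ for a propositional condition $P$. The claims then reduce to Boolean facts, for instance $(\pkta.f_1=v_1 \wedge \pkta.f_1=v_2)\iff(v_1=v_2\wedge \pkta.f_1=v_1)$, together with commutativity of $\wedge$. (Item (3) as written is symmetric/trivial in its first equality; I treat it as stated.)

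For (7)--(13), I compute the single intermediate packet $\pktc=\updatepkt{\pkta}{f_1}{v_1}$. Item (7) uses $\updatepkt{\updatepkt{\pkta}{f_1}{v_1}}{f_1}{v_2}=\updatepkt{\pkta}{f_1}{v_2}$. Items (8) and (9) evaluate the test on $\pktc.f_1=v_1$. Item (10) uses $\updatepkt{\pkta}{f_1}{v_1}=\pkta$ when $\pkta.f_1=v_1$. Items (11)--(13) need $f_1\neq f_2$; this is implicit in (11), and I would add it as a hypothesis, since for $f_1=f_2$ and $v_1\neq v_2$ the claim is false. Under that hypothesis the updates commute, and a test on $f_2$ is unaffected by an update on $f_1$.

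For (14), I use the semiring-level fact that $w$ can be pulled through the single-term sum. Concretely,
\[
\sum_\pktc (w\cdot\sem{e}(\pkta)(\pktc))\cdot\sem{p}(\pktc)(\pktb)=\sum_\pktc \sem{e}(\pkta)(\pktc)\cdot (w\cdot \sem{p}(\pktc)(\pktb)).
\]
Since $\sem{e}(\pkta)(\pktc)\in\{\semizero,\semione\}$, I rewrite $w\cdot\semione = \semione\cdot w$ and $w\cdot\semizero=\semizero\cdot w$, then apply associativity. This matters because the semiring need not be commutative; only the $\{\semizero,\semione\}$ entries commute with $w$.

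The main obstacle is (15), since $p$ is an arbitrary policy. It first needs a precise meaning of ``$p$ does not operate on $f_1$'': $f_1$ occurs in no test or assignment of $p$. From this I would prove, by structural induction on $p$, a \emph{frame lemma}:
\[
\sem{p}(\pkta)(\pktb)=\semizero \text{ unless } \pkta.f_1=\pktb.f_1, \qquad \sem{p}(\updatepkt{\pkta}{f_1}{x})(\updatepkt{\pktb}{f_1}{x})=\sem{p}(\pkta)(\pktb) \text{ for all } x.
\]
The cases are as follows. Tests and assignments on other fields are immediate. Choice and weighting are pointwise. For sequencing, I reindex the sum over $\pktc$ via the bijection $\pktc\mapsto\updatepkt{\pktc}{f_1}{x}$, using invariance of countable sums under permutation; terms with $\pktc.f_1\neq\pkta.f_1$ vanish by the first clause. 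Star follows from the $\NFOLD{p}$ case by induction on $n$ and summation, since both clauses are preserved under countable sums.

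With the frame lemma in hand, $\SEQ{e}{p}$ and $\SEQ{p}{e}$ are evaluated for each of the three kinds of atom on $f_1$. For $e=(f_1\leftarrow v)$, both sides equal $\sem{p}(\updatepkt{\pkta}{f_1}{v})(\pktb)$ when $\pktb.f_1=v$, and $\semizero$ otherwise, by the two clauses. For tests, both sides equal $[\pkta.f_1\text{ satisfies } e]\cdot\sem{p}(\pkta)(\pktb)$, using $\pkta.f_1=\pktb.f_1$ on the support of $\sem{p}$.
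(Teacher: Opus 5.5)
Your route is the paper's route: the paper gives no written proof and only says these are routine semantic facts checked in Lean, and your pointwise unfolding is that argument. Your treatment of (1), (3)--(14) is right, including the remark that in (14) only the $\{\semizero,\semione\}$ entries commute with $w$, and the observation that (11) needs $f_1\neq f_2$.

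\textbf{The frame lemma for (15).} Its second clause is false as you wrote it. The claim $\sem{p}(\updatepkt{\pkta}{f_1}{x})(\updatepkt{\pktb}{f_1}{x})=\sem{p}(\pkta)(\pktb)$ for all $\pkta,\pktb,x$ already fails for $p=\SKIP$. Take $\pkta,\pktb$ that differ only at $f_1$. The right side is $\semizero$, by your own first clause. The updated packets coincide, so the left side is $\semione$. Tests and assignments on other fields fail the same way, so the base case you call immediate does not go through.

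The fix is to restrict the second clause to pairs with $\pkta.f_1=\pktb.f_1$. The proof then goes through:
\begin{itemize}
\item \emph{Sequencing case.} Discard summands with the first clause. Then reindex along $\pktc\mapsto\updatepkt{\pktc}{f_1}{x}$. This map is a bijection from $\{\pktc\mid\pktc.f_1=\pkta.f_1\}$ onto $\{\pktc\mid\pktc.f_1=x\}$; it is not a bijection of $\packets$, as you wrote. Then apply the restricted clause to the pairs $(\pkta,\pktc)$ and $(\pktc,\pktb)$, which agree at $f_1$.
\item \emph{Assignment case of (15).} Apply the restricted clause to the pair $(\pkta,\updatepkt{\pktb}{f_1}{\pkta.f_1})$. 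For $\pktb.f_1=v$ this gives $\sem{p}(\pkta)(\updatepkt{\pktb}{f_1}{\pkta.f_1})=\sem{p}(\updatepkt{\pkta}{f_1}{v})(\pktb)$. The left side is exactly what $\SEQ{p}{\ASSN{f_1}{v}}$ evaluates to once the first clause removes all but one preimage.
\end{itemize}

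\textbf{Item (2).} You caught the defects in (3) and (11) but not the one in (2). For $v_1\neq v_2$, the middle term $\SEQ{\NOTEQ{f_1}{v_1}}{\EQ{f_1}{v_2}}$ denotes $\EQ{f_1}{v_2}$, not $\EQ{f_1}{v_1}$. So the Boolean equivalence your reduction produces there is false. Item (2) holds only if the middle term is read as $\SEQ{\NOTEQ{f_1}{v_2}}{\EQ{f_1}{v_1}}$, and you should say so rather than claim all of (1)--(6) reduce to tautologies.
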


\subsection{Helpers: Branch Maps and Lifted Row Operations}
\label{appendix:helpers}

A \emph{row} is a finite map $\values \rightharpoonup \WSPP_{\csr}$. The
operations act on rows via the following lifts, where
$\keys(m_1,\ldots,m_n) \triangleq
\dom(m_1)\cup\cdots\cup\dom(m_n)$, and
$\dflt{g(x)} \triangleq g(x)$ if $x \in \keys(g)$ and $\hat{\mathbb{0}}$
otherwise:
\[
\begin{aligned}
  \mathit{row}_1 \wsppmapadd \mathit{row}_2
    &\;\triangleq\;
    \{\, v \mapsto \dflt{\mathit{row}_1(v)} \wsppadd \dflt{\mathit{row}_2(v)}
       \mid v \in \keys(\mathit{row}_1,\mathit{row}_2) \,\},\\
  w \wsppmapscale \mathit{row}
    &\;\triangleq\;
    \{\, v \mapsto w \wsppscale \sigma \mid (v \mapsto \sigma) \in \mathit{row} \,\},\\
  \tau \wsppmapmul \mathit{row}
    &\;\triangleq\;
    \{\, v \mapsto \tau \wsppmul \sigma \mid (v \mapsto \sigma) \in \mathit{row} \,\}.
\end{aligned}
\]

Recalling \Cref{sec:wspp}, for a node $\rho=\wspp(\fielda,b,m,d)$,
its branch map is the finite-row valued function
\[
  \rho(u) \triangleq
  \begin{cases}
    b(u) & \text{if } u \in \keys(b),\\
    m \cup \{u \mapsto d\} & \text{if } u \notin \keys(b,m)
      \text{ and } d \neq \hat{\mathbb{0}},\\
    m & \text{otherwise.}
  \end{cases}
\]
For a leaf, or for a node whose root field is strictly greater than $\fielda$
when viewed at field $\fielda$, we write $\rho(u)\triangleq\{u\mapsto\rho\}$.

\emph{Lifting} is used to align operands: any wSPP $\sigma$
whose fields all exceed $\fielda$ may be viewed as the node
$\wspp(\fielda, \varnothing, \varnothing, \sigma)$; this is semantically
invisible, $\norm{\wspp(\fielda,\varnothing,\varnothing,\sigma)} \equiv
\norm{\sigma}$, since both guard products in \Cref{def:wspp-sem} are empty.
Given two wSPPs whose root fields differ---or a node and a leaf---the binary
operations below first lift the operand with the greater (or absent) root
field to the smaller root field. We may therefore assume both operands are
nodes on a common field $\fielda$.

We will use three helper facts below. The first says that lifted row addition
matches semantic choice under the induction hypothesis; the second connects a
branch map with the read-back semantics of its node; the third lets us prove
policy equivalence by slicing on all input values of a field.

\begin{lemma}[Soundness of lifted map sums]
  \label{appendix:map-lift-add}
  Let $f\in\fields$ and $m_1,m_2:\values\rightharpoonup\WSPP$. Suppose
  $\norm{\sigma_1\wsppadd\sigma_2}\equiv
  \norm{\sigma_1}\oplus\norm{\sigma_2}$ for all
  $\sigma_1,\sigma_2\in\WSPP$. Then
  \[
    \bigoplus_{v\mapsto\sigma\in m_1} f\leftarrow v;\norm{\sigma}
    \oplus
    \bigoplus_{v\mapsto\sigma\in m_2} f\leftarrow v;\norm{\sigma}
    \equiv
    \bigoplus_{v\mapsto\sigma\in m_1\wsppmapadd m_2}
      f\leftarrow v;\norm{\sigma}.
  \]
\end{lemma}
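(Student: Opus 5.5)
The plan is a pointwise comparison of the two sums, output value by output value, using only the semiring identities of \Cref{appendix:semiring-eqns} and the hypothesis on $\wsppadd$. Write $K \triangleq \keys(m_1,m_2)$. The first step is to normalise both sides to sums indexed by $K$. For $i\in\{1,2\}$ and $v\in K\setminus\dom(m_i)$ we have $\dflt{m_i(v)}=\hat{\mathbb{0}}$, and $\norm{\hat{\mathbb{0}}}=\WEIGH{\semizero}{\SKIP}\equiv\bot$ by (E2e). Hence $f\leftarrow v;\norm{\hat{\mathbb{0}}}\equiv\bot$ by item (6), and adding it is neutral by item (3). So we may rewrite
\[
  \bigoplus_{v\mapsto\sigma\in m_i} f\leftarrow v;\norm{\sigma}
  \;\equiv\;
  \bigoplus_{v\in K} f\leftarrow v;\norm{\dflt{m_i(v)}}
  \qquad (i=1,2).
\]
This padding step requires that the finite $\bigoplus$ be well defined up to $\equiv$ irrespective of enumeration order. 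That follows from commutativity and associativity, items (1) and (2), together with the congruence lemma for $\equiv$. I would state it once as a small auxiliary fact, proved by induction on the size of the finite index set.

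The second step merges the two padded sums. By associativity and commutativity, items (1) and (2), we regroup $\bigoplus_{v\in K}X_v\oplus\bigoplus_{v\in K}Y_v$ into $\bigoplus_{v\in K}(X_v\oplus Y_v)$. This is again an induction on $|K|$, using the congruence of $\oplus$ under $\equiv$. For each $v$, left distributivity (item (7)) gives
\[
  f\leftarrow v;\norm{\dflt{m_1(v)}}\oplus f\leftarrow v;\norm{\dflt{m_2(v)}}
  \equiv f\leftarrow v;\bigl(\norm{\dflt{m_1(v)}}\oplus\norm{\dflt{m_2(v)}}\bigr).
\]
The hypothesis then turns the inner sum into $\norm{\dflt{m_1(v)}\wsppadd\dflt{m_2(v)}}$, and congruence of $;$ lets us substitute it under the prefix $f\leftarrow v$. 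By the definition of $\wsppmapadd$, this is precisely the summand indexed by $v$ in $m_1\wsppmapadd m_2$, whose domain is exactly $K$. That closes the chain.

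I expect no real mathematical obstacle. The only delicate part is bookkeeping for the finite big sums: order-independence, padding with zero entries, and interchanging $\oplus$ with $\bigoplus$. I would handle this by first proving a generic lemma, by induction on a list representation of $K$: for functions $X,Y$ from $K$ to policies, $\bigoplus_K X\oplus\bigoplus_K Y\equiv\bigoplus_K(X\oplus Y)$, and $\bigoplus$ is invariant under permutation and under adding $\bot$-valued indices. With that lemma in hand, the statement reduces to the pointwise distributivity argument above. A formalization would likely express all of this through a fold over a sorted key list.
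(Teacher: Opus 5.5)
Your proposal is correct and is essentially the paper's argument. The paper starts from the right-hand side and splits $\keys(m_1\wsppmapadd m_2)$ into keys of only $m_1$, only $m_2$, or both. It applies the hypothesis to each entry, including the one-sided entries $\sigma\wsppadd\hat{\mathbb{0}}$ and $\hat{\mathbb{0}}\wsppadd\sigma$, and finishes with items (3), (7), (1), (2) of Lemma~\ref{appendix:semiring-eqns}. Your padding of both left-hand sums with $\hat{\mathbb{0}}$-entries over $K$ handles those same one-sided key classes uniformly, run in the opposite direction.
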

\begin{proof}
  Expanding $m_1\wsppmapadd m_2$ by the three kinds of keys gives
  {\small
  \begin{align*}
  &\bigoplus_{v\mapsto\sigma\in m_1\wsppmapadd m_2}
      f\leftarrow v;\norm{\sigma} \\
  &=
    \bigoplus_{\substack{v\mapsto\sigma\in m_1\\v\notin\keys(m_2)}}
      f\leftarrow v;\norm{\sigma\wsppadd\rulebox{\semizero}}
    \oplus
    \bigoplus_{\substack{v\mapsto\sigma\in m_2\\v\notin\keys(m_1)}}
      f\leftarrow v;\norm{\rulebox{\semizero}\wsppadd\sigma} \\
  &\qquad\oplus
    \bigoplus_{\substack{v\mapsto\sigma_1\in m_1\\v\mapsto\sigma_2\in m_2}}
      f\leftarrow v;\norm{\sigma_1\wsppadd\sigma_2}
      && \text{(definition of $\wsppmapadd$)}\\
  &\equiv
    \bigoplus_{\substack{v\mapsto\sigma\in m_1\\v\notin\keys(m_2)}}
      f\leftarrow v;(\norm{\sigma}\oplus\bot)
    \oplus
    \bigoplus_{\substack{v\mapsto\sigma\in m_2\\v\notin\keys(m_1)}}
      f\leftarrow v;(\bot\oplus\norm{\sigma}) \\
  &\qquad\oplus
    \bigoplus_{\substack{v\mapsto\sigma_1\in m_1\\v\mapsto\sigma_2\in m_2}}
      f\leftarrow v;(\norm{\sigma_1}\oplus\norm{\sigma_2})
      && \text{(assumption)}\\
  &\equiv
    \bigoplus_{\substack{v\mapsto\sigma\in m_1\\v\notin\keys(m_2)}}
      f\leftarrow v;\norm{\sigma}
    \oplus
    \bigoplus_{\substack{v\mapsto\sigma\in m_2\\v\notin\keys(m_1)}}
      f\leftarrow v;\norm{\sigma} \\
  &\qquad\oplus
    \bigoplus_{\substack{v\mapsto\sigma_1\in m_1\\v\mapsto\sigma_2\in m_2}}
      \big(f\leftarrow v;\norm{\sigma_1}
        \oplus f\leftarrow v;\norm{\sigma_2}\big)
      && \text{(Lem.~\ref{appendix:semiring-eqns}(3), (7))}\\
  &\equiv
    \bigoplus_{v\mapsto\sigma\in m_1} f\leftarrow v;\norm{\sigma}
    \oplus
    \bigoplus_{v\mapsto\sigma\in m_2} f\leftarrow v;\norm{\sigma}.
      && \text{(Lem.~\ref{appendix:semiring-eqns}(1), (2), repetitively).}
  \end{align*}
  }
\end{proof}

\begin{lemma}[Soundness of the branch map]
    \label{appendix:lookup-behavior}
    For any $\rho=\wspp(f, b, m, d)\in \WSPP$, let $\rho(u)$ be the branch map defined above. Then
    \[
    (f=u);\norm{\rho} \equiv  (f = u);\bigoplus_{v \mapsto \sigma\in \rho(u)} f \leftarrow v;\norm{\sigma}.
    \]
\end{lemma}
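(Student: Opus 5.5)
The plan is to unfold $\norm{\rho} = B_\rho \oplus M_\rho \oplus D_\rho$ (\Cref{def:wspp-sem}), precompose with $(f=u)$, distribute using \Cref{appendix:semiring-eqns}(7), and simplify each summand with the packet-algebra laws of \Cref{appendix:packet-alg}. The argument splits on the three cases of the branch map: (a) $u \in \keys(b)$; (b) $u \notin \keys(b,m)$ and $d \neq \hat{\mathbb{0}}$; and (c) the remaining cases, where either $u \in \keys(m)\setminus\keys(b)$, or $u\notin\keys(b,m)$ and $d = \hat{\mathbb{0}}$.

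\emph{The $B_\rho$ summand.} $B_\rho$ is a sum over $u' \in \keys(b)$ of terms $(f=u');(\cdots)$. After precomposing with $(f=u)$, \Cref{appendix:packet-alg}(1) gives $(f=u);(f=u') \equiv \bot$ for $u' \neq u$, and $\equiv (f=u)$ for $u'=u$. With (6) and (3) of \Cref{appendix:semiring-eqns}, only the $u'=u$ summand survives, and only when $u \in \keys(b)$. In that case it is literally $(f=u);\bigoplus_{v\mapsto\sigma\in b(u)} f\leftarrow v;\norm{\sigma}$. When $u\notin\keys(b)$, $B_\rho$ vanishes.

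\emph{The guard products in $M_\rho$ and $D_\rho$.} Each guard is a product of tests $f\neq x$. By \Cref{appendix:packet-alg}(2), $(f=u);(f\neq x)$ is $\bot$ if $x=u$ and $(f=u)$ otherwise. A straightforward induction on the length of the product then shows that $(f=u);\prod_{x\in X}(f\neq x)$ is $\bot$ if $u\in X$ and $(f=u)$ otherwise, using associativity (4) and annihilation (6).

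\emph{Assembling the cases.}
\begin{itemize}
\item In case (a), $M_\rho$ and $D_\rho$ vanish because $u\in\keys(b)$ lies in both guard sets. This leaves exactly $b(u)$.
\item Otherwise $B_\rho$ vanishes, and $M_\rho$ contributes $(f=u);\bigoplus_{z\mapsto\sigma\in m} f\leftarrow z;\norm\sigma$.
\begin{itemize}
\item If $u\in\keys(m)$, $D_\rho$ vanishes and the result is $m$.
\item If $u\notin\keys(m)$, $D_\rho$ contributes $(f=u);\norm{d}$. We rewrite this as $(f=u);f\leftarrow u;\norm d$ by \Cref{appendix:packet-alg}(10), read backwards.
\begin{itemize}
\item When $d\neq\hat{\mathbb{0}}$, this is precisely the extra entry $u\mapsto d$ of the row $m\cup\{u\mapsto d\}$. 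The entry is a genuinely new key because $u\notin\keys(m)$, so the sum over the union splits as the sum over $m$ plus this term, by commutativity and associativity.
\item When $d=\hat{\mathbb{0}}$, $\norm{d}=\WEIGH{\semizero}{\SKIP}\equiv\bot$, so the term disappears by (6) and (3), matching the ``otherwise'' row $m$.
\end{itemize}
\end{itemize}
\end{itemize}

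The main obstacle is bookkeeping rather than any deep step. The most delicate points are the guard-product induction and, in case (b), the identification of the default-identity summand with the branch-map entry $u\mapsto d$, which requires (10) to insert a trivial assignment. The fact that the finite sum over the row $m\cup\{u\mapsto d\}$ splits cleanly relies on $u\notin\keys(m)$.
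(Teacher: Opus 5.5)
Your proposal is correct and follows essentially the same route as the paper. Both unfold $\norm{\rho}=B_\rho\oplus M_\rho\oplus D_\rho$, precompose $(f=u)$ and distribute, annihilate incompatible summands with the test laws of \Cref{appendix:packet-alg}, insert the trivial assignment $f\leftarrow u$ via \Cref{appendix:packet-alg}(10), and match the three cases of the branch map; your explicit treatment of the subcase $u\notin\keys(b,m)$ with $d=\hat{\mathbb{0}}$ (where $\norm{d}\equiv\bot$, so the row is $m$) is slightly more careful than the paper, which silently identifies the whole ``otherwise'' case with $m\cup\{u\mapsto d\}$.
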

\begin{proof}
  For a fixed value $u$, abbreviate
  \[
  \begin{gathered}
  A_u \triangleq \bigoplus_{v\mapsto \sigma\in b(u)} f\leftarrow v;\norm{\sigma},
  \qquad
  C \triangleq \bigoplus_{z\mapsto \gamma\in m} f\leftarrow z;\norm{\gamma}.
  \end{gathered}
  \]
  We first compute the three components of $\norm{\rho}$. For the branch component,
  the leading test distributes over the branch sum and eliminates every
  incompatible first-level key:
  \begin{align*}
  (f=u);B_\rho
    &=
    (f=u);
    \Big(
      \bigoplus_{u'\in\keys(b)} f=u';
      \big(\bigoplus_{v\mapsto\sigma\in b(u')}
        f\leftarrow v;\norm{\sigma}\big)
    \Big)
      && \text{(Definition~\ref{def:wspp-sem})}\\
    &\equiv
    \bigoplus_{u'\in\keys(b)}
      (f=u);(f=u');
      \big(\bigoplus_{v\mapsto\sigma\in b(u')}
        f\leftarrow v;\norm{\sigma}\big)
      && \text{(Lemma~\ref{appendix:semiring-eqns}(7))}\\
    &\equiv
    \begin{cases}
      (f=u);\big(\bigoplus_{v\mapsto\sigma\in b(u)}
        f\leftarrow v;\norm{\sigma}\big)
        & \text{if }u\in\keys(b),\\
      \bot
        & \text{otherwise}
    \end{cases}
      && \text{(Lems.~\ref{appendix:packet-alg}(1),~\ref{appendix:semiring-eqns}(6))}\\
    &=
    \begin{cases}
      (f=u);A_u & \text{if }u\in\keys(b),\\
      \bot & \text{otherwise.}
    \end{cases}
  \end{align*}
  For the missing-assignment component, the same leading test is instead checked
  against the failed-branch tests:
  \begin{align*}
  (f=u);M_\rho
    &=
    (f=u);
    \Big(\prod_{u'\in\keys(b)}f\neq u'\Big);
    C
      && \text{(Definition~\ref{def:wspp-sem})}\\
    &\equiv
    \begin{cases}
      \bot & \text{if }u\in\keys(b),\\
      (f=u);C & \text{otherwise,}
    \end{cases}
      && \text{(Lems.~\ref{appendix:packet-alg}(2),~\ref{appendix:semiring-eqns}(6), rep.).}
  \end{align*}
  The default component must also pass the failed tests for missing assignments:
  \begin{align*}
  (f=u);D_\rho
    &=
    (f=u);
    \Big(\prod_{u'\in\keys(b)}f\neq u'\Big);
    \Big(\prod_{z\in\keys(m)}f\neq z\Big);
    \norm{d}
      && \text{(Definition~\ref{def:wspp-sem})}\\
    &\equiv
    \begin{cases}
      \bot
        & \text{if }u\in\keys(b),\\
      (f=u);\big(\prod_{z\in\keys(m)}f\neq z\big);\norm{d}
        & \text{otherwise}
    \end{cases}
      && \text{(Lems.~\ref{appendix:packet-alg}(2),~\ref{appendix:semiring-eqns}(6))}\\
    &\equiv
    \begin{cases}
      \bot
        & \text{if }u\in\keys(b)\text{ or }u\in\keys(m),\\
      (f=u);\norm{d}
        & \text{otherwise}
    \end{cases}
      && \text{(Lems.~\ref{appendix:packet-alg}(2),~\ref{appendix:semiring-eqns}(6))}\\
    &\equiv
    \begin{cases}
      \bot
        & \text{if }u\in\keys(b)\text{ or }u\in\keys(m),\\
      (f=u);(f\leftarrow u);\norm{d}
        & \text{otherwise}
    \end{cases}
      && \text{(Lem.~\ref{appendix:packet-alg}(10))}
  \end{align*}
  Combining the three component calculations with
  $\norm{\rho}=B_\rho\oplus M_\rho\oplus D_\rho$, we obtain
  \begin{align*}
  (f=u);\norm{\rho}
    &=
    (f=u);(B_\rho\oplus M_\rho\oplus D_\rho)
      \\
    &\equiv
    (f=u);B_\rho\oplus (f=u);M_\rho\oplus (f=u);D_\rho
      \\
    &\equiv
    \begin{cases}
      (f=u);A_u
        & \text{if }u\in\keys(b),\\
      (f=u);C
        & \text{if }u\notin\keys(b)\text{ and }u\in\keys(m),\\
      (f=u);\big(C\oplus f\leftarrow u;\norm{d}\big)
        & \text{otherwise}
    \end{cases}
  \end{align*}
  The first line uses Definition~\ref{def:wspp-sem}; the second uses
  Lemma~\ref{appendix:semiring-eqns}(7) repetitively; and the last uses the
  three component calculations above together with Lemma~\ref{appendix:semiring-eqns}(3).
  By the branch-map definition above, $\rho(u)=b(u)$ in
  the first case, $\rho(u)=m$ in the second, and
  $\rho(u)=m\cup\{u\mapsto d\}$ in the third. Thus all three cases are exactly
  \[
    (f=u);\bigoplus_{v\mapsto \sigma\in \rho(u)} f\leftarrow v;\norm{\sigma}.
  \]
\end{proof}

\begin{lemma}[Input slicing]
    \label{appendix:test-splitting}
    For any $p,q\in \Pol$, if there exists some $f\in \fields$ such that $(f=v);p \equiv (f=v);q$ for any $v\in \values$, then $p\equiv q$.
\end{lemma}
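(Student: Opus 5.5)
We prove the statement by unfolding the semantics pointwise; no equational reasoning is needed. By definition of $\equiv$, it suffices to show $\sem{p}(\pkta)(\pktb) = \sem{q}(\pkta)(\pktb)$ for all packets $\pkta,\pktb$.

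Fix $\pkta,\pktb$ and set $v \triangleq \pkta.\fielda$. The hypothesis for this $v$ gives
\[
\sem{(\EQ{\fielda}{v});p}(\pkta)(\pktb) = \sem{(\EQ{\fielda}{v});q}(\pkta)(\pktb).
\]
Next we unfold sequential composition from \Cref{fig:wnetkat-semantics}:
\[
\sem{(\EQ{\fielda}{v});p}(\pkta)(\pktb) = \sum_{\pktc} \sem{\EQ{\fielda}{v}}(\pkta)(\pktc)\cdot\sem{p}(\pktc)(\pktb).
\]
The test semantics is $\sem{\EQ{\fielda}{v}}(\pkta)(\pktc) = [\pkta=\pktc \wedge \pkta.\fielda = v]$. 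Our choice of $v$ makes the second conjunct true, so this factor is $\semione$ exactly when $\pktc=\pkta$ and $\semizero$ otherwise.

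Since $\semizero$ annihilates, every summand with $\pktc \neq \pkta$ is $\semizero$. Since $\semizero$ is the additive unit, these summands vanish. The sum is finite because $\packets$ is finite, as $\fields$ and $\values$ are. What remains is $\semione\cdot\sem{p}(\pkta)(\pktb) = \sem{p}(\pkta)(\pktb)$.

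The same computation gives $\sem{(\EQ{\fielda}{v});q}(\pkta)(\pktb) = \sem{q}(\pkta)(\pktb)$. Combining the two with the hypothesis yields $\sem{p}(\pkta)(\pktb) = \sem{q}(\pkta)(\pktb)$. As $\pkta,\pktb$ were arbitrary, $p \equiv q$.

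There is no real obstacle. The only point needing care is to specialize the hypothesis to $v = \pkta.\fielda$, which depends on the input packet, so that the Iverson bracket collapses to $[\pkta=\pktc]$.
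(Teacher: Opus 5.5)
Your proof is correct, but it takes a different route from the paper's.

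\textbf{The paper's route.} The paper argues equationally. It first observes that the slices partition the identity, $\bigoplus_{v\in\values}(\EQ{\fielda}{v})\equiv\top$. It then rewrites $p\equiv\top;p\equiv\bigoplus_{v\in\values}(\EQ{\fielda}{v});p$ using Lemma~\ref{appendix:semiring-eqns}(5),(7). Next it replaces every summand via the hypothesis, which also needs congruence of $\oplus$ under $\equiv$. Finally it folds the same chain back to $q$.

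\textbf{Your route.} You work pointwise in the semantic model. For each input packet $\pkta$ you use the hypothesis only at the one relevant slice $v=\pkta.\fielda$. There the Iverson bracket becomes $[\pkta=\pktc]$ and collapses the sum over intermediate packets.

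\textbf{What each buys.} The paper's version stays inside the equational calculus that all the surrounding soundness proofs use. In exchange, it depends on a partition-of-unity fact. In the weighted setting that fact silently requires the tests $\EQ{\fielda}{v}$ to be mutually exclusive as well as exhaustive: by non-idempotence (Example~\ref{example:non-idempotent}), overlapping tests would sum to a multiple of $\top$ rather than to $\top$. Your argument needs neither the partition identity nor congruence. It uses only that $\semizero$ is the additive unit and annihilating and that $\semione$ is the multiplicative unit. Your appeal to finiteness of $\packets$ is harmless but not actually needed, since an $\omega$-continuous countable sum with a single nonzero summand equals that summand.
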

\begin{proof}
  From the semantics of predicates and $(\oplus)$, it could easily be observed that

  $\bigoplus_{v\in \values} f=v \equiv \top$. Thus we have
  \[
  p \equiv \top; p \equiv \Bigl(\bigoplus_{v\in \values} f=v\Bigr); p \equiv \Bigl(\bigoplus_{v\in \values} f=v; p\Bigr) \equiv
  \Bigl(\bigoplus_{v\in \values } f=v; q\Bigr) \equiv \Bigl(\bigoplus_{v\in \values} f=v\Bigr); q \equiv \top; q \equiv q,
  \]
  using Lemma \ref{appendix:semiring-eqns}(5), the observed fact, Lemma \ref{appendix:semiring-eqns}(7),
  the assumption, and then the same set of identities in reverse.
\end{proof}

\subsection{Predicates and Atoms}

The atoms $\strhat{(\EQ{\fielda}{v})}$, $\strhat{(\NOTEQ{\fielda}{v})}$,
$\strhat{(\ASSN{\fielda}{v})}$, $\strhat{\TRUE} = \hat{\mathbb{1}}$, and
$\strhat{\FALSE} = \hat{\mathbb{0}}$ are as in \Cref{sec:wspp-easy-ops}.
Compound predicates are compiled by
\[
  \strhat{(\AND{t}{t'})} \triangleq \strhat{t} \wsppmul \strhat{t'},
  \qquad
  \strhat{(\OR{t}{t'})} \triangleq \strhat{t} \wsppadd
    (\strhat{\NOT{t}} \wsppmul \strhat{t'}),
  \qquad
  \strhat{(\NOT{t})} \triangleq \strhat{\bar{t}},
\]
where $\bar{t}$ pushes the negation to the literals by De~Morgan
($\overline{\AND{t}{t'}} = \OR{\bar{t}}{\bar{t'}}$,
$\overline{\OR{t}{t'}} = \AND{\bar{t}}{\bar{t'}}$,
$\overline{\EQ{\fielda}{v}} = \NOTEQ{\fielda}{v}$, etc.). The disjoint
encoding of $\vee$ is forced by non-idempotence
(\Cref{example:non-idempotent}): the naive
$\strhat{t} \wsppadd \strhat{t'}$ would count packets satisfying both
disjuncts twice. Note that predicates compile to wSPPs whose leaves are
$\hat{\mathbb{1}}$ or $\hat{\mathbb{0}}$ and whose rows only ever rewrite a
field to its tested value.

\begin{theorem}[Soundness of atomic wSPPs]
  For every atomic policy $e$, $\norm{\strhat{e}} \equiv e$.
\end{theorem}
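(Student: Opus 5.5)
The proof is a case analysis over the five kinds of atoms, $e \in \{\EQ{\fielda}{v}, \NOTEQ{\fielda}{v}, \ASSN{\fielda}{v}, \TRUE, \FALSE\}$. In each case we unfold the read-back of \Cref{def:wspp-sem} and simplify with the identities of \Cref{appendix:semiring-eqns} and \Cref{appendix:packet-alg}.

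\textbf{Leaves.} For $\TRUE$ and $\FALSE$ the atoms are leaves. We have $\norm{\hat{\mathbb{1}}} = \WEIGH{\semione}{\SKIP} \equiv \SKIP = \TRUE$ by (E2d), and $\norm{\hat{\mathbb{0}}} = \WEIGH{\semizero}{\SKIP} \equiv \DROP = \FALSE$ by (E2e).

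\textbf{Node atoms.} For the three one-node atoms we read off $B_\rho$, $M_\rho$ and $D_\rho$ directly, using that empty sums are $\DROP$, empty products are $\SKIP$, and $\norm{\hat{\mathbb{0}}} \equiv \bot$.
\begin{itemize}
\item For $\strhat{(\EQ{\fielda}{v})} = \wspp(\fielda,\{v\mapsto\{v\mapsto\hat{\mathbb{1}}\}\},\varnothing,\hat{\mathbb{0}})$:
  \begin{itemize}
  \item $B_\rho = (\EQ{\fielda}{v});(\ASSN{\fielda}{v});\norm{\hat{\mathbb{1}}}$.
  \item $M_\rho = (\NOTEQ{\fielda}{v});\bot \equiv \bot$.
  \item $D_\rho = (\NOTEQ{\fielda}{v});\norm{\hat{\mathbb{0}}} \equiv \bot$.
  \end{itemize}
  Hence $\norm{\rho} \equiv (\EQ{\fielda}{v});(\ASSN{\fielda}{v}) \equiv \EQ{\fielda}{v}$ by \Cref{appendix:packet-alg}(10) together with items (3), (5) and (6) of \Cref{appendix:semiring-eqns}.
\item For $\strhat{(\NOTEQ{\fielda}{v})}$:
  \begin{itemize}
  \item $B_\rho = (\EQ{\fielda}{v});\bot \equiv \bot$, since the row $b(v)$ is empty.
  \item $M_\rho$ is an empty sum after the guard, so $M_\rho \equiv \bot$.
  \item $D_\rho = (\NOTEQ{\fielda}{v});\SKIP;\norm{\hat{\mathbb{1}}} \equiv \NOTEQ{\fielda}{v}$.
  \end{itemize}
\item For $\strhat{(\ASSN{\fielda}{v})}$:
  \begin{itemize}
  \item $B_\rho = \bot$, since $b$ is empty.
  \item $M_\rho = \SKIP;(\ASSN{\fielda}{v});\norm{\hat{\mathbb{1}}} \equiv \ASSN{\fielda}{v}$.
  \item $D_\rho = \SKIP;(\NOTEQ{\fielda}{v});\norm{\hat{\mathbb{0}}} \equiv \bot$.
  \end{itemize}
\end{itemize}

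\textbf{Compound predicates.} The statement concerns atoms only. If compound predicates are meant to be included as well, we argue by structural induction, using \Cref{theorem:wspp-add-mul-soundness} for $\wsppadd$ and $\wsppmul$ and the induction hypothesis on subterms:
\begin{itemize}
\item For $\AND{t}{t'}$ we use $\sem{\AND{t}{t'}} = \sem{\SEQ{t}{t'}}$.
\item For disjunction we use $\sem{\OR{t}{t'}} = \sem{\ADD{t}{(\SEQ{\NOT{t}}{t'})}}$. This holds because the Boolean semantics yields a disjoint Iverson split: $[a \vee b] = [a] + [\neg a][b]$ in any semiring.
\item For negation we need $\NOT{t} \equiv \bar{t}$. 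De Morgan is sound at the level of $\sem{-}_{\tests}$, so this follows by inducting on the size of $t$ before negation-pushing.
\end{itemize}

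\textbf{Expected obstacle.} There is no real obstacle. The only care needed is handling empty sums and products and the convention $\norm{\hat{\mathbb{0}}} \equiv \bot$, and, in the compound case, choosing an induction measure under which $\bar{t}$ counts as smaller. The cleanest fix for the latter is to induct on the original predicate and prove $\norm{\strhat{t}} \equiv t$ and $\norm{\strhat{\bar t}} \equiv \NOT{t}$ simultaneously.
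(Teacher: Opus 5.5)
Your case analysis over the five atoms is correct and matches the paper's proof, which simply unfolds Definition~\ref{def:wspp-sem} and observes that each atomic wSPP has exactly the one branch needed to realize its test, assignment, $\TRUE$, or $\FALSE$. The compound-predicate part is not required by the statement; if you keep it, note that in the $\wedge$ case of your simultaneous induction, compiling $\bar t = \OR{\bar t_1}{\bar t_2}$ calls the compiler on $\NOT{\bar t_1}$, i.e.\ on $\bar{\bar t_1}$, which need not be $t_1$ and so is not covered by the structural hypothesis, and you should instead induct on the number of binary connectives (which De~Morgan preserves) and then on size.
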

The proof is immediate by unfolding \Cref{def:wspp-sem}; each atomic wSPP has
only the one branch needed to realize the corresponding test, assignment,
$\top$, or $\bot$.

\subsection{Weighting}

Weighting multiplies every leaf from the left, preserving the tree
structure:
\[
  w \wsppscale \rulebox{w'} \;\triangleq\; \rulebox{w \cdot w'},
  \qquad
  w \wsppscale \wspp(\fielda, b, m, d)
  \;\triangleq\;
  \mathsf{mk}\bigl(\fielda,\;
    \{u \mapsto w \wsppmapscale b(u) \mid u \in \keys(b)\},\;
    w \wsppmapscale m,\;
    w \wsppscale d\bigr).
\]

The following theorem states soundness of weighting. As throughout this appendix,
we reason about the raw node displayed by the definition; the final call to
$\mathsf{mk}$ is semantics-preserving by \Cref{theorem:mk-soundness}.
\begin{theorem}
  For any $w\in \semi$ and $\rho\in \WSPP$, we have $\norm{w \wsppscale \rho} \equiv w \odot \norm{\rho}$.
\end{theorem}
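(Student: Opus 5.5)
We argue by structural induction on $\rho$, proving $\norm{w \wsppscale \rho} \equiv w \odot \norm{\rho}$ for all $w$ at once.

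\emph{Leaf case.} Here $\norm{w \wsppscale \rulebox{w'}} = \WEIGH{(w\cdot w')}{\SKIP}$, and Lemma~\ref{appendix:semiring-eqns}(10) turns this into $\WEIGH{w}{(\WEIGH{w'}{\SKIP})} = w\odot\norm{\rulebox{w'}}$.

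\emph{Node case.} Let $\rho = \wspp(\fielda,b,m,d)$. By \Cref{theorem:mk-soundness} we may reason about the raw node $\rho_w \triangleq \wspp(\fielda, b_w, w\wsppmapscale m, w\wsppscale d)$ with $b_w(u) = w\wsppmapscale b(u)$. The key point is that $\rho_w$ has exactly the same key sets as $\rho$: $\keys(b_w)=\keys(b)$, $\keys(b_w(u))=\keys(b(u))$, and $\keys(w\wsppmapscale m)=\keys(m)$. So the guard products $\prod \NOTEQ{\fielda}{u}$ in $B,M,D$ of \Cref{def:wspp-sem} coincide for $\rho$ and $\rho_w$, and only the continuations change. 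We then compare the three components.

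\begin{itemize}
\item[(B)] Each summand of $B_{\rho_w}$ has the form $(\EQ{\fielda}{u});(\ASSN{\fielda}{v});\norm{w\wsppscale\sigma}$. By the induction hypothesis this is $\equiv (\EQ{\fielda}{u});(\ASSN{\fielda}{v});(w\odot\norm{\sigma})$.
\item[(M)] The summands of $M_{\rho_w}$ are treated the same way.
\item[(D)] We have $D_{\rho_w} \equiv \text{guards};(w\odot\norm{d})$, again by the induction hypothesis. The hypothesis applies because $\sigma$, the entries of $m$, and $d$ are proper subterms, and well-formedness is preserved since scaling does not alter field structure.
\end{itemize}

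The remaining step is to pull $w$ out to the front. Using Lemma~\ref{appendix:packet-alg}(14) repeatedly, in the form $e;(w\odot p)\equiv (w\odot e);p$ for atoms $e$, and then Lemma~\ref{appendix:semiring-eqns}(9), we get
\[
  e_1;\cdots;e_k;(w\odot p) \;\equiv\; w\odot(e_1;\cdots;e_k;p).
\]
This needs a small auxiliary induction on $k$, using associativity (Lemma~\ref{appendix:semiring-eqns}(4)).

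After that, Lemma~\ref{appendix:semiring-eqns}(8), applied repeatedly, distributes $w\odot$ over the finite sums $\bigoplus$ in each component and over the outer $B\oplus M\oplus D$. This gives $\norm{\rho_w}\equiv w\odot\norm{\rho}$. The congruence lemma for $\equiv$ justifies rewriting under contexts at each step.

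The only step that is not immediate is commuting $w$ leftward past the guard and assignment prefix. The potential obstacle is noncommutativity of $\csr$. This is harmless here because atoms have weights in $\{\semizero,\semione\}$, which commute with everything, and that is exactly what Lemma~\ref{appendix:packet-alg}(14) encodes. Empty sums need separate attention: they are $\DROP$, so we need $w\odot\DROP\equiv\DROP$ (E2e).
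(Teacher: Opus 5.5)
Your proof is correct and is essentially the paper's argument. You use structural induction with Lemma~\ref{appendix:semiring-eqns}(10) at leaves. In the node case you reason about the raw node via \Cref{theorem:mk-soundness} and compare $B$, $M$, $D$ componentwise, using the induction hypothesis with Lemmas~\ref{appendix:semiring-eqns}(8),(9) and~\ref{appendix:packet-alg}(14) to move $w$ across the guard/assignment prefixes and the finite sums. The only difference is direction: you rewrite outward from $\norm{w\wsppscale\rho}$, while the paper pushes $w$ inward from $w\odot\norm{\rho}$.

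Your explicit handling of empty sums via $w\odot\DROP\equiv\DROP$ covers a case that the paper's ``Lemma (8), repetitively'' leaves implicit. That identity follows from annihilation of $\semizero$; E2e states only the form $\DROP\equiv\WEIGH{\semizero}{p}$.
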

\begin{proof}
  By induction on the structure of $\rho$.
  If $\rho=\rulebox{w_0}$, then
  \[
    \norm{w\wsppscale \rho}
    =
    \norm{\rulebox{w\cdot w_0}}
    =
    (w\cdot w_0)\odot\top
    \equiv
    w\odot(w_0\odot\top)
    =
    w\odot\norm{\rho}
  \]
  by Lemma \ref{appendix:semiring-eqns} (10). For the inductive case, suppose
  $\rho=\wspp(f,b,m,d)$, with components interpreted as finite maps as in
  Definition \ref{def:wspp-sem}.
  Let $\rho'=w\wsppscale\rho=\wspp(f,b',m',d')$, where
  \[
  \begin{aligned}
    b'(u) &= \{\,v\mapsto w\wsppscale \sigma
      \mid v\mapsto\sigma\in b(u)\,\}
      &&(u\in\keys(b)),\\
    m' &= \{\,z\mapsto w\wsppscale \gamma \mid z\mapsto\gamma\in m\,\},
    \qquad
    d'=w\wsppscale d.
  \end{aligned}
  \]
  By the definition of $\norm{\cdot}$, it suffices to compare the three
  components $B_\rho$, $M_\rho$, and $D_\rho$. We first compute
  $w \odot B_{\rho}$:
  {\small
  \begin{align*}
  w\odot B_\rho
    &=
    w\odot
    \Big(
      \bigoplus_{u\mapsto r\in b} f=u;
      \big(\bigoplus_{v\mapsto\sigma\in r}
        f\leftarrow v;\norm{\sigma}\big)
    \Big)
      && \text{(Definition~\ref{def:wspp-sem})}\\
    &\equiv
    \bigoplus_{u\mapsto r\in b}
      w\odot\Big(f=u;
      \big(\bigoplus_{v\mapsto\sigma\in r}
        f\leftarrow v;\norm{\sigma}\big)\Big)
      && \text{(Lemma~\ref{appendix:semiring-eqns}(8), repetitively)}\\
    &\equiv
    \bigoplus_{u\mapsto r\in b}
      f=u;
      \Big(w\odot
      \big(\bigoplus_{v\mapsto\sigma\in r}
        f\leftarrow v;\norm{\sigma}\big)\Big)
      && \text{(Lemmas~\ref{appendix:semiring-eqns}(9) and~\ref{appendix:packet-alg}(14), repetitively)}\\
    &\equiv
    \bigoplus_{u\mapsto r\in b}
      f=u;
      \Big(\bigoplus_{v\mapsto\sigma\in r}
        w\odot(f\leftarrow v;\norm{\sigma})\Big)
      && \text{(Lemma~\ref{appendix:semiring-eqns}(8), repetitively)}\\
    &\equiv
    \bigoplus_{u\mapsto r\in b}
      f=u;
      \Big(\bigoplus_{v\mapsto\sigma\in r}
        f\leftarrow v;(w\odot\norm{\sigma})\Big)
      && \text{(Lemmas~\ref{appendix:semiring-eqns}(9) and~\ref{appendix:packet-alg}(14), repetitively)}\\
    &\equiv
    \bigoplus_{u\mapsto r\in b}
      f=u;
      \Big(\bigoplus_{v\mapsto\sigma\in r}
        f\leftarrow v;\norm{w\wsppscale \sigma}\Big)
      && \text{(induction hypothesis, repetitively)}\\
    &=
    B_{\rho'}
      && \text{(Definition~\ref{def:wspp-sem}, definition of $\rho'$).}
  \end{align*}
  }

  The missing-assignment component is analogous, but the first use of
  Lemmas~\ref{appendix:semiring-eqns}(9) and~\ref{appendix:packet-alg}(14)
  pushes the weight through the product of failed branch tests:
  {\small
  \begin{align*}
  w\odot M_\rho
    &=
    w\odot\Big(
      \big(\prod_{u\in\keys(b)}f\neq u\big);
      \big(\bigoplus_{z\mapsto\gamma\in m}
        f\leftarrow z;\norm{\gamma}\big)
    \Big)
      && \text{(Definition~\ref{def:wspp-sem})}\\
    &\equiv
    \big(\prod_{u\in\keys(b)}f\neq u\big);
    \Big(w\odot
      \big(\bigoplus_{z\mapsto\gamma\in m}
        f\leftarrow z;\norm{\gamma}\big)\Big)
      && \text{(Lemmas~\ref{appendix:semiring-eqns}(9) and~\ref{appendix:packet-alg}(14), repetitively)}\\
    &\equiv
    \big(\prod_{u\in\keys(b)}f\neq u\big);
    \Big(\bigoplus_{z\mapsto\gamma\in m}
      w\odot(f\leftarrow z;\norm{\gamma})\Big)
      && \text{(Lemma~\ref{appendix:semiring-eqns}(8), repetitively)}\\
    &\equiv
    \big(\prod_{u\in\keys(b)}f\neq u\big);
    \Big(\bigoplus_{z\mapsto\gamma\in m}
      f\leftarrow z;(w\odot\norm{\gamma})\Big)
      && \text{(Lemmas~\ref{appendix:semiring-eqns}(9) and~\ref{appendix:packet-alg}(14), repetitively)}\\
    &\equiv
    \big(\prod_{u\in\keys(b)}f\neq u\big);
    \Big(\bigoplus_{z\mapsto\gamma\in m}
      f\leftarrow z;\norm{w\wsppscale \gamma}\Big)
      && \text{(induction hypothesis, repetitively)}\\
    &=
    M_{\rho'}
      && \text{(Definition~\ref{def:wspp-sem}, definition of $\rho'$).}
  \end{align*}
  }

  Finally, for the default component,
  {\small
  \begin{align*}
  w\odot D_\rho
    &=
    w\odot\Big(
      \big(\prod_{u\in\keys(b)}f\neq u\big);
      \big(\prod_{z\in\keys(m)}f\neq z\big);
      \norm{d}
    \Big)
      && \text{(Definition~\ref{def:wspp-sem})}\\
    &\equiv
    \big(\prod_{u\in\keys(b)}f\neq u\big);
    \big(\prod_{z\in\keys(m)}f\neq z\big);
    (w\odot\norm{d})
      && \text{(Lemmas~\ref{appendix:semiring-eqns}(9) and~\ref{appendix:packet-alg}(14), repetitively)}\\
    &\equiv
    \big(\prod_{u\in\keys(b)}f\neq u\big);
    \big(\prod_{z\in\keys(m)}f\neq z\big);
    \norm{w\wsppscale d}
      && \text{(induction hypothesis)}\\
    &=
    D_{\rho'}
      && \text{(Definition~\ref{def:wspp-sem}, definition of $\rho'$).}
  \end{align*}
  }
  Therefore
  \begin{align*}
  \norm{w\wsppscale\rho}
    &= B_{\rho'}\oplus M_{\rho'}\oplus D_{\rho'}
      && \text{(Definition~\ref{def:wspp-sem})}\\
    &\equiv (w\odot B_\rho)\oplus(w\odot M_\rho)\oplus(w\odot D_\rho)
      && \text{(component calculations above)}\\
    &\equiv w\odot(B_\rho\oplus M_\rho\oplus D_\rho)
      && \text{(Lemma~\ref{appendix:semiring-eqns}(8), repetitively, from right to left)}\\
    &= w\odot\norm{\rho}
      && \text{(Definition~\ref{def:wspp-sem}).}
  \end{align*}
\end{proof}

\subsection{Choice}

For nodes $\rho = \wspp(\fielda, b_1, m_1, d_1)$ and
$\sigma = \wspp(\fielda, b_2, m_2, d_2)$ on a common field:
\[
  \rho \wsppadd \sigma
  \;\triangleq\;
  \mathsf{mk}(\fielda,\; b_{+},\; m_1 \wsppmapadd m_2,\; d_1 \wsppadd d_2),
  \qquad
  b_{+} \triangleq
  \{\, u \mapsto \rho(u) \wsppmapadd \sigma(u) \mid u \in K \,\},
\]
where $K \triangleq \keys(b_1,b_2,m_1,m_2)$.

\begin{theorem}[Correctness of $\wsppadd$]
  \label{theorem:wspp-add-soundness}
  For any $\rho,\sigma\in \WSPP$, we have
  \[
    \norm{\rho\wsppadd\sigma}\equiv \norm{\rho}\oplus\norm{\sigma},
  \]
  where $(\wsppadd)$ is defined above.
\end{theorem}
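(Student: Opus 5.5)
We prove the claim by structural induction on the pair $(\rho,\sigma)$, ordered by the size of the trees. The induction hypothesis is $\norm{\rho'\wsppadd\sigma'}\equiv\norm{\rho'}\oplus\norm{\sigma'}$ for all strictly smaller operands; these are the sub-wSPPs appearing in the rows of $\rho$ and $\sigma$.

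\emph{Base and lifting cases.} If both operands are leaves, $\rulebox{w_1}\wsppadd\rulebox{w_2}=\rulebox{w_1+w_2}$. Then Lemma~\ref{appendix:semiring-eqns}(11) gives $(w_1+w_2)\odot\top\equiv w_1\odot\top\oplus w_2\odot\top$. If the root fields differ, or one operand is a leaf, we lift the operand with the larger (or absent) root field to $\wspp(\fielda,\varnothing,\varnothing,\cdot)$. This is semantically invisible, as noted in \Cref{appendix:helpers}. The lifted node is not structurally smaller, but its only child is the original operand, so the induction still applies to the children. We may therefore assume both operands are nodes on a common field $\fielda$. The final $\mathsf{mk}$ call is discharged by \Cref{theorem:mk-soundness}, so it suffices to reason about the raw node $\tau\triangleq\wspp(\fielda,b_+,m_1\wsppmapadd m_2,d_1\wsppadd d_2)$.

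\emph{Main step: input slicing.} By Lemma~\ref{appendix:test-splitting}, it suffices to show, for every $u\in\values$,
\[
(\fielda=u);\norm{\tau}\equiv(\fielda=u);(\norm{\rho}\oplus\norm{\sigma}).
\]
The right-hand side distributes by Lemma~\ref{appendix:semiring-eqns}(7). Applying Lemma~\ref{appendix:lookup-behavior} to $\rho$ and $\sigma$ turns it into $(\fielda=u);\bigl(\bigoplus_{v\mapsto\eta\in\rho(u)}\fielda\leftarrow v;\norm{\eta}\oplus\bigoplus_{v\mapsto\eta\in\sigma(u)}\fielda\leftarrow v;\norm{\eta}\bigr)$. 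Lemma~\ref{appendix:map-lift-add}, whose hypothesis is supplied by the induction hypothesis on children, merges this into a single sum over $\rho(u)\wsppmapadd\sigma(u)$. The left-hand side becomes a sum over $\tau(u)$ by Lemma~\ref{appendix:lookup-behavior}. What remains is the row identity $\tau(u)=\rho(u)\wsppmapadd\sigma(u)$, which need only hold up to entries that are semantically $\hat{\mathbb{0}}$; such entries are harmless since $\fielda\leftarrow v;\norm{\hat{\mathbb{0}}}\equiv\bot$.

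\emph{Main obstacle: the row identity.} We establish it by case analysis on $u$ against the key set $K=\keys(b_1,b_2,m_1,m_2)$.
\begin{itemize}
\item If $u\in K$, then $u\in\keys(b_+)$ and $\tau(u)=b_+(u)=\rho(u)\wsppmapadd\sigma(u)$ by definition.
\item If $u\notin K$, then $u$ is absent from $b_+$ and from $m_1\wsppmapadd m_2$. Hence $\tau(u)$ is $m_1\wsppmapadd m_2$, extended by $u\mapsto d_1\wsppadd d_2$ when that sum is nonzero. Meanwhile $\rho(u)=m_1\cup\{u\mapsto d_1\}$, where the extension is present exactly when $d_1\neq\hat{\mathbb{0}}$, and similarly for $\sigma(u)$.
\end{itemize}
Since $u\notin\keys(m_1,m_2)$, the extra key $u$ does not collide with any key of $m_1$ or $m_2$. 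So $\rho(u)\wsppmapadd\sigma(u)$ equals $m_1\wsppmapadd m_2$ together with a $u$-entry of $\dflt{d_1}\wsppadd\dflt{d_2}$ (absent if both $d_i$ are $\hat{\mathbb{0}}$). This differs from $\tau(u)$ only where one $d_i$ is $\hat{\mathbb{0}}$, i.e., by a $\hat{\mathbb{0}}$-summand, or where $d_1\wsppadd d_2=\hat{\mathbb{0}}$ syntactically although some $d_i\neq\hat{\mathbb{0}}$. In the latter case we invoke the induction hypothesis $\norm{d_1\wsppadd d_2}\equiv\norm{d_1}\oplus\norm{d_2}$, so the differing entry is semantically $\bot$. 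The one delicate point is this bookkeeping of zero defaults, together with using the induction hypothesis on $d_1\wsppadd d_2$ for the non-$\hat{\mathbb{0}}$ branches. Lemma~\ref{appendix:semiring-eqns}(3),(6) removes all zero summands, which closes each case.
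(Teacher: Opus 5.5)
Your proposal is correct and follows essentially the paper's proof. Both lift to a common field, slice on the input value via Lemma~\ref{appendix:test-splitting}, rewrite both sides with Lemma~\ref{appendix:lookup-behavior} and Lemma~\ref{appendix:map-lift-add} (using the induction hypothesis), and close by identifying $\tau(u)$ with $\rho(u)\wsppmapadd\sigma(u)$ by cases on $u\in K$. Your extra bookkeeping for zero defaults, where the branch map omits the diagonal entry, is a small refinement of the paper's ``otherwise'' case, which writes $\rho(u)=m_\rho\cup\{u\mapsto d_\rho\}$ without checking whether $d_\rho=\hat{\mathbb{0}}$.
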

\begin{proof}

  WLOG, we assume $\rho, \sigma$ have been 
  lifted to the same top-level field as described in Section \ref{appendix:helpers} 
  and will assume so implicitly for all binary operations on wSPPs. 
  We now proceed by induction. If
  $\rho=\rulebox{w_1}$ and $\sigma=\rulebox{w_2}$, then
  \begin{align*}
  \norm{\rho\wsppadd\sigma}
    &=
    \norm{\rulebox{w_1+w_2}}\\
    &=
    (w_1+w_2)\odot\top\\
    &\equiv
    (w_1\odot\top)\oplus(w_2\odot\top)
      && \text{(Lemma~\ref{appendix:semiring-eqns}(11))}\\
    &=
    \norm{\rho}\oplus\norm{\sigma}.
  \end{align*}
  For the inductive case, suppose
  \[
    \rho=\wspp(f,b_\rho,m_\rho,d_\rho),\qquad
    \sigma=\wspp(f,b_\sigma,m_\sigma,d_\sigma),
  \]
  and let $\tau\triangleq\rho\wsppadd\sigma$. For each $u\in\values$, the
  lookup and map-lift lemmas give
  {\small
  \begin{align*}
  &(f=u);(\norm{\rho}\oplus\norm{\sigma})\\
  &\equiv
    (f=u);\norm{\rho}\oplus(f=u);\norm{\sigma}
      && \text{(Lemma~\ref{appendix:semiring-eqns}(7))}\\
  &\equiv
    (f=u);\bigoplus_{v\mapsto\eta\in\rho(u)}
      f\leftarrow v;\norm{\eta}
    \oplus
    (f=u);\bigoplus_{v\mapsto\eta\in\sigma(u)}
      f\leftarrow v;\norm{\eta}
      && \text{(Lemma~\ref{appendix:lookup-behavior})}\\
  &\equiv
    (f=u);
    \left(
      \bigoplus_{v\mapsto\eta\in\rho(u)} f\leftarrow v;\norm{\eta}
      \oplus
      \bigoplus_{v\mapsto\eta\in\sigma(u)} f\leftarrow v;\norm{\eta}
    \right)
      && \text{(Lemma~\ref{appendix:semiring-eqns}(7))}\\
  &\equiv
    (f=u);
    \bigoplus_{v\mapsto\eta\in\rho(u)\wsppmapadd\sigma(u)}
      f\leftarrow v;\norm{\eta}
      && \text{(Lemma~\ref{appendix:map-lift-add}, induction hypothesis).}
  \end{align*}
  }
  It remains only to identify the lookup map of $\tau$. If
  $u\in\keys(b_\rho,b_\sigma,m_\rho,m_\sigma)$, then
  the definition of $(\wsppadd)$ gives
  $\tau(u)=\rho(u)\wsppmapadd\sigma(u)$. If otherwise, then
  \[
    \rho(u)\wsppmapadd\sigma(u)
    =
    (m_\rho\cup\{u\mapsto d_\rho\})\wsppmapadd
    (m_\sigma\cup\{u\mapsto d_\sigma\})
    =
    m_+\cup\{u\mapsto d_+\}
    =
    \tau(u),
  \]
  where $m_+\triangleq m_\rho\wsppmapadd m_\sigma$ and
  $d_+\triangleq d_\rho\wsppadd d_\sigma$ as in
  the definition of $(\wsppadd)$. Therefore, for all $u\in\values$,
  \begin{align*}
  (f=u);(\norm{\rho}\oplus\norm{\sigma})
    &\equiv
    (f=u);
    \bigoplus_{v\mapsto\eta\in\tau(u)} f\leftarrow v;\norm{\eta}\\
    &\equiv
    (f=u);\norm{\tau}
      && \text{(Lemma~\ref{appendix:lookup-behavior}, from right to left).}
  \end{align*}
  This allows us to conclude $\norm{\rho}\oplus\norm{\sigma}\equiv\norm{\tau}$ by
  Lemma~\ref{appendix:test-splitting}, as required.
\end{proof}

\subsection{Sequential Composition}

Sequencing composes every continuation of the left operand with the branch
of the right operand at the intermediate value. For a row $\mathit{row}$
(a branch of $\rho$) define its composition with $\sigma$ by
\[
  \mathit{row} \rowmul \sigma
  \;\triangleq\;
  \bigwsppmapadd_{(v \mapsto \tau) \in \mathit{row}}
    \tau \wsppmapmul \sigma(v),
\]
the symbolic rendition of the matrix product: $v$ ranges over the values
written by the left operand, and entries for equal final values are merged
with $\wsppadd$. Then, for nodes operating on a common field $\fielda$, define 
\[
  \rho \wsppmul \sigma
  \;\triangleq\;
  \mathsf{mk}(\fielda,\; b_{;},\; m_{;},\; d_1 \wsppmul d_2),
  \qquad
  \begin{aligned}
    m_A &\triangleq m_1 \rowmul \sigma, \qquad
    m_{;} \triangleq m_A \wsppmapadd (d_1 \wsppmapmul m_2),\\
    b_{;} &\triangleq \{\, u \mapsto \rho(u) \rowmul \sigma \mid u \in K \,\},
  \end{aligned}
\]
where $K$ now additionally includes $\keys(m_A)$. 

Expanding the row-level definition above gives the following pointwise form,
which is the form used in the soundness proof.

\begin{definition}[Expanded sequencing form]\label{appendix:wsppmul-definition}
  {\small
  For $\rho=\wspp(f,b_\rho,m_\rho,d_\rho)$ and
  $\sigma=\wspp(f,b_\sigma,m_\sigma,d_\sigma)$, define
  \[
  \begin{aligned}
    \rho\wsppmul\sigma &\triangleq \wspp(f,b',m',d_\rho\wsppmul d_\sigma),\\[-0.15em]
    b' &\triangleq
      \left\{\,u \mapsto
      \displaystyle\bighoplusmap_{v\mapsto\eta\in\rho(u)}
        \{\,v'\mapsto \eta\wsppmul\theta \mid v'\mapsto\theta\in\sigma(v)\,\}
      \,\middle|\,
      u\in\keys(b_\rho,b_\sigma,m_\rho,m_\sigma,m_A)
      \right\},\\[-0.15em]
    m' &\triangleq m_A\wsppmapadd m_B,\qquad
    m_A \triangleq
      \displaystyle\bighoplusmap_{v\mapsto\eta\in m_\rho}
        \{\,v'\mapsto \eta\wsppmul\theta \mid v'\mapsto\theta\in\sigma(v)\,\},\\[-0.15em]
    m_B &\triangleq
      \{\,v'\mapsto d_\rho\wsppmul\theta \mid v'\mapsto\theta\in m_\sigma\,\}.
  \end{aligned}
  \]
  }
\end{definition}
\begin{theorem}
    For any $\rho, \sigma\in \WSPP$, $\norm{\rho \wsppmul \sigma} \equiv \norm{\rho}; \norm{\sigma}$.
\end{theorem}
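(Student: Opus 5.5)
We prove the statement by structural induction on $\rho$ and $\sigma$, following the template of \Cref{theorem:wspp-add-soundness}. After lifting both operands to a common root field $\fielda$ as in \Cref{appendix:helpers}, we verify the equivalence slice by slice with \Cref{appendix:test-splitting}.

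\emph{Base case.} When both operands are leaves, $\norm{\rulebox{w_1}\wsppmul\rulebox{w_2}} = (w_1\cdot w_2)\odot\top$. By Lemma~\ref{appendix:semiring-eqns}(9),(10),(5), this is equivalent to $(w_1\odot\top);(w_2\odot\top)$.

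\emph{Inductive hypothesis.} For the node case we assume $\norm{\eta\wsppmul\theta}\equiv\norm{\eta};\norm{\theta}$ for all sub-wSPPs $\eta,\theta$ of strictly smaller total size. This covers all products formed in $b'$, $m_A$, $m_B$ and $d_\rho\wsppmul d_\sigma$. It also gives soundness of $\wsppadd$, so \Cref{appendix:map-lift-add} applies to the $\bighoplusmap$-sums.

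\emph{Slicing.} Fix $u\in\values$. By \Cref{appendix:lookup-behavior},
\[
(\fielda=u);\norm{\rho};\norm{\sigma}\equiv(\fielda=u);\bigoplus_{v\mapsto\eta\in\rho(u)}\fielda\leftarrow v;\norm{\eta};\norm{\sigma}.
\]
Since $\eta$ mentions only fields greater than $\fielda$, Lemma~\ref{appendix:packet-alg}(15) commutes $\norm{\eta}$ past the prefix $(\fielda=v)$, giving $\fielda\leftarrow v;\norm{\eta};\norm{\sigma}\equiv \fielda\leftarrow v;(\fielda=v);\norm{\eta};\norm{\sigma}$. Here we use $\fielda\leftarrow v\equiv \fielda\leftarrow v;\fielda=v$ (Lemma~\ref{appendix:packet-alg}(8)). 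This exposes the test $(\fielda=v)$ in front of $\norm{\sigma}$, so \Cref{appendix:lookup-behavior} applied to $\sigma$ rewrites $(\fielda=v);\norm{\sigma}$ as a sum over $v'\mapsto\theta\in\sigma(v)$. Commuting $\norm{\eta}$ again, collapsing the pair of assignments with Lemma~\ref{appendix:packet-alg}(7), and dropping the redundant test yields
\[
(\fielda=u);\bigoplus_{v\mapsto\eta\in\rho(u)}\;\bigoplus_{v'\mapsto\theta\in\sigma(v)}\fielda\leftarrow v';\norm{\eta};\norm{\theta}.
\]
Applying the inductive hypothesis and \Cref{appendix:map-lift-add} turns this into $(\fielda=u);\bigoplus_{v'\mapsto\kappa\in\rho(u)\rowmul\sigma}\fielda\leftarrow v';\norm{\kappa}$.

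\emph{Identifying the branch map of $\tau\triangleq\rho\wsppmul\sigma$.} It remains to show $\tau(u)=\rho(u)\rowmul\sigma$ for every $u$, after which \Cref{appendix:lookup-behavior} (read right to left) and \Cref{appendix:test-splitting} finish the proof. For $u$ in the key set $K$ this holds by the definition of $b'$. For $u\notin K$ we have $\rho(u)=m_\rho\cup\{u\mapsto d_\rho\}$ (or just $m_\rho$ if $d_\rho=\hat{\mathbb 0}$). We must show
\[
m_\rho\rowmul\sigma\;\wsppmapadd\;d_\rho\wsppmapmul\sigma(u)\;=\;m_A\wsppmapadd m_B\cup\{u\mapsto d_\rho\wsppmul d_\sigma\}.
\]
The first summand is $m_A$ by definition. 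For the second, $u\notin\keys(b_\sigma,m_\sigma)$ gives $\sigma(u)=m_\sigma\cup\{u\mapsto d_\sigma\}$, so $d_\rho\wsppmapmul\sigma(u)=m_B\cup\{u\mapsto d_\rho\wsppmul d_\sigma\}$. The only remaining point is that the key $u$ does not collide with keys of $m_A$ or $m_B$. This is exactly why $K$ includes $\keys(m_A)$ and $\keys(m_\sigma)$.

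\emph{Main obstacle.} We expect this last bookkeeping step to be the hard part. The identity entry must land on the diagonal precisely when $u$ is fresh, and the degenerate cases $d_\rho=\hat{\mathbb 0}$ or $d_\sigma=\hat{\mathbb 0}$ need separate treatment (entries equal to $\hat{\mathbb 0}$ are absorbed, up to equivalence, via Lemma~\ref{appendix:semiring-eqns}(3),(6)). A secondary subtlety is making the induction measure well-founded when operands are lifted; we would induct on the pair of sizes, with lifting not counted.
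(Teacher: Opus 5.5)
Your proposal is correct and follows essentially the same route as the paper. You slice on the input value of $\fielda$, apply the branch-map lemma to both operands with the packet-algebra commutations to reach $\bigoplus_{v}\bigoplus_{v'}\fielda\leftarrow v';\norm{\eta\wsppmul\theta}$, merge with \Cref{appendix:map-lift-add}, and identify the branch map of $\rho\wsppmul\sigma$ separately for keyed and fresh $u$. The only differences are minor. The paper inducts downward on the root field index rather than on the unlifted total size, and it obtains soundness of $\wsppadd$ from \Cref{theorem:wspp-add-soundness}, not from the induction hypothesis (which only speaks about $\wsppmul$).
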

\begin{proof}
Instead of inducting on the structure of $\rho$ and $\sigma$ again,
we need a stronger inductive hypothesis: that is, we induct downwards on
$0 \leq i \leq |\fields|$ and assume the proposition
for any $\rho, \sigma$ that operate on fields greater than the
$i$th field in $\fields$. For the base case, $\rho = \rulebox{w_1}$ and $\sigma = \rulebox{w_2}$, so we have
\[\norm{\rulebox{w_1}}; \norm{\rulebox{w_2}} = (w_1 \odot \top) ; (w_2 \odot \top) \equiv (w_1 \cdot w_2) \odot \top = \norm{\rulebox{w_1 \cdot w_2}} = \norm{\rulebox{w_1}\wsppmul \rulebox{w_2}}\]
where the equivalence is by Lemma~\ref{appendix:semiring-eqns}(10). For the inductive case, as in the choice case, we may assume that $\rho = \wspp(f, b_\rho, m_\rho, d_\rho), \sigma = \wspp(f, b_\sigma, m_\sigma, d_\sigma)$, and let $\rho \wsppmul \sigma = \wspp(f, b', m', d_\rho \wsppmul d_\sigma)$ be as defined in
Definition \ref{appendix:wsppmul-definition}.
We have
{\small
\begin{align*}
     (f=u);\norm{\rho} & \equiv (f=u);\bigoplus_{v\mapsto \eta \in \rho(u)} f\leftarrow v;\norm{\eta} & & \text{(Lemma~\ref{appendix:lookup-behavior})}\\
     &  \equiv (f=u);\bigoplus_{v\mapsto \eta \in \rho(u)} \norm{\eta};(f \leftarrow v). & & \text{(Lemma~\ref{appendix:packet-alg}(15))} \\
     & \equiv (f=u);\bigoplus_{v\mapsto \eta\in \rho(u)} \norm{\eta};(f \leftarrow v);(f=v)  & & \text{(Lemma~\ref{appendix:packet-alg}(8))} \\
    (f=u);\norm{\rho};\norm{\sigma} &  \equiv (f=u);\bigoplus_{v\mapsto \eta \in \rho(u)} \norm{\eta};(f \leftarrow v);(f=v);\norm{\sigma} \\
     & \equiv (f=u);\bigoplus_{v\mapsto \eta\in \rho(u)}\norm{\eta};(f \leftarrow v);(f=v);\bigoplus_{v' \mapsto \theta\in \sigma(v)} f\leftarrow v';\norm{\theta}& & \text{(Lemma~\ref{appendix:lookup-behavior})} \\
     & \equiv (f=u);\bigoplus_{v\mapsto \eta\in \rho(u)}\norm{\eta};(f \leftarrow v);\bigoplus_{v' \mapsto \theta\in \sigma(v)} f\leftarrow v';\norm{\theta}& & \text{(Lemma~\ref{appendix:packet-alg}(10))} \\
     &  \equiv (f=u);\bigoplus_{v\mapsto \eta\in \rho(u)}\norm{\eta};\bigoplus_{v' \mapsto \theta\in \sigma(v)} (f \leftarrow v);(f\leftarrow v');\norm{\theta} \\
     &  \equiv (f=u);\bigoplus_{v\mapsto \eta\in \rho(u)}\norm{\eta};\bigoplus_{v' \mapsto \theta\in \sigma(v)} (f \leftarrow v');\norm{\theta}& & \text{(Lemma~\ref{appendix:packet-alg}(7))} \\
     & \equiv (f=u);\bigoplus_{v\mapsto \eta\in \rho(u)}\bigoplus_{v' \mapsto \theta\in \sigma(v)} \norm{\eta};(f \leftarrow v');\norm{\theta} \\
     &  \equiv (f=u);\bigoplus_{v\mapsto \eta\in \rho(u)}\bigoplus_{v' \mapsto \theta\in \sigma(v)} (f\leftarrow v');\norm{\eta};\norm{\theta}& & \text{(Lemma~\ref{appendix:packet-alg}(15))}  \\
     & \equiv (f=u);\bigoplus_{v\mapsto \eta\in \rho(u)}\bigoplus_{v' \mapsto \theta\in \sigma(v)} (f\leftarrow v');\norm{\eta \wsppmul \theta}. & & \text{(IH)}  \\
\end{align*}
}
This means
{\small
\begin{equation}
\label{eq:wspp-mul-proof-2}
 (f=u);\norm{\rho};\norm{\sigma} \equiv (f=u);\bigoplus_{v\mapsto \eta\in \rho(u)}\bigoplus_{v'\mapsto \theta\in b'_{v, \eta}}(f \leftarrow v'); \norm{\theta},
\end{equation}
}
where $b'_{v, \eta} \triangleq \{v' \mapsto  \eta\wsppmul \theta \,|\, v' \mapsto \theta \in \sigma(v) \}$.
Then, by applying Lemma~\ref{appendix:map-lift-add} repeatedly over the list of all $b'_{v, \eta}$ in the inner sum of (\ref{eq:wspp-mul-proof-2}) (the condition
for the lemma is satisfied by Theorem \ref{theorem:wspp-add-soundness}), we obtain
{\small
\begin{equation}
\label{eq:wspp-mul-proof-3}
(f=u);\norm{\rho};\norm{\sigma} \equiv (f=u);\bigoplus_{v' \mapsto \gamma \in b'_u} (f\leftarrow v');\norm{\gamma},
\end{equation}
}
where $b'_u \triangleq \displaystyle\bighoplusmap_{v\mapsto \eta\in \rho(u)} b'_{v, \eta}$. When
\[
u\in \keys(b_\rho,b_\sigma,m_\rho,m_\sigma,m_A),
\]
Definition~\ref{appendix:wsppmul-definition} gives $(\rho\wsppmul\sigma)(u) = b'(u) = b'_u$, so we immediately have from (\ref{eq:wspp-mul-proof-3}) that

\begin{align*}
    (f=u);\norm{\rho};\norm{\sigma}
    & \equiv   (f=u); \bigoplus_{v' \mapsto \gamma \in (\rho\wsppmul\sigma)(u)} (f\leftarrow  v');\norm{\gamma} \\
    & \equiv (f=u); \norm{\rho\wsppmul\sigma}. & & \text{(Lemma~\ref{appendix:lookup-behavior})}
\end{align*}

	Alternatively, suppose
	\[
	u\notin \keys(b_\rho,b_\sigma,m_\rho,m_\sigma,m_A).
	\]
	Then $\rho(u) = m_\rho \cup \{u \mapsto d_\rho\}$ and $\sigma(u) = m_\sigma \cup \{u \mapsto d_\sigma\}$. Using the default lookup for $\rho$ in (\ref{eq:wspp-mul-proof-2}), we obtain
	{\small
	\begin{align*}
	     &(f=u);\norm{\rho};\norm{\sigma}\\[-0.25em]
	      \equiv & (f=u);\bigoplus_{v\mapsto \eta\in (\{u \mapsto d_\rho\} \cup m_\rho)}\bigoplus_{v' \mapsto \theta\in \sigma(v)} (f\leftarrow v');\norm{\eta \wsppmul \theta}\\
	     \shortintertext{\hfill{\footnotesize(split the $v=u$ summand)}}
	     \equiv & (f=u);\left(\bigoplus_{v' \mapsto \theta\in \sigma(u)} (f\leftarrow v');\norm{d_\rho \wsppmul \theta} \oplus  \bigoplus_{v\mapsto \eta\in m_\rho}\bigoplus_{v' \mapsto \theta\in \sigma(v)} (f\leftarrow v');\norm{\eta \wsppmul \theta}\right)\\
	     \shortintertext{\hfill{\footnotesize(Definition~\ref*{appendix:wsppmul-definition} ($m_A$), Lemma~\ref*{appendix:map-lift-add}, similar to above)}}
	     \equiv & (f=u);\left(\bigoplus_{v' \mapsto \theta\in \sigma(u)} (f\leftarrow v');\norm{d_\rho \wsppmul \theta} \oplus \bigoplus_{v' \mapsto \gamma\in m_A} f\leftarrow v';\norm{\gamma} \right)\\
	     \shortintertext{\hfill{\footnotesize(Definition of $\sigma(u)$)}}
	      \equiv & (f=u);\left(\bigoplus_{v' \mapsto \theta\in (\{u \mapsto d_\sigma\} \cup m_\sigma )} (f\leftarrow v');\norm{d_\rho \wsppmul \theta} \oplus  \bigoplus_{v' \mapsto \gamma\in m_A} f\leftarrow v';\norm{\gamma}\right)\\
	     \shortintertext{\hfill{\footnotesize(split the $v'=u$ summand)}}
	      \equiv & (f=u);\left(
	        \begin{aligned}
	        &\bigoplus_{v' \mapsto \gamma\in \{u \mapsto d_\rho\wsppmul d_\sigma\}} f\leftarrow v';\norm{\gamma}
	        \oplus \bigoplus_{v' \mapsto \theta\in m_\sigma} (f\leftarrow v');\norm{d_\rho \wsppmul \theta}\\
	        &\oplus \bigoplus_{v' \mapsto \gamma\in m_A} f\leftarrow v';\norm{\gamma}
	        \end{aligned}
	        \right)\\
	     \shortintertext{\hfill{\footnotesize(Definition~\ref*{appendix:wsppmul-definition} ($m_B$))}}
	      \equiv & (f=u);\left(
	        \begin{aligned}
	        &\bigoplus_{v' \mapsto \gamma\in \{u \mapsto d_\rho\wsppmul d_\sigma\}} f\leftarrow v';\norm{\gamma}
	        \oplus \bigoplus_{v' \mapsto \gamma\in m_B} f\leftarrow v';\norm{\gamma}\\
	        &\oplus \bigoplus_{v' \mapsto \gamma\in m_A} f\leftarrow v';\norm{\gamma}
	        \end{aligned}
	        \right)\\
	     \shortintertext{\hfill{\footnotesize(Definition~\ref*{appendix:wsppmul-definition} ($m'$), Lemma~\ref*{appendix:map-lift-add})}}
	     \equiv&  (f=u);\left(
	        \begin{aligned}
	        &\bigoplus_{v' \mapsto \gamma\in \{u \mapsto d_\rho\wsppmul d_\sigma\}} f\leftarrow v';\norm{\gamma}
	        \oplus \bigoplus_{v' \mapsto \gamma\in m'} f\leftarrow v';\norm{\gamma}
	        \end{aligned}
	        \right) \\
	     \equiv & (f=u);\bigoplus_{\substack{v' \mapsto \gamma\in\\(\{u \mapsto d_\rho \wsppmul d_\sigma\} \cup m')}} f\leftarrow v';\norm{\gamma}
	\end{align*}
	}
	Since
	$u\notin \keys(b_\rho,b_\sigma,m_\rho,m_\sigma,m_A) = \keys(b',m')$,
	Definition~\ref{appendix:wsppmul-definition} gives
	$(\rho \wsppmul \sigma)(u) = \{u \mapsto d_\rho \wsppmul d_\sigma\} \cup m'$.
The last line above therefore has the same form as in the preceding case, and
	Lemma~\ref{appendix:lookup-behavior} again yields
	\[
	    (f=u);\norm{\rho};\norm{\sigma}
	    \equiv (f=u); \norm{\rho\wsppmul\sigma}.
	\]
	
	Since $(f=u);\norm{\rho\wsppmul\sigma} \equiv (f=u);(\norm{\rho} ; \norm{\sigma})$ for all $u\in \values$, we conclude that $\norm{\rho\wsppmul\sigma} \equiv \norm{\rho} ; \norm{\sigma}$ by Lemma~\ref{appendix:test-splitting}.
\end{proof}

\subsection{The Smart Constructor}\label{sec:appendix-mk}

$\mathsf{mk}(\fielda, b, m, d)$ restores canonicity after an operation has
assembled raw components. It performs four phases:
\begin{enumerate}
  \item \emph{Unblock dead defaults.} An entry $(v \mapsto
    \hat{\mathbb{0}}) \in m$ is semantically dead, but its presence
    \emph{suppresses} the diagonal default $d$ at $v$
    (cf.\ the branch map in \Cref{sec:wspp}). Deleting it naively would
    unshadow $d$ and change the semantics. Hence: remove the entry and,
    unless $v \in \keys(b)$ already, freeze the current behavior by
    inserting the explicit row $b(v) \leftarrow m$ (the \emph{original}
    $m$).
  \item \emph{Prune zeros.} Delete all entries $(v \mapsto
    \hat{\mathbb{0}})$ inside the rows of $b$.
  \item \emph{Drop redundant rows.} Delete a row $b(u)$ iff it equals the
    branch that the defaults would produce at $u$ anyway---that is,
    $m \cup \{u \mapsto d\}$ if the diagonal applies at $u$
    ($u \notin \keys(m)$ and $d \neq \hat{\mathbb{0}}$), and $m$ otherwise.
  \item \emph{Collapse trivial nodes.} If $b$ and $m$ are now empty, return
    $d$ instead of a node.
\end{enumerate}
Phases 1 and 3 are the reason equality of sub-wSPPs must be decidable (it
is, structurally, since $=$ is decidable on $\csrset$); with hash-consing,
these tests are constant time in practice
(\Cref{sec:implementation-evaluation}).

\begin{theorem}[Soundness of the smart constructor]
  \label{theorem:mk-soundness}
  For all well-formed components $f,b,m,d$,
  \[
    \norm{\mathsf{mk}(f,b,m,d)} \equiv \norm{\wspp(f,b,m,d)}.
  \]
\end{theorem}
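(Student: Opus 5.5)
The plan is to prove the theorem phase by phase: each of the four phases of $\mathsf{mk}$ transforms the raw node $\wspp(f,b,m,d)$ into another raw node (or, in phase 4, into $d$) with the same read-back up to $\equiv$. Since $\equiv$ is transitive, composing the four steps gives the statement. For every phase I will use the same tool. By input slicing (\Cref{appendix:test-splitting}) it suffices to show, for every $u\in\values$, that $(f=u);\norm{\cdot}$ agrees before and after the phase. By the branch-map lemma (\Cref{appendix:lookup-behavior}) this reduces to showing that the two branch rows $\rho(u)$ and $\rho'(u)$ give equivalent sums $\bigoplus_{v\mapsto\sigma} f\leftarrow v;\norm{\sigma}$. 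The whole argument therefore happens at the level of branch maps, with a case split on whether $u\in\keys(b)$, $u\in\keys(m)\setminus\keys(b)$, or neither.

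\emph{Phase 2 (prune zeros).} Removing an entry $v\mapsto\hat{\mathbb{0}}$ from some $b(u)$ only deletes a summand $f\leftarrow v;\norm{\hat{\mathbb{0}}}$. Since $\norm{\hat{\mathbb{0}}}=\semizero\odot\top\equiv\bot$, this summand is $\bot$ by (E2e), and $\bot$ is neutral for $\oplus$. No key sets change, so every branch row except $b(u)$ is untouched.

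\emph{Phase 3 (drop redundant rows).} Deleting $b(u_0)$ changes the branch row only at $u_0$. At $u_0$ the row becomes $m\cup\{u_0\mapsto d\}$ or $m$, following exactly the case distinction used in the definition of the phase, and this equals the deleted row by hypothesis. At every $u\neq u_0$ the branch map is unchanged, because whether $u$ lies in $\keys(b)\cup\keys(m)$ is not affected by removing $u_0$.

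\emph{Phase 4 (collapse trivial nodes).} With $b=m=\varnothing$, the read-back is just $D_\rho$ with empty guard products, which is $\equiv\norm{d}$ as already noted for lifting.

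\emph{Phase 1 (unblock dead defaults)} is the delicate step and the main obstacle, because it changes both $\keys(m)$ and $\keys(b)$. Deleting $z$ from $m$ can make the diagonal default apply at $z$, and inserting a row $b(z)\leftarrow m_{\mathrm{orig}}$ alters which inputs see the defaults. I would treat one dead entry $z$ at a time and compare branch maps for each $u$:
\begin{itemize}
\item For $u\in\keys(b)$ the row is unchanged.
\item For $u=z\notin\keys(b)$, the new explicit row is $m_{\mathrm{orig}}$, which was exactly the old branch row, since $z\in\keys(m)$.
\item For any other $u\notin\keys(b)$, the new row is $m$ without $z$, possibly extended by $u\mapsto d$. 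It differs from the old row only by the dead summand $f\leftarrow z;\norm{\hat{\mathbb{0}}}\equiv\bot$. The condition $u\notin\keys(m)$ is unaffected, because $u\neq z$.
\end{itemize}
Two side points need care. First, the inserted row also contains the dead entry $z\mapsto\hat{\mathbb{0}}$; this is harmless semantically, and phase 2 removes it anyway. Second, when several dead entries are processed sequentially, ``the original $m$'' must be read as the $m$ current at the moment of that step; I would check that this is what the algorithm does. Well-formedness is preserved throughout, since no new fields are introduced. A final induction on the number of dead entries closes phase 1.
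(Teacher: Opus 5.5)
Your proposal is correct and follows the paper's route: the paper defers the proof to its Lean development, which checks that each of the four normalization phases of $\mathsf{mk}$ preserves the read-back semantics, and your reduction to comparing branch rows via Lemma~\ref{appendix:test-splitting} and Lemma~\ref{appendix:lookup-behavior} is a sound way to carry out that check. The point you left open in phase~1 is harmless under either reading of ``the original $m$'', because the inserted row and the current branch row at $z$ differ only by dead entries $z'\mapsto\hat{\mathbb{0}}$, and each such entry contributes $\bot$.
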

We do not reproduce the proof here, as it could be more directly adopted from
; it is discharged in the Lean
formalization by checking that each normalization phase preserves the read-back
semantics of the node.

\subsection{The Kleene Star}
The complete algorithm for computing the Kleene star $\rho^{\wsppstar}$ of
an wSPP $\rho$ is given in Figure \ref{alg:df-star}. The main proof obligation
is local preservation: each pass emits one factor and rewrites the current
residual wSPP so that the surrounding star is unchanged. This is the content of
Lemma \ref{lemma:df-star-invariant}; the remaining termination and assembly
arguments are either immediate or already spelled out around
Figure \ref{alg:df-star}.
\begin{lemma}
  In any of the three passes in an instance $\texttt{Star}(\rho)$ of the algorithm
  in  Figure \ref{alg:df-star}, let $\rho'$ be the residual
  wSPP $\,\wspp(f, b, m, d)$ at program point \AlgPoint{wsppdefault}{I-s},
  \AlgPoint{wsppbranch}{II-s}, or \AlgPoint{wsppmissing}{III-s},
  and let $\rho''$ be the residual at the corresponding point
  \AlgPoint{wsppdefault}{I-e}, \AlgPoint{wsppbranch}{II-e}, or
  \AlgPoint{wsppmissing}{III-e} of Figure \ref{alg:star}. If $\texttt{Star}(\sigma)$
  is sound on all wSPP $\sigma$ that operate on fields greater than $f$ (i.e. the
  top-level field of $\rho'$), then
  \begin{enumerate}
    \item Pass~I satisfies  $\ITER{\norm{\rho'}} \equiv
  \SEQ{\norm{\psi}}{\ITER{\norm{\rho''}}}$ for the emitted factor $\psi$;
    \item each iteration of Pass~II satisfies $\ITER{\norm{\rho'}} \equiv
  \SEQ{\norm{\psi}}{\ITER{\norm{\rho''}}}$ for the emitted factor $\psi$;
    \item each iteration of Pass~III satisfies $\ITER{\norm{\rho'}} \equiv
  \SEQ{\ITER{\norm{\rho''}}}{\norm{\phi}}$ for the emitted factor $\phi$.
  \end{enumerate}
\end{lemma}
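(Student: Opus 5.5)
Each of the three claims follows one template. I split $\norm{\rho'}$ as $\ADD{E}{R}$, where $E$ is the summand for the removed entry and $R$ collects the rest. I then apply \textsc{denesting} from \Cref{theorem:star-eqns}. For Passes~I and~II I use the left form $\ITER{(\ADD{E}{R})}\equiv\SEQ{\ITER{E}}{\ITER{(\SEQ{R}{\ITER{E}})}}$; for Pass~III I use the right form $\ITER{(\ADD{R}{E})}\equiv\SEQ{\ITER{(\SEQ{\ITER{E}}{R})}}{\ITER{E}}$. Each claim then reduces to two sub-obligations: (a) $\ITER{E}\equiv\norm{\psi}$ (resp.\ $\norm{\phi}$) for the emitted factor, and (b) $\SEQ{R}{\ITER{E}}\equiv\norm{\rho''}$ (resp.\ $\SEQ{\ITER{E}}{R}\equiv\norm{\rho''}$). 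Both are checked slice by slice. By \Cref{appendix:test-splitting} it suffices to compare $(f=u);-$ for every $u$, and \Cref{appendix:lookup-behavior} turns each slice into a finite sum over the row $\rho'(u)$. The operations $\wsppadd,\wsppmul,\wsppscale$ are discharged by their soundness theorems and the smart constructor by \Cref{theorem:mk-soundness}. Recursive calls $\Star(\sigma)$ on sub-wSPPs use the hypothesis that $\Star$ is sound on fields greater than $f$.

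For (a), the key computation is that $E$ is ``idempotent up to weight'', which is the general form of the $D_\rho;D_\rho$ calculation in \Cref{example:df-wspp-star-step}.

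In Pass~I, $E=e_S;\norm{d}$ with $e_S$ the guard product. Since $\norm{d}$ mentions no $f$, (E3d) commutes it past the guard, and the guards are idempotent, so $E^{(k)}\equiv e_S;\norm{d}^{(k)}$ for $k\ge1$. Hence $\ITER{E}\equiv\SKIP\oplus e_S;\norm{d};\ITER{\norm{d}}$. Here I rewrite the infinite sum using $\omega$-continuity, i.e.\ sequencing distributes over countable sums, which follows from the matrix-semiring structure cited before \Cref{theorem:star-eqns}. Recursive soundness then gives $\norm{d\wsppmul\Star(d)}$.

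In Pass~II with $u\ne v$, $E=(f=u);(f\leftarrow v);\norm{\sigma}$, and $E;E\equiv\DROP$ by (E3b). So $\ITER{E}\equiv\SKIP\oplus E$, matching $\eta=\sigma$. In the self-loop case $v=u$, (E3a) and (E3c) collapse $E^{(k)}$ to $(f=u);\norm{\sigma}^{(k)}$, giving $\eta=\sigma\wsppmul\Star(\sigma)$.

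In Pass~III, $E=(\prod_{x\in\dom(b)}f\ne x);(f\leftarrow z);\norm{\sigma}$. If $z\in\dom(b)$, the trailing assignment fails the guard, so $E;E\equiv\DROP$. Otherwise the guard passes after the write, and $E^{(k)}\equiv E;\norm{\sigma}^{(k-1)}$, matching the two branches of the algorithm.

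For (b), I compute $\SEQ{R}{\ITER{E}}\equiv R\oplus R;E;\ldots$ and locate which summands of $R$ survive prefixing onto the nontrivial part of $\ITER{E}$. This is exactly where the test in front of $E$ is reached.

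In Pass~I, a summand of $R$ survives only if it writes a value $v\notin S$. Its continuation $\tau$ is then extended to $\tau\oplus\tau;\norm{d};\ITER{\norm{d}}\equiv\norm{\tau\wsppmul\Star(d)}$, using the fixpoint law. This matches the update $b(u)(v)\leftarrow b(u)(v)\wsppmul\Star(d)$ and leaves $m$ untouched, because every key of $m$ is blocked.

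In Pass~II, the summands that survive are exactly the entries writing $u$, namely $b(u')(u)$ for rows $u'$ and $m(u)$. Composing them with $E$ yields a new entry $v$ with continuation $b(u')(u)\wsppmul\eta$, merged via \Cref{appendix:map-lift-add}. The case $u'=u$ cannot arise after the self-loop is preferentially removed, and I need to check that the remaining self-loop entry was indeed deleted first.

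In Pass~III, symmetrically, $\ITER{E};R$ prefixes the closure $\Star(\sigma)$ onto every remaining default column when $z\notin\dom(b)$. If $z\in\dom(b)$, nothing survives, because $R$'s summands after $E$ read the input $z$, which lies in $b$; those rows were already emptied in Pass~II.

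\textbf{Main obstacle.} I expect the main obstacle to be the bookkeeping in Pass~II/III step (b). The branch map of the residual changes subtly: keys of $b$ whose rows are empty must still block $m$ and $d$, and a new key added to $m$ can change which input values fall into the default. The step I would check most carefully is the Pass~III claim that nothing in $R$ reads value $z$ except through columns. This relies on the global invariant that after Pass~II every row of $b$ is empty and $d=\hat{\mathbb{0}}$. I would therefore state that invariant explicitly as a precondition of the Pass~III step and verify that the residual $\rho''$'s branch map at each $u$ coincides with the computed slice, with the cases $u\in\dom(b)$, $u=z$, and $u$ fresh handled separately.
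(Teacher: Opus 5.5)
Your proposal is correct and follows essentially the same route as the paper. You isolate the removed summand $E$, apply left \textsc{denesting} for Passes~I--II and right \textsc{denesting} for Pass~III, and close $E^{*}$ using the ``idempotent up to weight'' or nilpotence ($E;E\equiv\DROP$) calculation together with $\omega$-continuity and the recursive hypothesis; you then identify $R\oplus R;E^{+}$ (resp.\ $R\oplus E^{+};R$) with $\norm{\rho''}$ by tracking which summands pass the guard of $E$. The only cosmetic difference is that you frame step (b) as a slice-by-slice check via input slicing, whereas the paper computes directly on the $B$, $M$, $D$ components, which is what your case analysis actually does anyway.
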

\begin{proof}
  We spell out the Pass~I item. Let the residual at
  \AlgPoint{wsppdefault}{I-s} be
  $\rho'=\wspp(f,b,m,d)$, and abbreviate
  \[
    G_b \triangleq
    \prod_{\mathclap{u\in\keys(b)}} \NOTEQ{f}{u},
    \qquad
    G_m \triangleq
    \prod_{\mathclap{z\in\keys(m)}} \NOTEQ{f}{z},
    \qquad
    G \triangleq G_b\SEQN G_m,
    \qquad
    D_{\rho'}^+ \triangleq D_{\rho'};D_{\rho'}^*.
  \]
  Pass~I emits
  \[
    \psi =
    \hat{\mathbb{1}}\wsppadd
    \guardprod{u\in\keys(b)}{(f\ne u)}
    \wsppmul
    \guardprod{z\in\keys(m)}{(f\ne z)}
    \wsppmul d\wsppmul\Star(d),
  \]
  modifies every entry $b(u)(v)=\sigma$ when
  $v\notin\keys(b,m)$, leaves all other entries unchanged, and then
  sets $d$ to $\hat{\mathbb{0}}$. Thus the residual at
  \AlgPoint{wsppdefault}{I-e} is
  $\rho''=\wspp(f,b'',m,\hat{\mathbb{0}})$, where $b''$
  is the updated test-assignment map just described.

  We first isolate the default-identity summand. By
  \Cref{def:wspp-sem},
  $\norm{\rho'}\equiv B_{\rho'}\oplus M_{\rho'}\oplus D_{\rho'}$. Hence
  \begin{align}
    \ITER{\norm{\rho'}}
    &\equiv \ITER{\bigl(D_{\rho'}\oplus (B_{\rho'}\oplus M_{\rho'})\bigr)}
      && \text{(Definition~\ref*{def:wspp-sem}, Lemma~\ref{appendix:semiring-eqns}(1),(2))}
      \notag \\
    &\equiv D_{\rho'}^*((B_{\rho'} \oplus M_{\rho'});D_{\rho'}^*)^*
      && \text{(Lemma~\ref{appendix:semiring-eqns}(15))}
      \notag \\
    &\equiv D_{\rho'}^*((B_{\rho'} \oplus M_{\rho'});
      (\top \oplus D_{\rho'}^+))^*
      && \text{(Lemma~\ref{appendix:semiring-eqns}(12))}
      \notag \\
    &\equiv D_{\rho'}^*
      \bigl((B_{\rho'} \oplus B_{\rho'};D_{\rho'}^+)
      \oplus (M_{\rho'} \oplus M_{\rho'};D_{\rho'}^+) \bigr)^*
      && \text{(Lemma~\ref{appendix:semiring-eqns}(7))}
      \label{eq:pass-1-factorization}
  \end{align}

  It remains to identify both the emitted factor and the residual wSPP in
  \eqref{eq:pass-1-factorization}. First, we compute $D_{\rho'}^+$.
  All infinite sums and products below are taken in the
  $\omega$-continuous semantic domain $\semdom \triangleq \Pk \rightarrow \Pk \rightarrow S$ of
  \Cref{sec:prelim-wnetkat}.
{\small
\begin{align*}
  \sem{D_{\rho'}^+}
  &= \sem{D_{\rho'}}\cdot\sem{D_{\rho'}^*}
    && \text{(definition of $D_{\rho'}^+$)}\\
  &= \sem{D_{\rho'}}\cdot\sum_{i\in\N}\sem{D_{\rho'}^{(i)}}
    && \text{($\omega$-continuity of $\semdom$)}\\
  &= \sum_{i=1}^{\infty}\sem{D_{\rho'}^{(i)}}
    && \text{(continuity of matrix multiplication)}\\
  &= \sum_{i=1}^{\infty}\sem{(G;\norm{d})^{(i)}}
    && \text{(Definition~\ref*{def:wspp-sem})}\\
  &= \sum_{i=1}^{\infty}\sem{G;\norm{d}^{(i)}}
    && \text{(Lemma~\ref{appendix:packet-alg}(3),(15), inductively)}\\
  &= \sem{G}\cdot \sum_{i=1}^{\infty}\sem{\norm{d}^{(i)}}
    && \text{($\omega$-continuity of $\semdom$)}\\
  &= \sem{G;\norm{d}^+}.
\end{align*}
}
  Hence $D_{\rho'}^+ \equiv G;\norm{d}^+$. Since $d$ mentions only fields
  greater than $f$, the premise gives
  $\norm{\Star(d)}\equiv \ITER{\norm{d}}$. Therefore the factor emitted by
  Pass~I satisfies
  \[
    \norm{\psi}
    \equiv \top \oplus G;\norm{d};\norm{\Star(d)}
    \equiv \top \oplus D_{\rho'}^+
    \equiv D_{\rho'}^*,
  \]
  using \Cref{theorem:wspp-add-mul-soundness} for the wide-hatted atoms,
  $\wsppadd$, and $\wsppmul$, and Lemma~\ref{appendix:semiring-eqns}(12) for
  the final equality.

  Now compute how sequencing $D_{\rho'}^+$ to the left changes the remaining two
  summands of \eqref{eq:pass-1-factorization}.
{\small
\begin{align*}
  B_{\rho'};D_{\rho'}^+
  &\equiv
    \bigoplus_{u\in\keys(b)}(\EQ{f}{u})\SEQN
      \bigoplus_{\mathclap{(v\mapsto\sigma)\in b(u)}}
        (\ASSN{f}{v})\SEQN\norm{\sigma}\SEQN G\SEQN\norm{d}^+
    \\
  \shortintertext{\hfill{\footnotesize(Definition~\ref*{def:wspp-sem}; $D_{\rho'}^+\equiv G;\norm{d}^+$)}}
  &\equiv
    \bigoplus_{u\in\keys(b)}(\EQ{f}{u})\SEQN
      \bigoplus_{\mathclap{(v\mapsto\sigma)\in b(u)}}
        (\ASSN{f}{v})\SEQN G\SEQN\norm{\sigma}\SEQN\norm{d}^+
    \\
  \shortintertext{\hfill{\footnotesize(Lemma~\ref{appendix:packet-alg}(15); well-formedness)}}
  &\equiv
    \bigoplus_{u\in\keys(b)}(\EQ{f}{u})\SEQN
      \bigoplus_{\mathclap{(v\mapsto\sigma)\in b(u)}}
      \hspace{10px}\begin{cases}
        (\ASSN{f}{v})\SEQN\norm{\sigma}\SEQN\norm{d}^+
          & v\notin\keys(b,m),\\
        \bot & \text{otherwise}
      \end{cases}
    \\
  \shortintertext{\hfill{\footnotesize(Lemma~\ref{appendix:packet-alg}(9), repetitively over $G$)}}
  &\equiv
    \bigoplus_{u\in\keys(b)}(\EQ{f}{u})\SEQN
      \bigoplus_{\mathclap{\substack{(v\mapsto\sigma)\in b(u)\\
        v\notin\keys(b,m)}}}
        (\ASSN{f}{v})\SEQN\norm{\sigma}\SEQN\norm{d}^+
    \\
  \shortintertext{\hfill{\footnotesize(Lemma~\ref{appendix:semiring-eqns}(3),(6))}}
  M_{\rho'};D_{\rho'}^+
  &\equiv
    G_b\SEQN
    \Bigl(\bigoplus_{\mathclap{(z\mapsto\sigma)\in m}}
      (\ASSN{f}{z})\SEQN\norm{\sigma}\SEQN G\SEQN\norm{d}^+\Bigr)
    \\
  \shortintertext{\hfill{\footnotesize(Definition~\ref*{def:wspp-sem}; $D_{\rho'}^+\equiv G;\norm{d}^+$)}}
  &\equiv
    G_b\SEQN
    \Bigl(\bigoplus_{\mathclap{(z\mapsto\sigma)\in m}}
      (\ASSN{f}{z})\SEQN G\SEQN\norm{\sigma}\SEQN\norm{d}^+\Bigr)
    \\
  \shortintertext{\hfill{\footnotesize(Lemma~\ref{appendix:packet-alg}(15); well-formedness)}}
  &\equiv
    G_b\SEQN
    \Bigl(\bigoplus_{\mathclap{(z\mapsto\sigma)\in m}}
      \bot\SEQN\norm{\sigma}\SEQN\norm{d}^+\Bigr)
    \\
  \shortintertext{\hfill{\footnotesize($z\in\keys(m)$; Lemma~\ref{appendix:packet-alg}(9))}}
  &\equiv \bot
    && \text{(Lemma~\ref{appendix:semiring-eqns}(3),(6)).}
\end{align*}
}
  Thus $M_{\rho'}\oplus M_{\rho'};D_{\rho'}^+\equiv M_{\rho'}$. For the
  branch component, regroup the original branch term with the new fall-through
  contributions:
{\small
\begin{align*}
B_{\rho'} \oplus B_{\rho'};D_{\rho'}^+
&\equiv
\bigoplus_{u\in\keys(b)}(\EQ{f}{u})\SEQN
\biggl(
  \bigoplus_{\mathclap{\substack{(v\mapsto\sigma)\in b(u)\\
    v\in\keys(b,m)}}}
    (\ASSN{f}{v})\SEQN\norm{\sigma}
\\[-0.1em]
&\hspace{8.5em}\oplus
  \bigoplus_{\mathclap{\substack{(v\mapsto\sigma)\in b(u)\\
    v\notin\keys(b,m)}}}
    (\ASSN{f}{v})\SEQN(\norm{\sigma}\oplus\norm{\sigma}\SEQN\norm{d}^+)
\biggr)
  && \text{(previous result, Lemma~\ref{appendix:semiring-eqns}(7))}\\
&\equiv
\bigoplus_{u\in\keys(b)}(\EQ{f}{u})\SEQN
\biggl(
  \bigoplus_{\mathclap{\substack{(v\mapsto\sigma)\in b(u)\\
    v\in\keys(b,m)}}}
    (\ASSN{f}{v})\SEQN\norm{\sigma}
\\[-0.1em]
&\hspace{8.5em}\oplus
  \bigoplus_{\mathclap{\substack{(v\mapsto\sigma)\in b(u)\\
    v\notin\keys(b,m)}}}
    (\ASSN{f}{v})\SEQN\norm{\sigma}\SEQN\ITER{\norm{d}}
\biggr)
  && \text{(Lemma~\ref{appendix:semiring-eqns}(12),(7))}\\
&\equiv
\bigoplus_{u\in\keys(b)}(\EQ{f}{u})\SEQN
\biggl(
  \bigoplus_{\mathclap{\substack{(v\mapsto\sigma)\in b(u)\\
    v\in\keys(b,m)}}}
    (\ASSN{f}{v})\SEQN\norm{\sigma}
\\[-0.1em]
&\hspace{8.5em}\oplus
  \bigoplus_{\mathclap{\substack{(v\mapsto\sigma)\in b(u)\\
    v\notin\keys(b,m)}}}
    (\ASSN{f}{v})\SEQN\norm{\sigma\wsppmul\Star(d)}
  \biggr)
\end{align*}
}
  For $v\notin\keys(b,m)$, Pass~I defines
  $b''(u)(v)=\sigma\wsppmul\Star(d)$, and the premise for $\Star(d)$ together
  with \Cref{theorem:wspp-add-mul-soundness} identifies the corresponding
  summand with $(\ASSN{f}{v});\norm{b''(u)(v)}$. For
  $v\in\keys(b,m)$, Pass~I leaves $b''(u)(v)=\sigma$. Therefore, by
  \Cref{def:wspp-sem},
  \[
    B_{\rho'} \oplus B_{\rho'};D_{\rho'}^+
    \equiv B_{\rho''}.
  \]

  Pass~I leaves $m$ unchanged and clears the default identity, so
  $M_{\rho'}\equiv M_{\rho''}$ and $D_{\rho''}\equiv\bot$. Combining the last
  two component calculations gives
  \[
    (B_{\rho'} \oplus B_{\rho'};D_{\rho'}^+)
    \oplus (M_{\rho'} \oplus M_{\rho'};D_{\rho'}^+)
    \equiv B_{\rho''}\oplus M_{\rho''}\oplus D_{\rho''}
    \equiv \norm{\rho''},
  \]
  by \Cref{def:wspp-sem} and Lemma~\ref{appendix:semiring-eqns}(3). Substituting
  this equality and $\norm{\psi}\equiv D_{\rho'}^*$ into
  \eqref{eq:pass-1-factorization} yields
  \[
    \ITER{\norm{\rho'}}
    \equiv
    \SEQ{\norm{\psi}}{\ITER{\norm{\rho''}}},
  \]
  which is the Pass~I preservation claim.

  We next prove the Pass~II item. At \AlgPoint{wsppbranch}{II-s}, Pass~I has
  already cleared the default identity, so write the current residual as
  $\rho'=\wspp(f,b,m,\hat{\mathbb{0}})$. Consider an arbitrary iteration that
  selects $u$, then $v$, and then $\sigma=b(u)(v)$, exactly as in the program. Let
  $b^{-}$ be the map obtained by deleting only this selected entry and retaining
  all first-level rows of $b$. Abbreviate
  \[
    s \triangleq (\EQ{f}{u})\SEQN(\ASSN{f}{v})\SEQN\norm{\sigma},
    \qquad
    s^+ \triangleq s;s^*,
    \qquad
    G_b \triangleq \prod_{\mathclap{x\in\keys(b)}}\NOTEQ{f}{x}.
  \]
  Let $B^{-}$ be the branch component obtained from $b^{-}$, explicitly
  \[
    B^{-} \triangleq
    \bigoplus_{x\in\keys(b)}(\EQ{f}{x})\SEQN
      \bigoplus_{y\mapsto\tau\in b^{-}(x)}
        (\ASSN{f}{y})\SEQN\norm{\tau}.
  \]
  The selected entry is one summand of $B_{\rho'}$, while $D_{\rho'}\equiv\bot$, so
  \[
    \norm{\rho'} \equiv s\oplus R,
    \qquad
    R \triangleq B^{-}\oplus M_{\rho'}.
  \]
  Denesting the selected summand gives the residual form that the algorithm
  must realize:
  \begin{align}
    \ITER{\norm{\rho'}}
    &\equiv \ITER{(s\oplus R)}
      && \text{(Definition~\ref*{def:wspp-sem}, Lemma~\ref{appendix:semiring-eqns}(1),(2),(3))}
      \notag\\
    &\equiv s^*;(R;s^*)^*
      && \text{(Lemma~\ref{appendix:semiring-eqns}(15))}
      \notag\\
    &\equiv (\top\oplus s^+);(R;(\top\oplus s^+))^*
      && \text{(Lemma~\ref{appendix:semiring-eqns}(12))}
      \notag\\
    &\equiv (\top\oplus s^+);(R\oplus R;s^+)^*.
      \label{eq:pass-2-factorization}
  \end{align}
  The last line uses Lemma~\ref{appendix:semiring-eqns}(5),(7). Thus the
  emitted factor should denote $\top\oplus s^+$, and the new residual should
  denote $R\oplus R;s^+$.

  We first compute $s^+$. Since $\sigma$ operates only on fields
  greater than $f$, $\norm{\sigma}$ commutes with tests and assignments on $f$ by
  Lemma~\ref{appendix:packet-alg}(15). If $v\neq u$, one traversal of the
  selected entry writes a value that cannot satisfy the next guard:
{\small
\begin{align*}
  s;s
  &\equiv
    (\EQ{f}{u})\SEQN(\ASSN{f}{v})\SEQN\norm{\sigma}\SEQN
    (\EQ{f}{u})\SEQN(\ASSN{f}{v})\SEQN\norm{\sigma}
    \\
  &\equiv
    (\EQ{f}{u})\SEQN(\ASSN{f}{v})\SEQN(\EQ{f}{u})\SEQN
    \norm{\sigma}\SEQN(\ASSN{f}{v})\SEQN\norm{\sigma}
    && \text{(Lemma~\ref{appendix:packet-alg}(15))}\\
  &\equiv \bot
    && \text{($v\neq u$; Lemma~\ref{appendix:packet-alg}(8),
      Lemma~\ref{appendix:semiring-eqns}(6)).}
\end{align*}
}
  Hence every power $s^{(k)}$ with $k\geq2$ is equivalent to $\bot$, and
  \begin{align*}
    s^+
    &\equiv s;(\top\oplus s;s^*)
      && \text{(Lemma~\ref{appendix:semiring-eqns}(12))}\\
    &\equiv s\oplus (s;s);s^*
      && \text{(Lemma~\ref{appendix:semiring-eqns}(5),(7))}\\
    &\equiv s
      && \text{(Lemma~\ref{appendix:semiring-eqns}(3),(6)).}
  \end{align*}

  If $v=u$, the selected entry is a self-loop. The same calculation, followed
  by the same $\omega$-continuity argument used in Pass~I, gives
  \[
    s^+
    \equiv
    (\EQ{f}{u})\SEQN(\ASSN{f}{u})\SEQN
    \norm{\sigma\wsppmul\Star(\sigma)},
  \]
  using Lemma~\ref{appendix:packet-alg}(8),(10),(15), the premise for
  $\Star(\sigma)$, and \Cref{theorem:wspp-add-mul-soundness}.
  Thus both cases are captured by the algorithm's definition
  \[
    \eta \triangleq
    \begin{cases}
      \sigma\wsppmul\Star(\sigma) & \text{if } v=u,\\
      \sigma & \text{if } v\neq u,
    \end{cases}
    \qquad
    s^+ \equiv
    (\EQ{f}{u})\SEQN(\ASSN{f}{v})\SEQN\norm{\eta}.
    \label{eq:pass-2-selected-closure}
  \]
  Consequently the factor emitted at the end of the iteration,
  \[
    \psi=\hat{\mathbb{1}}\wsppadd
      \strhat{(f{=}u)}\wsppmul\strhat{(f{\leftarrow}v)}\wsppmul\eta,
  \]
  satisfies $\norm{\psi}\equiv\top\oplus s^+\equiv s^*$ by
  \Cref{theorem:wspp-add-mul-soundness} and
  Lemma~\ref{appendix:semiring-eqns}(12).

  We now compute the second compensation, $R;s^+$, directly from the expanded
  forms of its two components. For the explicit branches,
{\small
\begin{align}
  B^{-};s^+
  &\equiv
    \left(
    \bigoplus_{x\in\keys(b)}(\EQ{f}{x})\SEQN
      \bigoplus_{y\mapsto\tau\in b^{-}(x)}
        (\ASSN{f}{y})\SEQN\norm{\tau}
    \right)\SEQN s^+
    \notag\\
  &\equiv
    \bigoplus_{x\in\keys(b)}
      \bigoplus_{y\mapsto\tau\in b^{-}(x)}
      (\EQ{f}{x})\SEQN(\ASSN{f}{y})\SEQN\norm{\tau}\SEQN s^+
      \notag\\
  &\equiv
    \bigoplus_{x\in\keys(b)}
      \bigoplus_{y\mapsto\tau\in b^{-}(x)}
      (\EQ{f}{x})\SEQN(\ASSN{f}{y})\SEQN\norm{\tau}\SEQN
      (\EQ{f}{u})\SEQN(\ASSN{f}{v})\SEQN\norm{\eta}
      \notag\\[-0.2em]
  &\hspace{20em}
      \text{(\eqref{eq:pass-2-selected-closure})}
      \notag\\
  &\equiv
    \bigoplus_{x\in\keys(b)}
      \bigoplus_{y\mapsto\tau\in b^{-}(x)}
      (\EQ{f}{x})\SEQN(\ASSN{f}{y})\SEQN(\EQ{f}{u})\SEQN
      \norm{\tau}\SEQN(\ASSN{f}{v})\SEQN\norm{\eta}
      \notag\\[-0.2em]
  &\hspace{20em}
      \text{(Lemma~\ref{appendix:packet-alg}(15))}
      \notag\\
  &\equiv
    \bigoplus_{x\in\keys(b)}
      \bigoplus_{y\mapsto\tau\in b^{-}(x)}
      \begin{cases}
        (\EQ{f}{x})\SEQN(\ASSN{f}{u})\SEQN
        \norm{\tau}\SEQN(\ASSN{f}{v})\SEQN\norm{\eta}
          & \text{if } y=u,\\
        \bot & \text{if } y\neq u
      \end{cases}
      \notag\\[-0.2em]
  &\hspace{20em}
      \text{(Lemma~\ref{appendix:packet-alg}(8),
        Lemma~\ref{appendix:semiring-eqns}(6))}
      \notag\\
  &\equiv
    \bigoplus_{x\in\keys(b)}
      \bigoplus_{y\mapsto\tau\in b^{-}(x)}
      \begin{cases}
        (\EQ{f}{x})\SEQN(\ASSN{f}{v})\SEQN\norm{\tau\wsppmul\eta}
          & \text{if } y=u,\\
        \bot & \text{if } y\neq u
      \end{cases}
      \notag\\[-0.2em]
  &\hspace{20em}
      \text{(Lemma~\ref{appendix:packet-alg}(7),(15);
        \Cref{theorem:wspp-add-mul-soundness})}
      \notag\\
  &\equiv
    \bigoplus_{\mathclap{\substack{x\in\keys(b)\\ u\in\keys(b^{-}(x))}}}
      (\EQ{f}{x})\SEQN(\ASSN{f}{v})\SEQN
      \norm{b^{-}(x)(u)\wsppmul\eta}.
      \label{eq:pass-2-branch-comp}
\end{align}
}
  The default-assignment component is analogous, but there is at most one
  default column that writes $u$:
{\small
\begin{align}
  M_{\rho'};s^+
  &\equiv
    \left(
      G_b\SEQN
      \bigoplus_{y\mapsto\tau\in m}(\ASSN{f}{y})\SEQN\norm{\tau}
    \right)\SEQN s^+
    \notag\\
  &\equiv
    \bigoplus_{y\mapsto\tau\in m}
      G_b\SEQN(\ASSN{f}{y})\SEQN\norm{\tau}\SEQN s^+
    \notag\\
  &\equiv
    \bigoplus_{y\mapsto\tau\in m}
      G_b\SEQN(\ASSN{f}{y})\SEQN\norm{\tau}\SEQN
      (\EQ{f}{u})\SEQN(\ASSN{f}{v})\SEQN\norm{\eta}
    \notag\\[-0.2em]
  &\hspace{20em}
      \text{(\eqref{eq:pass-2-selected-closure})}
    \notag\\
  &\equiv
    \bigoplus_{y\mapsto\tau\in m}
      G_b\SEQN(\ASSN{f}{y})\SEQN(\EQ{f}{u})\SEQN
      \norm{\tau}\SEQN(\ASSN{f}{v})\SEQN\norm{\eta}
    \notag\\[-0.2em]
  &\hspace{20em}
      \text{(Lemma~\ref{appendix:packet-alg}(15))}
    \notag\\
  &\equiv
    \bigoplus_{y\mapsto\tau\in m}
    \begin{cases}
      G_b\SEQN(\ASSN{f}{u})\SEQN
      \norm{\tau}\SEQN(\ASSN{f}{v})\SEQN\norm{\eta}
        & \text{if } y=u,\\
      \bot & \text{if } y\neq u
    \end{cases}
    \notag\\[-0.2em]
  &\hspace{20em}
      \text{(Lemma~\ref{appendix:packet-alg}(8),
        Lemma~\ref{appendix:semiring-eqns}(6))}
    \notag\\
  &\equiv
    \bigoplus_{y\mapsto\tau\in m}
    \begin{cases}
      G_b\SEQN(\ASSN{f}{v})\SEQN\norm{\tau\wsppmul\eta}
        & \text{if } y=u,\\
      \bot & \text{if } y\neq u
    \end{cases}
    \notag\\[-0.2em]
  &\hspace{20em}
      \text{(Lemma~\ref{appendix:packet-alg}(7),(15);
        \Cref{theorem:wspp-add-mul-soundness})}
    \notag\\
  &\equiv
    \begin{cases}
      G_b\SEQN(\ASSN{f}{v})\SEQN\norm{m(u)\wsppmul\eta}
        & \text{if } u\in\keys(m),\\
      \bot & \text{otherwise.}
    \end{cases}
    \label{eq:pass-2-default-comp}
\end{align}
}

Let the residual wSPP at
  \AlgPoint{wsppbranch}{II-e} be
  $\rho''=\wspp(f,b'',m'',\hat{\mathbb{0}})$. We now need to show that the equations above 
  precisely correspond to the contribution of the component of $b''$ and $m''$ (i.e. $B_{\rho''} and M_{\rho''}$)
  to $\norm{\rho''}$. 
  Unfolding the program update,
  the maps already present in this residual satisfy
  \[
    b''(x)(y) =
    \begin{cases}
      \dflt{b^{-}(x)(y)} \wsppadd b^{-}(x)(u)\wsppmul\eta
        & \text{if } y=v \text{ and } u\in\keys(b^{-}(x)),\\
      \dflt{b^{-}(x)(y)}
        & \text{otherwise,}
    \end{cases}
  \]
  and
  \[
    m''(y) =
    \begin{cases}
      \dflt{m(y)} \wsppadd m(u)\wsppmul\eta
        & \text{if } y=v \text{ and } u\in\keys(m),\\
      \dflt{m(y)}
        & \text{otherwise.}
    \end{cases}
  \]
  These equations are precisely the program's case analysis. It first deletes
  $(u,v)$, producing $b^{-}$. Then the branch-update loop visits exactly those
  rows $x$ with $u\in\keys(b^{-}(x))$ and changes only the output column $v$;
  all other entries are the defaulted old entries. The default-update line
  changes only $m(v)$, and only in the case $u\in\keys(m)$. Finally, the
  algorithm's branch $v=u$ versus $v\neq u$ is exactly the definition of
  $\eta$ in \eqref{eq:pass-2-selected-closure}. Therefore,
  \[
    B^{-}\oplus B^{-};s^+ \equiv B_{\rho''},
    \qquad
    M_{\rho'}\oplus M_{\rho'};s^+ \equiv M_{\rho''}.
  \]
  Since Pass~II keeps the default identity equal to $\hat{\mathbb{0}}$, the
  default-identity component of $\rho''$ is still $\bot$. Thus
  \[
    R\oplus R;s^+
    \equiv
    (B^{-}\oplus B^{-};s^+)\oplus(M_{\rho'}\oplus M_{\rho'};s^+)
    \equiv \norm{\rho''},
  \]
  by Lemma~\ref{appendix:semiring-eqns}(1),(2),(3) and
  \Cref{def:wspp-sem}. Substituting this and
  $\norm{\psi}\equiv s^*$ into \eqref{eq:pass-2-factorization} proves
  \[
    \ITER{\norm{\rho'}}
    \equiv \SEQ{\norm{\psi}}{\ITER{\norm{\rho''}}},
  \]
  which is the Pass~II preservation claim.

  It remains to prove the Pass~III item. At \AlgPoint{wsppmissing}{III-s},
  all explicit branches have already been removed, but the first-level rows of
  $b$ are still retained. Thus write the current residual as
  $\rho'=\wspp(f,b,m,\hat{\mathbb{0}})$ with $b(x)=\varnothing$ for every
  $x\in\keys(b)$. Consider an arbitrary iteration that selects $z\in\keys(m)$
  and $\sigma=m(z)$, and let $m^{-}$ be $m$ with only the entry $z$ deleted.
  Abbreviate
  \[
    G_b \triangleq \prod_{\mathclap{x\in\keys(b)}}\NOTEQ{f}{x},
    \qquad
    s \triangleq G_b\SEQN(\ASSN{f}{z})\SEQN\norm{\sigma},
    \qquad
    s^+ \triangleq s;s^*.
  \]
  Let $M^{-}$ be the default-assignment component computed from $m^{-}$. Since
  $B_{\rho'}\equiv D_{\rho'}\equiv\bot$,
  \[
    \norm{\rho'} \equiv M^{-}\oplus s.
  \]
  Pass~III emits its factor on the right, so we use the right-denesting
  identity:
  \begin{align}
    \ITER{\norm{\rho'}}
    &\equiv \ITER{(s\oplus M^{-})}
      && \text{(Definition~\ref*{def:wspp-sem}, Lemma~\ref{appendix:semiring-eqns}(1),(2),(3))}
      \notag\\
    &\equiv (s^*;M^{-})^*;s^*
      && \text{(Lemma~\ref{appendix:semiring-eqns}(14))}
      \notag\\
    &\equiv ((\top\oplus s^+);M^{-})^*;s^*
      && \text{(Lemma~\ref{appendix:semiring-eqns}(12))}
      \notag\\
    &\equiv (M^{-}\oplus s^+;M^{-})^*;s^*.
      \label{eq:pass-3-factorization}
  \end{align}
  The last line uses Lemma~\ref{appendix:semiring-eqns}(5),(7). Hence the
  residual must denote $M^{-}\oplus s^+;M^{-}$, and the emitted factor must
  denote $s^*$.

  We first compute $s^+$. Since $\sigma$ operates only on fields greater than $f$,
  $\norm{\sigma}$ commutes with tests and assignments on $f$ by
  Lemma~\ref{appendix:packet-alg}(15). If $z\in\keys(b)$, the next copy of
  $s$ is blocked by the retained explicit row:
{\small
\begin{align*}
  s;s
  &\equiv
    G_b\SEQN(\ASSN{f}{z})\SEQN\norm{\sigma}\SEQN
    G_b\SEQN(\ASSN{f}{z})\SEQN\norm{\sigma}
    \\
  &\equiv
    G_b\SEQN(\ASSN{f}{z})\SEQN G_b\SEQN
    \norm{\sigma}\SEQN(\ASSN{f}{z})\SEQN\norm{\sigma}
    && \text{(Lemma~\ref{appendix:packet-alg}(15))}\\
  &\equiv \bot
    && \text{($z\in\keys(b)$; Lemma~\ref{appendix:packet-alg}(9),
      Lemma~\ref{appendix:semiring-eqns}(6)).}
\end{align*}
}
  Therefore, as in Pass~II,
  \begin{align*}
    s^+
    &\equiv s;(\top\oplus s;s^*)
      && \text{(Lemma~\ref{appendix:semiring-eqns}(12))}\\
    &\equiv s\oplus (s;s);s^*
      && \text{(Lemma~\ref{appendix:semiring-eqns}(5),(7))}\\
    &\equiv s
      && \text{(Lemma~\ref{appendix:semiring-eqns}(3),(6)).}
  \end{align*}

  If $z\notin\keys(b)$, the assignment to $z$ passes every test in $G_b$, so
  the selected default assignment can loop. By similar reasoning to the
  self-loop case in Pass~II, again using $\omega$-continuity and the premise
  for $\Star(\sigma)$,
  \[
    s^+
    \equiv
    G_b\SEQN(\ASSN{f}{z})\SEQN\norm{\sigma\wsppmul\Star(\sigma)}.
  \]
  Combining the two cases, the algorithm's emitted factor $\phi$ satisfies
  $\norm{\phi}\equiv\top\oplus s^+\equiv s^*$: if $z\in\keys(b)$, then
  $\phi=\hat{\mathbb{1}}\wsppadd
  \guardprod{x\in\keys(b)}{(f\ne x)}\wsppmul
  \strhat{(f{\leftarrow}z)}\wsppmul \sigma$; otherwise the same expression uses
  $\sigma\wsppmul\Star(\sigma)$ in place of $\sigma$.

  It remains to identify the residual $M^{-}\oplus s^+;M^{-}$. If
  $z\in\keys(b)$, then $s^+=s$ and
{\small
\begin{align*}
  s^+;M^{-}
  &\equiv
    G_b\SEQN(\ASSN{f}{z})\SEQN\norm{\sigma}\SEQN
    G_b\SEQN
    \Bigl(\bigoplus_{\mathclap{(y\mapsto\tau)\in m^{-}}}
      (\ASSN{f}{y})\SEQN\norm{\tau}\Bigr)
    \\
  \shortintertext{\hfill{\footnotesize(Definition~\ref*{def:wspp-sem}; $s^+=s$)}}
  &\equiv \bot
    && \text{($z\in\keys(b)$; Lemma~\ref{appendix:packet-alg}(15),(9);
      Lemma~\ref{appendix:semiring-eqns}(6)).}
\end{align*}
}
  Thus the residual is obtained simply by deleting $z$ from $m$, which is
  exactly the first branch of Pass~III.

  If $z\notin\keys(b)$, then $s^+=G_b;(\ASSN{f}{z});\norm{\sigma\wsppmul\Star(\sigma)}$.
  The contribution through the selected column and then a remaining column is:
{\small
\begin{align*}
  s^+;M^{-}
  &\equiv
    G_b\SEQN(\ASSN{f}{z})\SEQN\norm{\sigma\wsppmul\Star(\sigma)}\SEQN
    G_b\SEQN
    \Bigl(\bigoplus_{\mathclap{(y\mapsto\tau)\in m^{-}}}
      (\ASSN{f}{y})\SEQN\norm{\tau}\Bigr)
    \\
  \shortintertext{\hfill{\footnotesize(Definition~\ref*{def:wspp-sem}; closed form of $s^+$)}}
  &\equiv
    G_b\SEQN
    \Bigl(\bigoplus_{\mathclap{(y\mapsto\tau)\in m^{-}}}
      (\ASSN{f}{y})\SEQN\norm{\sigma\wsppmul\Star(\sigma)}\SEQN\norm{\tau}\Bigr)
    \\
  \shortintertext{\hfill{\footnotesize($z\notin\keys(b)$; Lemma~\ref{appendix:packet-alg}(7),(15),(9))}}
  M^{-}\oplus s^+;M^{-}
  &\equiv
    G_b\SEQN
    \Bigl(\bigoplus_{\mathclap{(y\mapsto\tau)\in m^{-}}}
      (\ASSN{f}{y})\SEQN
      \bigl(\norm{\tau}\oplus
        \norm{\sigma\wsppmul\Star(\sigma)}\SEQN\norm{\tau}\bigr)\Bigr)
    \\
  \shortintertext{\hfill{\footnotesize(Lemma~\ref{appendix:semiring-eqns}(1),(2),(7))}}
  &\equiv
    G_b\SEQN
    \Bigl(\bigoplus_{\mathclap{(y\mapsto\tau)\in m^{-}}}
      (\ASSN{f}{y})\SEQN\norm{\Star(\sigma)\wsppmul\tau}\Bigr).
    \quad
    \text{(Lemma~\ref{appendix:semiring-eqns}(12),(7);
      \Cref{theorem:wspp-add-mul-soundness})}
\end{align*}
}
   This reflects exactly the second branch of Pass~III, which deletes $z$ and then
  replaces every remaining entry $\tau=m^{-}(y)$ by
  $\Star(\sigma)\wsppmul\tau$. Let $m''$ denote the map produced by the appropriate
  branch of the program, and set
  $\rho''=\wspp(f,b,m'',\hat{\mathbb{0}})$. In both cases,
  \[
    M^{-}\oplus s^+;M^{-}\equiv M_{\rho''} \equiv \norm{\rho''}.
  \]
  Substituting this equality and $\norm{\phi}\equiv s^*$ into
  \eqref{eq:pass-3-factorization} proves
  \[
    \ITER{\norm{\rho'}}
    \equiv \SEQ{\ITER{\norm{\rho''}}}{\norm{\phi}},
  \]
  which is the Pass~III preservation claim.
\end{proof}

%% file: embedding.tex

\Cref{theorem:wspp-main} computes the semantics of policies relative to an
embedding $\emb \colon \csr \hookrightarrow \semi$
(\Cref{def:embedding}). This appendix collects embeddings for the
$\omega$-continuous semirings most common in network verification
(\Cref{fig:embedding-portfolio}). Producing an embedding means finding a
closed form for the semantic star on the representable weights---the
star-coherence obligation (iii) of \Cref{def:embedding} is its entire
mathematical content. Three arguments of increasing difficulty cover the
portfolio.

\paragraph{Bounded semirings}
For the \emph{bounded} semirings (\Cref{def:bounded})---those whose unit
$\semione$ is the greatest element---the
star is constantly $\semione$ by \Cref{lemma:bounded-star}, so obligation
(iii) is discharged by the definition $\cstar{a} \triangleq \semione$, and
any computable subsemiring containing $\semione$ embeds: the entire
semiring when it is computable to begin with (bottleneck, security), its
rational part otherwise (Viterbi, tropical).

\begin{example}[geometric series]
  \label{example:embedding-geometric}
  For the probability semiring $\Rext$ of
  \Cref{example:embedding-tension} no such shortcut exists: its star
  genuinely sums. Over $\Qext$, define $\cstar{a} \triangleq (1-a)^{-1}$
  if $a < 1$ and $\cstar{a} \triangleq \infty$ otherwise. Condition (iii)
  is the geometric series: for $a < 1$ the partial sums
  $\sum_{i \leq k} a^i = \tfrac{1-a^{k+1}}{1-a}$ converge to $(1-a)^{-1}$,
  and for $a \geq 1$ they exceed every bound. Thus
  $\Qext \hookrightarrow \Rext$---even though $\Qext$ is not
  $\omega$-continuous. \hfill $\triangle$
\end{example}

\begin{example}[failure rates]
  \label{example:embedding-probunion}
  The probabilistic-union semiring
  $([0,1] \cup \{-\infty\}, \max, \uplus, -\infty, 0)$ with
  $\wta \uplus \wtb = \wta + \wtb - \wta \cdot \wtb$ models failure rates.
  It is not bounded---its unit $0$ is not greatest---and its star is not a
  rational function either. The $\uplus$-powers of $a \in [0,1]$ satisfy
  $a^{\uplus k} = 1 - (1-a)^k$ and form an ascending chain above
  $a^{\uplus 0} = 0$, so the star, i.e., the supremum of the $\max$-partial
  sums, is $\sup_k \bigl(1-(1-a)^k\bigr)$. Over the computable carrier
  $\csrset = (\mathbb{Q} \cap [0,1]) \cup \{-\infty\}$ we accordingly define
  $\cstar{a} \triangleq 0$ for $a \leq 0$ and $\cstar{a} \triangleq 1$
  otherwise, and (iii) reduces to the observation that $(1-a)^k \to 0$ for
  $a > 0$. \hfill $\triangle$
\end{example}

\begin{figure}[t]
\centering
\footnotesize
\setlength{\tabcolsep}{4.5pt}
\resizebox{\textwidth}{!}{%
\begin{tabular}{@{}lllll@{}}
\toprule
Quantity & Semantic $\semi$ & Computable $\csr$ & $\cstar{a}$ & Obligation (iii)\\
\midrule
bandwidth (bottleneck) & $(\N\cup\{\pm\infty\},\max,\min,-\infty,\infty)$ & $\semi$ itself & $\infty$ & Lemma~\ref{lemma:bounded-star}\\
security levels & $(\{0,L,M,H\},\max,\min,0,H)$ & $\semi$ itself & $H$ & Lemma~\ref{lemma:bounded-star}\\
reliability (Viterbi) & $([0,1],\max,\cdot\,,0,1)$ & $\mathbb{Q}\cap[0,1]$ & $1$ & Lemma~\ref{lemma:bounded-star}\\
latency (tropical) & $([0,\infty],\min,+,\infty,0)$ & $\mathbb{Q}_{\geq0}\cup\{\infty\}$ & $0$ & Lemma~\ref{lemma:bounded-star}\\
failure rate (prob.\ union) & $([0,1]\cup\{-\infty\},\max,\uplus,-\infty,0)$ & $(\mathbb{Q}\cap[0,1])\cup\{-\infty\}$ & $0$ if $a \leq 0$, else $1$ & Example~\ref{example:embedding-probunion}\\
expected cost & $([0,\infty],+,\cdot\,,0,1)$ & $\mathbb{Q}_{\geq0}\cup\{\infty\}$ & $(1-a)^{-1}$ if $a<1$, else $\infty$ & Example~\ref{example:embedding-geometric}\\
\midrule
multi-objective, traces & downward-closed sets & finite (trace) frontiers & $\semione$, resp.\ $\semione + a$ & \Cref{sec:pareto}\\
\bottomrule
\end{tabular}%
}
\caption{\footnotesize Embeddings $\emb : \csr \hookrightarrow \semi$ of
computable star semirings into the $\omega$-continuous semirings of common
network quantities (cf.~\cite{wNetKAT}). In the first six rows $\emb$ is the
evident inclusion. The latency semiring is ordered by $\geq$, under which its
unit $0$ is the greatest element.}
\Description{A table listing, for each network quantity, the omega-continuous
semantic semiring, a computable star semiring embedding into it, the closed
form of the computable star, and the argument discharging the star-coherence
obligation.}
\label{fig:embedding-portfolio}
\end{figure}

In all rows of \Cref{fig:embedding-portfolio} except the last, $\csr$ is a
subsemiring of $\semi$ and $\emb$ an inclusion: codes denote themselves.
The multi-objective and trace-carrying weights of \Cref{sec:pareto} exploit
the full generality of \Cref{def:embedding}: there the codes (finite Pareto
frontiers) are genuinely different objects from the semantic weights
(infinite downward-closed sets), related by the decoding map
$\dcl{(-)}$.

%% file: synthetic.tex
\Cref{fig:f10-vs-jellyfish} compares the running time of \wspptool on F10 and
Jellyfish topologies in both Pareto and trace-carrying Pareto modes.
The running time of trace-carrying Pareto is only marginally higher than 
Pareto at around $5\%$ on average. This is due to the fact that
these networks, while having a large number of paths (400,000+ on larger networks), 
do not have wide Pareto frontiers (1.7 on average).

\begin{figure}[h]
    \centering
    \includegraphics[width=0.5\textwidth]{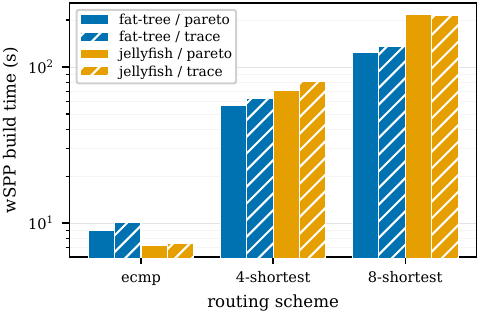}
    \caption{Overhead of trace-carrying Pareto on F10 and Jellyfish topologies.}
    \label{fig:f10-vs-jellyfish}
\end{figure}

To investigate the scaling behavior of trace-carrying Pareto, we generate 
synthetic adversarial benchmarks which produce wide Pareto frontiers.
\Cref{fig:trace-overhead} shows the running time and peak memory usage of 
Pareto and trace-carrying Pareto on these benchmarks.
Here we can see that the overhead of trace-carrying scales is more significantly
influenced by the width of the Pareto frontier than the raw number of switches 
in the network.

\begin{figure}[h]
    \centering
    \par\smallskip
    \begin{subfigure}{0.49\textwidth}
        \centering
        \includegraphics[width=\textwidth]{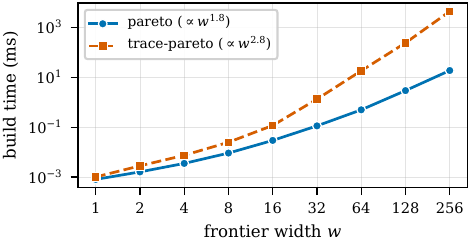}
        \caption{Running time.}
        \label{fig:trace-time}
    \end{subfigure}
    \hfill
    \begin{subfigure}{0.49\textwidth}
        \centering
        \includegraphics[width=\textwidth]{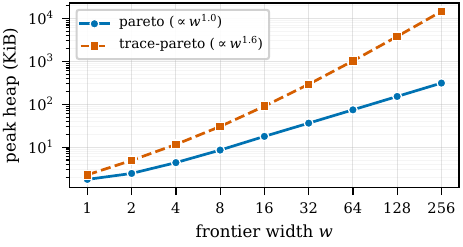}
        \caption{Peak memory usage.}
        \label{fig:trace-mem}
    \end{subfigure}
    \par\smallskip
    \begin{subfigure}{0.49\textwidth}
        \centering
        \includegraphics[width=\textwidth]{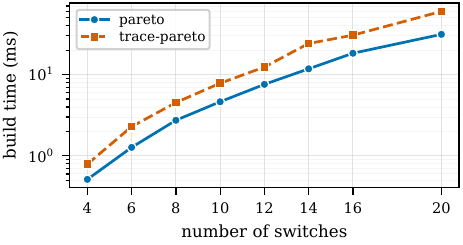}
        \caption{Running time.}
        \label{fig:trace-time}
    \end{subfigure}
    \hfill
    \begin{subfigure}{0.49\textwidth}
        \centering
        \includegraphics[width=\textwidth]{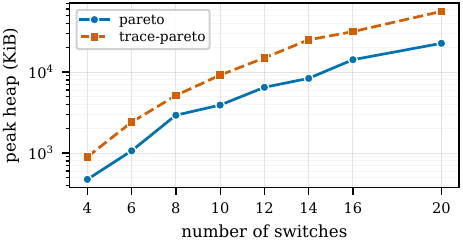}
        \caption{Peak memory usage.}
        \label{fig:trace-mem}
    \end{subfigure}
    \caption{Trace scaling: running time (\subref{fig:trace-time})
    and peak memory usage (\subref{fig:trace-mem}) of Pareto and trace-carrying Pareto}
    \label{fig:trace-overhead}
\end{figure}